\pgfplotsset{compat=newest}
\definecolor{myLightGray}{gray}{0.9} 
\DeclareRobustCommand{\myboxtwo}[2][gray!15]{
\begin{tcolorbox}[ 
        colback=white,      
        colframe=gray,  
        boxrule=0.2pt,      
        arc=2pt,outer arc=2pt,
        left=12pt,
        right=12pt,
        top=5pt,
        bottom=5pt,
        width=1.07\linewidth,
        enlarge left by=-0.55cm,
        before upper=\renewcommand{\baselinestretch}{1.3}\selectfont,
        after upper=\normalfont
        ]
 #2
 \end{tcolorbox}
}
\DeclareRobustCommand{\mybox}[2][myLightGray]{%
\begin{tcolorbox}[
        left=0.5pt,
        right=0.5pt,
        top=0.5pt,
        bottom=0.5pt,
        colback=#1,
        colframe=#1,
        width=\dimexpr\textwidth\relax, 
        boxsep=3pt,
        arc=2pt,outer arc=2pt
        ]
        #2    
\end{tcolorbox}
}
\renewenvironment*{displayquote}
  {\begingroup
   \setlength{\leftmargini}{0.2cm}%
   \linespread{1.4}\selectfont 
   \csq@getcargs{\csq@bdquote{}{}}}
  {\csq@edquote\endgroup}
 \def\bibsep{\smallskipamount}%
\definecolor{cornellred}{rgb}{0.7, 0.11, 0.11}
\definecolor{maroon}{rgb}{0.52, 0, 0}
\definecolor{dgreen}{rgb}{0.0, 0.5, 0.0}
\definecolor{ballblue}{rgb}{0.13, 0.67, 0.8}
\definecolor{royalblue(web)}{rgb}{0.25, 0.41, 0.88}
\definecolor{bleudefrance}{rgb}{0.19, 0.55, 0.91}
\definecolor{royalazure}{rgb}{0.0, 0.22, 0.66}
\crefname{algocf}{alg.}{algs.}
\Crefname{algocf}{Algorithm}{Algorithms}
\newenvironment{myprocedure}[1][htb]
{  
\begin{algorithm2e}[#1]%
}{\end{algorithm2e}}
\newcommand{\squishlist}{
\begin{list}{{{\small{$\bullet$}}}}
{\setlength{\itemsep}{3pt}      \setlength{\parsep}{1pt}
\setlength{\topsep}{1pt}       \setlength{\partopsep}{0pt}
\setlength{\leftmargin}{1em} \setlength{\labelwidth}{1em}
\setlength{\labelsep}{0.5em} } }
\newcommand{\squishend}{  \end{list}}
\newcommand{\primed}{^{\dagger}}
\newcommand{\doubleprimed}{^{\ddagger}}
\newcommand{\xhdr}[1]{\vspace{2mm} \noindent{\bf #1}}
\newcommand{\Cost}[2][]{{\textsc{cost}}\ifthenelse{\not\equal{}{#1}}{_{#1}}{}\!\left[{\def\givenn{\middle|}#2}\right]}
\newcommand{\CostInf}[2][]{{\textsc{E-cost}}\ifthenelse{\not\equal{}{#1}}{_{#1}}{}\!\left[{\def\givenn{\middle|}#2}\right]}
\newcommand{\CostOne}[2][]{{\textsc{U-cost}}\ifthenelse{\not\equal{}{#1}}{_{#1}}{}\!\left[{\def\givenn{\middle|}#2}\right]}
\newcommand{\CostTotal}[2][]{{\widetilde{\textsc{cost}}}\ifthenelse{\not\equal{}{#1}}{_{#1}}{}\!\left[{\def\givenn{\middle|}#2}\right]}
\newcommand{\demand}{d}
\newcommand{\demandj}{\demand_j}
\newcommand{\supply}{s}
\newcommand{\supplyi}{\supply_i}
\newcommand{\wdiscount}{\rho}
\newcommand{\wdiscountit}{\wdiscount_{it}}
\newcommand{\alloc}{x}
\newcommand{\allocijt}{\alloc_{ijt}}
\newcommand{\price}{p}
\newcommand{\dbf}{\boldsymbol{d}}
\newcommand{\xbf}{\boldsymbol{x}}
\newcommand{\undercost}{c}
\newcommand{\overcost}{C}
\newcommand{\reals}{\mathbb{R}}
\newcommand{\prediction}{\mathcal{P}}
\newcommand{\predictions}{\boldsymbol{\prediction}}
\newcommand{\predictiont}{\prediction_t}
\newcommand{\pL}{L}
\newcommand{\pLjt}{\pL_{jt}}
\newcommand{\pLjzero}{\pL_{j0}}
\newcommand{\pR}{R}
\newcommand{\pRjt}{\pR_{jt}}
\newcommand{\pRjzero}{\pR_{j0}}
\newcommand{\perror}{\Delta}
\newcommand{\perrorjt}{\perror_{jt}}
\newcommand{\perrorjk}{\perror_{jk}}
\newcommand{\instance}{\mathcal{I}}
\newcommand{\ALG}{\texttt{ALG}}
\newcommand{\OPTALG}{\ALG^*}
\newcommand{\optcost}{\Gamma^*}
\newcommand{\budget}{B}
\newcommand{\undercostj}{\undercost_j}
\newcommand{\overcostj}{\overcost_j}
\newcommand{\allocit}{\alloc_{it}}
\newcommand{\priceit}{\price_{it}}
\newcommand{\pRzero}{\pR_{0}}
\newcommand{\perrork}{\perror_k}
\newcommand{\perrort}{\perror_t}
\newcommand{\pLzero}{\pL_0}
\newcommand{\pLt}{\pL_t}
\newcommand{\pRt}{\pR_t}
\newcommand{\pRtHat}{\hat{\pR}_t}
\newcommand{\alloct}{\alloc_t}
\newcommand{\revoke}{y}
\newcommand{\cprice}{q}
\newcommand{\wdiscountt}{\wdiscount_t}
\newcommand{\cptotal}{L}
\newcommand{\naturals}{\mathbb{N}}
\newcommand{\switchseq}{J}
\newcommand{\switchseqspace}{\mathcal{J}}
\newcommand{\switchseqell}{\switchseq_{\ell}}
\newcommand{\switchseqonetoell}{\switchseq_{1:\ell}}
\newcommand{\rgapJ}{\rgap(\switchseq)}
\newcommand{\revokeit}{\revoke_{it}}
\newcommand{\lgapJ}{\lgap(\switchseq)}
\newcommand{\cpriceit}{\cprice_{it}}
\newcommand{\cpriceiled}{\cprice_{\ell}}
\newcommand{\cinterval}{\mathcal{T}}
\newcommand{\cintervalell}{\cinterval_{\ell}}
\newcommand{\cintervalellplus}{\cinterval_{\ell}^+}
\newcommand{\zerobf}{\boldsymbol{0}}
\newcommand{\ybf}{\boldsymbol{y}}
\newcommand{\targetcost}{\Gamma}
\newcommand{\targetcosts}{\boldsymbol{\targetcost}}
\newcommand{\lambdabf}{\boldsymbol{\lambda}}
\newcommand{\thetabf}{\boldsymbol{\theta}}
\newcommand{\lgap}{\lambda}
\newcommand{\lgapjk}{\lgap_{jk}}
\newcommand{\lgapj}{\lgap_{j}}
\newcommand{\rgap}{\theta}
\newcommand{\rgapj}{\rgap_j}
\newcommand{\ked}{^{(k)}}
\newcommand{\randomalloc}{X}
\newcommand{\randomallocijt}{\randomalloc_{ijt}}
\newcommand{\randomallocit}{\randomalloc_{it}}
\newcommand{\Ted}{^{(T)}}
\newcommand{\ted}{^{(t)}}
\newcommand{\canalloc}{\tilde{\alloc}}
\newcommand{\canallocit}{\canalloc_{it}}
\newcommand{\canxbf}{\tilde\xbf}
\newcommand{\csuitted}{^{\clubsuit}}
\newcommand{\spsuitted}{^{\spadesuit}}
\newcommand{\revokei}{\revoke_i}
\newcommand{\status}{S}
\newcommand{\cumalloc}{z}
\newcommand{\cumalloci}{\cumalloc_i}
\newcommand{\cumallocs}{\boldsymbol{\cumalloc}}
\newcommand{\remainsupply}{\bar{\supply}}
\newcommand{\remainsupplyi}{\remainsupply_i}
\newcommand{\remainsupplies}{\boldsymbol{\remainsupply}}
\newcommand{\remainbudget}{\bar{\budget}}
\newcommand{\allocitJ}{\alloc_{it}(\switchseq)}
\newcommand{\curpL}{\bar\pL}
\newcommand{\curpR}{\bar\pR}
\newcommand{\supplys}{\boldsymbol{\supply}}
\newcommand{\revokeiell}{\revoke_{i\ell}}
\newcommand{\revokeiellJ}{\revoke_{i\ell}(\switchseq)}
\newcommand{\canrevokeited}{\tilde{\revoke}_{i}^{(t)}}
\newcommand{\canrevoke}{\tilde{\revoke}}
\newcommand{\randomrevoke}{Y}
\newcommand{\randomrevokeit}{\randomrevoke_{it}}
\newcommand{\cumallociBar}{\bar\cumalloc_i}
\newcommand{\cumallociHat}{\hat\cumalloc_i}
\newcommand{\OPTSim}{\textsc{LP-single-switch-Emulator}}
\newcommand{\OPTReS}{\textsc{LP-single-switch-Resolving}}
\newcommand{\OPTSimCan}{\textsc{LP-release-Emulator}}
\newcommand{\OPTSimTilde}{\textsc{LP-joint-cost-Emulator}}
	\DeclarePairedDelimiter{\abs}{\lvert}{\rvert}
\newcommand{\OPTSimInfty}{\textsc{LP-multi-station-Emulator}}
\newcommand{\OPTSimOne}{\textsc{LP-multi-station-Emulator}}
\newcommand{\lpcancelsubproblem}{\text{\ref{eq:opt cancellation}}[\ell,\status]}
\newcommand{\commentcolor}{blue}
\newcommand{\pbias}{\varepsilon}
\newcommand{\pbiast}{\pbias_t}
\newcommand{\pbiass}{\boldsymbol{\predictionbias}}
\newcommand{\biaseddemand}{\hat\demand}
\newcommand{\perrors}{\boldsymbol{\perror}}
\newcommand{\predictionbias}{\pbias}
\newcommand{\pprob}{\delta}
\newcommand{\pprobb}{\boldsymbol{\pprob}}
\newcommand{\dailydemand}{\xi}
\newcommand{\dailydemands}{\boldsymbol{\dailydemand}}
\newcommand{\dailydemandt}{\dailydemand_t}
\newcommand{\prior}{\pi}
\newcommand{\priors}{\boldsymbol{\prior}}
\newcommand{\priort}{\prior_t}
\newcommand{\priork}{\prior_k}
\newcommand{\dailydemandsample}{\tilde{\dailydemand}}
\newcommand{\dailydemandsamples}{\tilde{\dailydemands}}
\newcommand{\dailydemandsamplest}{\dailydemandsamples\ted}
\newcommand{\dailydemandsampletk}{\dailydemandsample\ted_k}
\newcommand{\estimateddemand}{\tilde{\demand}}
\newcommand{\estimateddemandk}{\estimateddemand\ked}
\newcommand{\FMNN}{{\sf Minimax-OPT}}
\newcommand{\FMNNPlus}{{\sf Minimax-OPT++}}
\newcommand{\EMDP}{{\sf Empirical MDP}}
\newcommand{\FMDP}{{\sf Full Info MDP}}
\newcommand{\AGR}{{\sf Naive Greedy}}
\newcommand{\ASIM}{{\sf Naive Bayesian}}
\newcommand{\OLS}{\texttt{OLS}}
\newcommand{\Ridge}{\texttt{Ridge}}
\newcommand{\RF}{\texttt{RF}}
\newcommand{\OLSt}{\OLS^{(t)}}
\newcommand{\Ridget}{\Ridge^{(t)}}
\newcommand{\RFt}{\RF^{(t)}}
\newcommand{\OLSweight}{w_{\OLS}}
\newcommand{\Ridgeweight}{w_{\Ridge}}
\newcommand{\RFweight}{w_{\RF}}
\newcommand{\condition}{\,\mid\,}
\newcommand{\prob}[2][]{\text{Pr}\ifthenelse{\not\equal{}{#1}}{_{#1}}{}\!\left[{\def\givenn{\middle|}#2}\right]}
\newcommand{\expect}[2][]{\mathbb{E}\ifthenelse{\not\equal{}{#1}}{_{#1}}{}\!\left[{\def\givenn{\middle|}#2}\right]}
\newcommand{\tparen}{\big}
\newcommand{\tprob}[2][]{\text{Pr}\ifthenelse{\not\equal{}{#1}}{_{#1}}{}\tparen[{\def\given{\tparen|}#2}\tparen]}
\newcommand{\texpect}[2][]{\mathbb{E}\ifthenelse{\not\equal{}{#1}}{_{#1}}{}\tparen[{\def\given{\tparen|}#2}\tparen]}
\newcommand{\sprob}[2][]{\text{Pr}\ifthenelse{\not\equal{}{#1}}{_{#1}}{}[#2]}
\newcommand{\sexpect}[2][]{\mathbb{E}\ifthenelse{\not\equal{}{#1}}{_{#1}}{}[#2]}
\newcommand{\indicator}[1]{{\mathbbm{1}\left\{ #1 \right\}}}
\newcommand{\plus}[1]{{\left( #1 \right)^+}}
\newcommand{\revcolor}[1]{{\color{black}#1}}
\newcommand{\vrevcolor}[1]{{\color{black}#1}}
\DeclareRobustCommand{\qed}{%
  \ifmmode 
  \else \leavevmode\unskip\penalty9999 \hbox{}\nobreak\hfill
  \fi
  \quad\hbox{\qedsymbol}}
\newcommand{\qedsymbol}{\openbox}
\renewenvironment{proof}[1][\proofname]{\par
  \normalfont
  \topsep6\p@\@plus6\p@ \trivlist
  \item[\hskip\labelsep\itshape
    #1.]\ignorespaces
}{%
  \qed\endtrivlist
}
\newcommand{\proofname}{Proof}
\begin{document}

\RUNAUTHOR{Feng, Manshadi, Niazadeh, Neyshabouri}

\RUNTITLE{Robust Dynamic Staffing with Predictions}

\TITLE{Robust Dynamic Staffing with Predictions}

\ARTICLEAUTHORS{
\AUTHOR{Yiding Feng}
\AFF{Hong Kong University of Science and Technology, \EMAIL{ydfeng@ust.hk}}

\AUTHOR{Vahideh Manshadi}
\AFF{Yale School of Management, \EMAIL{vahideh.manshadi@yale.edu }}

\AUTHOR{Rad Niazadeh}
\AFF{The University of Chicago, Booth School of Business, \EMAIL{rad.niazadeh@chicagobooth.edu}}

\AUTHOR{Saba Neyshabouri}
\AFF{Amazon, \EMAIL{s.neyshabouri@gmail.com}}
} 

\ABSTRACT{%

\revcolor{We consider a natural dynamic staffing problem in which a decision-maker sequentially hires workers over a finite horizon to meet an unknown demand revealed at the end. Predictions about demand arrive over time and become increasingly accurate, while worker availability decreases. This creates a fundamental trade-off between hiring early to avoid understaffing (when workers are more available but forecasts are less reliable) and hiring late to avoid overstaffing (when forecasts are more accurate but availability is lower). This problem is motivated by last-mile delivery operations, where companies such as Amazon rely on gig-economy workers whose availability declines closer to the operating day.

To address practical limitations of Bayesian models (in particular, to remain agnostic to the underlying forecasting method), we study this problem under \emph{adversarial predictions}. In this model, sequential predictions are  adversarially chosen uncertainty intervals that (approximately) contain the true demand. The objective is to minimize worst-case staffing imbalance cost. Our main result is a simple and computationally efficient online algorithm that is minimax optimal. We first characterize the minimax cost against a restricted adversary via a polynomial-size linear program, then show how to \emph{emulate} this solution in the general case. While our base model focuses on a single demand, we extend the framework to multiple demands (with egalitarian or utilitarian objectives), to settings with costly reversals of hiring decisions, and to inconsistent prediction intervals. We also introduce a practical ``re-solving'' variant of our algorithm, which we prove is also minimax optimal. Finally, motivated by our collaboration with Amazon Last-Mile, we conduct numerical experiments showing that our algorithms outperform Bayesian heuristics in both cost and speed, and are competitive with (approximate or exact) Bayesian-optimal policies when those can be computed.}

\newpage

\KEYWORDS{Last-mile delivery, staffing, online algorithms, sequential predictions, inventory management.}
}%



\maketitle
\setcounter{page}{1}
\newpage

\section{Introduction}
\label{sec:intro}
Managing inventory to meet uncertain future demand is a core paradigm deeply rooted in the classical literature in operations and economics---exemplified by foundational models such as the \emph{newsvendor model}, introduced by \citet{Edg-88} in the $19^{\text{th}}$ century and reformulated in the seminal work of \citet{AHM-51,Whi-55}. These models underscore the careful balance needed between the costs of understocking and overstocking in inventory decisions, emphasizing how decision-makers can utilize information about uncertain demand to better navigate this balance. True to its original motivation, the newsvendor model considers scenarios in which the decision-maker places a single order before the demand is realized. These scenarios align well with contexts where placing multiple orders before demand realization is impractical, for example, due to long lead times or requiring advance commitments. Consequently, inventory decisions in such settings typically rely on a \emph{single-shot forecasting} of unknown demand---often derived from historical data and modeled as a prior distribution---and do not incorporate new information that emerges afterward.

However, in many modern inventory management and workforce planning applications, particularly within gig-economy platforms, shorter lead times (e.g. due to gig workers' short response times) enable decision-makers to sequentially make multiple ordering (or staffing) decisions over a planning horizon before demand is realized. As these decisions occur sequentially, newly available data or signals about unknown demand can be incorporated to refine subsequent decisions, naturally allowing \emph{sequential forecasting} of demand rather than relying solely on an initial forecast. Sequential forecasts typically become increasingly accurate as more information becomes available, allowing decision-makers to progressively improve their decisions.

\revcolor{Motivated by our collaboration with \href{https://www.amazon.science/tag/last-mile-delivery}{Amazon Last-Mile Delivery}, we focus on \emph{dynamic staffing for last-mile operations} as our primary example of the scenario described above. Sequential demand forecasts are particularly valuable in this context, as workforce availability and hiring costs change throughout the planning horizon. A rich literature in operations research has explored modeling sequential forecasts in  somewhat similar (but quite stylized) inventory planning settings; e.g., \cite{FR-96} studies a two-order setting, while \cite{WAK-12,SZ-12} consider extensions to multiple orders. By and large, this literature adopts a (specific) full-information \emph{Bayesian} modeling approach, relying on strong distributional assumptions and knowledge about the underlying forecast generation process. However, modern forecasting systems employed by platforms such as Amazon often combine multiple “black box” machine learning/time series algorithms---an approach that does not naturally align with Bayesian modeling.

In light of the aforementioned shortcomings of the Bayesian approach, in this paper, we introduce a novel, practical way of modeling sequential forecasts in dynamic staffing by adopting a robust, distribution-free approach. We then formally study the interaction between sequential forecasts and variations in workforce supply and hiring costs, and its impact on staffing decisions. As we elaborate in the following, the resulting algorithmic framework (and a related fundamental trade-off that we identify) not only addresses our motivating application, but is rather general and can potentially be applied broadly to other contexts.}



\noindent\textbf{Last-mile staffing with sequential forecasting.}  Consider a platform such as Amazon that must plan its workforce for under-the-roof tasks at a last-mile station (e.g., loading trucks or sorting packages) on a particular operating day.\footnote{For simplicity, we mostly focus on a single operating day, which can be thought of as a ``peak'' day with a demand burst that requires separate major planning. Later in \Cref{sec:extension} we study the extension with joint planning for multiple operating days and stations.} Planning usually starts a few weeks in advance and leads up to the operating day. Over this planning horizon, the platform dynamically makes staffing decisions by drawing workers from multiple pools with different initial sizes, whose availability may vary over time at different rates. For example, while platforms commonly rely on full-time \emph{fixed} workers who must be scheduled well in advance of the operating day (analogous to a supply pool with long lead times), they increasingly also use gig-economy \emph{ready} workers, who could be temporarily hired through various third-party online platforms similar to Amazon Flex~\citep{amazonflex2025}. Unlike fixed workers who become unavailable soon after initial staffing, ready workers provide a flexible supply pool with potential availability until the operating day and significantly shorter lead times, thus can be dynamically staffed over the planning horizon.

The required number of workers on the operating day depends on an uncertain target demand, which becomes fully known only on that day itself. This forces the platform to forecast this demand to carefully balance (and minimize) the potential costs of \emph{overstaffing} and \emph{understaffing} (details in \Cref{sec:prelim}). Indeed, Amazon employs a broad array of machine learning and time-series methods to generate sequential forecasts, which then will be utilized by the dynamic (or  \emph{online}) algorithms making these staffing decisions. Notably, the forecasts become increasingly accurate as the operating day approaches due to additional data or signals about demand (see our numerical case study in \Cref{sec:numerical} for a concrete example). As such, it might initially appear optimal to delay hiring as late as possible to leverage the most accurate demand information. However, fixed workers cannot be hired close to the operating day, as they must be scheduled well in advance. In addition, even the pools of ready workers gradually diminish over time, as individuals become less responsive to shorter notices---making the last-minute hiring either infeasible or prohibitively expensive.

These simultaneous changes in supply availability and forecast information over the planning horizon (illustrated in \Cref{fig:tradeoff}) give rise to a fundamental trade-off:
the platform making staffing decisions can either secure workers early, risking overstaffing due to limited demand information, or postpone hiring until later, when forecasts become more accurate but worker availability is reduced, thus risking understaffing. This motivates the following informal research question:
\emph{can we formalize and characterize this ``optimal trade-off'' in a manner applicable to our motivating application?}

\begin{figure}[htb] 
    \centering
    \includegraphics[width=0.65\textwidth]{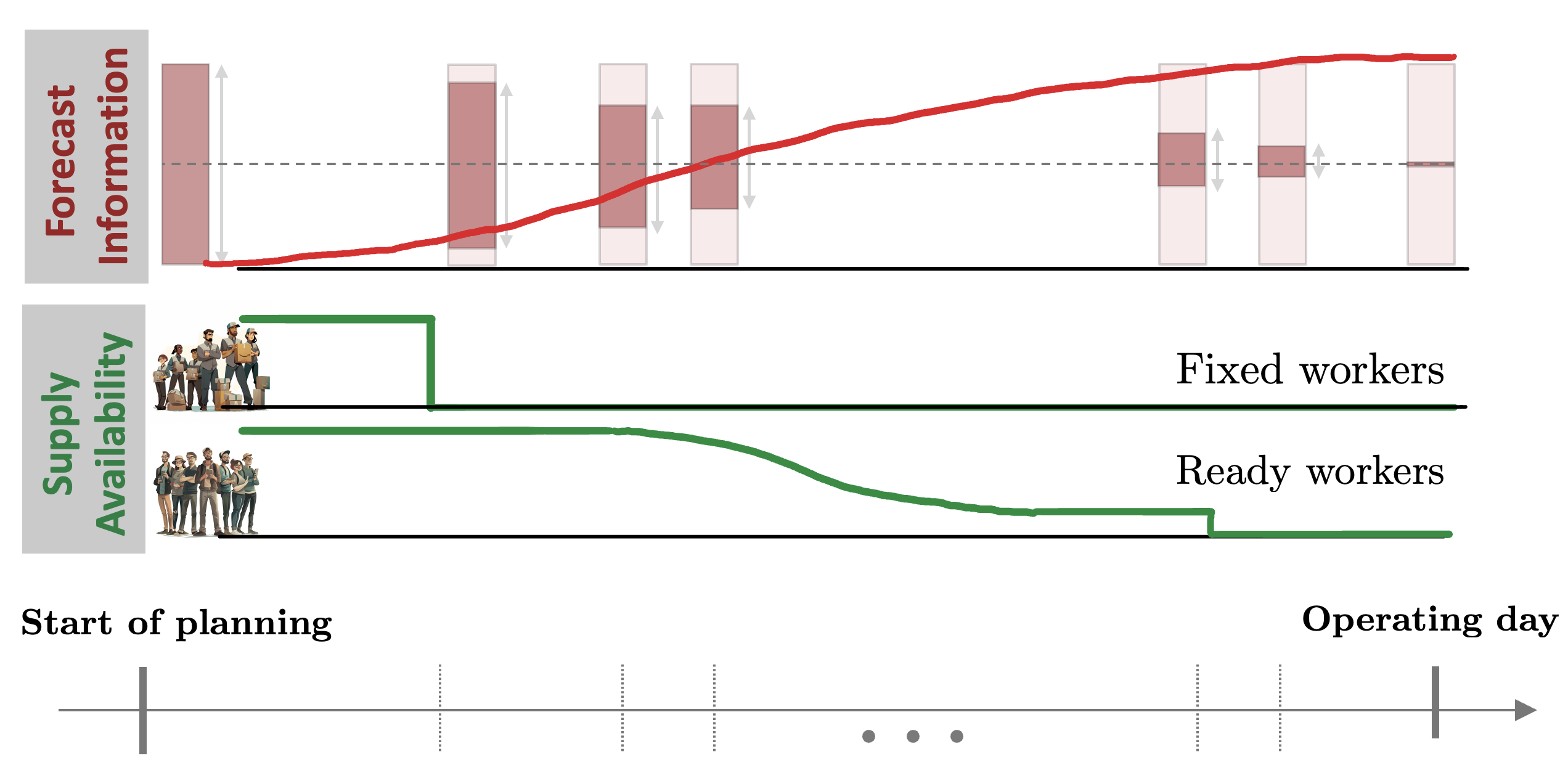} 
    \caption{The fundamental tradeoff between using supply availability and prediction information.}
    \label{fig:tradeoff}
    \vspace{-2mm}
\end{figure}


\smallskip
\noindent\textbf{Robust dynamic staffing \& adversarial predictions.} 
Toward studying the above question, we consider a finite-horizon, discrete-time online staffing problem---with access to sequential forecasts---where at each time period, an online algorithm makes (irrevocable) staffing decisions, i.e., how many workers to hire from each pool. Consistent with the prior work, we assume that workers' availability over time is known. As alluded to earlier, a major departure of our work from the previous literature lies in our approach to modeling sequential forecasts. In principle, à la most of this literature, one could use a standard Bayesian framework to model improving sequential forecasts. This can be done using randomized sequences of Blackwell ordered distributions~\citep{bla-53}, where each distribution is more informative than the preceding one.\footnote{For example, the Martingale Model of Forecast Evolution (MMFE), developed by \cite{hau-69,HJ-94,OO-13}, is a common stylized method in the literature for modeling evolving forecasts using this Bayesian approach.} 

However, this approach is not desirable in our context. Primarily, it requires the decision-maker to know the exact underlying information structure over time, i.e., how precisely the belief evolves, which is overly restrictive. Also, in most non-stylized cases, there is no succinct representation of how the information evolves. Moreover, this approach ties decisions to a specific forecasting method, whereas modern applications often employ multiple ad-hoc, machine-learning-based forecasting algorithms, for which specifying an accurate probabilistic model of forecast is challenging or even impossible.\footnote{For instance, Amazon utilizes forecasting techniques drawn from a wide variety of ML/time-series approaches, including algorithms such as Convolutional Neural Network Quantile Regression (CNN-QR), Deep Recurrent Neural Network (RNN) time-series forecasting (DeepAR+), and Non-Parametric Time Series (NPTS), among others. See \citet{amazonforecast2025} for details.} Last but not least, computing the optimal online policy (or even an approximation) given the sequential information structure is typically computationally demanding~\citep{PT-87} and suffers from the curse of dimensionality, or requires stylized assumptions on how the information evolve (see \Cref{sec:numerical} for a concrete example).

\revcolor{At the same time, it is generally feasible to measure or approximate the ``frequentist accuracy'' of these machine-learning forecasts, including their bias and variance---either empirically or theoretically---based on the amount (and quality) of available prediction data (see \Cref{sec:model justificaiton} for details). To model sequential forecasts in a more practical way using this perspective and to capture such frequentist accuracy measures, we assume that the algorithm observes a \emph{prediction interval}  at each time step, analogous to similar well-studied frequentist notions such as confidence intervals or conformal predictions\footnote{\label{footnote:confidence interval conformal prediction interpretation}Modeling forecast errors via confidence intervals is commonly used to quantify uncertainty in offline statistical and machine learning models~\citep{GA-90,smi-03}, and more recently for adaptive forecasts~\citep{CR-24}. Conformal predictions, both offline and online~\citep{SV-08, gibbs2021adaptive, AB-23}, which calibrate machine learning models to generate uncertainty sets containing the ground truth with a specified probability, are also increasingly prevalent in theory and practice due to their versatility. Similarly, in the robust optimization literature~\citep{BEN-09}, uncertainty sets often take the form of high-dimensional boxes, analogous to our prediction intervals. Finally, in mathematical finance, prediction intervals are commonly employed to characterize the uncertain trajectories of Brownian motions~\citep{MP-10}.}. To decouple staffing decisions from the specifics of demand forecasting, and to ensure robustness and agnosticism with respect to particular forecasting methods (a desirable property in practical applications such as last-mile staffing), we adopt an \emph{``adversarial predictions''} framework. In this framework, we assume that the length of each interval---termed as the \emph{prediction error}---is bounded and known to the platform upfront, mirroring the knowledge of the platform about the amount and quality of the prediction data. Moreover, we assume that the prediction intervals are {\em consistent}, which means that they contain the target demand (we relax this assumption to approximate consistency later in the paper). Other than these, we impose \emph{no} additional structural assumptions on these intervals, effectively allowing them to be selected adversarially, either obliviously or adaptively. }

Given the above ingredients, our goal is to design a computationally efficient online algorithm that at each time observes a prediction interval before making its staffing decision. The algorithm aims to minimize the staffing imbalance cost at the end of horizon against the worst-case adversary, where the adversary selects the final demand and a valid sequence of (history-dependent) prediction intervals, subject to the given prediction error bounds and (approximate) consistency. We now pose the following formal research question. 
\begin{displayquote}
   \myboxtwo{\emph{Can we design and characterize a computationally efficient, robust online algorithm that achieves the optimal worst-case staffing imbalance cost against any (adversarial) sequence of prediction intervals?}}
\end{displayquote}

\noindent
\textbf{Our main contributions.}  We view robust dynamic staffing with predictions as a high-dimensional \emph{min-max optimization (or game)} with imbalance cost as payoff, in which the maximizer (i.e., the adversary) selects an action from the space of valid prediction intervals and target demands, while the minimizer (i.e., the decision-maker) chooses among feasible online algorithms. Now the above question seems challenging, as the spaces from which the decision-maker and the adversary choose their actions are doubly exponential and exponential in size, respectively. Yet, somewhat surprisingly, we can answer this question affirmatively in various settings. More formally, we establish the following computational/algorithmic result.

\mybox{
\begin{displayquote}
\textbf{\underline{(Informal) Main Result}:} There exists a simple, interpretable, and deterministic polynomial-time online algorithm, informally called the \emph{``LP-based emulator,''} that is minimax-optimal, with suitable polynomial-time generalizations to various practical extensions of our problem. 
\end{displayquote}
}
Through the LP-based emulator, we essentially show how to algorithmically capture the optimal tradeoff between early greedy staffing (when workforce supply is abundant, but demand information is limited) and later staffing (when demand information is more accurate, but worker availability has decreased). This online staffing algorithm, formally described as \Cref{alg:opt} for the base model with a single-station and multiple heterogeneous supply pools, operates in two main steps:

\smallskip
\begin{enumerate}[label=(\emph{\roman*}),leftmargin=0.22in]
\item \revcolor{\emph{Solving an offline Linear Programming (LP):} Given known model parameters---i.e., the initial prediction interval, as well as supply availability and prediction error trajectories over the horizon (reflecting changes in supply and information)---we identify particular polynomial-size LPs (e.g., \ref{eq:opt reduced form} in \Cref{sec:base model} for the base model). We then show that these LPs \emph{exactly} characterize the worst-case staffing imbalance costs of minimax-optimal algorithms, or equivalently, the minimax values of these problems. By solving this LP upfront, our algorithm not only computes the minimax value of the underlying game, but also obtains the optimal staffing decisions against a \emph{restricted} family of prediction intervals (explained shortly). This LP solution subsequently serves as a ``guide'' for future staffing decisions.}

\item\revcolor{\emph{Running online emulation:} Next, we introduce novel procedures (such as Procedure~\ref{alg:emulator} in \Cref{sec:base model} for the base model) that leverage the LP solutions to guide online staffing decisions. These procedures enable online algorithms to achieve minimax optimality against \emph{any possible} sequence of prediction intervals, rather than only against those that belong to the aforementioned restricted family. In essence, our algorithms iteratively project the LP solutions onto feasible online decisions, dynamically adjusting them as new prediction intervals are revealed---thus effectively ``emulating'' the offline LP in every instance.}

\end{enumerate}
\smallskip

We begin by examining a simple single-station, single-pool setting with perfectly consistent predictions as a warm-up in \Cref{sec:simple instance}.  We then work our way up by expanding to more general settings with multiple supply pools and approximately consistent predictions in \Cref{sec:base model result}---our base model---and subsequently explore practically relevant extensions in \Cref{sec:extension}. These include settings with (i) multiple stations (or operating days) sharing the same workforce pools, each with its own target demand (\Cref{sec:multi-station}), (ii)~costly hiring and releasing with budgets, where these costs are heterogeneous for different pools and vary over time, typically becoming more expensive as the operating day approaches (\Cref{sec:costly-hiring} and \Cref{sec:cancellation}), and (iii)~jointly minimizing the cost of hiring and over-/under-staffing (Section~\ref{apx:combinded objective}). Our framework and algorithmic results are sufficiently flexible to extend to \emph{all} these practical variants.



\revcolor{We also consider a variant algorithm in \Cref{sec:lp-resolving-main} by refining our minimax optimal LP-based emulator approach using the idea of \emph{resolving}. This new algorithm updates our original LP each time a new prediction interval arrives by plugging in the new history-dependent state of the algorithm---which includes the current staffing levels and available workers in each supply pool, as well as the last prediction interval. Then it resolves this updated LP at each time to obtain the staffing decision at that time. We theoretically analyze this refined algorithm (\Cref{alg:lp-resolve}) and show that it remains minimax optimal. In addition, due to its resolving nature, this new algorithm is expected to outperform the original algorithm in practical instances.

Finally, inspired by our primary motivating application in last-mile delivery involving two worker pools (fixed and ready workers), we also conduct comprehensive numerical simulations in \Cref{sec:numerical} to empirically evaluate the performance of our LP-based emulator algorithm and its practical refinement based on resolving. We compare their performance with other heuristic benchmarks used in practice, as well as high-dimensional near-optimal online policies (when they are feasible to compute). Importantly, our algorithms take advantage of prediction intervals generated by aggregating various machine learning forecasting methods, while the other benchmarks use estimated or exact distributional knowledge of the final demand. Our numerical results demonstrate that both of our algorithms with access to sequential prediction intervals significantly outperform these Bayesian benchmarks in terms of staffing costs and computational efficiency.  See \Cref{tab:numerical:long instance,tab:numerical:short instance} in \Cref{sec:numerical} for details.}

\subsection{Summary of our Techniques}
Our work draws on a variety of techniques from online algorithms and game theory to design our algorithms and establish their minimax optimality. Below, we highlight some of these methods and key technical ideas.

\smallskip
\noindent\textbf{Greedy staffing with target overstaffing upper bound.} In our simple warm-up setting in \Cref{sec:simple instance},  the online algorithm decides only on the number of workers to hire from a single pool in each round, while receiving a sequence of perfectly consistent prediction intervals that always contain the true demand (hence, these intervals could be assumed to be ``nested'' without loss of generality). This simplicity allows us to clearly isolate and study the trade-off between hiring early and late by focusing exclusively on the dynamics of supply availability and forecast accuracy.

\revcolor{We make a straightforward, yet crucial observation: An algorithm aiming to keep the overstaffing below a certain target can do so by ``underestimating'' the demand using the lower bound of each prediction interval and then properly adjusting the staffing level to ensure that it never exceeds a certain threshold equal to this underestimated demand plus the target. Since hiring earlier is always preferable from a supply-availability perspective---and reduces the risk of understaffing---this insight naturally suggests a simple greedy algorithm (see \Cref{alg:opt simple instance}): given a target upper bound on allowable overstaffing, hire workers as early as possible while maintaining supply feasibility and respecting a certain upper threshold on the staffing level (as described above). We then show how to optimally set this target upper bound via a fixed-point argument, resulting in a minimax-optimal algorithm (see \Cref{fig:geometric-proof} and \Cref{prop:opt alg simple instance}). Finally, we present numerical examples to illustrate how our algorithm resolves the early-versus-late hiring trade-off (see \Cref{fig:tradeoff illustration example} in \Cref{sec:simple instance}).}

Through the above investigation, we identify an important structural property of the worst-case adversary in our warm-up setting: the adversary selects prediction sequences with a \emph{single switching} structure. Such an adversary initially chooses prediction intervals that ``signal'' high demand; then, at a specific adversarially-chosen time, it switches to intervals signaling low demand (while remaining consistent with the prediction history). In the simplest model, this corresponds to first changing only the lower bound of the prediction interval (keeping the upper bound fixed) and subsequently switching to change only the upper bound (keeping the lower bound fixed). Intuitively, the adversary benefits from initially signaling high demand before switching to low demand---rather than the reverse---since the online algorithm can only increase its staffing level. We leverage this structural observation to establish our main result.


\smallskip
\noindent\textbf{Zero-sum games, single-switch adversaries, \& LPs.} 
Building upon this warm-up, in our base model in \Cref{sec:base model result} we study a more general setting in which the platform must make staffing decisions for a single station, given access to multiple heterogeneous supply pools---each characterized by its own initial size and availability dynamics. We also allow prediction intervals that not only have limited (yet improving) accuracy but are also only \emph{approximately consistent} (see \Cref{asp:prediction sequence}). This setting is considerably more challenging: some of the key monotonicity properties that previously guaranteed the optimality of a greedy algorithm in \Cref{sec:simple instance} no longer hold, as staffing decisions across different pools become coupled through the objective function and the adversary’s selection of target demand. Moreover, while earlier staffing continues to improve supply feasibility, determining the desired threshold on overstaffing for each pool at each time becomes computationally difficult in this setting, as it now involves a high-dimensional search.


As mentioned earlier, our minimax problem can be viewed as a two-player zero-sum (Stackelberg) game with staffing imbalance cost as the zero-sum payoff. The pre-specified supply availability and prediction error trajectories---which determine the availability rate for each pool and the lengths of prediction intervals over time---are fixed in advance. The \emph{leader} (the ``min-player'') is the decision-maker, who designs an online staffing algorithm that respects supply feasibility. The \emph{follower} (the ``max-player'') is an adversary who selects the target demand and a sequence of (approximately) consistent prediction intervals, constrained by the predetermined prediction error trajectories. Importantly, as alluded to earlier, characterizing the equilibrium of this Stackelberg game is challenging, due to both players having high-dimensional action spaces: the decision-maker considers all feasible (and possibly randomized) online algorithms, while the adversary explores all potential (possibly history-dependent) prediction sequences and demands.\footnote{Without discretization, infinitely many prediction sequences could be chosen by the adversary.}


To determine the optimal min-player strategy in the base model, we leverage the structural insight gained from analyzing the worst-case adversary in the simpler warm-up scenario. Specifically, guided by that insight, we restrict the adversary's action space to a smaller subset of single-switch prediction sequences. Facing this constrained single-switch adversary, we demonstrate that the minimax-optimal staffing algorithm can be characterized by a polynomial-size LP. In this LP, decision variables represent staffing levels, while constraints capture each possible adversarial switching time to limit overstaffing, given the demand prediction errors and inconsistencies defined in \Cref{asp:prediction sequence}. Additional constraints ensure supply feasibility and control for potential understaffing at the end of the horizon (see \ref{eq:opt reduced form} for our base model with a single station and multiple pools). Finally, the objective function of this LP is the staffing imbalance cost. 

This LP formulation provides a lower bound (i.e., a relaxation) on the minimax value of our original Stackelberg game, which is equal to the optimal worst-case staffing cost.
Furthermore, we extend it to several practically relevant generalizations: \ref{eq:opt reduced form multi station} addresses scenarios with multiple stations (\Cref{sec:multi-station}); \ref{eq:opt cancellation} incorporates costly hiring and releasing with budget constraints (\Cref{sec:costly-hiring}); and \ref{eq:opt variant cost} captures jointly minimizing staffing and hiring costs (\Cref{apx:combinded objective}).


\smallskip
\noindent\textbf{Online emulations \& minimax optimal online algorithm.} 
The argument above does not fully characterize the minimax value of the original game, as it addresses only the surrogate relaxation game where the adversary is restricted to single-switch prediction sequences. Therefore, as the second step of our framework, we show that constraining the adversary to single-switch predictions is actually \emph{without loss}. Specifically, we introduce a novel online emulation step (Procedure~\ref{alg:emulator}), which uses as input the minimax optimal algorithm against a single-switch adversary (i.e., the optimal solution to the LP), and outputs feasible staffing decisions in an online manner, regardless of whether the actual prediction sequence is single-switch. Intuitively, the resulting online algorithm (\Cref{alg:opt}) closely tracks the optimal LP solution over time, dynamically adjusting staffing decisions (in fact, lowering them) as predictions of target demand evolve. By establishing a critical \emph{invariant property} of our online emulation step, we show that the staffing cost under any adversarial prediction sequence is never greater than the optimal staffing cost against the single-switch adversary. This ensures that our online algorithm is indeed minimax optimal against all possible adversarial strategies.


\smallskip
\noindent\textbf{LP-based emulators for extensions: configurations LPs \& resolving.} As natural extensions of our base model, in \Cref{sec:multi-station}, we focus on a setting with multiple stations using shared worker pools, where the decision-maker aims to minimize an objective that combines staffing imbalance costs across stations---either by taking the maximum (an egalitarian approach) or the sum (a utilitarian approach). Next, in \Cref{sec:costly-hiring} and \Cref{sec:cancellation}, we consider a setting with costly hiring and releasing, where the platform pays to hire workers and can \emph{reverse} earlier hiring decisions by paying a cost, subject to a fixed budget. Finally, in \Cref{apx:combinded objective}, we analyze a setting where hiring incurs a cost that is integrated into the objective function, resulting in a \emph{mixed objective} to be minimized. Although these extensions are substantially more complex than our base model, we still develop minimax-optimal online algorithms that run in polynomial time. At a high level, these algorithms adopt a similar architecture to \Cref{alg:opt}, emulating the optimal solution of a linear program. However, in some cases, more intricate LP formulations are required---\emph{configuration LPs} that capture combinatorial allocations from hiring pools---or even a sequence of LPs that must be resolved in each step (as in the case of costly release). We defer further technical details to the later sections.

\subsection{Practical \& Managerial Insights}
\label{sec:insights}

\revcolor{
\noindent\textbf{Numerical results.} 
To empirically evaluate our algorithms, we conduct numerical experiments with synthetic data in \Cref{sec:numerical}. In our setting, the platform hires workers from two supply pools---ready workers and fixed workers---whose availability follows the framework described earlier (see also \Cref{fig:tradeoff}). While our theoretical results focus on adversarial environments without distributional assumptions on demand, our experiments adopt a Bayesian perspective: the final demand accumulates from partial daily demands, which are drawn independently from (unknown) distributions and revealed sequentially to the platform. In this Bayesian setting, the \vrevcolor{full-information} instance-optimal online algorithm is well-defined and can be obtained by solving a finite-horizon Markov Decision Process (MDP). \vrevcolor{However, this policy requires the exact knowledge of the transition probabilities---and therefore distributional knowledge about the partial demand generative process. Moreover, discretization is required for tractability, as both the state and action spaces are continuous.}

\vrevcolor{We study two setups for forming prediction intervals used by our own algorithms, based on access to information about future partial daily demands. In the first setting, we assume that the platform has only access to samples of future demands. In such settings, prediction intervals are constructed from  realized partial demands combined with future samples.} In the second setting, \vrevcolor{we assume that the platform does not directly have sample access to future partial demands; as such, it relies on predictions of three machine learning models---linear regression, ridge regression, and random forest---which are trained using offline empirical samples as input data, and generate sequential point-estimates of the final demand by taking realized partial demands at each time as features.}
In both setups, forecast accuracy naturally improves over time as \vrevcolor{more partial daily demands} are observed and the uncertainty in the remaining days diminishes.





In the first setup, we benchmark our proposed algorithms, originally designed for adversarial environments, against \emph{empirical discretized optimal online algorithm} (EmDisOPT). \vrevcolor{This benchmark is the solution of an estimated discretized MDP, where the action and state spaces are discretized and offline empirical samples of future partial demands are used to estimate the MDP transition probabilities. Notably, this benchmark converges to the Bayesian optimum as the sample size increases and the discretization becomes finer.} In 14-day horizon experiments, our algorithm \emph{weakly} outperforms EmDisOPT, reducing costs by 5.6\% on average while running 18,194 times faster (see \Cref{tab:numerical:long instance}).
Finally, when compared with simple heuristics motivated by our industry collaborator, our approach consistently achieves substantial cost reductions. 

In the second setup, we compare our algorithms against the same set of heuristic policies. (Implementing the analog of EmDisOPT is not feasible here due to the even greater computational complexity of treating the entire demand history as the state.) Consistent with the first setup, our algorithms deliver significant cost savings across a broad range of parameter settings. In particular, because the point forecasts produced by machine learning models are typically biased, heuristics based directly on them perform poorly. By contrast, our algorithms achieve costs that are nearly five times lower in 14-day horizon experiments (see \Cref{tab:numerical:long instance 3ML}).}


\smallskip
\noindent\textbf{Further insights \& takeaways.} 
Our framework shares structural similarities with various algorithms in Bayesian online decision-making contexts (e.g., prophet inequalities and stochastic online matching). Such algorithms typically solve an ex-ante relaxation (or fluid approximation) to derive an optimal offline solution, and then employ this solution as a ``canonical solution'' to inform online decisions. This is often achieved through online rounding techniques, such as online contention resolution schemes, which dynamically adjust the ex-ante relaxation (e.g., \citealp{ANSS-19, FNS-24}). However, our approach diverges significantly by addressing an adversarial rather than Bayesian environment, making both our analog of the ex-ante relaxation (\ref{eq:opt reduced form}) and the corresponding online adjustment procedure (Procedure~\ref{alg:emulator}) substantially more involved. This novel framework may therefore be of independent interest.

Beyond the dynamic staffing problem, our setting highlights how classical decision-making problems can be revisited under a new informational paradigm, where adversarial yet progressively improving predictions are revealed sequentially over time. This feature is common in many other sequential decision-making problems. For example, in the ski rental problem \citep{KMMO-94,BE-05}, it is natural to receive increasingly accurate forecasts about the remaining ski season, or in single-leg revenue management \citep{BQ-09,BKK-23,GJZ-23}, it makes sense to have sequentially improving forecasts of future demand. Incorporating adversarial prediction models into these problems opens up intriguing algorithmic questions and can lead to more realistic decision-making solutions.

\revcolor{We conclude this introduction by highlighting that our work is related to various lines of work in operations research, computer science, and economics. We defer the discussion of further related work to \Cref{sec:further}.}

\section{Preliminaries}
\label{sec:prelim}
Motivated by applications in last-mile delivery, we study the \emph{dynamic staffing with adversarial predictions} problem. In the following, we describe various components of our base model with multiple workforce pools and a single station. Extension models---including settings with multiple stations and operating days (\Cref{sec:multi-station}), and scenarios with costly hiring and releasing of workers under budget constraints (\Cref{sec:costly-hiring})---are discussed in \Cref{sec:extension}.

\xhdr{Setting \& notations.} 
Consider a platform tasked with sequential workforce planning over a finite time horizon of $T+1$ days \revcolor{(also referred to as \emph{times}),} where $T\in\naturals$. The platform sequentially hires workers during the first $T$ days to meet a target demand $\demand\in\mathbb{R}_+$ on day $T+1$ (for example, the number of workers needed to deliver packages). We refer to day $T+1$ as the ``operating day.'' The platform can hire workers from $n$ heterogeneous worker pools, each initially containing $\supplyi\in\mathbb{Z}_{\geq 0}$ workers. Each worker is in one of two possible states---\emph{available} or \emph{unavailable}---on each day. Initially, all workers are available. At the beginning of each day $t\in[T]$, each available worker in pool $i\in[n]$ becomes unavailable with probability $\alpha_{it}\in[0,1]$ (independently across time); otherwise, the worker remains available. Once unavailable, workers remain unavailable thereafter (e.g., due to commitments to other jobs). Given the available workers on day $t\in[T]$, the platform irrevocably hires $\allocit\in\mathbb{Z}_{\geq 0}$ workers from each pool $i$.\footnote{In our base model, we assume that the algorithm cannot reverse its hiring decisions. However, in some practical scenarios, workers may be released or recalled at an additional cost. We explore this extension in \Cref{sec:costly-hiring} and \Cref{sec:cancellation}.} A sequence of staffing profiles $\{\allocit\}_{i\in[n],t\in[T]}$ is \emph{supply feasible} (or simply \emph{feasible}) if the number of hired workers from each pool $i\in[n]$ on each day $t\in [T]$ does not exceed the number of available workers in that pool on that day.

Given a staffing profile $\xbf=\{\allocit\}_{i\in[n],t\in[T]}$ and demand $\demand$, the platform's staffing cost $\Cost[\demand]{\xbf}$ (incurred on the operating day $T+1$) is defined as:
\begin{align*}
    \Cost[\demand]{\xbf} \triangleq 
    &~
    {
    \undercost\cdot
    \plus{\demand - \sum\nolimits_{i\in[n]}\sum\nolimits_{t\in[T]}\allocit}
    }
    +
    {
    \overcost\cdot
    \plus{\sum\nolimits_{i\in[n]}\sum\nolimits_{t\in[T]}\allocit - \demand}
    }
\end{align*}
where we use the notation $\plus{x} \triangleq \max\{0,x\}$, and the parameters $\undercost\in\reals_+$ and $\overcost\in\reals_+$ represent the \emph{per-unit understaffing} and \emph{overstaffing costs}, respectively.\footnote{The staffing cost can equivalently be expressed as $\Cost[\demand]{\xbf}=\max\left\{\undercost\cdot
    (\demand - \sum\nolimits_{i\in[n]}\sum\nolimits_{t\in[T]}\allocit),\overcost\cdot 
    (\sum\nolimits_{i\in[n]}\sum\nolimits_{t\in[T]}\allocit - \demand)\right\}$.} For simplicity of exposition, we assume linear cost functions in the remainder of the paper. Almost all our results extend immediately to more general settings involving cost functions $c(\cdot)$ and $C(\cdot)$, whose inputs are under-staffing and over-staffing, respectively. We only require these functions to be weakly increasing and weakly convex on $\reals_+$ with $c(0) = C(0) = 0$.

\xhdr{Unknown demand and sequential forecasts.}
In our model, the demand $\demand\in[\pLzero, \pRzero]$ is not revealed to the platform until the operating day $T + 1$, where $[\pLzero, \pRzero]$ is the initial demand range. However, the platform receives sequential forecasts for the unknown demand $\demand$ at the beginning of each day $t\in[T]$. Specifically, the initial interval $[L_0,R_0]$ is known in advance, and on each day $t=1,2,\ldots,T$, the platform observes a \emph{prediction interval} $\predictiont = [\pLt,\pRt]$ (simply referred to as a ``prediction''). We impose the following regularity assumption on the prediction sequence $\predictions\triangleq\{\predictiont\}_{t\in[T]}$.

\begin{assumption}[\textbf{Regularity of predictions}]
\label{asp:prediction sequence}
    The prediction sequence $\predictiont = [\pLt,\pRt]$ satisfies the following properties for all $t=1,2,\ldots,T$:
    \begin{enumerate}
        \item \underline{{$(\pbiass,\pprobb)$-Consistency}}: The prediction $\predictiont$ is $(\pbias_t,\pprob_t)$-consistent; that is, there exists an unknown point estimate of demand $\biaseddemand_t$ such that $\prob{~\abs{\demand - \biaseddemand_t}\leq \pbias_t~~~\&~~~\biaseddemand_t \in [\pLt,\pRt]~}\geq 1-\pprob_t$.
        \item \underline{{$\perrors$-Bounded error}}: The prediction error is bounded by $\perrort$; that is, $\pRt - \pLt \leq \perrort$.
    \end{enumerate}
\end{assumption}
We refer to $\pbiass = \{\predictionbias_t\}_{t\in[T]}$ and $\perrors = \{\perrort\}_{t\in[T]}$ as the \emph{prediction inconsistency upper bounds} and \emph{prediction error upper bounds}, respectively, both of which are assumed to be known to the platform. We refer to $\pprobb=\{\pprob_t\}_{t\in[T]}$ as the \emph{miscoverage probability}, which should be thought of as a small quantity, say $\mathcal{O}(\frac{1}{T})$ or even smaller, such that $\sum_{t\in[T]}\pprob_t=\mathcal{O}(1)$. \revcolor{We say that the predictions are \emph{perfectly consistent} if $\pbias_t=\pprob_t=0,~ \forall t\in [T]$.} Further interpretation and justification of this regularity assumption are provided in \Cref{sec:model justificaiton}. For analytical convenience, we also introduce the dummy notations $\predictionbias_0 = 0$ and $\perror_0 = \pRzero - \pLzero$.

\xhdr{Fluid approximation.} With ``large'' systems in mind---\vrevcolor{that is, scaling supply and demand sizes to be large while keeping other parameters including $T$ constant---}we consider a deterministic \emph{fluid approximation} of our problem. First, we allow demand, supply, and staffing decisions at each time to take fractional values (after normalization by the large market scale), that is, $\demand\in[\pLzero, \pRzero]\subseteq\reals_+$, and $\supplyi,\allocit\in\reals_+$. Second, we simplify the stochastic evolution of worker availability by replacing the number of available workers in each supply pool at each time with its expectation. Specifically, suppose pool $i$ has $\supply$ available workers at the end of day $t-1$; on day~$t$, $(1-\alpha_{it})\cdot \supply$ workers remain available, while the remaining $\alpha_{it}\cdot \supply$ become unavailable. Given this fluid approximation, a (fractional) staffing profile $\{\allocit\}_{i\in[n],t\in[T]}$ is said to be supply feasible (or simply feasible) if:
\begin{align*}
\forall i\in[n],~\forall t\in[T]:~~\allocit\leq \Bigg(\Big(\ldots\big(\left(\supplyi\left(1-\alpha_{i1}\right)-x_{i1}\right)\left(1-\alpha_{i2}\right)-x_{i2}\big)\ldots\Big)\left(1-\alpha_{i{t-1}}\right)-x_{i{t-1}}\Bigg)(1-\alpha_{it})~.
\end{align*}
By defining the \emph{availability rate} of pool $i$ at time $t$ as $\wdiscountit\triangleq \prod_{\tau\in[t]}(1-\alpha_{i\tau})$ (that is, the probability that a worker in pool $i$ remains available during days $[1:t]$), the above $nT$ constraints can equivalently be rewritten as $n$ constraints, one for each pool $i\in[n]$, by rearranging terms:
\begin{align}
\label{eq:supply-feasible}\tag{\textsc{Supply-Feasibility}}
\forall i\in[n]:~~~~~~~~~\displaystyle\sum\nolimits_{t\in[T]}\frac{1}{\wdiscountit}\allocit \leq \supplyi~.
\end{align}
Note that $\frac{1}{\wdiscountit}\allocit$ is essentially the effective number of workers needed in the initial pool $i$, so that $\allocit$ number of these workers remain available on day $t$. For simplicity, we focus on this fluid approximation throughout the paper.\footnote{Using standard independent randomized rounding and concentration bounds, all our results (up to an additive small error) naturally extend to the original stochastic setting when supply pool sizes are large and the state of each pool is observable at any time.} We also refer to the tuple $\instance\triangleq\left(n, T, \{\supplyi,\wdiscountit\}_{i\in[n],t\in[T]} ,\pLzero,\pRzero,\{\pbias_t, \perrort\}_{t\in[T]},\undercost,\overcost\right)$ as an \emph{instance} of the dynamic staffing with adversarial predictions problem.

\xhdr{Timeline.} We formalize the timeline of the model below.
\begin{itemize}
    \item On day 0: platform has the following prior information: the number of days $T$, initial supply pool sizes $\{\supplyi\}_{i\in[n]}$, availability rates $\{\wdiscountit\}_{i\in[n],t\in[T]}$, initial demand range $[\pLzero,\pRzero]$, prediction inconsistency and error upper bounds $\{\pbias_t,\perrort\}_{t\in[T]}$,  per-unit understaffing cost $\undercost$, and per-unit overstaffing cost $\overcost$.
    \item On each day $t\in[T]$: 
    \begin{itemize}
        \item prediction $\predictiont = [\pLt,\pRt]$ is revealed to the platform,
        \item workers' availability is updated and the platform observes current supply pool sizes,
        \item the platform chooses a feasible staffing profile $\{\allocit\}_{i\in[n]}$.
    \end{itemize}
    \item On day $T + 1$: demand $\demand$ is revealed and the total cost $\Cost[\demand]{\xbf}$ is computed.
\end{itemize}

\xhdr{Robust online algorithm design under worst-case cost.} A \emph{feasible online algorithm} is an algorithm that (i)~at any time $t$, makes (fractional) staffing decisions $\{\allocit\}_{i\in[n]}$ only based on its prior information (on day $0$ as outlined above), history, and current prediction $\predictiont = [\pLt,\pRt]$, and (ii) its resulting staffing profile $\{\allocit\}_{i\in[n],t\in[T]}$ is feasible (in fluid approximation). Focusing on robust performance, we evaluate the performance of any feasible online algorithm used by the platform with its \emph{cost guarantee}, defined formally below.

\begin{definition}[\textbf{Cost guarantee}]
\label{def:minmaxcost}
    Given an instance $\instance$, the \emph{cost guarantee} of an online algorithm $\ALG$ is defined as its staffing cost against worst-case adversarial predictions and demand, that is,
    \begin{align*}
        \max_{\predictions,\demand}\expect{\Cost[\demand]{\ALG(\predictions)}}
    \end{align*}
    where $\ALG(\predictions)$ is the (possibly randomized) staffing profile generated by algorithm $\ALG$ in an online fashion under prediction sequence $\predictions= \{\predictiont\}_{t\in[T]}$.
\end{definition}

A feasible online algorithm $\OPTALG$ is said to be \emph{minimax optimal} if it has the minimum cost guarantee among all feasible online algorithms, i.e., 
\begin{align*}
    \OPTALG \in \underset{\substack{\textrm{feasible online}\\\textrm{algorithm}~\ALG}}{\argmin}\max_{\predictions,\demand}\expect{\Cost[\demand]{\ALG(\predictions)}}
\end{align*}
For a given instance, we refer to 
 $\optcost\triangleq\max_{\predictions,
    \demand}\expect{\Cost[\demand]{\OPTALG(\predictions)}}$ as the \emph{optimal minimax cost}, i.e., the cost guarantee of the minimax optimal online algorithm.\footnote{Mathematically speaking, we need to use ``$\sup$'' and ``$\inf$'' when defining our minimax optimal algorithm; however, as we establish in this paper, the equilibrium will indeed be achieved, and hence ``$\max$'' and ``$\min$'' are well-defined.}

\subsection{Discussion on the Model Primitives}
\label{sec:model justificaiton}
We next explain several key modeling choices made in our problem formulation.

\xhdr{Understaffing and overstaffing costs.} Our model accommodates asymmetric understaffing and overstaffing costs. \revcolor{In the last-mile delivery context, understaffing costs capture operational expenses from relying on overtime work, as retailers typically aim to avoid delays or failures in package deliveries. These costs are high, both financially and in terms of compliance with labor laws.} Overstaffing costs, on the other hand, reflect the opportunity costs of assigning excess workers to specific tasks, as well as operational costs from last-minute rescheduling. For simplicity of exposition, we assume linear cost functions in our base model; however, most of our results naturally extend to general cost functions for understaffing and overstaffing that are weakly increasing and weakly convex (see \Cref{apx:optalginfty,apx:optalgcan}).


\xhdr{Unknown demand and known supply.} In the last-mile delivery industry, the magnitude of uncertainty on the demand side typically differs significantly from that on the supply side. Specifically, demand uncertainty tends to be much greater, as it can be influenced by various factors such as sales events or social media trends. In contrast, the availability of workforce pools (supply) is usually more stable and predictable based on historical data. Motivated by this discrepancy, our model incorporates sequential forecasts on the demand side, while assuming known fluid trajectories for supply availability. Importantly, our results \emph{directly generalize} to scenarios in which the platform only has consistent interval predictions regarding availability rates $\{\wdiscountit\}$. In such settings, it is optimal for the adversary to pick the actual availability rate equal to the lower bound of the prediction intervals, and thus reduces to our base model from the platform's perspective.

\xhdr{$(\pbiass,\pprobb)$-consistent and $\perrors$-error-bounded predictions.}
The prediction intervals can be naturally interpreted as ``uncertainty sets'' or ``confidence intervals.'' As discussed in the introduction (Footnote~\ref{footnote:confidence interval conformal prediction interpretation}), this approach for expressing uncertainty is commonly used across various literature such as robust optimization~\citep{BEN-09}, machine learning~\citep{GA-90,AB-23}, pricing and mechanism design~\citep{CLL-17}, and mathematical finance~\citep{MP-10}. In PAC-learning-based forecasting methods (e.g., regression-based predictions), the length of the uncertainty set---captured by the prediction error $\perror$ in \Cref{asp:prediction sequence}---can be explicitly calculated using the sample complexity of the underlying prediction method (e.g., the universal PAC-learning bound with constant VC-dimension~\citep{SB-14}, that with $\mathcal{O}\left(\ln(\frac{1}{\pprob})/{\perror^2}\right)$ samples we can have an uncertainty set of length $\perror$ that is valid with probability at least $1-\pprob$). Motivated by this, we assume prior knowledge of the upper bounds on prediction errors, reflecting the knowledge of the sample size of the dataset used in demand forecasts for each day. Due to the logarithmic dependency of sample complexity on $1/\pprob$, it is also realistic to consider regimes where $\pprob=\frac{1}{T^\gamma}$ for sufficiently large $\gamma>0$, as this increases the sample complexity only by logarithmic factors. \revcolor{We also highlight that the number of days $T$ in our model is finite, and in practically relevant regimes of our problem is not extremely large (say an integer between 5 to 21).}


As for the interpretation of the $\pbiass$-consistency assumption in our motivating application, the unknown demand $\demand$ on the operating day $T+1$ might evolve over the planning horizon. In such a case, the sequence $\{\biaseddemand_t\}_{t\in[T]}$ in \Cref{asp:prediction sequence} represents the trajectory of these demand changes for $t = 1, \dots, T$. Such changes may arise from various sources, e.g. external shocks due to unexpected high-volume traffic on the Amazon website, which cannot be accurately captured by standard machine-learning-based forecasts. Following this perspective, these changes can also be seen as representing the inherent \emph{bias} present in the predictive models used. See \Cref{sec:numerical} for a demonstration in a simulated case study.




\xhdr{Oblivious vs.\ adaptive adversary.} In \Cref{def:minmaxcost}, we consider an \emph{oblivious adversary} who selects the prediction sequence $\predictions$ and demand $\demand$ non-adaptively. As becomes clear later, since our proposed algorithms are deterministic, our results automatically extend to the case of an \emph{adaptive adversary} who can choose the prediction $\predictiont$ on day $t$ (resp. demand $\demand$ on day $T + 1$) after observing the realizations of the randomized staffing profiles in previous $t-1$ days (resp. $T$ days).

\xhdr{Worst-case cost vs.\ other robust criteria.} In this work, we evaluate the performance of online algorithms by its worst-case cost guarantee. Our proposed algorithms are also optimal for other robust criteria (regret and competitive ratio) under a mild assumption. See \Cref{apx:regret}.

\section{Minimax Optimal Algorithm in the Base Model}
\label{sec:base model}
In this section, we focus on the baseline model introduced in \Cref{sec:prelim} and demonstrate how to design and analyze a minimax-optimal online algorithm. First, in \Cref{sec:simple instance}, we build intuition for our main technical ideas by analyzing a simple special case of our base model that includes a single supply pool and perfectly consistent predictions. We then formally present our general algorithm for the complete version of our base model and establish its minimax optimality in \Cref{sec:base model result}.
\vspace{-1mm}
\subsection{Warmup: Single Pool and Perfectly Consistent Predictions}
\label{sec:simple instance}
Focusing on the special case with a single supply pool, we omit the pool index $i$ from our notation. Assuming perfectly consistent predictions (i.e., $\pbias_t = 0$ and $\pprob_t=0$ for all $t\in[T]$), we can, without loss of generality, further assume that (i) the prediction intervals $\{[\pLt,\pRt]\}_{t\in[T]}$ are nested, meaning $[\pL_{t+1},\pR_{t+1}]\subseteq[\pLt,\pRt]$ for each $t\in[0:T-1]$, and (ii) the prediction error upper bound $\perrort$ is weakly decreasing over time and smaller than the initial demand range, i.e., $\perror_t \leq \pRzero - \pLzero$.

As discussed earlier in \Cref{sec:intro}, the novel aspect of our model is the inherent tension between supply availability, which decreases over time, and demand information, which becomes progressively more accurate. This tension is clearly illustrated in our simplified warm-up instance, where the online algorithm faces a fundamental trade-off: On one hand, the algorithm could wait and hire later (e.g., on day $T$, immediately before the operating day) when demand predictions are most accurate, thus reducing the risk of overstaffing but potentially causing understaffing due to limited supply. On the other hand, it could hire earlier (e.g., on day $1$), when supply is abundant, thereby reducing understaffing risks but potentially leading to overstaffing because earlier predictions are less accurate.

Note that if there are enough workers available on day $T$ regardless of demand (i.e., $\wdiscount_T\cdot\supply \geq \pR_0$), the algorithm should simply wait until the last day to hire. In many applications, including dynamic staffing for last-mile delivery discussed in \Cref{sec:intro}, this assumption typically does not hold. Therefore, to avoid high understaffing costs, the platform may need to ``spread'' its staffing decisions over the planning horizon and hire some workers earlier, despite less accurate predictions. We now highlight three key observations that precisely characterize how the minimax-optimal algorithm $\OPTALG$ balances its hiring decisions.

\xhdr{Observation (i):} The first observation is related to the overstaffing cost. Fix an arbitrary online algorithm $\ALG$ and let $\targetcost$ be its cost guarantee (defined in \Cref{def:minmaxcost}). Now for any day $t\in[T]$ and any prediction sequence $\{\prediction_\tau=[\pL_\tau,\pR_\tau]\}_{\tau\in[t]}$ revealed so far, the algorithm's staffing profile $\{\alloc_\tau\}_{\tau\in[t]}$ must satisfy:
\begin{align}
\label{eq:simple instance allocate upper bound}
    \left(\textrm{total \# of hires by the end of time $t$}\right)\equiv\sum\nolimits_{\tau\in[t]}\alloc_\tau\leq \pLt  + \frac{\targetcost}{\overcost}
\end{align}
\revcolor{To see this upper bound, consider an adversary that selects future predictions $\{[L_\tau,R_\tau]\}_{\tau\in[t+1,T]}$ with $L_\tau=\pLt$ for all times $\tau>t$, and eventually selects the final target demand to be $\demand=\pLt$---which is a valid choice due to the perfect consistency assumption of prediction sequences (Assumption~\ref{asp:prediction sequence} with $\pprob_t=\pbias_t=0$ for all $t\in[T]$).} Since staffing decisions are irrevocable, the total hires made by the algorithm can only increase after day $t$, resulting in an overstaffing cost of at least $C\cdot (\sum\nolimits_{\tau\in[t]}\alloc_\tau-\pLt)$. Because $\targetcost$ is the maximum cost across all possible adversarial choices of prediction sequences and demands, the overstaffing cost cannot exceed $\targetcost$. Therefore, any online algorithm must satisfy inequality~\eqref{eq:simple instance allocate upper bound} for every $t\in[T]$. Note also that the converse holds: if an online algorithm satisfies inequality~\eqref{eq:simple instance allocate upper bound} for all $t\in[T]$ given some  $\Gamma$, then its overstaffing cost is at most $\Gamma$.



\xhdr{Observation (ii):}
The second observation is related to the understaffing cost. In simple words, it states that hiring earlier than later is always a (weakly) preferable strategy for the objective of minimizing the understaffing cost. More formally,
consider any two staffing profiles $\xbf= \{\alloc_\tau\}_{\tau\in[T]}$ and $\xbf\primed= \{\alloc_\tau\primed\}_{\tau\in[T]}$ hiring same number of workers, but $\xbf\primed$ hires earlier than $\xbf$, that is, 
\begin{align*}
    \sum\nolimits_{\tau\in[t]}\alloc_\tau\leq \sum\nolimits_{\tau\in[t]}\alloc_\tau\primed
\end{align*}
for every $t\in[T-1]$ and the equality holds when $t = T$. Then, the following hold: (i) if staffing profile $\xbf$ is feasible, then profile $\xbf\primed$ is also feasible, and (ii) for any choice of unknown demand $\demand$, the understaffing cost under profile $\xbf\primed$ is equal to its counterpart under profile $\xbf$. 


Combining the above observations suggests a candidate online algorithm to minimize the understaffing cost (under any adversarial demand $\demand$), while guaranteeing a target upper bound of $\Gamma$ on the overstaffing cost: make staffing decisions greedily by hiring as many workers as possible on each day $t$, subject to the supply feasibility and inequality~\eqref{eq:simple instance allocate upper bound} in Observation~(i). We formalize this greedy-staffing decision in \Cref{alg:opt simple instance}.
\vspace{-1mm}
\begin{algorithm}[htb]
    \SetKwInOut{Input}{input}
    \SetKwInOut{Output}{output}
    \Input{target overstaffing cost $\targetcost$, initial pool size $\supply$, availability rates $\{\wdiscountt\}_{t\in[T]}$}
    \Output{staffing profile $\xbf$}

    \vspace{2mm}
   {\small\color{\commentcolor}\tcc{on day 1, prediction $\prediction_1 = [\pL_1,\pR_1]$ is revealed.}}
    
    \vspace{1mm}
    hire $\alloc_1 = \min\{\pL_1 + \frac{\targetcost}{\overcost}, \wdiscount_1\cdot\supply\}$ available workers.
    
    \vspace{2mm}
    \For{each day $t = 2,\ldots,T$}{
    \vspace{1mm}
    {\small\color{\commentcolor}\tcc{prediction $\predictiont = [\pLt,\pRt]$ is revealed.}}
    \vspace{1mm}
        hire $\displaystyle\alloct = \min\bigg\{\pLt - \pL_{t - 1}~,~\wdiscountt\cdot \Big(\supply - \sum\nolimits_{\tau\in[t - 1]}\frac{\alloc_{\tau}}{\wdiscount_\tau}\Big)\bigg\}$ available workers.
    }
    \caption{\textsc{Greedy-staffing with target overstaffing cost}}
    \label{alg:opt simple instance}
    \vspace{-1mm}
\end{algorithm}
\begin{remark}
\label{remark:overstaffing}
\Cref{alg:opt simple instance} hires $\alloc_1 = \pL_1 + \frac{\Gamma}{\overcost}$ workers on day $1$ (if supply permits), and on subsequent days $t = 2,3,\dots$, it hires $\alloc_t = \pLt - \pL_{t-1}$ workers until supply feasibility becomes binding on some day $t^\dagger$ (at which point all remaining available workers are hired). No further hires occur afterward. By this construction, summing $\alloc_\tau$ from $\tau=1$ to $\tau=t$, the algorithm satisfies \eqref{alg:opt simple instance} with equality for days $t\in[1:t^\dagger-1]$, and with inequality for days $t\in[t^\dagger:T]$. Thus, it ensures an overstaffing cost of at most $\targetcost$. 
\end{remark}
\vspace{-1mm}
\Cref{alg:opt simple instance} nearly identifies the minimax-optimal algorithm; the main remaining challenge is determining the ``correct'' value of $\targetcost$ in inequality~\eqref{eq:simple instance allocate upper bound}, which serves as the input to the algorithm.
We claim that setting $\targetcost = \optcost$ in \Cref{alg:opt simple instance} yields a minimax-optimal algorithm (recall that $\optcost$ is the minimax value of the game, i.e., the cost guarantee of $\OPTALG$). Indeed, the resulting algorithm's overstaffing cost is at most $\optcost$ (by \Cref{remark:overstaffing}). Moreover, thanks to Observations~(i) and~(ii) and the greedy staffing rule of \Cref{alg:opt simple instance}, among all algorithms with an overstaffing cost no greater than $\optcost$ (including the minimax-optimal algorithm satisfying inequality~\eqref{eq:simple instance allocate upper bound} with $\targetcost=\optcost$), \Cref{alg:opt simple instance} achieves the smallest understaffing cost in every instance. Thus, its understaffing cost is at most that of the minimax-optimal algorithm, which cannot exceed $\optcost$. Putting the pieces together, we conclude that our algorithm is also minimax optimal. The only question left is whether we can compute $\optcost$ efficiently, which we address through our third observation below.



\xhdr{Observation (iii):} Focusing on the greedy staffing rule in \Cref{alg:opt simple instance}, it is straightforward to characterize the adversarial prediction sequence that maximizes the worst-case understaffing cost for any given choice of $\targetcost$. Intuitively, the adversary selects predictions that prompt the algorithm to hire workers earlier; in this way, the algorithm runs out of supply earlier, i.e., at a time when the prediction is inaccurate and the intervals are large. Formally, we state the following lemma (with the proof provided in \Cref{apx:maxsupplysequence}).
\begin{restatable}{lemma}{maxsupplysequence}
    \label{lem:max-supply-consuming-sequence}
The understaffing cost of \Cref{alg:opt simple instance} against worst-case demand is maximized when facing the prediction sequence $\overline\predictions=\{\overline{L}_t,\overline{R}_t\}_{t\in[T]}$, defined as $\overline{L}_t\triangleq \pRzero - \perrort,$ and $\overline{R}_t\triangleq \pRzero.$
\end{restatable}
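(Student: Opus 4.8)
The plan is to exploit the fact that \Cref{alg:opt simple instance} consults only the lower endpoints $\{\pLt\}$ of the prediction intervals, so the upper endpoints affect the cost only through which demands remain consistent. First I would fix the lower-bound trajectory and locate the worst demand. Since \Cref{remark:overstaffing} already caps the overstaffing cost at $\targetcost$, only the understaffing term $\undercost\,(\demand - \sum_{\tau\in[T]}\alloc_\tau)^+$ is at play, and it is increasing in $\demand$; hence the adversary picks the largest demand consistent with the (nested, perfectly consistent) sequence. Because the algorithm ignores the upper bounds, the adversary is free to raise each $\pRt$ as high as the error bound and nesting allow, which makes the largest attainable demand equal to $\min\bigl(\pRzero,\min_{t}(\pLt+\perrort)\bigr)$. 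Writing $H=\sum_{\tau\in[T]}\alloc_\tau$ for the total hires (a function of the lower bounds only), the adversary's problem collapses to maximizing $\undercost\bigl(\min(\pRzero,\min_t(\pLt+\perrort)) - H\bigr)$ over valid lower-bound trajectories.

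The workhorse of the argument is a monotonicity lemma: $H$ is weakly nondecreasing in the lower-bound trajectory, coordinatewise. I would prove this by coupling two executions of \Cref{alg:opt simple instance} on a pointwise-smaller and a pointwise-larger trajectory and inducting on $t$ to show that the cumulative hires of the smaller one never exceed those of the larger one. The delicate point---and what I expect to be the main obstacle---is that the supply available on a given day depends on the \emph{timing}, not merely the total, of past hires through the consumed budget $\sum_\tau \alloc_\tau/\wdiscountt$; a trajectory that hires less early can consume \emph{more} budget later (when $\wdiscountt$ is smaller), so budget usage is itself not monotone and a naive induction on cumulative hires alone stalls. I would resolve this using the front-loading structure of the greedy rule together with the fact that $\{\wdiscountt\}$ is decreasing: once the supply constraint first binds, at some day $t^\dagger$, the remaining budget is spent in full and no further hiring occurs, so $H$ admits the closed form $H = \sum_{\tau\in[t^\dagger-1]}\alloc_\tau + \wdiscount_{t^\dagger}\bigl(\supply - \sum_{\tau\in[t^\dagger-1]}\alloc_\tau/\wdiscount_\tau\bigr)$, and a day-by-day comparison then shows the smaller trajectory binds no later and terminates no higher.

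Finally I would optimize the reduced objective. The demand term $\min(\pRzero,\min_t(\pLt+\perrort))$ is concave and nondecreasing in the trajectory and \emph{saturates} at $\pRzero$ exactly when $\pLt\ge \pRzero-\perrort$ for every $t$, whereas $H$ keeps rising with the lower bounds. This pins the maximizer precisely at the saturation kink $\pLt=\pRzero-\perrort$: raising any $\pLt$ above $\pRzero-\perrort$ keeps the demand term at $\pRzero$ but strictly increases $H$ by the monotonicity lemma, so the gap shrinks; lowering any $\pLt$ below $\pRzero-\perrort$ drops the attainable demand by some $\delta$ while, by the binding-time computation above, it lowers $H$ by at most $\delta$ (the reduction falls short of $\delta$ by a factor controlled by the availability ratio $\wdiscount_{t^\dagger}/\wdiscount_1\le 1$), so the gap again does not increase. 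The unique surviving trajectory is therefore $\overline{L}_t=\pRzero-\perrort$, with the upper bounds set to $\overline{R}_t=\pRzero$ so as to realize the demand $\demand=\pRzero$; this is exactly $\overline\predictions$. A final routine check confirms that $\overline\predictions$ is nested and satisfies \Cref{asp:prediction sequence}, since $\perrort$ is decreasing and bounded by $\pRzero-\pLzero$.
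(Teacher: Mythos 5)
Your reduction and overall architecture are essentially the paper's own exchange argument in a different parametrization: pushing every $\pRt$ up to its maximal nested value (so the worst consistent demand is $\min\bigl(\pRzero,\min_t(\pLt+\perrort)\bigr)$) corresponds to the paper's first modification step, and the computation you invoke---hires freed on an early day get re-spent on the binding day at the worse rate, giving a factor $1-\wdiscount_{t^{\dagger}}/\wdiscount_k\le 1$---is exactly the inequality the paper uses in both of its steps. Your coordinatewise monotonicity lemma for the total hires $H$ is true, but one intermediate claim is false: raising $\pL_k$ \emph{increases} the supply consumed through day $k$ (by $\epsilon/\wdiscount_k$) even though it lowers consumption afterwards, so the larger trajectory can bind strictly \emph{earlier}; binding times are not ordered. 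Monotonicity survives anyway because, when binding jumps earlier to some day $t''\le t^{\dagger}$, one still has $H'-H\ge(\wdiscount_{t''}-\wdiscount_{t^{\dagger}})\,(\supply-A_{t^{\dagger}-1})\ge 0$, where $A_{t^{\dagger}-1}$ is the supply consumed before the original binding day; your write-up needs this comparison in place of the binding-time claim.

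The genuine gap is in the final pinning step. Perturbing one coordinate at a time around $\overline{\pL}_t=\pRzero-\perrort$ only establishes \emph{local} optimality, and the objective $\min\bigl(\pRzero,\min_t(\pLt+\perrort)\bigr)-H(\pL)$ is a difference of concave piecewise-linear functions, so local optimality does not imply global optimality. Moreover, single-coordinate moves do not chain: starting from a trajectory with several coordinates strictly below saturation, raising one of them leaves the demand term unchanged (another coordinate still attains the minimum) while $H$ weakly increases by your own monotonicity lemma, so the adversary's objective weakly \emph{decreases} along that move. Hence you cannot walk an arbitrary trajectory up to $\overline{\pL}$ coordinate by coordinate while keeping the objective weakly increasing. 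The repair is to raise all deficient coordinates \emph{simultaneously}: with $\delta^{*}=\pRzero-\min_t(\pLt+\perrort)$, you need the uniform-shift bound $H(\pL+\delta^{*}\mathbf{1})\le H(\pL)+\delta^{*}$ (the shift adds $\delta^{*}$ hires on day $1$ and removes at most $\delta^{*}\wdiscount_{t^{\dagger}}/\wdiscount_1$ on the binding day), which together with coordinatewise monotonicity gives $H(\overline{\pL})-H(\pL)\le\delta^{*}$, matching the demand gain, so the gap weakly increases. This is precisely why the paper's first step shifts whole suffixes of the prediction sequence at once rather than individual days. With that modification (the capping step $\pL\mapsto\pL\wedge\overline{\pL}$ is fine coordinatewise, since super-saturated coordinates never affect the demand term), your proof closes.
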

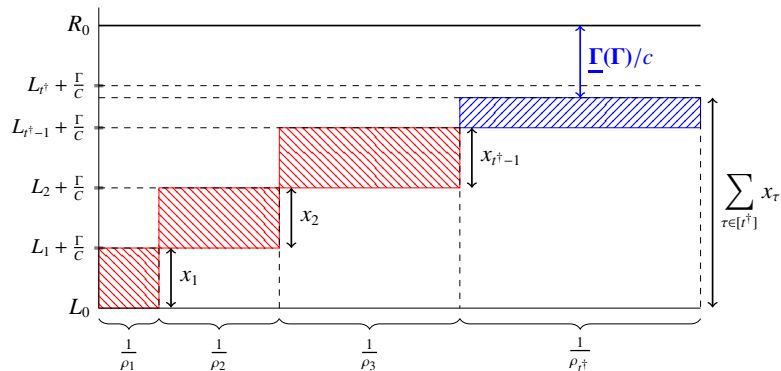
\begin{figure}[h]
    \centering
\begin{tikzpicture}[scale = 0.8]
\begin{axis}[
    axis lines=middle,
    axis line style={draw=none},
    xtick=\empty,
    ytick={1, 2, 3, 3.7},
    yticklabels={$L_1+\frac{\Gamma}{C}$, $L_2+\frac{\Gamma}{C}$, $L_{t^\dagger-1}+\frac{\Gamma}{C}$, $L_{t^\dagger}+\frac{\Gamma}{C}$},
    ymin=0, ymax=5,
    xmin=0, xmax=10,
    xlabel near ticks,
    ylabel near ticks,
    clip=false,
    tick label style={font=\small},
    label style={font=\small},
    scale only axis,
    width=10cm,
    height=5cm,
    every axis plot/.append style={thick},
    tick style={ultra thick} 
]

\pgfplotsset{
    after end axis/.code={
        \draw[thin] (rel axis cs:0,0) -- (rel axis cs:1,0);
        \draw[thin] (rel axis cs:0,0) -- (rel axis cs:0,1);
    }
}

\draw[pattern=north west lines, pattern color=red, draw=red] (axis cs:0,0) rectangle (axis cs:1,1);
\draw[pattern=north west lines, pattern color=red, draw=red] (axis cs:1,1) rectangle (axis cs:3,2);
\draw[pattern=north west lines, pattern color=red, draw=red] (axis cs:3,2) rectangle (axis cs:6,3);

\draw[pattern=north east lines, pattern color=blue, draw=blue] (axis cs:6,3) rectangle (axis cs:10,3.5);

\draw[<->, thick, color=blue] (axis cs:8,4.7) -- (axis cs:8,3.5) 
    node[midway, right] {${\color{blue}\boldsymbol{{\underline{\Gamma}}(\Gamma)}/c}$};
\draw[<->, thick] (axis cs:10.2,3.5) -- (axis cs:10.2,0.01) 
    node[midway, right] {$\displaystyle\sum_{\tau\in[t^\dagger]}\alloc_\tau$};

\draw[<->, thick] (axis cs:1.2,1) -- (axis cs:1.2,0.01) 
    node[midway, right] {$x_1$};
\draw[<->, thick] (axis cs:3.2,2) -- (axis cs:3.2,1) 
    node[midway, right] {$x_2$};
\draw[<->, thick] (axis cs:6.2,3) -- (axis cs:6.2,2) 
    node[midway, right] {$x_{t^\dagger-1}$};

\draw[thick] (axis cs:0,4.7) -- (axis cs:10,4.7);
\node[anchor=east] at (axis cs:0,4.7) {$R_0$};
\node[anchor=east] at (axis cs:0,0) {$L_0$};

\draw[dashed, thin] (axis cs:1,1) -- (axis cs:1,0);
\draw[dashed, thin] (axis cs:3,2) -- (axis cs:3,0);
\draw[dashed, thin] (axis cs:6,3) -- (axis cs:6,0);
\draw[dashed, thin] (axis cs:10,3.5) -- (axis cs:10,0);


\draw[dashed, thin] (axis cs:0,1) -- (axis cs:1,1);
\draw[dashed, thin] (axis cs:0,2) -- (axis cs:3,2);
\draw[dashed, thin] (axis cs:0,3) -- (axis cs:6,3);
\draw[dashed, thin] (axis cs:0,3.5) -- (axis cs:10,3.5);
\draw[dashed, thin] (axis cs:0,3.7) -- (axis cs:10,3.7);

\draw [decorate,decoration={brace,amplitude=5pt,mirror,raise=1ex}]
(axis cs:0,0) -- (axis cs:1,0) node[midway,yshift=-2em]{$\frac{1}{\rho_1}$};
\draw [decorate,decoration={brace,amplitude=5pt,mirror,raise=1ex}]
(axis cs:1,0) -- (axis cs:3,0) node[midway,yshift=-2em]{$\frac{1}{\rho_2}$};
\draw [decorate,decoration={brace,amplitude=5pt,mirror,raise=1ex}]
(axis cs:3,0) -- (axis cs:6,0) node[midway,yshift=-2em]{$\frac{1}{\rho_3}$};
\draw [decorate,decoration={brace,amplitude=5pt,mirror,raise=1ex}]
(axis cs:6,0) -- (axis cs:10,0) node[midway,yshift=-2em]{$\frac{1}{\rho_{t^\dagger}}$};
\end{axis}
\end{tikzpicture}
\vspace{-4mm}
\vspace{1mm}
\caption{
Geometric interpretation of a run of \Cref{alg:opt simple instance} when $\targetcost\in[0,\overcost\cdot(\rho_1\cdot s-L_1)]$. The heights of red rectangles represent hiring decisions before the supply constraint binds. The height of the blue rectangle represents hiring on day $t^\dagger$, the last day of hiring, when all remaining workers are hired. The area of each rectangle corresponds to the (effective) amount of initial supply used on that day; hence, the total area of all rectangles equals the initial supply $s$.}
\label{fig:geometric-proof}
\vspace{-4mm}
\end{figure}
\begin{proof}[Proof sketch of \Cref{lem:max-supply-consuming-sequence}] 
Although the proof of \Cref{lem:max-supply-consuming-sequence} in \Cref{apx:maxsupplysequence} is somewhat subtle, the underlying intuition is simple. On day $t^\dagger-1$, just before supply feasibility becomes binding, the number of workers hired by the algorithm is exactly $\pL_{t^\dagger-1} + \frac{\targetcost}{C}$. Thus, the gap between $\pR_{t^\dagger-1}$ and the number of workers hired equals $\perror_{t^\dagger-1} - \frac{\targetcost}{C}$ (which equals  the maximum understaffing if we ignore day $t^{\dagger}$). Since $\perror_t$ is weakly decreasing in $t$, this understaffing quantity is maximized for the prediction sequence that triggers earlier binding of supply feasibility, which is precisely the sequence $\overline\predictions$. The only subtlety arises from the hiring on the supply binding day $t^\dagger$ (the blue rectangle in \Cref{fig:geometric-proof}), which also influences the understaffing cost. Nevertheless, it turns out that even after accounting for day $t^\dagger$, the worst-case prediction sequence remains the same.
\end{proof}

Based on Observation~(iii), we examine \Cref{alg:opt simple instance} when run with a given target overstaffing cost $\Gamma$ and facing the adversarial prediction sequence $\overline\predictions$. \Cref{fig:geometric-proof} illustrates such a run for a fixed $\Gamma$. Let $\{x_{\tau}\}_{\tau\in[1:t^\dagger]}$ be the sequence of hiring decision in this run, with $t^\dagger$ be the day on which the supply feasibility binds. As depicted, the algorithm makes its final hiring decision on day $t^{\dagger}$ (blue rectangle). At this point, the adversary has two choices to maximize the imbalance cost: either select $d = L_{t^{\dagger}}$, resulting in an overstaffing cost of at most $\Gamma$, or select $d = R_{t^{\dagger}} = R_0$, resulting in an understaffing cost of $c\cdot\left(R_0-\sum_{\tau=1}^{t^\dagger}x_\tau\right)$, denoted by $\underline{\Gamma}$  (see \Cref{fig:geometric-proof}).

Next, we show that $\underline{\Gamma}(\cdot)$ is a (weakly) decreasing function of $\Gamma$, a crucial property that allows efficient computation of $\optcost$. If we increase $\Gamma$ by $\partial\Gamma$, the height of the first red rectangle increases by $\frac{\partial\Gamma}{C}$, while the heights of the remaining red rectangles stay unchanged. Because the total area of all rectangles equals the fixed initial supply $s$, the height of the blue rectangle must decrease by $\frac{\partial\Gamma}{C}\cdot\big(\frac{1/\rho_1}{1/\rho_{t^\dagger}}\big)$. \revcolor{Thus, the total number of hires increases by $\left(\partial\Gamma - \partial\Gamma\big(\frac{1/\rho_1}{1/\rho_{t^\dagger}}\big)\right)/C \geq 0$, since $\rho_1 \geq \rho_{t^\dagger}$. Consequently, we have $\underline{\Gamma}(\Gamma+\partial\Gamma) \leq \underline{\Gamma}(\Gamma)$.}


Given the weak monotonicity of the function $\underline{\Gamma}(\cdot)$, one of these two cases can occur depending on the amount of initial supply pool size $s$:

\begin{enumerate}[label=(\Roman*)]
    \item \emph{Sufficient initial supply:} $\underline{\Gamma}(\cdot)$ has a \emph{fixed point} in $[0,\overcost\cdot(\rho_1\cdot s-\overline{L}_1)]$, that is, there exists $\hat\Gamma\in [0,\overcost\cdot(\rho_1\cdot s-\overline{L}_1)]$ such that $\underline{\Gamma}(\hat\Gamma)=\hat\Gamma$. In this case, we claim that $\optcost=\hat\targetcost$. Suppose by contradiction $\optcost\neq\hat\targetcost$. If $\optcost<\hat\targetcost$, then as $\underline{\Gamma}(\cdot)$ is weakly decreasing we have $\underline{\Gamma}(\optcost)\geq \underline{\Gamma}(\hat\Gamma)=\hat\Gamma>\optcost$, a contradiction, since the understaffing cost of the minimax optimal algorithm (i.e.,  \Cref{alg:opt simple instance} with $\optcost$ as input) cannot exceed $\optcost$. If $\optcost>\hat\targetcost$, then $\underline{\Gamma}(\hat\Gamma)=\hat\Gamma<\optcost$, so both understaffing and overstaffing costs of \Cref{alg:opt simple instance} with $\targetcost$ are strictly smaller than $\optcost$, again a contradiction to the minimax optimality of \Cref{alg:opt simple instance} with $\optcost$ as input.
      \item \emph{Low initial supply:} $\underline{\Gamma}(\cdot)$ has no fixed point in $[0,\overcost\cdot(\rho_1\cdot s-\overline{L}_1)]$, that is, $\underline{\Gamma}(\Gamma')>\Gamma'$ for all $\Gamma'\in [0,\overcost\cdot(\rho_1\cdot s-\overline{L}_1)]$. Since$\underline{\Gamma}(\cdot)$ is weakly decreasing, this case occurs if and only if (see \Cref{fig:geometric-proof}): 
     \begin{align}
     \label{eq:low-supply-fixed-point}
    \underline{\Gamma}\left(\overcost\cdot(\rho_1\cdot s-\overline{L}_1)\right)=\underbrace{\undercost\cdot (\pRzero - \wdiscount_1\cdot\supply)}_{\substack{\textrm{max understaffing cost}\\\textrm{when $x_1=\wdiscount_1\cdot\supply$}}} 
    >
    \underbrace{\overcost\cdot (\wdiscount_1 \cdot \supply - \overline{L}_1)}_{\substack{\textrm{max overstaffing cost} \\\textrm{when $x_1=\wdiscount_1\cdot\supply$}}}
\end{align}
Now we claim that $\optcost = \underline{\Gamma}\left(\overcost\cdot(\rho_1\cdot s-\overline{L}_1)\right)=\undercost\cdot (\pRzero - \wdiscount_1\cdot\supply)$. To see this, first note that we must have $\optcost>\overcost\cdot(\rho_1\cdot s-\overline{L}_1)$, because otherwise $\underline{\Gamma}(\optcost)>\optcost$, contradicting the definition of $\optcost$ (as the maximum understaffing cost of the minimax-optimal algorithm, i.e., \Cref{alg:opt simple instance} with input $\optcost$, cannot exceed $\optcost$). Thus, running \Cref{alg:opt simple instance} with this $\optcost$ hires $x_1=\rho_1\cdot s$ workers on day $1$ and exhausts the supply for any future hiring. Due to inequality~\eqref{eq:low-supply-fixed-point}, the maximum understaffing cost exceeds the maximum overstaffing cost, implying $\optcost=\undercost\cdot(\pRzero - \wdiscount_1\cdot\supply)$.
\end{enumerate}

As demonstrated by the two cases above, $\optcost$ can be computed efficiently in polynomial time: first by checking inequality~\eqref{eq:low-supply-fixed-point}, and then, if needed, finding the fixed point of the weakly decreasing function $\underline{\Gamma}$ (e.g., via binary search). Once $\optcost$ is computed, the minimax-optimal algorithm is simply \Cref{alg:opt simple instance} with input $\targetcost = \optcost$. We summarize this result in the following proposition and provide a formal proof in \Cref{apx:optalgsimpleinstance}.

\begin{restatable}{proposition}{optalgsimpleinstance}
    \label{prop:opt alg simple instance}
    For the special case of the problem with single-pool and perfectly consistent predictions, \Cref{alg:opt simple instance} with $\targetcost=\optcost$ is minimax optimal, where $\optcost=\undercost\cdot (\pRzero - \wdiscount_1\cdot\supply)$ if inequality~\eqref{eq:low-supply-fixed-point} holds, and otherwise it is the fixed point of the function $\underline{\Gamma}$ (see \Cref{fig:geometric-proof} for an illustration). 
\end{restatable}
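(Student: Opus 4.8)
The plan is to split the argument into two parts: (A) showing that \Cref{alg:opt simple instance} run with the \emph{true} minimax value $\optcost$ as input is minimax optimal, and (B) computing $\optcost$ in closed form and matching it to the two claimed expressions. Part~(A) rests on an abstract optimality argument foreshadowed in the main text. First I would record the converse of Observation~(i): since the optimal algorithm $\OPTALG$ has cost guarantee $\optcost$, on every instance its overstaffing is at most $\optcost$, so it must satisfy the prefix cap $\sum_{\tau\in[t]}\alloc_\tau\leq \pLt+\optcost/\overcost$ of \eqref{eq:simple instance allocate upper bound} for every $t$. By \Cref{remark:overstaffing}, \Cref{alg:opt simple instance} with input $\optcost$ also respects this cap and hence has overstaffing at most $\optcost$. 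The crux of Part~(A) is then a \emph{greedy-dominance lemma}: for every fixed prediction sequence, among all feasible profiles obeying the cap, the greedy profile maximizes the total number of hires $\sum_{\tau\in[T]}\alloc_\tau$, and therefore---because understaffing depends only on the total hired (Observation~(ii))---attains the smallest understaffing cost in every instance. Applying this with $\OPTALG$ as the comparison profile yields that \Cref{alg:opt simple instance} with input $\optcost$ has understaffing at most that of $\OPTALG$, which is at most $\optcost$; combined with the overstaffing bound this gives cost guarantee $\leq\optcost$, i.e.\ minimax optimality.

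Establishing the greedy-dominance lemma is where I expect the main difficulty to lie, because the prefix caps and the supply-feasibility budget $\sum_\tau \alloc_\tau/\wdiscount_\tau\leq\supply$ interact. I would prove it as a small ``fractional-knapsack-with-prefix-constraints'' statement: maximizing $\sum_\tau \alloc_\tau$ subject to $\sum_{\tau'\le t}\alloc_{\tau'}\le \pLt+\Gamma/\overcost$ for all $t$ and $\sum_\tau \alloc_\tau/\wdiscount_\tau\le\supply$ is solved by spending the supply budget on the earliest (largest-$\wdiscount$) days first while saturating the increasing caps as early as possible---which is exactly the greedy rule of \Cref{alg:opt simple instance}. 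The required exchange step is clean because $\wdiscount_\tau$ is weakly decreasing, so moving a unit of hiring to an earlier day weakly relaxes the supply budget, while Observation~(ii) guarantees that such front-loading never increases understaffing and never breaks feasibility. An induction on $t$ then shows the greedy prefix sums dominate those of any cap-respecting feasible profile.

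For Part~(B) I would invoke \Cref{lem:max-supply-consuming-sequence} to reduce the worst-case understaffing of \Cref{alg:opt simple instance} with input $\Gamma$ to the single sequence $\overline\predictions$, defining $\underline{\Gamma}(\Gamma)=\undercost\cdot(\pRzero-\sum_{\tau\in[t^\dagger]}\alloc_\tau)$ as in Observation~(iii), and then formalize the geometric computation of \Cref{fig:geometric-proof} to show $\underline{\Gamma}(\cdot)$ is weakly decreasing on $[0,\overcost\cdot(\wdiscount_1\supply-\overline{L}_1)]$: raising $\Gamma$ by $\partial\Gamma$ increases the day-$1$ hire by $\partial\Gamma/\overcost$ but, since the total effective supply used is fixed at $\supply$, lowers the supply-binding-day hire by $\partial\Gamma/\overcost\cdot(\wdiscount_{t^\dagger}/\wdiscount_1)\le\partial\Gamma/\overcost$, so total hires weakly increase and understaffing weakly decreases. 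Packaging the guarantee of \Cref{alg:opt simple instance} with input $\Gamma$ as the maximum of its worst-case overstaffing (at most $\Gamma$) and its worst-case understaffing $\underline{\Gamma}(\Gamma)$, I then split into two cases exactly as in the main text: if $\underline{\Gamma}$ has a fixed point $\hat\Gamma$ in $[0,\overcost(\wdiscount_1\supply-\overline{L}_1)]$, a two-sided contradiction (using monotonicity of $\underline{\Gamma}$ together with the minimax optimality of \Cref{alg:opt simple instance} with input $\optcost$ from Part~(A)) forces $\optcost=\hat\Gamma$; otherwise $\underline{\Gamma}(\Gamma)>\Gamma$ throughout the interval, which by \eqref{eq:low-supply-fixed-point} is equivalent to the stated low-supply inequality, and the same style of contradiction gives $\optcost=\undercost\cdot(\pRzero-\wdiscount_1\supply)$.

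Finally, I would close the low-supply case with an independent matching lower bound as a sanity check: any feasible profile hires a total of at most $\wdiscount_1\supply$ (since $\sum_\tau\alloc_\tau\le\wdiscount_1\sum_\tau\alloc_\tau/\wdiscount_\tau\le\wdiscount_1\supply$ using $\wdiscount_\tau\le\wdiscount_1$), so against $\overline\predictions$ with $\demand=\pRzero$ every algorithm incurs understaffing at least $\undercost\cdot(\pRzero-\wdiscount_1\supply)$, confirming $\optcost\ge\undercost\cdot(\pRzero-\wdiscount_1\supply)$ and completing the identification. The only genuinely delicate points are the greedy-dominance exchange argument in Part~(A) and the careful bookkeeping of the supply-binding day $t^\dagger$ when differentiating $\underline{\Gamma}$ in Part~(B); both are manageable given Observations~(i)--(iii) and \Cref{lem:max-supply-consuming-sequence}.
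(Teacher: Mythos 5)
Your proposal is correct and takes essentially the same approach as the paper: Part~(A) is the paper's Observations~(i)--(ii) greedy-dominance argument for the optimality of \Cref{alg:opt simple instance} with input $\optcost$, and Part~(B) reproduces the paper's reduction via \Cref{lem:max-supply-consuming-sequence}, the monotonicity of $\underline{\Gamma}$, and the same two-case fixed-point versus low-supply contradiction. Your extra touches (the explicit exchange proof of greedy dominance and the direct lower bound $\optcost\ge\undercost\cdot(\pRzero-\wdiscount_1\cdot\supply)$ in the low-supply case) merely flesh out steps the paper treats tersely.
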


To further show case how the minimax optimal algorithm (\Cref{alg:opt simple instance} with $\targetcost = \optcost$ as in \Cref{prop:opt alg simple instance}) handles the tradeoff mentioned earlier, we use three simple numerical examples (\Cref{fig:tradeoff illustration example}):


\begin{itemize}
\item An instance in which the initial supply $\supply$ is sufficiently large and the demand is fully revealed on the last day $T$ (i.e., $\pR_T - \pL_T=0$). By hiring exactly enough workers to match the lower bound $\pLt$ of demand $\demand$ on each day $t < T$, our algorithm perfectly matches the demand revealed on day $T$, incurring zero staffing cost. In contrast, any algorithm that waits until day $T$ to hire may suffer a strictly positive understaffing cost, as the number of available workers could be fewer than the revealed demand. See \Cref{fig:example large supply}.
\item An instance in which the initial supply $\supply$ and availability rates $\{\wdiscountt\}$ are sufficiently small. In this scenario, the algorithm hires workers primarily in the early days and exhausts the available supply. See \Cref{fig:example small supply}.
\item An instance in which predictions $\predictiont = [\pLt,\pRt]$ remain uninformative until the last day $T$ (i.e., $\perror_T < \perror_{T - 1} = \dots = \perror_1 = \pRzero - \pLzero$). The algorithm then hires workers only on days $1$ and $T$, making no hires from day $2$ to day $T - 1$ since demand predictions do not improve during this period. See \Cref{fig:example uninformative prediction}.
\end{itemize}

\begin{figure}
    \centering
    \subfloat[$\optcost = 0$.]{
    \begin{tikzpicture}[scale=0.27, transform shape]
\begin{axis}[
axis line style=gray,
axis lines=middle,
        ytick = \empty,
        yticklabels = \empty,
        xtick = {0.7, 5.2, 9.7},
        xticklabels = {$1$, $\dots$, $T$},
x label style={at={(axis description cs:1.0,-0.01)},anchor=north},
y label style={at={(axis description cs:0.05,0.95)},anchor=south},
ylabel = {},
label style={font=\LARGE},
xmin=0,xmax=10.5,ymin=0,ymax=6.5,
width=1.2\textwidth,
height=0.8\textwidth,
samples=50]

\addplot[color=red, fill=white!90!red, dotted,very thick] coordinates {(0.4, 6)(0.4,0.01171875)(1.0,0.01171875)(1.0, 6)(0.4, 6)};
\addplot[color=red, fill=white!90!red, dotted,very thick] coordinates {(1.4, 6)(1.4,0.0234375)(2.0,0.0234375)(2.0, 6)(1.4, 6)};
\addplot[color=red, fill=white!90!red, dotted,very thick] coordinates {(2.4, 6)(2.4,0.046875)(3.0,0.046875)(3.0, 6)(2.4, 6)};
\addplot[color=red, fill=white!90!red, dotted,very thick] coordinates {(3.4, 6)(3.4,0.09375)(4.0,0.09375)(4.0, 6)(3.4, 6)};
\addplot[color=red, fill=white!90!red, dotted,very thick] coordinates {(4.4, 6)(4.4,0.1875)(5.0,0.1875)(5.0, 6)(4.4, 6)};
\addplot[color=red, fill=white!90!red, dotted,very thick] coordinates {(5.4, 6)(5.4,0.375)(6.0,0.375)(6.0, 6)(5.4, 6)};
\addplot[color=red, fill=white!90!red, dotted,very thick] coordinates {(6.4, 6)(6.4,0.75)(7.0,0.75)(7.0, 6)(6.4, 6)};
\addplot[color=red, fill=white!90!red, dotted,very thick] coordinates {(7.4, 6)(7.4,1.5)(8.0,1.5)(8.0, 6)(7.4, 6)};
\addplot[color=red, fill=white!90!red, dotted,very thick] coordinates {(8.4, 6)(8.4,3.0)(9.0,3.0)(9.0, 6)(8.4, 6)};
\addplot[color=red, fill=white!90!red, dotted,very thick] coordinates {(9.4, 6)(9.4,6.0)(10.0,6.0)(10.0, 6)(9.4, 6)};
\addplot[white!60!black,fill=white!60!black] coordinates {(0.5, 0)(0.5,0)(0.9,0)(0.9, 0)(0.5, 0)};
\addplot[white!60!black,fill=white!60!black] coordinates {(1.5, 0)(1.5,0.005859375)(1.9,0.005859375)(1.9, 0)(1.5, 0)};
\addplot[white!60!black,fill=white!60!black] coordinates {(2.5, 0)(2.5,0.017578125)(2.9,0.017578125)(2.9, 0)(2.5, 0)};
\addplot[white!60!black,fill=white!60!black] coordinates {(3.5, 0)(3.5,0.041015625)(3.9,0.041015625)(3.9, 0)(3.5, 0)};
\addplot[white!60!black,fill=white!60!black] coordinates {(4.5, 0)(4.5,0.087890625)(4.9,0.087890625)(4.9, 0)(4.5, 0)};
\addplot[white!60!black,fill=white!60!black] coordinates {(5.5, 0)(5.5,0.181640625)(5.9,0.181640625)(5.9, 0)(5.5, 0)};
\addplot[white!60!black,fill=white!60!black] coordinates {(6.5, 0)(6.5,0.369140625)(6.9,0.369140625)(6.9, 0)(6.5, 0)};
\addplot[white!60!black,fill=white!60!black] coordinates {(7.5, 0)(7.5,0.744140625)(7.9,0.744140625)(7.9, 0)(7.5, 0)};
\addplot[white!60!black,fill=white!60!black] coordinates {(8.5, 0)(8.5,1.494140625)(8.9,1.494140625)(8.9, 0)(8.5, 0)};
\addplot[white!60!black,fill=white!60!black] coordinates {(9.5, 0)(9.5,2.994140625)(9.9,2.994140625)(9.9, 0)(9.5, 0)};
\addplot[fill=white!0!black,postaction={pattern=north east lines,pattern color=white!100!black,}] coordinates {(0.5, 0.005859375)(0.5,0)(0.9,0)(0.9, 0.005859375)(0.5, 0.005859375)};
\addplot[fill=white!0!black,postaction={pattern=north east lines,pattern color=white!100!black,}] coordinates {(1.5, 0.017578125)(1.5,0.005859375)(1.9,0.005859375)(1.9, 0.017578125)(1.5, 0.017578125)};
\addplot[fill=white!0!black,postaction={pattern=north east lines,pattern color=white!100!black,}] coordinates {(2.5, 0.041015625)(2.5,0.017578125)(2.9,0.017578125)(2.9, 0.041015625)(2.5, 0.041015625)};
\addplot[fill=white!0!black,postaction={pattern=north east lines,pattern color=white!100!black,}] coordinates {(3.5, 0.087890625)(3.5,0.041015625)(3.9,0.041015625)(3.9, 0.087890625)(3.5, 0.087890625)};
\addplot[fill=white!0!black,postaction={pattern=north east lines,pattern color=white!100!black,}] coordinates {(4.5, 0.181640625)(4.5,0.087890625)(4.9,0.087890625)(4.9, 0.181640625)(4.5, 0.181640625)};
\addplot[fill=white!0!black,postaction={pattern=north east lines,pattern color=white!100!black,}] coordinates {(5.5, 0.369140625)(5.5,0.181640625)(5.9,0.181640625)(5.9, 0.369140625)(5.5, 0.369140625)};
\addplot[fill=white!0!black,postaction={pattern=north east lines,pattern color=white!100!black,}] coordinates {(6.5, 0.744140625)(6.5,0.369140625)(6.9,0.369140625)(6.9, 0.744140625)(6.5, 0.744140625)};
\addplot[fill=white!0!black,postaction={pattern=north east lines,pattern color=white!100!black,}] coordinates {(7.5, 1.494140625)(7.5,0.744140625)(7.9,0.744140625)(7.9, 1.494140625)(7.5, 1.494140625)};
\addplot[fill=white!0!black,postaction={pattern=north east lines,pattern color=white!100!black,}] coordinates {(8.5, 2.994140625)(8.5,1.494140625)(8.9,1.494140625)(8.9, 2.994140625)(8.5, 2.994140625)};
\addplot[fill=white!0!black,postaction={pattern=north east lines,pattern color=white!100!black,}] coordinates {(9.5, 5.994140625)(9.5,2.994140625)(9.9,2.994140625)(9.9, 5.994140625)(9.5, 5.994140625)};

\end{axis}

\end{tikzpicture}
        \label{fig:example large supply}
    }
    \subfloat[$\optcost = 0.476$.]{
    \begin{tikzpicture}[scale=0.27, transform shape]
\begin{axis}[
axis line style=gray,
axis lines=middle,
        ytick = \empty,
        yticklabels = \empty,
        xtick = {0.7, 5.2, 9.7},
        xticklabels = {$1$, $\dots$, $T$},
x label style={at={(axis description cs:1.0,-0.01)},anchor=north},
y label style={at={(axis description cs:0.05,0.95)},anchor=south},
ylabel = {},
label style={font=\LARGE},
xmin=0,xmax=10.5,ymin=0,ymax=6.5,
width=1.2\textwidth,
height=0.8\textwidth,
samples=50]

\addplot[color=red, fill=white!90!red, dotted,very thick] coordinates {(0.4, 6)(0.4,0.01171875)(1.0,0.01171875)(1.0, 6)(0.4, 6)};
\addplot[color=red, fill=white!90!red, dotted,very thick] coordinates {(1.4, 6)(1.4,0.0234375)(2.0,0.0234375)(2.0, 6)(1.4, 6)};
\addplot[color=red, fill=white!90!red, dotted,very thick] coordinates {(2.4, 6)(2.4,0.046875)(3.0,0.046875)(3.0, 6)(2.4, 6)};
\addplot[color=red, fill=white!90!red, dotted,very thick] coordinates {(3.4, 6)(3.4,0.09375)(4.0,0.09375)(4.0, 6)(3.4, 6)};
\addplot[color=red, fill=white!90!red, dotted,very thick] coordinates {(4.4, 6)(4.4,0.1875)(5.0,0.1875)(5.0, 6)(4.4, 6)};
\addplot[color=red, fill=white!90!red, dotted,very thick] coordinates {(5.4, 6)(5.4,0.375)(6.0,0.375)(6.0, 6)(5.4, 6)};
\addplot[color=red, fill=white!90!red, dotted,very thick] coordinates {(6.4, 6)(6.4,0.75)(7.0,0.75)(7.0, 6)(6.4, 6)};
\addplot[color=red, fill=white!90!red, dotted,very thick] coordinates {(7.4, 6)(7.4,1.5)(8.0,1.5)(8.0, 6)(7.4, 6)};
\addplot[color=red, fill=white!90!red, dotted,very thick] coordinates {(8.4, 6)(8.4,3.0)(9.0,3.0)(9.0, 6)(8.4, 6)};
\addplot[color=red, fill=white!90!red, dotted,very thick] coordinates {(9.4, 6)(9.4,6.0)(10.0,6.0)(10.0, 6)(9.4, 6)};

\addplot[white!60!black,fill=white!60!black] coordinates {(0.5, 0)(0.5,0)(0.9,0)(0.9, 0)(0.5, 0)};
\addplot[white!60!black,fill=white!60!black] coordinates {(1.5, 0)(1.5,2.853049850463867)(1.9,2.853049850463867)(1.9, 0)(1.5, 0)};
\addplot[white!60!black,fill=white!60!black] coordinates {(2.5, 0)(2.5,2.864768600463867)(2.9,2.864768600463867)(2.9, 0)(2.5, 0)};
\addplot[white!60!black,fill=white!60!black] coordinates {(3.5, 0)(3.5,2.888206100463867)(3.9,2.888206100463867)(3.9, 0)(3.5, 0)};
\addplot[white!60!black,fill=white!60!black] coordinates {(4.5, 0)(4.5,2.935081100463867)(4.9,2.935081100463867)(4.9, 0)(4.5, 0)};
\addplot[white!60!black,fill=white!60!black] coordinates {(5.5, 0)(5.5,3.028831100463867)(5.9,3.028831100463867)(5.9, 0)(5.5, 0)};
\addplot[white!60!black,fill=white!60!black] coordinates {(6.5, 0)(6.5,3.1469489117346603)(6.9,3.1469489117346603)(6.9, 0)(6.5, 0)};
\addplot[white!60!black,fill=white!60!black] coordinates {(7.5, 0)(7.5,3.1469489117346603)(7.9,3.1469489117346603)(7.9, 0)(7.5, 0)};
\addplot[white!60!black,fill=white!60!black] coordinates {(8.5, 0)(8.5,3.1469489117346603)(8.9,3.1469489117346603)(8.9, 0)(8.5, 0)};
\addplot[white!60!black,fill=white!60!black] coordinates {(9.5, 0)(9.5,3.1469489117346603)(9.9,3.1469489117346603)(9.9, 0)(9.5, 0)};
\addplot[fill=white!0!black,postaction={pattern=north east lines,pattern color=white!100!black,}] coordinates {(0.5, 2.853049850463867)(0.5,0)(0.9,0)(0.9, 2.853049850463867)(0.5, 2.853049850463867)};
\addplot[fill=white!0!black,postaction={pattern=north east lines,pattern color=white!100!black,}] coordinates {(1.5, 2.864768600463867)(1.5,2.853049850463867)(1.9,2.853049850463867)(1.9, 2.864768600463867)(1.5, 2.864768600463867)};
\addplot[fill=white!0!black,postaction={pattern=north east lines,pattern color=white!100!black,}] coordinates {(2.5, 2.888206100463867)(2.5,2.864768600463867)(2.9,2.864768600463867)(2.9, 2.888206100463867)(2.5, 2.888206100463867)};
\addplot[fill=white!0!black,postaction={pattern=north east lines,pattern color=white!100!black,}] coordinates {(3.5, 2.935081100463867)(3.5,2.888206100463867)(3.9,2.888206100463867)(3.9, 2.935081100463867)(3.5, 2.935081100463867)};
\addplot[fill=white!0!black,postaction={pattern=north east lines,pattern color=white!100!black,}] coordinates {(4.5, 3.028831100463867)(4.5,2.935081100463867)(4.9,2.935081100463867)(4.9, 3.028831100463867)(4.5, 3.028831100463867)};
\addplot[fill=white!0!black,postaction={pattern=north east lines,pattern color=white!100!black,}] coordinates {(5.5, 3.1469489117346603)(5.5,3.028831100463867)(5.9,3.028831100463867)(5.9, 3.1469489117346603)(5.5, 3.1469489117346603)};

\end{axis}

\end{tikzpicture}
        \label{fig:example small supply}
    }
    \subfloat[$\optcost = 0.338$.]{
    \begin{tikzpicture}[scale=0.27, transform shape]
\begin{axis}[
axis line style=gray,
axis lines=middle,
        ytick = \empty,
        yticklabels = \empty,
        xtick = {0.7, 5.2, 9.7},
        xticklabels = {$1$, $\dots$, $T$},
x label style={at={(axis description cs:1.0,-0.01)},anchor=north},
y label style={at={(axis description cs:0.05,0.95)},anchor=south},
ylabel = {},
label style={font=\LARGE},
xmin=0,xmax=10.5,ymin=0,ymax=6.5,
width=1.2\textwidth,
height=0.8\textwidth,
samples=50]

\addplot[color=red, fill=white!90!red, dotted,very thick] coordinates {(0.4, 6)(0.4,0)(1.0,0)(1.0, 6)(0.4, 6)};
\addplot[color=red, fill=white!90!red, dotted,very thick] coordinates {(1.4, 6)(1.4,0)(2.0,0)(2.0, 6)(1.4, 6)};
\addplot[color=red, fill=white!90!red, dotted,very thick] coordinates {(2.4, 6)(2.4,0)(3.0,0)(3.0, 6)(2.4, 6)};
\addplot[color=red, fill=white!90!red, dotted,very thick] coordinates {(3.4, 6)(3.4,0)(4.0,0)(4.0, 6)(3.4, 6)};
\addplot[color=red, fill=white!90!red, dotted,very thick] coordinates {(4.4, 6)(4.4,0)(5.0,0)(5.0, 6)(4.4, 6)};
\addplot[color=red, fill=white!90!red, dotted,very thick] coordinates {(5.4, 6)(5.4,0)(6.0,0)(6.0, 6)(5.4, 6)};
\addplot[color=red, fill=white!90!red, dotted,very thick] coordinates {(6.4, 6)(6.4,0)(7.0,0)(7.0, 6)(6.4, 6)};
\addplot[color=red, fill=white!90!red, dotted,very thick] coordinates {(7.4, 6)(7.4,0)(8.0,0)(8.0, 6)(7.4, 6)};
\addplot[color=red, fill=white!90!red, dotted,very thick] coordinates {(8.4, 6)(8.4,0)(9.0,0)(9.0, 6)(8.4, 6)};
\addplot[color=red, fill=white!90!red, dotted,very thick] coordinates {(9.4, 6)(9.4,4.199999999999999)(10.0,4.199999999999999)(10.0, 6)(9.4, 6)};

\addplot[white!60!black,fill=white!60!black] coordinates {(0.5, 0)(0.5,0)(0.9,0)(0.9, 0)(0.5, 0)};
\addplot[white!60!black,fill=white!60!black] coordinates {(1.5, 0)(1.5,2.0270919799804688)(1.9,2.0270919799804688)(1.9, 0)(1.5, 0)};
\addplot[white!60!black,fill=white!60!black] coordinates {(2.5, 0)(2.5,2.0270919799804688)(2.9,2.0270919799804688)(2.9, 0)(2.5, 0)};
\addplot[white!60!black,fill=white!60!black] coordinates {(3.5, 0)(3.5,2.0270919799804688)(3.9,2.0270919799804688)(3.9, 0)(3.5, 0)};
\addplot[white!60!black,fill=white!60!black] coordinates {(4.5, 0)(4.5,2.0270919799804688)(4.9,2.0270919799804688)(4.9, 0)(4.5, 0)};
\addplot[white!60!black,fill=white!60!black] coordinates {(5.5, 0)(5.5,2.0270919799804688)(5.9,2.0270919799804688)(5.9, 0)(5.5, 0)};
\addplot[white!60!black,fill=white!60!black] coordinates {(6.5, 0)(6.5,2.0270919799804688)(6.9,2.0270919799804688)(6.9, 0)(6.5, 0)};
\addplot[white!60!black,fill=white!60!black] coordinates {(7.5, 0)(7.5,2.0270919799804688)(7.9,2.0270919799804688)(7.9, 0)(7.5, 0)};
\addplot[white!60!black,fill=white!60!black] coordinates {(8.5, 0)(8.5,2.0270919799804688)(8.9,2.0270919799804688)(8.9, 0)(8.5, 0)};
\addplot[white!60!black,fill=white!60!black] coordinates {(9.5, 0)(9.5,2.0270919799804688)(9.9,2.0270919799804688)(9.9, 0)(9.5, 0)};
\addplot[fill=white!0!black,postaction={pattern=north east lines,pattern color=white!100!black,}] coordinates {(0.5, 2.0270919799804688)(0.5,0)(0.9,0)(0.9, 2.0270919799804688)(0.5, 2.0270919799804688)};
\addplot[fill=white!0!black,postaction={pattern=north east lines,pattern color=white!100!black,}] coordinates {(9.5, 3.9729057039368953)(9.5,2.0270919799804688)(9.9,2.0270919799804688)(9.9, 3.9729057039368953)(9.5, 3.9729057039368953)};

\end{axis}

\end{tikzpicture}
        \label{fig:example uninformative prediction}
    }

    \caption{Graphical illustration of the minimax optimal algorithm facing the  prediction sequence $\predictions$ in example instances of the single-pool problem (with perfectly consistent predictions). We set $T= 10$, $\undercost = \overcost = 1$, $\rho_t = 1 - 0.8^{T - t + 1}$ for all  $t$, $\pLzero = 0,\pRzero = 1$. In (a), (b), $\perror_t = 1 - 0.5^{T - t}$ for all $t\in[T]$; while in (c) $\perror_t = 1$ for all $t\in[T - 1]$ and $\perror_T = 0.3$. Supply $\supply = 4, 0.6, 2$ in (a), (b), (c), respectively.
    The red dotted bar is the prediction revealed in $\predictions$. The black (gray) bar is the number of workers hired on (before) that day. The optimal minimax cost $\optcost$ is reported in each subfigure.}
    \label{fig:tradeoff illustration example}
    \vspace{-2mm}
\end{figure}
\revcolor{
\begin{remark}
    The fixed-point characterization of $\Gamma^*$ in \Cref{prop:opt alg simple instance} could possibly be refined under certain model primitives that give a more explicit form for the function $\underline{\Gamma}$. We exemplify this point in \Cref{apx:refined-gamma-single-pool}.
\end{remark}
}

We conclude this subsection by highlighting a key insight from our analysis. Although the adversary initially had infinitely many possible prediction sequences, our main idea in \Cref{prop:opt alg simple instance} was to show that restricting the adversary to a smaller subset of prediction sequences is without loss. Under this restriction, the algorithm design reduces to solving a single-dimensional fixed-point calculation. In \Cref{sec:base model result}, we generalize this idea by identifying a carefully chosen subset of prediction sequences that simplifies the adversary's maximization task. We then show how, for general instances, the analogous fixed-point calculation can be formulated and solved via a specific \emph{linear program}.


\vspace{-1mm}
\subsection{Optimal Staffing via LP-based Emulator}
\label{sec:base model result}

We now turn our attention to the general case with multiple pools and approximately consistent predictions, and show how the ideas in \Cref{sec:simple instance} can be extended to design and analyze the minimax optimal algorithm for these general instances. 

\revcolor{In contrast to our warm-up setting in \Cref{sec:simple instance} with perfectly consistent predictions, we make no assumptions on prediction inconsistency upper bounds $\{\pbias_t\}_{t\in[T]}$ of Assumption~\ref{asp:prediction sequence} in this section. Regarding the probabilities of miscoverage $\{\delta_t\}$ in Assumption~\ref{asp:prediction sequence}, as mentioned earlier in \Cref{sec:model justificaiton}, we generally expect these probabilities to be quite small, say $\mathcal{O}(\frac{1}{T^\gamma})$ for large enough $\gamma>0$\footnote{Here, to quantify the error, we think of asymptotic behavior of our additive errors with respect to $T$ while considering other parameters fixed. This should not be confused with our large market assumption.} .Given this, we could simply apply the union bound over all days $t\in[T]$ to bound the total miscoverage probability, and reduce this setting to the case with $\delta_t=0$ by introducing a small extra additive error in the expected imbalance cost in the order of $\mathcal{O}\left(\sum_{t\in[T]}\delta_t\right)=o(1)$ when $\gamma>1$. Another relevant practical scenario is when miscoverage of the prediction interval on a day---e.g., due to temporary anomalies or shocks on that day in the underlying forecasting
method---is possibly high-probability but \emph{detectable} by the dynamic staffing algorithm. In that case, we can show that by slight modifications of our algorithms the problem can be reduced again to the case with $\delta_t=0$, but this time with a smaller additive error of $\mathcal{O}\left(\max_{t\in[T]}\delta_t\right)$; therefore, in some sense, the error due to miscoverage on a day does not propagate. We defer the technical details to \Cref{apx:prob-miscoverage-shocks}. In light of these reductions and also for the brevity of exposition, we assume $\delta_t=0$ for all $t\in[T]$ in the rest of this section.}


We begin by highlighting two major new technical challenges compared to the simplified instances in our warm-up setting in \Cref{sec:simple instance}. First, with multiple supply pools having heterogeneous sizes $\{\supplyi\}$ and availability rates $\{\wdiscountit\}_{t\in[T]}$, the platform must determine not only the total number of hires each day, but also the specific allocations across pools. Second, without perfectly consistent predictions, we cannot assume that prediction intervals are nested or that the prediction error upper bound is weakly decreasing. Given these challenges, it appears unlikely that a simple procedure similar to \Cref{alg:opt simple instance}, which takes only the optimal minimax cost $\optcost$ as input and guarantees this cost under any prediction sequence, would exist. Furthermore, computing the optimal minimax cost $\optcost$ itself becomes more involved.


\xhdr{High-level sketch of our approach.}
While the algorithmic results of \Cref{sec:simple instance} do not directly extend, we can still leverage their intuition and follow a similar high-level approach to design our general algorithm. Specifically, we first identify a small subset of prediction sequences $\{\predictions\ked\}$, enabling us to characterize the optimal minimax cost $\optcost$ via a linear program~\ref{eq:opt reduced form}. We show that the feasible set of this linear program encompasses certain non-adaptive staffing strategies---called \emph{canonical staffing profiles}---which are candidate optimal solutions against prediction sequences in the set $\{\predictions\ked\}$. Next, we introduce Procedure~\ref{alg:emulator}, which we refer to as our \emph{emulator oracle}. This powerful subroutine allows the algorithm to take any canonical staffing profile as input, and adaptively generate staffing decisions for arbitrary prediction sequences, guaranteeing that its cost does not exceed the worst-case cost of the canonical staffing profile over prediction sequences in $\{\predictions\ked\}$. Finally, by invoking Procedure~\ref{alg:emulator} with the optimal solution of program~\ref{eq:opt reduced form}, we obtain our minimax-optimal algorithm. We now describe this approach in detail.


As mentioned earlier, the adversary has infinitely many possibilities for its prediction sequence. Now consider restricting the adversary to the following subset of $T$ \emph{``single-switch''} prediction sequences $\{\predictions\ked\}_{k\in[T]}$, where $\predictions\ked = \{[\pLt\ked,\pRt\ked]\}_{t\in[T]}$ is defined as follows:
\begin{align}
\label{eq:single switch prediction construction}
\begin{array}{lll}
    t\in[k]:&
    \qquad
    \pLt\ked \gets \pRzero - \pbias_t - \perrort~, &
    \pRt\ked \gets \pRzero - \pbias_t~,
    \\
    t\in[k+1:T]:&
    \qquad
    \pLt\ked \gets \max\limits_{\tau\in[0:k]}{\pRzero - \perror_{\tau} - 2\pbias_{\tau}}~,&
    \pRt\ked \gets \left(\max\limits_{\tau\in[0:k]}{\pRzero - \perror_{\tau} - 2\pbias_{\tau}}\right) + \perrort~.
\end{array}
\end{align}
Also, let $\pLzero\ked = \pLzero$ and $\pRzero\ked = \pRzero$. Here, the prediction sequence $\predictions\ked$ represents an adversary who \revcolor{(i) always picks prediction intervals of exact length $\perrort$ on each day $t$ to minimally satisfy the prediction error bounds,} and (ii) initially signals high demand during days $t\in[1:k]$ by setting the right endpoint of each prediction interval to $\pRzero - \pbias_t$ \revcolor{(and therefore, the left endpoint to $\pRzero - \pbias_t - \perrort$),} and then switches to signaling low demand during days $[k+1:T]$ by keeping the left endpoint fixed at $\max_{\tau\in[0:k]}{\pRzero - \perror_{\tau} - 2\pbias_{\tau}}$. We note that the prediction sequence $\overline\predictions$ from \Cref{lem:max-supply-consuming-sequence} coincides with $\predictions^{(T)}$ when $\pbiass = \zerobf$. This allows us to characterize the worst-case understaffing cost as in \Cref{sec:simple instance}. However, unlike our simpler approach in \Cref{sec:simple instance}, here we allow the adversary to choose from the larger set $\{\predictions\ked\}_{k\in[T]}$ to also identify the worst-case overstaffing cost.

Restricting the adversary to be single-switch (as described above in \ref{eq:single switch prediction construction}), it is straightforward to characterize the online algorithm with the minimum cost guarantee against this restricted adversary. The  prediction sequences $\{\predictions\ked\}_{k\in[T]}$ are designed such that, on each day $t$, the prediction intervals $\{[\pL_\tau\ked,\pR_\tau\ked]\}_{\tau\in[t]}$ are identical across all sequences $\predictions\ked$ with $k \geq t$. Hence, informally, no online algorithm can guess the exact day when the adversary will switch in future, if it has not yet switched by day $t$. Formally, the staffing decision at day $t$ must therefore be the same for all prediction sequences $\predictions\ked$ with $k \geq t$. This motivates introducing the following linear program~\ref{eq:opt reduced form}, with decision variables $\{\allocit,\targetcost\}_{i\in[n],t\in[T]}$, to encode the staffing decisions of the minimax-optimal algorithm and its cost guarantee under this restricted adversary:
\begin{align}
\tag{$\textsc{LP-single-switch}$}
    \label{eq:opt reduced form}
    &\arraycolsep=1.4pt\def\arraystretch{1.2}
    \begin{array}{llll}
    \min\limits_{\substack{\xbf,\targetcost\geq \zerobf}}\quad\quad
    &
    \targetcost
    & 
    \text{s.t.}
    \\
    &
    \displaystyle\sum\nolimits_{t\in[T]}\frac{1}{\wdiscountit}\allocit \leq \supplyi
    &
    i\in[n]
    \\
    &
    \displaystyle\sum\nolimits_{i\in[n]}\sum\nolimits_{t\in[k]}
    \allocit \leq \max_{\tau\in[0:k]}\left({\pRzero - \perror_{\tau} - 2\pbias_{\tau}}\right) + \frac{\targetcost}{\overcost}
    \quad\quad
    &
    k\in[T]
    \\
    &
    \displaystyle\sum\nolimits_{i\in[n]}\sum\nolimits_{t\in[T]}
    \allocit \geq \pRzero - \frac{\targetcost}{\undercost}
    &
\end{array}
\end{align}

Several comments about this LP are in order. First, the variable $\allocit$ in program~\ref{eq:opt reduced form} represents the fractional number of workers hired from pool $i$ on day $t$ under any prediction sequence $\predictions\ked$ with $k\geq t$ (no workers are hired after day $k$, when the adversary signals low demand). Second, the terms $\targetcost/\overcost$ and $\targetcost/\undercost$ in the program represent the potential numbers of overstaffing and understaffing, respectively. In fact, beyond the supply feasibility constraints, the other two constraints ensure that both the overstaffing and understaffing costs remain bounded by $\targetcost$, regardless of the adversary’s chosen switching time. Finally, since we have restricted the adversary, the optimal objective value of this LP provides a lower bound on the optimal minimax cost $\optcost$. This is formally established in the following lemma (see \Cref{apx:reduced-form-lower} for the proof). As we will see later, this lower bound is actually tight: the LP's optimal value equals $\optcost$.

\begin{restatable}{lemma}{lemmalower}
\label{lem:opt reduced form lower bound optimal minimax cost}
    For any (possibly randomized) online algorithm $\ALG$, the cost guarantee is at least the optimal objective value of program~\ref{eq:opt reduced form}.
\end{restatable}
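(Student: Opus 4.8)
The plan is to take an arbitrary (possibly randomized) feasible online algorithm $\ALG$ with cost guarantee $\targetcost$ (\Cref{def:minmaxcost}) and exhibit a point $(\{\allocit\}_{i\in[n],t\in[T]},\targetcost)$ that is feasible for \ref{eq:opt reduced form}. Since \ref{eq:opt reduced form} minimizes its objective $\targetcost$ over feasible points, this immediately yields that its optimal value is at most the cost guarantee of $\ALG$, which is the claim. To build this point I would restrict the adversary to the single-switch family $\{\predictions\ked\}_{k\in[T]}$ from \eqref{eq:single switch prediction construction}; because the cost guarantee is a maximum over all prediction sequences and demands, the expected cost of $\ALG$ under each specific $(\predictions\ked,\demand)$ used below is a valid lower bound on $\targetcost$, and each such choice enforces exactly one LP constraint.

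The construction of $\{\allocit\}$ rests on one structural observation: for every $t$, the sequences $\predictions\ked$ with $k\ge t$ are \emph{identical on days $1,\dots,t$} (all in the high-demand phase $\pLt\ked=\pRzero-\pbias_t-\perror_t$, $\pRt\ked=\pRzero-\pbias_t$), so $\ALG$ cannot distinguish them by day $t$ and the joint law of its day-$1$-through-$t$ hiring decisions is the same under all of them; in particular it matches the law under the all-high sequence $\predictions^{(T)}$. I would therefore let $\randomallocit$ be the (random) hires from pool $i$ on day $t$ under $\predictions^{(T)}$ and set $\allocit\triangleq\expect{\randomallocit}$. Supply feasibility of $\ALG$ holds pathwise under $\predictions^{(T)}$, i.e. $\sum_{t\in[T]}\tfrac{1}{\wdiscountit}\randomallocit\le\supplyi$ almost surely; taking expectations gives the supply-feasibility constraints $\sum_{t\in[T]}\tfrac{1}{\wdiscountit}\allocit\le\supplyi$.

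For the understaffing constraint I would play $\predictions^{(T)}$ and set $\demand=\pRzero$; this is admissible under \Cref{asp:prediction sequence} with $\pprob_t=0$ since $\biaseddemand_t=\pRzero-\pbias_t$ lies in $[\pRzero-\pbias_t-\perror_t,\pRzero-\pbias_t]$ and satisfies $\abs{\demand-\biaseddemand_t}=\pbias_t$. As the total hires under $\predictions^{(T)}$ equal $\sum_i\sum_t\randomallocit$, the cost is at least $\undercost\,\plus{\pRzero-\sum_i\sum_t\randomallocit}\ge\undercost(\pRzero-\sum_i\sum_t\randomallocit)$ pathwise; taking expectations and using $\targetcost\ge$ this expected cost yields $\sum_i\sum_t\allocit\ge\pRzero-\targetcost/\undercost$. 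For the $k$-th overstaffing constraint, set $m_k\triangleq\max_{\tau\in[0:k]}(\pRzero-\perror_\tau-2\pbias_\tau)$, play $\predictions\ked$, and set $\demand=m_k\in[\pLzero,\pRzero]$. Admissibility holds because on a low-demand day $t>k$ one may take $\biaseddemand_t=m_k\in[m_k,m_k+\perror_t]$, while on a high-demand day $t\le k$ the intervals $[\pRzero-\pbias_t-\perror_t,\pRzero-\pbias_t]$ and $[m_k-\pbias_t,m_k+\pbias_t]$ overlap since $m_k\ge\pRzero-\perror_t-2\pbias_t$ (take $\tau=t$) and $m_k\le\pRzero$. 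Since days $1,\dots,k$ under $\predictions\ked$ agree in law with $\randomallocit$, the hires by day $k$ are $\sum_i\sum_{t\in[k]}\randomallocit$, and because later hires are nonnegative (and irrevocable) the total is at least this amount; hence the cost dominates $\overcost(\sum_i\sum_{t\in[k]}\randomallocit-m_k)$ pathwise, and taking expectations gives $\sum_i\sum_{t\in[k]}\allocit\le m_k+\targetcost/\overcost$.

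The step I expect to be the main obstacle is not any single inequality but making the common-prefix decisions $\allocit$ genuinely well defined for a \emph{randomized} $\ALG$: this needs the indistinguishability argument (identical laws of the first-$t$ decisions across all $\predictions\ked$ with $k\ge t$), which is exactly what lets the \emph{same} numbers $\allocit$ appear simultaneously in all three constraint families. The remaining care is in the admissibility checks---verifying that the worst-case demands $\pRzero$ and $m_k$ are truly consistent with their single-switch sequences (\Cref{asp:prediction sequence})---after which everything reduces to the pointwise inequality $\plus{y}\ge y$ (equivalently convexity of $\plus{\cdot}$) and the monotonicity afforded by nonnegative, irrevocable hiring. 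Assembling the bounds shows $(\{\allocit\},\targetcost)$ is feasible for \ref{eq:opt reduced form}, so its optimal value is at most $\targetcost$, completing the proof.
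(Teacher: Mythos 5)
Your proposal is correct and takes essentially the same approach as the paper's proof: both restrict the adversary to the single-switch family \eqref{eq:single switch prediction construction}, use the observation that all $\predictions\ked$ with $k\geq t$ share the same first $t$ predictions (so a randomized algorithm's decisions through day $t$ have a common law), set $\allocit = \expect{\randomallocit}$ from the run under $\predictions\Ted$, and feed in the adversary demands $\pRzero$ and $\max_{\tau\in[0:k]}\left(\pRzero - \perror_\tau - 2\pbias_\tau\right)$ together with monotonicity/convexity of $\plus{\cdot}$ to obtain the LP constraints. The only cosmetic difference is bookkeeping---you plug the cost guarantee in directly as $\targetcost$ and verify every constraint $k$ with its own adversary choice, whereas the paper constructs the tightest $\targetcost = \max\{\overcost\cdot\lgap,\, \undercost\cdot\rgap\}$ and bounds it by the cost guarantee via a two-case analysis at the binding constraint---and your explicit interval-overlap verification of consistency is a detail the paper merely asserts.
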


As alluded to earlier, we refer to any feasible solution $\canxbf = \{\canallocit\}$ of program~\ref{eq:opt reduced form} as a ``canonical staffing profile.'' Each canonical profile $\canxbf$ encodes staffing profiles $\xbf\ked = \{\allocit\ked\}$, defined as $\allocit\ked \triangleq \canallocit\cdot \indicator{k \geq t}$ for each prediction sequence $\predictions\ked$. A natural follow-up question is: \emph{Can we use the canonical staffing profile $\canxbf$ to construct a staffing profile in an online fashion for general prediction sequences, achieving performance (in terms of imbalance cost) at least as good as $\xbf\ked$ under the prediction sequences $\predictions\ked$, for all $k\in[T]$?}

\begin{myprocedure}
    \SetKwInOut{Input}{input}
    \SetKwInOut{Output}{output}
    \Input{canonical staffing profile $\tilde\xbf=\{\canallocit\}_{i\in[n],t\in[T]}$,  initial demand range $[\pLzero,\pRzero]$, prediction inconsistency upper bounds  $\{\pbiast\}_{t\in[T]}$}
    \Output{staffing profile $\xbf=\{\allocit\}_{i\in[n],t\in[T]}$}
    
    \For{each day $t\in[T]$}{
    {\small\color{\commentcolor}\tcc{prediction $[\pLt,\pRt]$ reveals}}
        solve for an arbitrary staffing profile $\{\allocit\}_{i\in[n]}$ satisfying:
        \begin{align*}
        \displaystyle
        \sum\nolimits_{i\in[n]}
            \allocit = 
            \plus{
            \sum\nolimits_{\tau\in[t]}\sum\nolimits_{i\in[n]}
            \canalloc_{i\tau}
            -
            \sum\nolimits_{\tau\in[t-1]}\sum\nolimits_{i\in[n]}
            \alloc_{i\tau}
            -
            \left(\pRzero - \left(\min_{\tau\in [0:t]}\pR_{\tau}+\pbias_\tau\right)\right)
            }~,
        \end{align*}
        and $0 \leq \allocit \leq \canallocit$ for all $i\in[n]$ 
        
        \vspace{1mm}
        
         {\small\color{\commentcolor}\tcc{see the feasibility of the resulting staffing profile in \Cref{lem:emulator}}}
        \vspace{1mm}
        hire $\allocit$ available workers from each pool $i\in[n]$
    }
    \caption{\textsc{Emulator Oracle}}
    \label{alg:emulator}
\end{myprocedure}
We answer this question affirmatively by introducing Procedure~\ref{alg:emulator}, referred to as the ``emulator oracle,'' which serves as the core building block for the minimax-optimal algorithms in our base model and extensions (\Cref{sec:extension} and \Cref{apx:combinded objective}). 
\revcolor{Before stating crucial properties of the staffing profile obtained by Procedure~\ref{alg:emulator}, let us unpack this procedure by starting simple: Suppose $\pbias_t = 0$, $t \in [T]$ and $\pRt=\pRzero$, $t \in [T]$, i.e., the upper bound on demand never improves; then Procedure~\ref{alg:emulator} can simply follow the canonical staffing profile $\tilde\xbf$ throughout the horizon, i.e., outputs $\allocit = \canallocit$ on each day $t$. Now imagine $\pRt = \pRzero$ for $t \in [T-1]$ but $\pR_T < \pRzero$, ruling out the prediction sequence of last day to be $\predictions^{(T)}$;  then up to time $T-1$, the emulator oracle (Procedure~\ref{alg:emulator}) makes exactly the same staffing decision as the canonical profile; however, not for the last day. For this day,  Procedure~\ref{alg:emulator} hires \emph{less} workers compared to the canonical staffing profile, as the prediction interval suggests a lower target demand than $\predictions^{(T)}$ by ruling out having target demand in the interval $(R_T, R_0]$.}

The above observation holds more generally. At a high level, Procedure~\ref{alg:emulator} copies the canonical staffing profile $\tilde\xbf$---which protects against single-switch prediction sequences $\{\predictions\ked\}_{k\in[T]}$---up to the first day $t'$ when the observed prediction interval deviates from this pattern. On this day, the procedure adjusts its staffing decision by subtracting the deviation from the canonical profile's suggested number of hires. After this adjustment, it continues mimicking the canonical profile, applying similar modifications iteratively to subsequent decisions suggested by the canonical profile based on observed prediction intervals. Procedure~\ref{alg:emulator}'s theoretical guarantees are formally stated in \Cref{lem:emulator}.



\begin{lemma}
    \label{lem:emulator}
    For any canonical staffing profile $\canxbf$, there always exists a solution $\{\allocit\}_{i\in[n]}$ for step~2 of Procedure~\ref{alg:emulator} that is supply feasible. Furthermore, the staffing profile $\xbf$ returned by Procedure~\ref{alg:emulator} satisfies the following:
    \begin{enumerate}
        \item \underline{Bounded overstaffing cost:} $\sum\nolimits_{i\in[n]}\sum\nolimits_{t\in[T]}\allocit - (\max_{\tau\in[0:T]} \pL_{\tau} - \pbias_{\tau}) \leq 
        \max\limits_{k\in[T]}
        \sum\nolimits_{i\in[n]}\sum\nolimits_{t\in[k]}\canallocit - 
        \pL_T\ked $~,
        \item \underline{Bounded understaffing cost:}
        $(\min_{\tau\in[0:T]} \pR_\tau + \pbias_\tau) - \sum\nolimits_{i\in[n]}\sum\nolimits_{t\in[T]}\allocit \leq 
        \pR_T\Ted - \sum\nolimits_{i\in[n]}\sum\nolimits_{t\in[T]}\canallocit$~,
    \end{enumerate}
    where $\{\pL_T\ked\}_{k\in[T]}$ and $\pR_T\Ted$ are constructed as described in ~\eqref{eq:single switch prediction construction}.
\end{lemma}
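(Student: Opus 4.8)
The plan is to collapse the vector-valued, per-pool procedure into a single scalar recursion on cumulative hires, derive a closed form for it, and then verify each claim by comparing the emulator against one carefully chosen switch time. Throughout, write $S_t := \sum_{\tau\in[t]}\sum_{i\in[n]}\alloc_{i\tau}$ and $\tilde S_t := \sum_{\tau\in[t]}\sum_{i\in[n]}\canalloc_{i\tau}$ for the cumulative hires of the emulator and of the canonical profile, and $m_t := \min_{\tau\in[0:t]}(\pR_\tau + \pbias_\tau)$ for the tightest bias-adjusted upper confidence bound seen through day $t$; note that $m_t$ is nonincreasing while $\tilde S_t$ is nondecreasing. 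The first step is to observe that step~2 of Procedure~\ref{alg:emulator} constrains only the per-day total $\sum_i\alloc_{it}$, so its defining equation becomes the scalar update $S_t = \max\{S_{t-1},\, \tilde S_t - (\pRzero - m_t)\}$, which unrolls (using $S_0 = 0 = \tilde S_0 - (\pRzero - m_0)$) into the closed form $S_t = \max_{\tau\in[0:t]}\big(\tilde S_\tau - (\pRzero - m_\tau)\big)$.

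Next I would establish existence of the split and supply feasibility. From the $\tau=t-1$ term of the closed form, $S_{t-1} \ge \tilde S_{t-1} - (\pRzero - m_{t-1})$, so the required per-day total obeys $\sum_i\alloc_{it} = \plus{\tilde S_t - S_{t-1} - (\pRzero - m_t)} \le \plus{\tilde S_t - \tilde S_{t-1} + (m_t - m_{t-1})} \le \sum_i\canalloc_{it}$, where the last inequality uses $m_t \le m_{t-1}$ and nonnegativity of $\sum_i\canalloc_{it}$. This is exactly the condition guaranteeing that a split $\{\alloc_{it}\}_i$ with $0\le\alloc_{it}\le\canalloc_{it}$ summing to the prescribed total exists (take any feasible allocation of the total among pools). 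Supply feasibility is then immediate: since the split satisfies $\alloc_{i\tau}\le\canalloc_{i\tau}$ for every $\tau$, the partial sums satisfy $\sum_{\tau\in[t]}\alloc_{i\tau}/\wdiscount_{i\tau} \le \sum_{\tau\in[T]}\canalloc_{i\tau}/\wdiscount_{i\tau} \le \supplyi$, which is precisely the day-$t$ availability constraint \eqref{eq:supply-feasible} in the fluid model.

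For the understaffing bound I would use only the $\tau=T$ term of the closed form, $S_T \ge \tilde S_T - (\pRzero - m_T)$, which rearranges to $m_T - S_T \le \pRzero - \tilde S_T$; this is the stated bound once the day-$T$ endpoint is read as the largest demand consistent with $\predictions^{(T)}$, namely $\pR_T\Ted + \pbias_T = \pRzero$. The overstaffing bound is the crux, and the step I expect to be the main obstacle. Here I would substitute the closed form, let $\tau^\star$ attain the maximum defining $S_T$, and select the switch time $k = \tau^\star$ on the right-hand side; after cancellation the claim reduces to $\pL_T^{(\tau^\star)} \le \pRzero - m_{\tau^\star} + \max_{\sigma}(\pL_\sigma - \pbias_\sigma)$. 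Writing $\pL_T^{(\tau^\star)} = \pRzero - \perror_{\tau'} - 2\pbias_{\tau'}$ for the index $\tau'\in[0:\tau^\star]$ attaining it, the key chain is $m_{\tau^\star} \le \pR_{\tau'} + \pbias_{\tau'}$ (since $\tau'\le\tau^\star$) followed by the bounded-error inequality $\pR_{\tau'} - \perror_{\tau'} \le \pL_{\tau'}$, yielding $m_{\tau^\star} - \perror_{\tau'} - 2\pbias_{\tau'} \le \pL_{\tau'} - \pbias_{\tau'} \le \max_\sigma(\pL_\sigma - \pbias_\sigma)$, exactly as needed.

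The delicate part of this last step—and why I single it out as the obstacle—is the bias bookkeeping: converting the observed upper bound $m_{\tau^\star}$ into a valid lower bound on demand costs two units of bias (one from $m_{\tau^\star}$ to the raw right endpoint, one from the right endpoint to the bias-adjusted left endpoint), which is precisely why the single-switch construction \eqref{eq:single switch prediction construction} pads the post-switch left endpoint by $2\pbias_\tau$ rather than $\pbias_\tau$. I would dispatch the boundary indices separately: $\tau^\star = 0$ forces $S_T = 0$, so the overstaffing quantity is nonpositive (and the actual overstaffing cost vanishes), while for $\tau^\star = T$ one compares against the effective consistent-demand left endpoint $\max_{\tau\in[0:T]}(\pRzero - \perror_\tau - 2\pbias_\tau)$ so that the identical chain applies. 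The monotonicity of $m_t$ and $\tilde S_t$ recorded in the first step is what keeps all of this endpoint/bias bookkeeping internally consistent.
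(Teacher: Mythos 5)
Correct, and essentially the paper's own argument in cleaner packaging: your closed form $S_t=\max_{\tau\in[0:t]}\bigl(\tilde S_\tau-(R_0-m_\tau)\bigr)$ (with $m_t$ equal to the paper's $\hat R_t$) is exactly what the paper's feasibility induction establishes implicitly, your argmax day $\tau^\star$ plays the role of the paper's last-positive-hire index $k$, and your chain $m_{\tau^\star}-\Delta_{\tau'}-2\varepsilon_{\tau'}\le L_{\tau'}-\varepsilon_{\tau'}\le \hat L_T$ reproduces the paper's steps (d)--(f). Your explicit endpoint reconciliations (reading $R_T^{(T)}+\varepsilon_T=R_0$ in the understaffing bound, using the effective left endpoint $\max_{\tau\in[0:T]}(R_0-\Delta_\tau-2\varepsilon_\tau)$ when $\tau^\star=T$, and treating $\tau^\star=0$ as the zero-overstaffing-cost case) coincide with what the paper's proof itself implicitly assumes at those same boundary points.
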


\begin{proof}
To simplify the presentation, we introduce auxiliary notation defined as
\begin{align*}
\hat{\pR}_t \triangleq \min_{\tau\in[0:t]}(\pR_\tau + \pbias_\tau)
\quad\text{and}\quad
\hat{\pL}_t \triangleq \max_{\tau\in[0:t]}(\pL_\tau - \pbias_\tau)
\quad\text{for all } t\in[T].
\end{align*}
Note that, due to the $\pbiass$-consistency condition in \Cref{asp:prediction sequence}, the interval $[\hat\pL_t,\hat\pR_t]$ captures the range of the unknown demand $\demand$ given all predictions from days $[0:t]$. By construction, $\hat{\pR}_t$ is weakly decreasing, and $\hat{\pL}_t$ is weakly increasing over time. Additionally, from the $\perrors$-bounded error condition in \Cref{asp:prediction sequence}, we have $\hat{\pR}_t - \hat{\pL}_t \leq \min_{\tau\in[0:t]}(\perror_\tau + 2\pbias_\tau)$ for every day $t\in[T]$.

We first show that there exists a solution for step~2 of Procedure~\ref{alg:emulator} on every day $t\in[T]$ that is supply feasible. It suffices to show that for each day $t\in[T]$,
    \begin{align}
    \label{eq:alg feasibility ineq}
            \plus{
            \sum\nolimits_{\tau\in[t]}\sum\nolimits_{i\in[n]}
            \canalloc_{i\tau}
            -
            \sum\nolimits_{\tau\in[t-1]}\sum\nolimits_{i\in[n]}
            \alloc_{i\tau}
            -
            \left(\pRzero - \pRtHat\right)
            }
            \leq 
            \sum\nolimits_{i\in[n]}\canallocit~.
    \end{align}
    We prove the above inequality by induction on $t$.

\xhdr{Base case ($\boldsymbol{t = 0}$ or $\boldsymbol{t = 1}$):}
When $t = 0$, inequality~\eqref{eq:alg feasibility ineq} holds trivially. Now suppose $t = 1$. In this case, the left-hand side of~\eqref{eq:alg feasibility ineq} equals $\left(\sum_{i\in[n]} \canalloc_{i1} - (\pRzero - \hat{\pR}_{1})\right)^{+}$, which is clearly less than or equal to the right-hand side $\sum_{i\in[n]} \canalloc_{i1}$, since $\hat{\pR}_{1} \leq \pRzero$ by definition (recall $\pbias_0 = 0$).


    \xhdr{Inductive step for $\boldsymbol{t \geq 2}$:}
    Suppose inequality~\eqref{eq:alg feasibility ineq} holds for all $\tau \in[0:t - 1]$. 
    Note that the inequality for $t$ holds trivially if its left-hand side is zero.  
    Thus, we only consider $\sum_{\tau\in[t]}\sum_{i\in[n]}
    \canalloc_{i\tau}
    -
    \sum_{\tau\in[t-1]}\sum_{i\in[n]}
    \alloc_{i\tau}
    -
    \left(\pRzero - \pRtHat\right) > 0$.
    
    Let $k$ be the largest index from $[0: t - 1]$ such that 
    \begin{align*}
    \sum\nolimits_{\tau\in[k]}\sum\nolimits_{i\in[n]}
    \canalloc_{i\tau}
    -
    \sum\nolimits_{\tau\in[k-1]}\sum\nolimits_{i\in[n]}
    \alloc_{i\tau}
    -
    \left(\pRzero - \hat{\pR}_{k}\right) \geq 0
    \end{align*}
    Note the existence of day $k$ is ensured since this requirement holds trivially for $k = 0$.
    Now we upper bound the left-hand side of inequality~\eqref{eq:alg feasibility ineq} for day $t$ as follows,
    \begin{align*}
        &\plus{
        \sum\nolimits_{\tau\in[t]}\sum\nolimits_{i\in[n]}
        \canalloc_{i\tau}
        -
        \sum\nolimits_{\tau\in[t-1]}\sum\nolimits_{i\in[n]}
        \alloc_{i\tau}
        -
        \left(\pRzero - \pRtHat\right)
        }
        \\
        \overset{(a)}{=}&
        \sum\nolimits_{\tau\in[t]}\sum\nolimits_{i\in[n]}
        \canalloc_{i\tau}
        -
        \sum\nolimits_{\tau\in[t-1]}\sum\nolimits_{i\in[n]}
        \alloc_{i\tau}
        -
        \left(\pRzero - \pRtHat\right)
        \\
        \overset{}{=}
        &
        \sum\nolimits_{i\in[n]}\canallocit
        +
        \sum\nolimits_{\tau \in[k]}\sum\nolimits_{i\in[n]}
        (\canalloc_{i\tau} - \alloc_{i\tau})
        +
        \sum\nolimits_{\tau \in[k+1:t-1]}\sum\nolimits_{i\in[n]}
        (\canalloc_{i\tau} - \alloc_{i\tau})
        -
        \left(\pRzero - \pRtHat\right)
        \\
        \overset{(b)}{=}
        & 
        \sum\nolimits_{i\in[n]}\canallocit
        +
        (\pRzero - \hat{\pR}_{k})
        +
        \sum\nolimits_{\tau\in[k + 1:t - 1]}\sum\nolimits_{i\in[n]}\canalloc_{i\tau}
        -
        \left(\pRzero - \pRtHat\right)
        \\
        \overset{(c)}{\leq}
        & 
        \sum\nolimits_{i\in[n]}\canallocit
        +
        (\pRzero - \hat{\pR}_{k})
        +
        (\hat{\pR}_{k} - \hat{\pR}_{t-1}) 
        -
        \left(\pRzero - \pRtHat\right)
        \overset{}{=}
        \sum\nolimits_{i\in[n]}\canallocit
        +
        \pRtHat - \hat{\pR}_{t-1}
        \overset{(d)}{\leq}
        \sum\nolimits_{i\in[n]}\canallocit
    \end{align*}
    and thus inequality~\eqref{eq:alg feasibility ineq} holds for $t$. Here, equality~(a) holds due to the assumption in the beginning of the inductive step;
    equality~(b) holds due to the definition of index $k$ as well as the induction hypothesis, which implies for every $\tau\in[k + 1: t - 1]$, $\sum_{i\in[n]}\alloc_{i\tau} = 0$,
    and $\sum_{\tau \in[k]}\sum_{i\in[n]}
    (\canalloc_{i\tau} - \alloc_{i\tau}) = \pRzero - \hat{\pR}_{k}$ due to line~(2) of Procedure~\ref{alg:emulator};
    inequality~(c) holds due to the definition of index $k$, which implies 
    $\sum_{\tau\in[k + 1:t - 1]}\canalloc_{i\tau} \leq \hat{\pR}_{k} - \hat{\pR}_{t-1}$;
    and 
    inequality~(d) holds due to the definition of $\pRtHat$ and $\hat{\pR}_{t-1}$ (which guarantees $\pRtHat \leq \hat{\pR}_{t-1}$).

    Next, we show the \underline{bounded overstaffing cost} property in the lemma statement. Let $k$ be the largest index such that $\sum_{i\in[n]}\alloc_{ik} > 0$.
    If such $k$ does not exist, then the property holds trivially.
    Note that 
    \begin{align*}
    &{}~~~~\sum\nolimits_{i\in[n]}
    \sum\nolimits_{t\in[T]}
    \allocit 
    -
    \left(\max_{\tau\in[0:T]}\left(\pL_{\tau} - \pbias_\tau\right)\right)
    \overset{(a)}{=}
    \sum\nolimits_{i\in[n]}
    \sum\nolimits_{t\in[T]}
    \allocit 
    -
    \hat\pL_{T}
    \overset{(b)}{=}
    \sum\nolimits_{i\in[n]}
    \sum\nolimits_{t\in[k]}
    \allocit 
    -
    \hat\pL_{T}
    \\&\overset{(c)}{=}
    \sum\nolimits_{i\in[n]}
    \sum\nolimits_{t\in[k]}
    \canallocit
    -
    (\pRzero - \hat\pR_{k})
    -
    \hat\pL_{T}
    \overset{(d)}{\leq}
    \sum\nolimits_{i\in[n]}
    \sum\nolimits_{t\in[k]}
    \canallocit
    -
    (\pRzero - \hat\pR_{k})
    -
    \hat\pL_{k}
    \\&\overset{(e)}{\leq}
    \sum\nolimits_{i\in[n]}
    \sum\nolimits_{t\in[k]}
    \canallocit
    -
    \pRzero 
    +
    \left(\min_{\tau\in[0:k]}
    \left(\perror_\tau
    +
    2\pbias_\tau\right)
    \right)
    \overset{(f)}{=}
    \sum\nolimits_{i\in[n]}
    \sum\nolimits_{t\in[k]}
    \canallocit
    -
    \pL_T\ked
    \end{align*}
    as desired. 
    Here equality~(a) holds due to the definition of $\hat{\pL}_T$; equalities~(b) and (c) hold due to the definition of index $k$ and line~(2) of Procedure~\ref{alg:emulator};
    inequality~(d) holds since $\hat\pL_{T} \geq \hat\pL_{k}$ by definition;
    inequality~(e) holds since $\hat\pR_{k} - \hat\pL_{k}\leq \perrork + 2\pbias_k$ by definition;
    and equality~(f) holds due to the construction of $\pL_T\ked$ in \eqref{eq:single switch prediction construction}.
    
    Finally, we show the \underline{bounded understaffing cost} property in the lemma statement.
    First, due to the construction of staffing profile $\{\allocit\}$ in line~(2) of Procedure~\ref{alg:emulator}, we have 
\begin{align*}
    \sum\nolimits_{i\in[n]}\alloc_{iT} 
    &=
    \plus{\sum\nolimits_{t\in[T]}\sum\nolimits_{i\in[n]}\canallocit - 
    \sum\nolimits_{t\in[T - 1]}\sum\nolimits_{i\in[n]}\allocit
    -(\pRzero - \hat{\pR}_{T})}
    \\
    &\geq 
    \sum\nolimits_{t\in[T]}\sum\nolimits_{i\in[n]}\canallocit - 
    \sum\nolimits_{t\in[T - 1]}\sum\nolimits_{i\in[n]}\allocit
    - (\pRzero - \hat{\pR}_{T})
\end{align*}
and thus, after rearranging terms, we have 
\begin{align*}
    \left(\min_{\tau\in[0:T]} \left(\pR_\tau + \pbias_\tau\right)\right)
    -
    \sum\nolimits_{t\in[T]}\sum\nolimits_{i\in[n]}\allocit
    &\overset{(a)}{=}
    \hat{\pR}_{T} - 
    \sum\nolimits_{t\in[T]}\sum\nolimits_{i\in[n]}\allocit
    \\
    &\leq 
    \hat{\pR}_{T}
    -
    \left(
    \sum\nolimits_{t\in[T]}\sum\nolimits_{i\in[n]}\canallocit
    -
    (\pRzero - \hat{\pR}_{T})
    \right)
    \\
    & = 
    \pRzero
    -
    \sum\nolimits_{t\in[T]}\sum\nolimits_{i\in[n]}\canallocit
    \overset{(b)}{=}
    \pR_T\Ted - \sum\nolimits_{i\in[n]}\sum\nolimits_{t\in[T]}\canallocit~,
\end{align*}
as desired. Here, equality~(a) holds by definition of $\hat{\pR}_T$, and (b) holds by construction of $\pR_T\Ted$ in \eqref{eq:single switch prediction construction}.
\end{proof}

Putting all the pieces together, in particular \Cref{lem:opt reduced form lower bound optimal minimax cost} and \Cref{lem:emulator}, we are now ready to present the minimax optimal algorithm (\Cref{alg:opt}) with its optimality guarantee (\Cref{thm:opt alg}).
\begin{algorithm}[htb]
 \setcounter{AlgoLine}{0}
    \SetKwInOut{Input}{input}
    \SetKwInOut{Output}{output}
    \Input{initial pool sizes $\{\supplyi\}_{i\in[n]}$, availability rates $\{\wdiscountit\}_{i\in[n], t\in[T]}$, initial demand range $[\pLzero,\pRzero]$, prediction error upper bounds $\{\perrort\}_{t\in[T]}$, prediction inconsistency upper bounds  $\{\pbiast\}_{t\in[T]}$}
    \Output{staffing profile $\xbf$}
    \vspace{2mm}
    find an optimal solution $(\xbf^*, \targetcost^*)$ of program~\ref{eq:opt reduced form}
    
    \vspace{2mm}
    invoke Procedure~\ref{alg:emulator} with canonical staffing profile $\canxbf \gets \xbf^*$ {\small\color{\commentcolor}\tcc{facing prediction sequence $\predictions$}}
    \caption{\OPTSim}
    \label{alg:opt}
\end{algorithm}
\begin{theorem}
\label{thm:opt alg}
    {\OPTSim} is minimax optimal. Furthermore, its optimal minimax cost $\optcost$ is equal to the objective value of program~\ref{eq:opt reduced form}.
\end{theorem}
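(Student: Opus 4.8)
The plan is to prove both claims at once via a sandwich on the cost guarantee of \Cref{alg:opt}. \Cref{lem:opt reduced form lower bound optimal minimax cost} already gives that every feasible online algorithm has cost guarantee at least the optimal value $\targetcost^*$ of program~\ref{eq:opt reduced form}, so in particular $\optcost\geq\targetcost^*$; the entire remaining task is to show that \Cref{alg:opt} has cost guarantee at most $\targetcost^*$. Throughout I work under $\pprob_t=0$, as the section already reduces the general miscoverage case to this one through a union bound. The one fact I need about admissible demands is a direct consequence of \Cref{asp:prediction sequence}: for each $\tau\in[0:T]$ there is a point estimate $\biaseddemand_\tau\in[\pL_\tau,\pR_\tau]$ with $\abs{\demand-\biaseddemand_\tau}\leq\pbias_\tau$, hence $\pL_\tau-\pbias_\tau\leq\demand\leq\pR_\tau+\pbias_\tau$; taking the tightest bounds over $\tau$ places $\demand$ in the window $[\hat{\pL}_T,\hat{\pR}_T]$ with $\hat{\pL}_T=\max_{\tau\in[0:T]}(\pL_\tau-\pbias_\tau)$ and $\hat{\pR}_T=\min_{\tau\in[0:T]}(\pR_\tau+\pbias_\tau)$ --- exactly the endpoints against which \Cref{lem:emulator} states its guarantees.

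With this in hand I would instantiate Procedure~\ref{alg:emulator} with $\canxbf=\xbf^*$, the optimal solution of program~\ref{eq:opt reduced form}. \Cref{lem:emulator} ensures the emulated profile $\xbf$ is supply feasible, so \Cref{alg:opt} is indeed a feasible deterministic online algorithm, and it remains to bound $\Cost[\demand]{\xbf}$ for an arbitrary admissible $(\predictions,\demand)$. For the overstaffing term, I use $\demand\geq\hat{\pL}_T$ followed by the first inequality of \Cref{lem:emulator}, which bounds $\sum_i\sum_t\allocit-\hat{\pL}_T$ by $\max_{k\in[T]}\big(\sum_i\sum_{t\in[k]}\canallocit-\pL_T\ked\big)$; each $k$-indexed constraint of program~\ref{eq:opt reduced form} says precisely that this quantity is at most $\targetcost^*/\overcost$, so $\overcost\cdot\plus{\sum_i\sum_t\allocit-\demand}\leq\targetcost^*$. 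Symmetrically, $\demand\leq\hat{\pR}_T$ together with the second inequality of \Cref{lem:emulator} bounds $\hat{\pR}_T-\sum_i\sum_t\allocit$ by $\pRzero-\sum_i\sum_t\canallocit$, which the final (understaffing) constraint $\sum_i\sum_t\canallocit\geq\pRzero-\targetcost^*/\undercost$ caps at $\targetcost^*/\undercost$, giving $\undercost\cdot\plus{\demand-\sum_i\sum_t\allocit}\leq\targetcost^*$. Since at most one of the two terms is nonzero for any fixed $\demand$, I get $\Cost[\demand]{\xbf}\leq\targetcost^*$ uniformly over admissible $(\predictions,\demand)$, hence a cost guarantee of at most $\targetcost^*$.

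Combining the two bounds closes the argument. We then have $\targetcost^*\geq$ (cost guarantee of \Cref{alg:opt}), while feasibility of \Cref{alg:opt} together with \Cref{lem:opt reduced form lower bound optimal minimax cost} gives (cost guarantee of \Cref{alg:opt}) $\geq\optcost\geq\targetcost^*$. All three quantities therefore coincide: \Cref{alg:opt} attains $\optcost$ and is minimax optimal, and $\optcost=\targetcost^*$ equals the objective value of program~\ref{eq:opt reduced form}.

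I expect the real work to be bookkeeping rather than a new idea, since \Cref{lem:emulator} and \Cref{lem:opt reduced form lower bound optimal minimax cost} already carry the structural weight. The delicate point is verifying that the endpoints $\hat{\pL}_T,\hat{\pR}_T,\pL_T\ked,\pR_T\Ted$ emerging from the emulator analysis match the right-hand sides of the constraints of program~\ref{eq:opt reduced form} term by term --- including the boundary index $k=T$ and the precise placement of the bias terms $\pbias_\tau$ --- so that ``protection against the single-switch family'' translates without slack into the two per-instance cost bounds. Once that correspondence is nailed down, the sandwich above is immediate.
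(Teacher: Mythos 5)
Your proposal is correct and takes essentially the same route as the paper: the lower bound is exactly \Cref{lem:opt reduced form lower bound optimal minimax cost}, and your upper bound reproduces the paper's \Cref{lem:opt alg minimax cost}, which confines the demand to $[\hat{\pL}_T,\hat{\pR}_T]$ via $\pbiass$-consistency and then combines the bounded over-/understaffing properties of \Cref{lem:emulator} with the second and third constraints of program~\ref{eq:opt reduced form}, closing the sandwich $\targetcost^*\geq$ cost guarantee of \Cref{alg:opt} $\geq\optcost\geq\targetcost^*$. The endpoint bookkeeping you flag (matching $\pL_T\ked$ and $\pR_T\Ted$ to the LP right-hand sides, including the $k=T$ boundary) is indeed the only remaining content, and it goes through exactly as you describe.
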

A few remarks about this result are in order. First, {\OPTSim} has a running time of $\texttt{Poly}(n, T)$. Second, supply feasibility of the staffing profile returned by {\OPTSim} is guaranteed by the supply feasibility of the optimal solution to \ref{eq:opt reduced form} and the fact that $x_{it}\leq\tilde{x}_{it}={x}^*_{it}$ by construction in Procedure~\ref{alg:emulator}. Third, randomization cannot improve the cost guarantee, as the minimax-optimal algorithm ({\OPTSim}) is deterministic. Lastly, because the minimax-optimal algorithm is deterministic, it remains minimax optimal against both oblivious and adaptive adversaries.

\subsubsection{Proof of Theorem~\ref{thm:opt alg}}
\label{sec:proof-main}
The proof consists of two steps. In the first step, using \Cref{lem:opt reduced form lower bound optimal minimax cost}, we conclude that the cost guarantee of every online algorithm is at least the optimal objective value of program~\ref{eq:opt reduced form}. In the second step, we show \Cref{lem:opt alg minimax cost} (by using \Cref{lem:emulator}), stating that the cost guarantee of {\OPTSim} is at most the optimal objective value of program~\ref{eq:opt reduced form}. Combining these two steps completes the proof of \Cref{thm:opt alg}. 
\begin{lemma}
\label{lem:opt alg minimax cost}
    The cost guarantee of {\OPTSim} is at most the optimal objective value of program~\ref{eq:opt reduced form}.
\end{lemma}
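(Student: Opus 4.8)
The plan is to fix an arbitrary admissible prediction sequence $\predictions=\{\predictiont\}_{t\in[T]}$ and an arbitrary demand $\demand$ consistent with it, and to show that the profile $\xbf$ output by {\OPTSim} satisfies $\Cost[\demand]{\xbf}\le\targetcost^*$, where $\targetcost^*$ is the optimal value of program~\ref{eq:opt reduced form}. Since {\OPTSim} is deterministic, the expectation in \Cref{def:minmaxcost} is vacuous, so taking the maximum over $(\predictions,\demand)$ immediately yields a cost guarantee of at most $\targetcost^*$. Throughout I will use that $\xbf$ is the output of Procedure~\ref{alg:emulator} on the canonical profile $\canxbf=\xbf^*$, so both inequalities of \Cref{lem:emulator} hold with $\canallocit=x^*_{it}$, and that $(\xbf^*,\targetcost^*)$ is feasible for program~\ref{eq:opt reduced form}.

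First I would pin down the admissible range of the demand. Invoking the $\pbiass$-consistency part of \Cref{asp:prediction sequence} with $\pprob_t=0$, for each $t\in[0:T]$ there is a point estimate $\biaseddemand_t\in[\pLt,\pRt]$ with $\abs{\demand-\biaseddemand_t}\le\pbias_t$; combining the two containments yields $\pLt-\pbias_t\le\demand\le\pRt+\pbias_t$. Taking the tightest bounds over $\tau\in[0:T]$ gives $\max_{\tau\in[0:T]}(\pL_\tau-\pbias_\tau)\le\demand\le\min_{\tau\in[0:T]}(\pR_\tau+\pbias_\tau)$. These two quantities are exactly the left-hand sides appearing in the overstaffing and understaffing bounds of \Cref{lem:emulator}, which is precisely why those bounds are stated in terms of $\max_\tau(\pL_\tau-\pbias_\tau)$ and $\min_\tau(\pR_\tau+\pbias_\tau)$ rather than in terms of the final interval $[\pL_T,\pR_T]$.

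Next I would bound the two halves of the imbalance cost separately, using \Cref{lem:emulator} together with the LP constraints. For the overstaffing term, since $\demand\ge\max_{\tau\in[0:T]}(\pL_\tau-\pbias_\tau)$, I may replace $\demand$ by this lower bound and apply the first bound of \Cref{lem:emulator} to obtain $\sum_{i\in[n]}\sum_{t\in[T]}\allocit-\demand\le\max_{k\in[T]}\big(\sum_{i\in[n]}\sum_{t\in[k]}\canallocit-\pL_T\ked\big)$; using the identity $\pL_T\ked=\max_{\tau\in[0:k]}(\pRzero-\perror_\tau-2\pbias_\tau)$ from~\eqref{eq:single switch prediction construction}, the second family of constraints of program~\ref{eq:opt reduced form} reads $\sum_{i\in[n]}\sum_{t\in[k]}\canallocit-\pL_T\ked\le\targetcost^*/\overcost$ for every $k\in[T]$, so the right-hand side is at most $\targetcost^*/\overcost$ and hence $\overcost\cdot\plus{\sum_{i\in[n]}\sum_{t\in[T]}\allocit-\demand}\le\targetcost^*$. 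For the understaffing term, since $\demand\le\min_{\tau\in[0:T]}(\pR_\tau+\pbias_\tau)$, I replace $\demand$ by this upper bound and apply the second bound of \Cref{lem:emulator}, whose right-hand side equals $\pRzero-\sum_{i\in[n]}\sum_{t\in[T]}\canallocit$ (as established in its proof, where $\pR_T\Ted=\pRzero$); combined with the last constraint $\sum_{i\in[n]}\sum_{t\in[T]}\canallocit\ge\pRzero-\targetcost^*/\undercost$ of program~\ref{eq:opt reduced form}, this gives $\demand-\sum_{i\in[n]}\sum_{t\in[T]}\allocit\le\targetcost^*/\undercost$, hence $\undercost\cdot\plus{\demand-\sum_{i\in[n]}\sum_{t\in[T]}\allocit}\le\targetcost^*$. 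Since at most one of the two $\plus{\cdot}$ terms in $\Cost[\demand]{\xbf}$ can be positive, combining the two bounds yields $\Cost[\demand]{\xbf}\le\targetcost^*$.

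I expect the bookkeeping above to be essentially routine once \Cref{lem:emulator} is in hand; the one step that carries the real content---and where I would be most careful---is the demand-range argument, because the entire reduction hinges on every demand the adversary can validly pair with $\predictions$ lying in $[\max_\tau(\pL_\tau-\pbias_\tau),\,\min_\tau(\pR_\tau+\pbias_\tau)]$, and it is exactly these worst-case consistent endpoints (rather than the endpoints observed on the final day) that let the single-switch LP constraints dominate the cost under the arbitrary sequence $\predictions$. Supply feasibility of $\xbf$, needed for it to be a legitimate online profile, requires no separate argument here, since it is already part of the guarantee of \Cref{lem:emulator}.
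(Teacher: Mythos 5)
Your proposal is correct and follows essentially the same route as the paper's own proof: it reduces the cost guarantee to two pointwise inequalities, uses $(\pbiass,\pprobb)$-consistency to confine the demand to $\bigl[\max_{\tau}(\pL_\tau-\pbias_\tau),\,\min_{\tau}(\pR_\tau+\pbias_\tau)\bigr]$, and then closes both the overstaffing and understaffing bounds by combining \Cref{lem:emulator} with the second and third constraint families of \ref{eq:opt reduced form}. Your write-up simply makes explicit some bookkeeping (the identification of $\pL_T\ked$ and $\pR_T\Ted$ with the LP constraint terms, the determinism of the algorithm, and supply feasibility via \Cref{lem:emulator}) that the paper leaves implicit.
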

\begin{proof}
    Let $(\xbf^*, \targetcost^*)$ be the optimal solution of program~\ref{eq:opt reduced form} used in algorithm \OPTSim.
    It suffices to show that for every prediction sequence $\predictions$ and demand $\demand$, the total number of hired workers $\sum_{i\in[n]}\sum_{t\in[T]}\allocit$ satisfies the following two inequalities:
\begin{align*}
\undercost\cdot \plus{\demand - \sum\nolimits_{i\in[n]}\sum\nolimits_{t\in[T]}\allocit} \leq {\targetcost}^*
\;\;
\mbox{and}
\;\;
\overcost\cdot \plus{\sum\nolimits_{i\in[n]}\sum\nolimits_{t\in[T]}\allocit - \demand} \leq \targetcost^*~,
\end{align*}
Recall that due to the $\pbiass$-consistency in \Cref{asp:prediction sequence}, demand $\demand$ satisfies
\begin{align*}
    \hat{L}_t=\max_{t\in[0:T]} \left(\pLt - \pbias_t\right)
    \leq 
    \demand
    \leq 
    \min_{t\in[0:T]} \left(\pRt + \pbias_t\right)=\hat{R}_t
\end{align*}
Hence, it suffices to prove
\begin{align*}
\undercost\cdot \plus{ 
\left(\min_{t\in[0:T]} \left(\pRt + \pbias_t\right)\right) - \sum\nolimits_{i\in[n]}\sum\nolimits_{t\in[T]}\allocit} \leq \targetcost^*
\;\;
\mbox{and}
\;\;
\plus{\sum\nolimits_{i\in[n]}\sum\nolimits_{t\in[T]}\allocit - \left(\max_{t\in[0:T]} \left(\pLt - \pbias_t\right)\right)} \leq \targetcost^*~.
\end{align*}
Invoking the bounded over/understaffing cost properties in \Cref{lem:emulator}, along with second and third constraints on $\targetcost^*$ in program~\ref{eq:opt reduced form}, proves the two inequalities above as desired.
\end{proof}
\vspace{-2mm}
\revcolor{\subsection{LP Resolving and Minimax Optimality}
\label{sec:lp-resolving-main}




In this section, we present another minimax-optimal algorithm, {\OPTReS}, which builds on program~\ref{eq:opt reduced form} using the idea of resolving.

The key observation is that for each day $t \in [T]$, the platform effectively faces a subproblem with $T - t$ remaining days. The parameters of this subproblem—namely, the initial workforce pool size, demand range, and staffing level—are determined by the staffing decisions and prediction intervals from the previous $t-1$ days.
Accordingly, {\OPTReS} \emph{(re-)solves} program~\ref{eq:opt reduced form} for the subproblem starting at day $t$, obtaining the optimal solution $\{\alloc_{i\tau}^{(t)}\}_{i \in [n], \tau \in [t:T]}$. It then implements the staffing profile $\{\alloc_{it}^{(t)}\}_{i \in [n]}$ for the current day $t$. A formal algorithm description can be found in \Cref{alg:lp-resolve}.

\begin{algorithm}
 \setcounter{AlgoLine}{0}
    \SetKwInOut{Input}{input}
    \SetKwInOut{Output}{output}
    \Input{initial pool sizes $\{\supplyi\}_{i\in[n]}$, availability rates $\{\wdiscountit\}_{i\in[n], t\in[T]}$, initial demand range $[\pLzero,\pRzero]$, prediction error upper bounds $\{\perrort\}_{t\in[T]}$, prediction inconsistency upper bounds  $\{\pbiast\}_{t\in[T]}$}

    initialize the state of the system: $\bar{\supply}_i\leftarrow \supply_i$ and $\bar{\cumalloc}_i\leftarrow 0$ for $i\in[n]$. {\small\color{\commentcolor}\tcc{$\mathbf{\bar \supply}$ and $\mathbf{\bar \cumalloc}$ track the number of available workers and the total number of workers hired in each pool}}

    initialize the current prediction upper bound: $\bar{\pR}\leftarrow \pRzero$.

    initialize the current projected availability rates: $\bar{\wdiscount}_{it}\leftarrow \wdiscount_{it}$ for $t\in[T],i\in[n]$
    
    \For{each day $t\in[T]$}{
    {\small\color{\commentcolor}\tcc{prediction $[\pLt,\pRt]$ reveals}}

        update 
        the current prediction upper bound: $\bar{\pR}\leftarrow \min\left\{\bar{\pR}, \pRt + \pbiast\right\}$.
        
       find an optimal solution $(\mathbf{\alloc}^{(t)},\targetcost^{(t)})$ of the ``adjusted'' \ref{eq:opt reduced form} given the current $\left(\mathbf{\bar{\supply}},\mathbf{\bar{\cumalloc}},\bar{\boldsymbol{\wdiscount}},\bar{\pR}\right)$: 
       \begin{align*}
        \begin{array}{llll}
    \min\limits_{\substack{\xbf,\targetcost\geq \zerobf}}\quad\quad
    &
    \targetcost
    & 
    \text{s.t.}
    \\
    &
    \displaystyle\sum\nolimits_{\tau\in [t:T]}\frac{1}{\bar{\wdiscount}_{i\tau}}\alloc_{i\tau} \leq \bar{\supply}_i
    &
    i\in[n]
    \\
    &
    \displaystyle\sum\nolimits_{i\in[n]}\left(\bar{\cumalloc}_i+\sum\nolimits_{\tau\in[t:k]}
    \alloc_{i\tau}\right) \leq \max_{\tau\in[t:k]}\left({\bar{\pR} - \perror_{\tau} - 2\pbias_{\tau}}\right) + \frac{\targetcost}{\overcost}
    \quad\quad
    &
    k\in[t:T]
    \\
    &
    \displaystyle\sum\nolimits_{i\in[n]}\left(\bar{\cumalloc}_i+\sum\nolimits_{\tau\in[t:T]}
    \alloc_{i\tau}\right) \geq \bar{\pR} - \frac{\targetcost}{\undercost}
    &
\end{array}
       \end{align*}
       
       set $\alloc_{it}\gets\alloc^{(k)}_{it}$ for all $i\in[n]$,  and hire $\allocit$ available workers from each pool $i\in[n]$.

        \vspace{1mm}

        update the state, 
        and current projected availability rates:
        \begin{align*}
        \forall i\in[n]:
        ~\bar{\supply}_i\leftarrow \bar{\wdiscount}_{it}\bar{\supply}_i-\alloc_{it}~,
        ~\bar{\cumalloc}_i\leftarrow\bar{\cumalloc}_i+\alloc_{it}~,
        ~\{\bar{\wdiscount}_{i\tau}\}_{\tau\in[t+1:T]}\leftarrow \left\{\frac{\bar{\wdiscount}_{i\tau}}{\bar{\wdiscount}_{it}}\right\}_{\tau\in[t+1:T]}
        \end{align*}
    }
    \caption{\textsc{{\OPTReS}}}
    \label{alg:lp-resolve}
\end{algorithm}

The minimax optimality of {\OPTReS} is established in \Cref{thm:opt alg resolving}. In contrast to {\OPTSim}, which directly employs the ``emulator oracle'' (Procedure~\ref{alg:emulator}), {\OPTReS} does not invoke this oracle in its execution. Instead, its optimality is established through an induction argument in which the minimax optimality of {\OPTSim} and thus emulator oracle serve as a key analytical tool. The complete analysis and the proof of minimax optimality of this algorithm is provided in \Cref{apx:opt alg resolving}.

\begin{restatable}{theorem}{thmoptalgresolving}
\label{thm:opt alg resolving}
{\OPTReS} is minimax optimal. Moreover, its optimal minimax cost $\optcost$ coincides with the objective value of program~\ref{eq:opt reduced form}.
\end{restatable}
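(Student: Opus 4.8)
The plan is to establish two inequalities. Since Lemma~\ref{lem:opt reduced form lower bound optimal minimax cost} already shows that the cost guarantee of \emph{any} online algorithm (including \OPTReS) is at least the optimal objective value of program~\ref{eq:opt reduced form}, it suffices to prove the matching upper bound: the cost guarantee of \OPTReS{} is at most the objective value of program~\ref{eq:opt reduced form} on the original instance. Combined with Theorem~\ref{thm:opt alg}, which identifies that value with $\optcost$, this gives minimax optimality and the claimed equality of the minimax cost.

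The upper bound I would prove by an induction on the horizon length, using the minimax optimality of \OPTSim{} as the base analytical tool. The key structural observation is the one flagged in the text: at the start of day $t$, after the staffing decisions $\{\alloc_{i\tau}\}_{\tau<t}$ and the revealed predictions have updated the state $(\remainsupplies, \cumallocs, \bar{\boldsymbol{\wdiscount}}, \curpR)$, the platform faces a genuine subinstance of the \emph{same} problem on the remaining $T-t+1$ days. The ``adjusted'' \ref{eq:opt reduced form} written in Algorithm~\ref{alg:lp-resolve} is exactly program~\ref{eq:opt reduced form} for this subinstance: the supply bounds $\remainsupplyi$ track residual availability, the normalized availability rates $\bar{\wdiscount}_{i\tau}=\wdiscount_{i\tau}/\wdiscount_{it}$ are the correct conditional survival probabilities, the cumulative hires $\cumallociBar$ shift the overstaffing/understaffing constraints, and $\curpR=\min_{\tau\le t}(\pRt+\pbiast)=\pRtHat$ is the correct effective upper bound on demand given the prediction history. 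The crux is to argue two things: first, that \OPTReS{} is feasible (its per-day decision $\{\alloc_{it}^{(t)}\}$ respects supply feasibility, which follows because it is the first block of a feasible solution to the adjusted LP), and second, that the optimal value $\targetcost^{(t)}$ of the adjusted LP is \emph{monotone non-increasing} in $t$ along the realized trajectory, so that the terminal imbalance cost is controlled by $\targetcost^{(1)}$, the objective value of the original program~\ref{eq:opt reduced form}.

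The central lemma, and the place where I expect the main difficulty, is the monotonicity claim $\targetcost^{(t+1)} \le \targetcost^{(t)}$. The clean way to get it is to show that the \OPTSim{} algorithm applied to the subinstance at day $t$ (equivalently, the emulator oracle of Procedure~\ref{alg:emulator} instantiated with the day-$t$ canonical solution $\xbf^{(t)}$) produces, on day $t$, exactly the hiring quantity that \OPTReS{} commits to, and that after making this hire and observing prediction $\predictiont[t+1]$, the \emph{same} canonical profile (suitably truncated) remains feasible for the day-$(t+1)$ adjusted LP with objective no larger than $\targetcost^{(t)}$. This is precisely where the invariant property of Procedure~\ref{alg:emulator} established inside the proof of Lemma~\ref{lem:emulator}—namely inequality~\eqref{eq:alg feasibility ineq} and the bounded over/understaffing guarantees—must be re-deployed at the level of the resolved subproblem. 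Concretely, I would show that the day-$t$ optimal solution $\xbf^{(t)}$ of the adjusted LP, when fed into the emulator as a canonical profile, yields feasible future decisions whose induced cost never exceeds $\targetcost^{(t)}$; since the day-$(t+1)$ resolve optimizes over a feasible set that contains the emulator's continuation, we get $\targetcost^{(t+1)}\le\targetcost^{(t)}$.

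Finally I would close the induction. With monotonicity in hand, the overstaffing and understaffing at the end of the horizon are bounded exactly as in the proof of Lemma~\ref{lem:opt alg minimax cost}: using $\hat{L}_T \le \demand \le \hat{R}_T$ from $\pbiass$-consistency together with the second and third constraints of the final resolved LP (at $t=T$, or equivalently the accumulated constraints along the trajectory), both $\undercost\cdot\plus{\demand - \sum_{i,t}\allocit}$ and $\overcost\cdot\plus{\sum_{i,t}\allocit - \demand}$ are at most $\targetcost^{(T)}\le\cdots\le\targetcost^{(1)}$. Hence the cost guarantee of \OPTReS{} is at most $\targetcost^{(1)}$, the objective value of program~\ref{eq:opt reduced form}. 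Two subtleties I would watch for: that the adjusted LP is always feasible along the realized prediction path (this should follow from carrying forward a feasible continuation of the previous resolve, guaranteeing the induction never breaks), and that the $\max$ over $\tau\in[t:k]$ in the adjusted overstaffing constraint correctly reflects $\curpR=\pRtHat$ rather than the original $\pRzero$—a bookkeeping point that must be checked so that the emulator's invariant transfers verbatim to each subproblem.
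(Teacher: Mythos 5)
Your high-level route is sound, and it is organized genuinely differently from the paper's proof. The paper runs a \emph{backward} induction over states: it shows that, conditioned on any reachable state at any day $t$, the cost-to-go guarantee of {\OPTReS} \emph{equals} the optimal value of the resolved LP at that state; the inductive step observes (as you do) that {\OPTReS} and a hypothetical run of {\OPTSim} on the day-$t$ subinstance make identical day-$t$ decisions and hence reach identical day-$(t+1)$ states, and then combines the induction hypothesis with \Cref{thm:opt alg} applied to the subinstance. You instead run a \emph{forward}, pathwise potential argument: monotonicity $\targetcost^{(t+1)}\le\targetcost^{(t)}$ of the resolved LP values along the realized trajectory, a terminal bound showing the realized cost is at most $\targetcost^{(T)}$, and the lower bound of \Cref{lem:opt reduced form lower bound optimal minimax cost} to close. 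Both proofs rest on the same two pillars (agreement of the day-$t$ decisions, and \Cref{thm:opt alg} applied to subinstances), so neither is more elementary; the paper's version additionally yields the stronger conclusion that {\OPTReS} remains optimal conditioned on \emph{every} reachable state, while yours certifies only the ex-ante guarantee. Your terminal bound and your use of the lower bound are correct as written.

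However, one step of your monotonicity argument is wrong as stated and needs the repair you only gesture at. The claim that ``the same canonical profile (suitably truncated) remains feasible for the day-$(t+1)$ adjusted LP with objective no larger than $\targetcost^{(t)}$'' is false in general: the truncated profile has the same cumulative hires as before, but the right-hand sides of the overstaffing constraints tighten---by at least $\curpR^{(t)}-\curpR^{(t+1)}$ when the effective demand upper bound strictly improves, and additionally because the maximum is now taken over $\tau\in[t+1:k]$ rather than $\tau\in[t:k]$---so the pair consisting of the truncated profile and $\targetcost^{(t)}$ can violate those constraints (only the understaffing constraint relaxes). Your fallback, that ``the day-$(t+1)$ resolve optimizes over a feasible set that contains the emulator's continuation,'' is a category error: the LP's feasible set consists of static canonical profiles, whereas the emulator's continuation is an adaptive policy, so no such containment holds. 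The correct gluing is: (i) by \Cref{lem:emulator}, exactly as deployed in \Cref{lem:opt alg minimax cost}, Procedure~\ref{alg:emulator} seeded with $\xbf^{(t)}$ and conditioned on the realized day-$(t+1)$ prediction is a feasible online algorithm for the day-$(t+1)$ subinstance whose worst-case cost is at most $\targetcost^{(t)}$; (ii) by \Cref{lem:opt reduced form lower bound optimal minimax cost} applied to that subinstance, \emph{every} online algorithm for it has worst-case cost at least the value $\targetcost^{(t+1)}$ of its LP. Chaining (i) and (ii) gives $\targetcost^{(t+1)}\le\targetcost^{(t)}$. With this substitution your induction goes through; note also that feasibility of each resolved LP (your other stated worry) is automatic, since $\targetcost$ can absorb any violation, e.g.\ $\xbf=\zerobf$ with $\targetcost$ sufficiently large is always feasible.
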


As a sanity check, when facing the single-switch prediction sequences, {\OPTSim} and {\OPTReS} produce identical staffing profiles and attain the same optimal cost guarantee. For general prediction sequences, however, {\OPTReS} typically outperforms {\OPTSim}, as it preserves minimax optimality across all subproblems. This intuition on performance advantage is also borne out in our numerical experiments (\Cref{sec:numerical}). On the other hand, while both algorithms run in polynomial time, {\OPTReS} is computationally more demanding because it resolves \ref{eq:opt reduced form} at every time step. In practice, one may consider a \emph{hybrid} approach: solve \ref{eq:opt reduced form} at selected checkpoints based on the current state, emulate the resulting canonical solution with Procedure~\ref{alg:emulator} for intermediate days, and then re-solve the program as needed. It is not hard to verify the minimax optimality of this hybrid algorithm.

}

\section{Minimax Optimal Algorithms in Extension Models}
\label{sec:extension}
In this section, we generalize our approach from \Cref{sec:base model} to two extensions. First, we consider a setting with multiple stations (or multiple operating days), each with its own unknown demand to meet (\Cref{sec:multi-station}). We allow for a general setting with heterogeneous prediction sequences and demands. Second, we consider the model where the platform can release previously hired workers by incurring an additional cost (\Cref{sec:costly-hiring}).

To simplify the presentation, we assume perfectly consistent prediction intervals throughout this section (i.e., $\pbiass = \pprobb = \zerobf$), implying, without loss of generality, that the intervals are nested. All our results can be extended to settings with $(\pbiass,\pprobb)$-consistency in a straightforward way.


\subsection{Workforce Planning for Multiple Stations}
\label{sec:multi-station}

 In last-mile delivery, the platform might prefer to couple staffing decisions across multiple stations that share the same worker pools (e.g., due to geographical proximity) but have different predictions and final demands. 
We first generalize our model to incorporate this extension, and then study how our algorithmic approach can be adapted to two relevant objective functions: the \emph{egalitarian} and \emph{utilitarian} staffing cost functions.

\xhdr{The multi-station environment.} We consider the setting in which the platform manages staffing decisions for $m$ delivery stations. Each delivery station $j\in[m]$ has an unknown target staffing demand $\demandj$ with a station-dependent initial range $[\pLjzero,\pRjzero]$. On each day $t\in[T]$, the platform receives a demand forecast $\prediction_t = \{[\pLjt,\pRjt]\}_{j\in[m]}$ where the interval $[\pLjt,\pRjt]$ is the prediction of demand~$\demandj$ for station $j$. As we mentioned at the beginning of this section, we assume that demand predictions are consistent (i.e., $\demandj\in[\pLjt,\pRjt]$) and have station-dependent bounded error (i.e., $\pRjt -\pLjt \leq \perrorjt$). To simultaneously fulfill all the $m$ demands $\{\demandj\}_{j\in[m]}$ on the operating day $T+1$, the platform irrevocably hires $\allocijt\in\reals_+$ available workers from pool $i$ to be assigned to station $j$ on each day $t\in[T]$, subject to the feasibility, that is,
$\sum\nolimits_{t\in[T]}\sum\nolimits_{j\in[m]}\frac{1}{\wdiscountit}\allocijt \leq \supplyi$ for every pool $i\in[n]$.

\xhdr{Multi-station cost objective functions.} Given staffing profile $\xbf =\{\allocijt\}_{i\in[n],j\in[m],t\in[T]}$ and demand profile $\dbf = \{\demandj\}_{j\in[m]}$, the \emph{egalitarian-staffing cost} $\CostInf[\dbf]{\xbf}$ and \emph{utilitarian-staffing cost} $\CostOne[\dbf]{\xbf}$ are defined as:
\begin{align*}
    \CostInf[\dbf]{\xbf} &\triangleq 
    \max\nolimits_{j\in[m]}
    \Cost[\demandj]{\xbf_j},\qquad
    \CostOne[\dbf]{\xbf} \triangleq 
    \sum\nolimits_{j\in[m]}\Cost[\demandj]{\xbf_j}
\end{align*}
where $\Cost[\demandj]{\xbf_j}$ is the staffing cost of station $j$ (defined in \Cref{sec:prelim}) given its staffing profile $\xbf_j= \{\allocijt\}_{i\in[n],t\in[T]}$ and demand $\demandj$.

\xhdr{The minimax optimal algorithm.}
Following the recipe in \Cref{sec:base model result}, we introduce the following linear program, with variables $\{\allocijt,\targetcost_j\}_{i\in[n],j\in[m],t\in[T]}$:
\begin{align}
\tag{$\textsc{LP-multi-station}$}
    \label{eq:opt reduced form multi station}
    &\arraycolsep=1.4pt\def\arraystretch{1.2}
    \begin{array}{llll}
    \min\limits_{\substack{\xbf,\targetcosts\geq \zerobf}}
    \qquad\qquad
    &
    \Psi(\targetcost_1, \dots, \targetcost_m)
    & 
    \text{s.t.}
    \\
    &
    \displaystyle\sum\nolimits_{t\in[T]}\frac{1}{\wdiscountit}\cdot\sum\nolimits_{j\in[m]}\allocijt \leq \supplyi
    &
    i\in[n]
    \\
    &
    \displaystyle\sum\nolimits_{i\in[n]}\sum\nolimits_{t\in[k]}
    \allocijt \leq \pRjzero - \perrorjk + \frac{\targetcost}{\overcost}
    \quad\quad
    &
    j\in[m], k\in[T]
    \\
    &
    \displaystyle\sum\nolimits_{i\in[n]}\sum\nolimits_{t\in[T]}
    \allocijt \geq \pRjzero - \frac{\targetcost}{\undercost}
    &
    j\in[m]
\end{array}
\end{align}
where the objective function $\Psi(\boldsymbol{\targetcost})$ with $\boldsymbol{\targetcost} = (\targetcost_1, \dots, \targetcost_m)$ is defined as $\Psi(\boldsymbol{\targetcost}) \triangleq \max\nolimits_{j\in[m]}\targetcost_j$ and $
    \Psi(\boldsymbol{\targetcost}) \triangleq \sum\nolimits_{j\in[m]}\targetcost_j$ for the egalitarian-staffing cost and utilitarian-staffing cost, respectively.
Clearly, program~\ref{eq:opt reduced form multi station} is a natural generalization of program~\ref{eq:opt reduced form} in \Cref{sec:base model} for the multi-station environment. Following the same intuition behind~\ref{eq:opt reduced form}, ~\ref{eq:opt reduced form multi station} essentially characterizes the cost guarantee and the canonical staffing profile of online algorithms that are candidates for optimality against a particular subset of adversarial prediction sequences. Specifically, we again consider sequences that for each station only have a single switch (similar to \eqref{eq:single switch prediction construction}). Though the proof intuition is the same, showing no online algorithms can beat the optimal objective value in \ref{eq:opt reduced form multi station} becomes more complicated, especially for the utilitarian-staffing cost. A key technical step in our analysis requires arguing that when facing single-switch prediction sequence in the multi-station environment, the minimax optimal algorithm always have weakly smaller overstaffing cost than the understaffing cost in the worst case \Cref{lem:L infty overcost vs undercost}.

Now we present the minimax optimal algorithm for egalitarian-staffing cost (\Cref{alg:opt multi station}) and its guarantee of optimality  (\Cref{thm:opt alg multi station}). Its proof follows a similar but more complicated approach as that of \Cref{thm:opt alg} and is deferred to \Cref{apx:optalginfty}. Furthermore, following the same discussion as in \Cref{sec:base model result}, it can be verified that {\OPTSimInfty} is feasible and has a polynomial running time.

\begin{algorithm}
 \setcounter{AlgoLine}{0}
    \SetKwInOut{Input}{input}
    \SetKwInOut{Output}{output}
    \Input{initial pool sizes $\{s_i\}_{i\in[n]}$, availability rates $\{\rho_{it}\}_{i\in[n], t\in[T]}$, initial demand ranges $\{[L_{j0},R_{j0}]\}_{j\in[m]}$, prediction error upper bounds $\{\Delta_{jt}\}_{j\in[m],t\in[T]}$}
    \Output{staffing profile $\xbf$}
    \vspace{2mm}
    find an optimal solution $(\xbf^*, \targetcost^*)$ of program~\ref{eq:opt reduced form multi station}
    
    \vspace{2mm}
    \For{each station $j\in[m]$}{
    invoke Procedure~\ref{alg:emulator} with canonical staffing profile $\canxbf \gets \{\allocijt^*\}_{i\in[n],t\in[T]}$ \tcc{\small\color{\commentcolor} facing prediction sequence $\tilde\predictions\gets \{[\pLjt,\pRjt]\}_{t\in[T]}$ for each station $j$}
    }
    \caption{\OPTSimInfty}
    \label{alg:opt multi station}
\end{algorithm}


\begin{restatable}{theorem}{optalgmultistation}
\label{thm:opt alg multi station}
    In the multi-station environment, 
    {\OPTSimInfty} is minimax optimal. Its optimal minimax cost $\optcost$ is equal to the objective value of~\ref{eq:opt reduced form multi station}.
\end{restatable}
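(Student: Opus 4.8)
The plan is to follow the two-step template of \Cref{thm:opt alg}: first lower-bound the minimax cost by the optimal value of~\ref{eq:opt reduced form multi station} by restricting the adversary to per-station single-switch sequences, and then match this value from above using {\OPTSimInfty}. Writing $\optcost$ for the minimax cost and $V$ for the optimal value of~\ref{eq:opt reduced form multi station}, I would prove $\optcost\ge V$ and $\optcost\le V$ separately; since {\OPTSimInfty} attains cost guarantee $V$, this simultaneously yields its optimality and the identity $\optcost=V$. Throughout I use the standing assumption $\pbiass=\zerobf$ of this section, so intervals are nested and the relevant left endpoints after a switch at time $k$ are exactly $\pRjzero-\perrorjk$, matching~\eqref{eq:single switch prediction construction}.

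For the upper bound, let $(\xbf^*,\targetcosts^*)$ be an optimal solution of~\ref{eq:opt reduced form multi station}, and run Procedure~\ref{alg:emulator} independently for each station $j$ with canonical profile $\{\alloc_{ijt}^*\}_{i\in[n],t\in[T]}$ and the observed sequence $\{[\pLjt,\pRjt]\}_{t\in[T]}$. Applying \Cref{lem:emulator} station-by-station bounds both the over- and the under-staffing of station $j$ by the corresponding LP constraints, so $\Cost[\demandj]{\xbf_j}\le\targetcost_j^*$ for every $j$, whence the egalitarian cost is at most $\max_j\targetcost_j^*=V$. The one point needing care is that the per-station emulations share the worker pools: here I would use that Procedure~\ref{alg:emulator} only ever lowers hires ($\alloc_{ijt}\le\alloc_{ijt}^*$), so the aggregate fluid usage $\sum_t\tfrac{1}{\wdiscountit}\sum_j\alloc_{ijt}$ is dominated by $\sum_t\tfrac{1}{\wdiscountit}\sum_j\alloc_{ijt}^*\le\supplyi$. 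Since the fluid constraint \eqref{eq:supply-feasible} is precisely this aggregate inequality, coupled feasibility across stations follows from the \emph{single} supply constraint of the LP.

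For the lower bound, fix any feasible algorithm with cost guarantee $\targetcost$ and run it on the ``all-high'' sequence in which every station plays its never-switching sequence $\predictions_j\Ted$; let $\{\hat{\alloc}_{ijt}\}$ be the resulting hires. I would then extract each LP constraint from a tailored single-switch adversary. To obtain the over-staffing constraint for station $j$ at switching time $k$, perturb \emph{only} station $j$ to switch at $k$ (keeping all other stations high) and set $\demandj=\pRjzero-\perrorjk$. Since the full observable history through day $k$ is identical to the all-high run, the hires for station $j$ up to day $k$ equal $\sum_i\sum_{t\in[k]}\hat{\alloc}_{ijt}$, and irrevocability makes the total hires at least this much; the egalitarian $\max$ then forces $\targetcost\ge\overcost\cdot\left(\sum_i\sum_{t\in[k]}\hat{\alloc}_{ijt}-(\pRjzero-\perrorjk)\right)$, i.e. exactly the over-staffing constraint. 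Symmetrically, the all-high run with $\demandj=\pRjzero$ yields the under-staffing constraint for $j$. Hence $\{\hat{\alloc}_{ijt}\}$ together with $\targetcost_j=\targetcost$ for all $j$ is feasible for~\ref{eq:opt reduced form multi station}, giving $V\le\max_j\targetcost=\targetcost$; minimizing over algorithms gives $\optcost\ge V$.

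The step I expect to be the real obstacle is making this lower bound fully rigorous once the stations are coupled and the objective is a maximum. In the single-station proof the decisions against all sequences $\predictions\ked$ with $k\ge t$ coincide by a clean information argument; in the multi-station world a station that has not yet switched can still have its decisions influenced by the switching of \emph{other} stations, so one must ensure that the per-station constraints extracted above truly assemble into a single consistent feasible point of~\ref{eq:opt reduced form multi station}. I anticipate that this is exactly what \Cref{lem:L infty overcost vs undercost} is designed to control: by certifying that, against the single-switch family, the minimax-optimal algorithm's worst-case over-staffing is weakly dominated by its worst-case under-staffing, it pins down which side of the game binds and lets the per-station constraints be combined under the egalitarian $\max$ without slack. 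Managing this interaction (rather than the per-station emulation, which is a direct invocation of \Cref{lem:emulator}) is where the argument goes genuinely beyond \Cref{thm:opt alg}, and it is also the part I expect to become substantially heavier for the utilitarian objective, where the summed costs no longer let a single ``bottleneck'' station carry the bound.
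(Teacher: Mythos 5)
Your upper bound is exactly the paper's argument (per-station invocation of \Cref{lem:emulator}, with coupled supply feasibility following from $\alloc_{ijt}\le\alloc_{ijt}^*$ and the single aggregate supply constraint), and your lower bound for the \emph{egalitarian} objective is correct: extracting each over-staffing constraint by switching only station $j$ at time $k$ (instead of the paper's synchronized switch of all stations at time $k$, as in \Cref{lem:opt reduced form lower bound optimal minimax cost L infty}) works, because histories coincide through day $k$, irrevocability bounds total hires from below, and the egalitarian $\max$ lets the single station $j$ carry the bound regardless of the other stations' demands. Setting $\targetcost_j=\targetcost$ for all $j$ then gives a feasible LP point with objective $\max_j\targetcost_j=\targetcost$. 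Your worry about cross-station information leakage is a non-issue in your own construction, for precisely the reason you state: every constraint only uses hires up to the switch time, where the perturbed history agrees with the all-high history.

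The genuine gap is the \emph{utilitarian} objective, which \Cref{thm:opt alg multi station} also asserts (the objective $\Psi$ in \ref{eq:opt reduced form multi station} is either the max or the sum). Your construction structurally cannot certify the lower bound there: with $\Psi(\boldsymbol{\targetcost})=\sum_j\targetcost_j$, assigning $\targetcost_j=\targetcost$ to every station yields LP objective $m\targetcost$, not $\targetcost$, so feasibility of your point says nothing useful. One instead needs a per-station split $\{\targetcost_j\}$ with $\sum_j\targetcost_j\le\targetcost$ under which \emph{both} the under- and over-staffing constraints of every station hold simultaneously; the natural choice $\targetcost_j=\undercost\cdot\rgapj$ (extracted from the all-high run) handles under-staffing, but the over-staffing constraints then require $\overcost\cdot\lgapjk\le\undercost\cdot\rgapj$ for all $j,k$, which is false for an arbitrary algorithm's trace. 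This is exactly what the paper's \Cref{lem:L infty overcost vs undercost} supplies, and you have misread its role: it is not about combining constraints under the egalitarian $\max$ (the paper's egalitarian proof never invokes it), but is the crux of the utilitarian case. The paper first lower-bounds the cost guarantee by an intermediate program \ref{eq:opt reduced form L one modified} with per-$(j,k)$ slacks $\lgapjk$ and objective $\sum_j\undercost\,\rgapj\vee\max_k\sum_j\overcost\,\lgapjk$ (\Cref{lem:opt reduced form lower bound optimal minimax cost L one modified}), and then proves via a sequence of exchange/modification arguments that this program admits an \emph{optimal} solution satisfying $\overcost\,\lgapjk^*\le\undercost\,\rgapj^*$ for all $j,k$; only such a solution can be reinterpreted as feasible for \ref{eq:opt reduced form multi station} at the same objective value. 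Without this step (or a substitute for it), your proof establishes the theorem only for the egalitarian cost.
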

\begin{remark}
    For both staffing cost definitions, our proposed minimax optimal algorithm ({\OPTSimInfty}) has a desired simplicity: after finding an optimal canonical staffing profile using their corresponding linear programs, the staffing profile determined by the subroutine (Procedure~\ref{alg:emulator}) for each station $j$ is independent of the predictions revealed for other stations.
\end{remark}

\subsection{Workforce Planning with Costly Hiring and Releasing}
\label{sec:costly-hiring}
In this section, we consider an extension in which the platform has a budget and the hiring is costly. Moreover, in contrast to our based model, the platform is allowed to revoke previous hiring decisions after paying a cost, which we refer to as \emph{costly relasing}. All missing technical details can be found in \Cref{sec:cancellation}.

\xhdr{The costly hiring and releasing environment.} We consider the following generalization of the base model studied in \Cref{sec:base model}. The platform has a total \emph{budget} of $\budget$ for the staffing decision. On each day~$t\in[T]$, by hiring  $\allocit\in\reals_+\cup\{\infty\}$ available workers from each pool~$i$, the platform needs to \emph{pay} $\allocit\cdot \priceit$ where $\priceit$ is the per-worker wages for pool $i$ on day $t$. Moreover, the platform can also \emph{release} $\revokeit\in\reals_+$ previously hired workers from each pool $i$ by paying a per-worker releasing fee $\cpriceit\in\reals_+\cup\{\infty\}$. We further assume that if a worker is hired and later released by the platform, she cannot be hired again for this operating day. We say a (joint hiring and releasing) staffing profile $\{\allocit, \revokeit\}_{i\in[n],t\in[T]}$ is \emph{feasible} if
it is \emph{supply feasible} ($\sum\nolimits_{t\in[T]}\frac{1}{\wdiscountit}\allocit \leq \supplyi$ for every $i\in[n]$), \emph{budget feasible} ($\sum\nolimits_{i\in[n]}\sum\nolimits_{t\in[T]}
    \priceit  \allocit + \cpriceit \revokeit\leq \budget$),
    and 
    \emph{releasing feasible} 
    ($\sum\nolimits_{t\in[k]}\revokeit \leq 
    \sum\nolimits_{t\in[k]}\allocit$ for every $i\in[n]$, $k\in[T]$).
%
%
We also make the following structural assumption about the per-worker releasing fees.

\begin{assumption}[Piecewise stationary releasing fees]
\label{asp:limited cancellation fee}
    There exists $\cptotal\in [1:T]$ and $0 = t_0 < t_1 < t_2 < \dots < t_\cptotal = T$ such that for every index $\ell\in[\cptotal]$, the per-worker releasing fees remain identical for each time interval $[t_{\ell - 1} + 1: t_{\ell}]$, i.e., $\forall t,t'\in[t_{\ell - 1} + 1: t_{\ell}]$ and $\forall i, i'\in[n]$, we have
        $\cpriceit\equiv \cprice_{i't'}$.
\end{assumption}

To extend our approach to this extension, we face new challenges. First, restricting the adversary to only single-switching prediction sequences in \eqref{eq:single switch prediction construction} is not without loss. 
The adversary can benefit from multiple switches to force the algorithm to hedge more through its releasing decisions. Second, it is not clear how to make releasing decisions and how to emulate them similar to Procedure~\ref{alg:emulator}.

\xhdr{High-level sketch of our approach.} We first introduce a larger subset of $O(T^\cptotal)$ of prediction sequences, formally described in ~\eqref{eq:multi switch prediction construction} (in contrast to $T$ prediction sequences in \eqref{eq:single switch prediction construction}). In short, since the releasing fees remain constant in each epoch $\cintervalell
\subseteq[T]$, we consider an adversary that follows a single-switch strategy in each epoch, such as the one introduced in \eqref{eq:single switch prediction construction} for the base model. (Note that our base model is simply a special case when when $\cptotal = 1$, $\priceit = 0$ for all $i,t$, and $\cprice_{i1} = \infty$ for all $i$.) We then concatenate these prediction sequences for different epochs $\cintervalell$ to obtain the entire prediction sequence. Using this new subset, we introduce a \emph{configuration linear program}~\ref{eq:opt cancellation} that helps us to characterize the optimal minimax cost. When $\cptotal$ is constant (which is generally satisfied in our application), program~\ref{eq:opt cancellation} has a polynomial size.



\revcolor{Emulating the solution of \ref{eq:opt cancellation} for a general prediction sequence introduces additional intricacies in the extension model. We address these challenges in two steps. First, we identify structural properties of the minimax optimal algorithm and incorporate them directly into \ref{eq:opt cancellation}. Second, in contrast to the minimax optimal {\OPTSim} developed in previous sections—which solves a single offline program once and then emulates it throughout the horizon—the minimax optimal algorithm {\OPTSimCan} (\Cref{alg:opt cancellation}) resolves an updated version of \ref{eq:opt cancellation} at the beginning of each epoch $\cintervalell$, following an approach analogous to {\OPTReS}. This resolving procedure also enables an induction-based proof of the minimax optimality of {\OPTSimCan}. Furthermore, {\OPTSimCan} admits a running time of $\texttt{Poly}(n, T^\cptotal)$. The details of the LP formulation and the associated LP-emulator algorithm are presented in \Cref{sec:cancellation}. These components together yield the following theorem.
\vspace{0mm}
\begin{restatable}{theorem}{optalgcan}
\label{thm:opt alg cancellation}
    In the costly hiring and releasing environment and under Assumption~\ref{asp:limited cancellation fee},
    {\OPTSimCan} is minimax optimal. Its optimal minimax cost $\optcost$ is equal to the objective value of program~\ref{eq:opt cancellation}.
\end{restatable}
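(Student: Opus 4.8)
The plan is to pin down the optimal minimax cost as the optimal value of the configuration program~\ref{eq:opt cancellation} by proving two matching bounds, reusing the architecture of the proof of \Cref{thm:opt alg} but adapting both the adversarial relaxation and the emulation step to account for the budget and for costly releasing. Concretely, I would show (i) that no feasible online algorithm can do better than the optimal value of~\ref{eq:opt cancellation} against a suitably restricted adversary (the lower-bound direction, generalizing \Cref{lem:opt reduced form lower bound optimal minimax cost}), and (ii) that {\OPTSimCan} (\Cref{alg:opt cancellation}) matches this value against \emph{every} prediction sequence (the upper-bound direction, generalizing \Cref{lem:opt alg minimax cost}).

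For the lower bound, the first task is to pick the right restricted adversary. Restricting to the single-switch family~\eqref{eq:single switch prediction construction} is no longer without loss here, because constant-within-epoch but changing-across-epoch releasing fees (\Cref{asp:limited cancellation fee}) give the adversary a genuine incentive to re-switch at epoch boundaries and thereby force the algorithm to hedge through costly releasing. I would therefore restrict to the larger family of multi-switch sequences~\eqref{eq:multi switch prediction construction}, which perform a single switch inside each of the $\cptotal$ epochs; a counting argument shows there are only $O(T^\cptotal)$ of them, which is polynomial for constant $\cptotal$ and is exactly what makes the configuration program~\ref{eq:opt cancellation} tractable. I would then argue, as in the base model, that on any day the hiring and releasing decisions of an online algorithm must coincide across all multi-switch sequences that agree up to that day, so that the configuration variables of~\ref{eq:opt cancellation} encode a valid online strategy, and that its constraints faithfully enforce supply-, budget-, and releasing-feasibility together with the per-switch over/understaffing bounds; this yields the lower bound.

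For the upper bound, I would analyze {\OPTSimCan} by induction on the epochs $\ell = 1, \dots, \cptotal$. The crucial simplification is that within a single epoch the releasing fee is constant, so the subproblem faced there reduces to a base-model-type instance in which the emulator oracle (Procedure~\ref{alg:emulator}) and the optimality guarantee of {\OPTSim} (\Cref{thm:opt alg}) can be invoked as a black box to control the within-epoch imbalance cost. At each epoch boundary, {\OPTSimCan} re-solves~\ref{eq:opt cancellation} for the remaining horizon given the current state---remaining supply, remaining budget, already-hired counts per pool, and the tightened demand upper bound---mirroring the resolving mechanism of {\OPTReS} from \Cref{sec:lp-resolving-main}. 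The core of the induction is an invariant, analogous to the feasibility invariant of \Cref{lem:emulator}, guaranteeing that the resolved subproblem stays feasible from any reachable state and that its optimal value does not increase from one epoch to the next; telescoping this value-monotonicity against the per-epoch emulation guarantee delivers the matching upper bound.

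I expect the epoch-boundary step of this induction to be the main obstacle. Unlike the pure-hiring base model, releasing decisions made in earlier epochs reshape the reachable state---available workers, consumed budget, and the pool of hired-but-releasable workers---for later epochs, and the global budget constraint couples all epochs together. Making the resolving step well-defined therefore first requires baking structural properties of the minimax-optimal releasing policy directly into~\ref{eq:opt cancellation}, so that the resolved program is guaranteed feasible from every reachable state, and then proving that the emulated releasing quantities never violate releasing-feasibility---the analog of inequality~\eqref{eq:alg feasibility ineq} in the proof of \Cref{lem:emulator}, but now entangled with the budget. The bulk of the technical work lies in verifying that this entanglement does not break the telescoping of costs across epochs and that the structural refinements preserve both feasibility and the $O(T^\cptotal)$ size of the configuration program.
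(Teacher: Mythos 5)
Your overall architecture coincides with the paper's: the lower bound via the multi-switch family \eqref{eq:multi switch prediction construction} and the configuration program \ref{eq:opt cancellation} is exactly the paper's \Cref{lem:opt reduced form lower bound optimal minimax cost cancellation} (the paper additionally handles randomized algorithms by taking expectations of the random hiring/releasing decisions and applying Jensen's inequality to the convex costs), and the epoch-by-epoch resolving induction, with feasibility of the output handled separately (\Cref{lem:alg feasibility cancellation}), is exactly how {\OPTSimCan} is analyzed.

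The gap is in the mechanism you propose for the inductive step. Your invariant---that the resolved program's optimal value ``does not increase from one epoch to the next,'' to be telescoped against a per-epoch emulation guarantee---is a one-sided dominance claim, and it cannot be established by direct comparison. Under a general (non-multi-switch) prediction sequence, the state $\status'$ that {\OPTSimCan} actually reaches at the end of epoch $\cintervalell$ is \emph{not} comparable to the state induced by any single canonical LP solution: writing $\status\primed$ for the canonical state under a switch at the critical day $k$ of \eqref{eq:critical day cancellation} and $\status\doubleprimed$ for the canonical state under no switch (switch at $t_\ell$), the paper's inequalities \eqref{eq:alloc is at most alloc primed doubleprimed}--\eqref{eq:rgap is at least rgap primed} show that the remaining demand gap $\curpR - \sum_{i}\cumallociBar$ at $\status'$ is \emph{sandwiched} between those of $\status\primed$ and $\status\doubleprimed$; the reached state is better than one extreme for overstaffing and better than the other for understaffing, hence dominated by neither. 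The missing ingredient is the two-sided comparison lemma on minimax values (\Cref{lem:vtg comparison}): property (ii) shows that a sandwiched state satisfies $\optcost(t,\status')\leq \optcost(t,\status\primed)\vee\optcost(t,\status\doubleprimed)$, and it requires its own backward induction over days with a policy-mimicking argument (where the common per-pool releasing fee of \Cref{asp:limited cancellation fee} is what allows the better-endowed state to emulate the other's decisions). The epoch-boundary step then splits on whether the releasing condition \eqref{eq:actual cancellation construction} fires, invoking the one-sided property (i) when releasing occurs and the sandwich property (ii) otherwise. Without this max-over-two-states comparison, your telescoping does not go through; the remaining link in the chain---that an optimal solution of $\lpcancelsubproblem$ restricts to feasible solutions of the next epoch's subprograms at the two canonical states, so their values are no larger---is correct and is also how the paper closes the induction in \Cref{lem:opt alg minimax cost cancellation}.
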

}


\vspace{-2mm}
\section{Conclusion \& Future Directions}
\vspace{0mm}
\label{sec:conclude}

In this paper, we introduce a new online decision-making problem: how to incrementally allocate resources to meet an uncertain target demand when resource availability diminishes over time but prediction accuracy improves. The cost of over- or undershooting the target may be asymmetric. This framework abstracts real-world challenges like workforce planning (using gig economy workers) in last-mile delivery, prompting us to call it the dynamic staffing problem. Under minimal assumptions about the predictions (i.e., a prior-free approach) and across various settings, we develop computationally efficient minimax-optimal online algorithms that minimize the cost of missing the target against worst-case demand and prediction sequences. 

\xhdr{Future research.} An immediate direction is studying this problem under closely related models (beyond the extensions in \Cref{sec:extension}). From a computational perspective, removing the fluid relaxation---where resource availability shrinks stochastically and integral hiring decisions are required---would necessitate rounding techniques and an integrality-gap analysis. Finally, applying this problem to broader contexts and exploring other online decision-making tasks with progressively improved predictions remain promising avenues for future research.

\setlength{\bibsep}{0.0pt}
\bibliographystyle{plainnat}
{\footnotesize
\bibliography{refs}}

\renewcommand{\theHchapter}{A\arabic{chapter}}
\renewcommand{\theHsection}{A\arabic{section}}

\newpage
\ECSwitch
\ECDisclaimer

\section{Further Related Work}
\label{sec:further}
\emph{Inventory control and dynamic staffing.} 
Building on the work of \citet{Edg-88}, there is a vast literature on multi-period newsvendor models (usually referred to as inventory control) where a sequence of demand is realized over time and the goal is to design inventory policies that minimize the total imbalanced cost.  Early work takes a stochastic modeling approach for demand and uses dynamic programming to study this problem \citep{CS-60}.  More recently, the problem has been studied through the lens of robust optimization \citep[see, e.g.,][]{BT-06} 
and distributionally robust optimization (see e.g., \citealp{XG-22}). 
The important factor distinguishing our work from this literature is that we only have one target demand that is realized at the end of the horizon; to fulfill the demand we made a sequence of decisions based on information that becomes 
progressively more accurate. However, we highlight that modeling demand uncertainty by an interval is a common approach in robust optimization.  

Closer to our work is the literature on dynamic staffing that relies on a hybrid or crowdsourced workforce similar to our leading example \citep[see, e.g.,][]{LJWDP-20, HCD-24, LHS-23,MNR-21,LMS-24}. Some of the papers in the stream also aim to capture the trade-off between hiring less expensive workers early with less accurate information or more expensive ones late with more accurate information. However,  to our knowledge, ours is the first to take a ``prior free'' approach, i.e., studying dynamic staffing without imposing any stochastic structure about the demand or the sequence of predictions received. Also related is the work of  \citet{FW-22} that takes a robust optimization approach to a crowd-sourced staffing problem for on-demand deliveries. Both the setting and the approach of this paper are substantially different from ours. Finally, we considered settings in which previous staffing decisions could be revoked at a cost. This ``costly cancellations'' paradigm has recently explored in other models for online resource allocation~\citep[see, e.g.,][]{babaioff2008selling,ekbatani2022online,ekbatani2024prophet}.

\revcolor{\emph{Inventory management with sequential forecasting.}
Our paper relates to the classical literature on inventory management that leverages sequential forecasts. Starting from foundational multi-period newsvendor models \citep{AHM-51,FR-96}, researchers have explored scenarios where decision-makers can place multiple orders over a planning horizon, updating their inventory levels as they receive increasingly accurate forecasts of uncertain future demand. Prominent examples include the Martingale Model of Forecast Evolution (MMFE), introduced by \citet{hau-69} and extensively analyzed in later work \citep[e.g.,][]{heath1994modeling,SZ-12}. These studies typically adopt a Bayesian modeling approach, assuming that sequential forecasts follow distributions that are progressively less dispersed (Blackwell-ordered), and study how to optimally incorporate these forecasts into inventory decisions \citep{WAK-12, SZ-12,topan2018using} and supply allocation~\citep{papier2016supply}. In contrast, our work departs from this literature by considering a non-Bayesian, robust formulation, in which forecasts are modeled via prediction intervals rather than probabilistic distributions. This approach offers more flexibility and does not rely on precise knowledge of the underlying forecasting process.}

\revcolor{\emph{Online decision making with advice.} A recent stream of work in online decision-making aims to improve upon the paradigm of adversarial modeling---which may lead to conservative algorithms and also disregards any information about the problem instance---by augmenting the problem with some (potentially unreliable) ``advice''. The goal is to design algorithms that take advantage of the advice when accurate but are also robust to inaccurate advice (see, e.g., \citealp{MNS-12, PSK-18, BMS-20, LV-21, APT-22, BGGJ-22, christianson2022chasing, JM-22,   BPSW-23,BKK-23, DNS-23, choo2025learning,feng2024two}; see also \citealp{BFKLM-17} for a survey)
Our framework can also be viewed as a refinement to adversarial modeling by tying the hands of the adversary to reveal progressively more accurate prediction intervals. 
As such, our approach complements this stream of work. However, these two frameworks are not directly comparable due to fundamental differences. Just to name one, in our setting, without any prediction, 
no algorithm can achieve provably good performance; selecting the ``weighted midpoint'' (i.e., $\frac{cR_0 + C L_0}{c + C} \in [L_0, R_0])$) is the best achievable solution. Another line of work is considering designing online algorithms with ``controlled predictions,'' where the online algorithm has side information about the uncertainty in the advice, e.g., \cite{christianson2024risk, sun2024online}. Our approach has conceptual similarities with this line of work in that we also take into account the accuracy of predictions in our decisions, but the problems, models, and results are not comparable.}


\emph{Robust optimization \& mechanism design.} Our problem resembles some aspects of the literature on robust optimization and robust mechanism design, particularly in settings where a decision maker faces ambiguity about latent parameters and must protect against worst‐case realizations.

In classical robust optimization, uncertainty is often modeled via static sets or intervals over unknown parameters, and one seeks solutions that perform well for all realizations in the set, e.g. via a min-max or robust counterpart~\citep{bertsimas2003robust,bertsimas2004price,BT-06,bertsimas2011theory}. However, a less-explored frontier is when uncertainty evolves over time, shrinking or shifting as information is gradually revealed. Some work in adaptive robust optimization introduces dynamic uncertainty sets (for example, in multi-period power dispatch under wind uncertainty) to capture temporal correlation and uncertainty reduction over time~\citep{lorca2014adaptive, lorca2016multistage1,lorca2016multistage2}. More generally, there is growing interest in robust optimization over time for online/distributionally robust frameworks, where ambiguity sets shrink or adjust in response to observed data; for instance, the online data‐driven DRO literature studies how ambiguity shrinks via learning and adapts over time~\citep{aigner2023data}. Another relevant thread is variable-sized uncertainty in robust optimization, where the uncertainty “radius” itself is endogenous or evolves \citep{chassein2018variable}. These dynamic, time-sensitive robust paradigms are closer in spirit to our setting, though none (to our knowledge) consider adversarial changes in the uncertainty set across time, and therefore none is mathematically relevant.

Turning to robust mechanism design, this topic has also been studied extensively in computer science \citep[e.g.,][]{GHW-01,DHKN-11}, economics \citep[e.g.,][]{HS-78,Fra-14,Car-15} and operations research literature \citep[e.g.,][]{PR-08,CLL-17}. The goal is to design mechanisms to achieve worst-case performance guarantees under incomplete information about the problem instances. Evaluating the performance of mechanisms via their worst-case payoff (cost) is a classic approach used in the literature \citep{Fra-14,Car-15,BB-14,BGLT-19,DRT-19,BD-21,BCTYZ-22}. Other objectives include regret \citep{CLL-17,BTXZ-22,GS-23} and competitive ratio \citep{HS-78,GHW-01}. Similarly to our work and robust optimization, many times the ambiguity sets are modeled by uncertainty intervals\citep{BS-08,BB-14,CLL-17,WLZ-20,BCW-22,ABB-23}, but in contrast to our work they are static and given to the decision maker upfront.

  

 

\revcolor{
\section{Numerical Experiments}
\label{sec:numerical}

To provide numerical justifications for the performance of our proposed algorithms, we conducted numerical experiments on synthetic data. In particular, while we focused on adversarial predictions and demand in our theoretical analysis, to empirically evaluate the performance of our algorithms in practical scenarios beyond worst-case, we have considered stochastic generative models for predictions and demands in this section. In the following, we elaborate on the details of these experiments---in particular, the exact setup, generative processes, and policies we consider in each experiment---and report the results.

\subsection{Experiment I: Predictions from Unbiased Samples}
\label{apx:numerical:predictions from samples}
\xhdr{Experimental setup.} We generate the following staffing instance motivated by our primary application in last-mile delivery operations (see \Cref{sec:intro} and \Cref{fig:tradeoff}): There is a single delivery station with unknown demand $\demand$ on operating day $T + 1 \triangleq 15$. There are two workforce supply pools $i\in\{1, 2\}$, each with size $\supplyi = 20$ initially.\footnote{\vrevcolor{Note that we take a large market perspective, meaning that the actual number of initial workers is $s_i N$ for pool $i$, when $N\to +\infty$ is the scaling parameter of the market. We use the same scaling for the demand and the number of hires from each pool. For simplicity of exposition, we only work with fractional quantities such as $d$ and $s_i$---which are the results of proper normalization of the original quantities by the large market scale parameter $N$.}} We refer to the first pool of workers and the second workforce pool as ``fixed workers'' and ``ready workers,'' respectively. The fixed workers correspond to the full-time  employees, whose work schedule need to be determined ten days before the operating day, that is, they become non-available since day 5 and thus $\wdiscount_{1t}\triangleq \indicator{t \leq 4}$ for every $t\in[14]$. Meanwhile, the ready workers correspond to the gig economy workers who have a S-shaped availability rate defined as $\wdiscount_{2t} \triangleq \frac{1}{1 + \exp(t - 9)}$ for every $t\in[14]$. See \Cref{fig:numerical:availability rate illustration} for an illustration of the constructed availability rate $\{\wdiscountit\}_{i\in[2],t\in[14]}$.
Both the per-unit overstaffing cost and understaffing cost are set as $\overcost = \undercost \triangleq 1$.

Next, we describe the stochastic process that generates demand $\demand$ and prediction sequences~$\predictions$.   The total demand $\demand$ can be divided into $\dailydemands = \{\dailydemandt\}_{t\in[T]}$ such that $\sum\nolimits_{t\in[T]}\dailydemandt = \demand$. Here $\dailydemandt$ is the partial demand that the platform observes in each day $t\in[T]$. In particular, each partial demand $\dailydemandt$ is realized from binomial distribution $\texttt{binom}(5, \priort)$ independently, where success probability (aka., prior) $\priort$ is itself realized from uniform distribution $\texttt{unif}(0, 0.5)$ independently. We refer to the sequence of priors $\priors = \{\priort\}_{t\in[T]}$ as the prior profile. The platform does not observe $\{\priort\}_{t\in[T]}$ throughout the planning horizon, and only observes partial demand $\dailydemandt$ at the beginning of each day $t\in[T]$. \vrevcolor{In addition, the platform has access to extra sample trajectories of (future) partial demands as signals in each day. 
Specifically, in each day $t\in[T]$, the platform receives a 
sampled partial demand profile $\dailydemandsamplest = \{\dailydemandsampletk\}_{k\in[t+1:T]}$ as an extra signal,
where each signal $\dailydemandsampletk$ is drawn 
 from binomial distribution $\texttt{binom}(5, \priork)$ independently, i.e., sampled partial demand signal $\dailydemandsampletk$ and actual partial demand $\dailydemand_k$ are i.i.d.\ generated.\footnote{\vrevcolor{Note that in the large market perspective, the (partial) demand and supply in the large market---before normalization by the scale parameter $N$---are integers. Hence, it makes sense to consider the same small discretization error of $\frac{1}{N}$ for both supply and demand---and so demand could be fractional in the limit when $N\to+\infty$. Here, we pick a coarser distribution over integers $0,\ldots, 5$ for the (normalized) partial demand, mostly for the purpose of tractability and to reduce computational difficulties of computing optimal online policies in our numerical simulations (e.g., as if the backend is using demand batching, and hence each partial demand only takes values $kN$ for $k\in[0:5]$ after scaling back to the actual numbers).}}}

In this setup, sampled partial demands $\dailydemandsamplest = \{\dailydemandsampletk\}_{k\in[t+1:T]}$ play the role of external input data for future forecasts.  The platform constructs prediction intervals $\predictions=\{[\pLt,\pRt]\}_{t\in[T]}$ using the observed partial demands and sampled partial demand profiles. Specifically, for each day $t\in[T]$, given observed partial demands $\{\dailydemand_k\}_{k\in[t]}$ and sampled partial demand profiles $\{\dailydemandsamples^{(\tau)}\}_{\tau\in[t]}$  
in the first $t$ days, the platform computes a point estimator $\estimateddemand^{(t)}$ as following:
\begin{align}
\label{eq:numerical point estimator}
    \estimateddemand^{(t)} \triangleq \sum\nolimits_{k\in[t]}\dailydemand_k + \frac{1}{t}\sum\nolimits_{\tau\in[t]}\sum\nolimits_{k\in[t+1:T]}\dailydemandsample^{(\tau)}_k
\end{align}
As a sanity check, given any prior profile $\priors$, unknown demand $\demand$ and point estimator $\estimateddemand^{(t)}$ have the same expectation, i.e., $\expect{\demand\condition \priors} = \expect{\estimateddemand^{(t)}\condition\priors}$ for every $t\in[T]$---therefore this point estimator is unbiased. Given the unbiased point estimator $\estimateddemandk$ for day $t$, the platform constructs the prediction interval $[\pL_t,\pR_t]$ as
\begin{align}
\label{eq:numerical prediction interval}
    \pL_t \triangleq \estimateddemand^{(t)} - l_t \;\;\mbox{and}\;\; \pR_t \triangleq \estimateddemand^{(t)} + r_t
\end{align} 
Here, $l_t$ and $r_t$ are constants such that the prediction intervals have $5\%$ miscoverage, that is, 
\begin{align*}
    \prob{
    \frac{1}{t}\sum\nolimits_{\tau\in[t]}\sum\nolimits_{k\in[t+1:T]}\dailydemandsample^{(\tau)}_k - l_t
    \leq 
    \sum\nolimits_{k\in[t+1:T]}\dailydemandt \leq \frac{1}{t}\sum\nolimits_{\tau\in[t]}\sum\nolimits_{k\in[t+1:T]}\dailydemandsample^{(\tau)}_k + r_t} = 95\%
\end{align*}
where the average is taken over the randomness in prior profile $\priors$, partial demand profile $\dailydemands$, and sampled partial demand profiles $\{\dailydemandsamples^{(\tau)}\}_{\tau\in[t]}$. See \Cref{fig:numerical:prediction error illustration} for an illustration of the constructed prediction errors $\{\perror_t\}_{t\in[14]}$.

\begin{figure}
    \centering
    \subfloat[Availability rates $\{\wdiscountit\}_{i\in[2],t\in[14]}$.]{
    \begin{tikzpicture}
  \begin{axis}[
    width=8cm,
    height=6cm,
    xlabel={day $t$},
    ylabel={availability rates $\{\wdiscountit\}_{i\in[2]}$},
    xmin=1, xmax=14,
    ymin=0, ymax=1.1,
    grid=major,
    xtick={1,...,14},
    legend style={at={(0.5,1.05)}, anchor=south, legend columns=-1},
    thick,
    mark size=2pt,
  ]

    \addplot[
      blue,
      mark=*,
      smooth
    ] coordinates {
      (1,1.0)
      (2,1.0)
      (3,1.0)
      (4,1.0)
    };

    \addplot[
      blue,
      densely dotted
    ] coordinates {
      (4,1.0)
      (4,0.0)
    };

    \addplot[
      blue,
      mark=*,
      smooth
    ] coordinates {
      (4,0.0)
      (5,0.0)
      (6,0.0)
      (7,0.0)
      (8,0.0)
      (9,0.0)
      (10,0.0)
      (11,0.0)
      (12,0.0)
      (13,0.0)
      (14,0.0)
    };
    \addlegendentry{fixed workers}

    \addplot[
      red,
      mark=square*,
      smooth,
      dashed
    ] coordinates {
      (1,0.9996646498695336)
      (2,0.9990889488055994)
      (3,0.9975273768433653)
      (4,0.9933071490757153)
      (5,0.9820137900379085)
      (6,0.9525741268224334)
      (7,0.8807970779778823)
      (8,0.7310585786300049)
      (9,0.5)
      (10,0.2689414213699951)
      (11,0.11920292202211755)
      (12,0.04742587317756678)
      (13,0.01798620996209156)
      (14,0.0066928509242848554)
    };
    \addlegendentry{ready workers}

  \end{axis}
\end{tikzpicture}
    \label{fig:numerical:availability rate illustration}
    }
    \subfloat[Prediction errors $\{\perror_t\}_{t\in[14]}$]{
    \begin{tikzpicture}
  \begin{axis}[
    width=8cm,
    height=6cm,
    xlabel={day $t$},
    ylabel={prediction error $\perror_t$},
    xmin=1, xmax=14,
    ymin=0,
    grid=major,
    xtick={1,...,14},
    smooth,
    thick,
    mark=*,
    mark size=2pt,
  ]
    \addplot coordinates {
      (1,16.5)
      (2,14.333333333333332)
      (3,13.5)
      (4,12.399999999999999)
      (5,11.5)
      (6,10.857142857142858)
      (7,10.125)
      (8,9.444444444444445)
      (9,8.6)
      (10,7.2727272727272725)
      (11,6.666666666666667)
      (12,5.153846153846153)
      (13,3.7857142857142856)
      (14,0.0)
    };
  \end{axis}
\end{tikzpicture}
        \label{fig:numerical:prediction error illustration}
    }

    \caption{Graphical illustration of the experiment setup in \Cref{apx:numerical:predictions from samples}.}
    \label{fig:numerical setup illustration}
    \vspace{-2mm}
\end{figure}

\smallskip
\noindent
The timeline of the demand and prediction sequence generating process below.
\begin{itemize}
    \item On day 0: prior probabilities $\{\prior_t\}_{t\in[T]}$ are realized i.i.d.\ from uniform distribution $\texttt{U}(0, 0.5)$. Prior profile $\priors$ remains unknown to the platform throughout the entire decision making  horizon.
    \item On each day $t\in[T]$:
    \begin{itemize}
        \item partial demand $\dailydemand_t$ is revealed to the platform from binomial distribution $\texttt{binom}(5, \priort)$,
        \item sampled partial demand profile $\dailydemandsamplest = \{\dailydemandsampletk\}_{k\in[t+1:T]}$ is given to the platform, where $\dailydemandsampletk$ is drawn  from binomial distribution $\texttt{binom}(5, \priork)$ independently,
        
        \item point estimator $\estimateddemand^{(t)}$ defined in eqn.~\eqref{eq:numerical point estimator} and prediction interval $[\pL_t,\pR_t]$ defined in eqn.~\eqref{eq:numerical prediction interval} are constructed correspondingly.
    \end{itemize}
\end{itemize}

\xhdr{Policies.} In this numerical experiments, we compare the staffing cost of five different policies:
\begin{enumerate}[label=(\roman*)]
    \item {\FMNN}: this policy is {\OPTSim}~(\Cref{alg:opt}). It solves \ref{eq:opt reduced form} and then invoke Procedure~\ref{alg:emulator} in each day. It is minimax optimal (\Cref{thm:opt alg}).
    \item {\FMNNPlus}: this policy is {\OPTReS} (\Cref{alg:lp-resolve}). In each day, it resolves \ref{eq:opt reduced form} by reviewing the remaining days as a staffing subproblem. It is minimax optimal (\Cref{thm:opt alg resolving}).
    \item {{\FMDP}/{\EMDP}}: we solve for the optimal online policy that minimizes the expected imbalance cost in this Bayesian setting---which can be formulated as a finite horizon MDP--- given either the estimated or exact transition matrix of this underlying MDP.  Specifically,  {\EMDP} constructs the empirical transition matrix of the MDP on day $t$ with the sampled partial demand profiles $\{\dailydemandsamples^{(\tau)}\}_{\tau\in[t]}$. Since the state space and action space are large, a discretization of both of these spaces is employed. We report its numerical performance under different discretization precisions. In addition, for short-horizon settings, we also evaluate the discretized ``full-information'' MDP, called {\FMDP}, which utilizes the true transition matrix rather than the empirical one estimated from the sampled partial demand profiles. Owing to its significantly higher computational requirements, this benchmark is infeasible to implement in long-horizon settings.
    \item {\AGR}: for each day $t\in[T]$, this policy computes $\hat\demand_t \triangleq \frac{\overcost\pLt + \undercost\pRt}{\overcost+\undercost}$, and hires as much as possible in the current day $t$ to meet $\hat{d}_t$. 
   Effectively, this heuristics discards the possibility of  having improved prediction intervals in the future; thus it selects the staffing level that would minimize the worst case cost. Given that supply only shrinks over time, it tries to achieve (or get as close as possible to) that staffing level in the current period.
    \item {\ASIM}: for each day $t\in[T]$, this policy finds the optimal staffing level $\hat\demand_t$ corresponding to a single-shot newsvendor problem on day $t$, where we minimize the expected imbalance cost given that the empirical demand distribution used in the expectation is constructed based on samples $\{\sum\nolimits_{k\in[t]}\dailydemand_k + \sum\nolimits_{k\in[t+1:T]}\dailydemandsample^{(\tau)}_k\}_{\tau\in[t]}$. It then hires as much as possible in the current day $t$ to meet $\hat{d}_t$. The motivation behind this heuristic is similar to the previous one, however, here we take a Bayesian approach and utilize the distributional assumption underlying the generation of the partial demand and samples (or signals).
    
\end{enumerate}
In addition to {\FMNN} and {\FMNNPlus} proposed in this work, {\EMDP} recovers the Bayesian optimal policy with full information, i.e. {\FMDP} (\vrevcolor{which is the same as {\EMDP} with infinitely many extra samples}); however, it requires significantly more computational power (see the empirical run-times reported below). On the other extreme, although {\AGR} and {\ASIM} are quite simple and computationally light (even faster than our algorithms), they are not forward-looking in their decisions (i.e., do not take into account the possibility of receiving more accurate predictions in the future).

\xhdr{Results.} We numerically evaluate the performances of all policies using Monte-Carlo simulation. Besides the parameter setup specified above (which we refer to as the long instance), we also report results from another parameter setup with $T = 5$ (which we refer to as the short instance). In the short instance with $T=5$, all other parameters are adjusted accordingly; in particular, two workforce pools have size $\supplyi \triangleq 5$, availability rates are $\wdiscount_{1t} \triangleq \indicator{t \leq 2},\wdiscount_{2t} \triangleq \frac{1}{1+\exp(t-3)}$ for every $t\in[5]$, and per-unit overstaffing cost and understaffing cost are $\overcost=\undercost\triangleq 1$. For each parameter setup, we conduct 100 iterations of the simulation and record the empirical performance of each policy. 

We first discuss our numerical finding for the long instance (with $T = 14$). In this setup, we implement two versions of {\EMDP} with two levels of discretization. We report the empirical average cost and total running time in \Cref{tab:numerical:long instance}. Among all six policies (two {\EMDP} with different discretization levels), our proposed {\FMNNPlus} attains the best (smallest) average cost of 1.619. Compared with the {\EMDP} (with high precision), which attains the second best cost of 1.710, it has 5.6\% cost reduction. Moreover, {\FMNNPlus}'s running time (0.644 second) is 18194 times faster than {\EMDP} with high precision (11716.999 seconds). Compared with {\EMDP} with low precision, {\FMNNPlus} receives 26.6\% cost reduction and is 33.8 times faster. Compared with the other two naive policies, {\FMNNPlus} receives 71.5\% and 123.9\% cost reduction. Besides {\FMNNPlus} which resolves program~\ref{eq:opt reduced form} each day, our proposed {\FMNN} that solves the program~\ref{eq:opt reduced form} once, achieves the third best cost of 2.040. It also beats {\EMDP} with low precision, and other two naive policies (with significant 36.1\% and 77.7\% cost reductions). Comparing {\FMNN} and {\FMNNPlus} which are both minimax optimal, although the former algorithm suffers a higher cost, it is 3.5 times faster than the latter algorithm.

\begin{table}
    \centering
    \begin{tabular}{c|ccc}
        & {\FMNN} & {\FMNNPlus} & {{\EMDP}} \\
        & & & (high precision) 
        \\
        \hline
        average cost   & 2.040 & 1.619 & 1.710 \\
        \hline
        running time   & 0.186 & 0.644 & 11716.999 \\
        \multicolumn{4}{c}{}
        \\
        & {\AGR} & {\ASIM} & {\EMDP} \\
        & & & (low precision) \\
        \hline
        average cost   & 2.776 & 3.625 & 2.050 \\
        \hline
        running time   & 0.010 & 0.110 & 21.801 \\
    \end{tabular}
    \caption{Comparing average cost and total running time of different policies for parameter setup with $T = 14$. See the experiment setup in \Cref{apx:numerical:predictions from samples}.}
    \label{tab:numerical:long instance}
\end{table}

\begin{table}
    \centering
    \begin{tabular}{c|ccc}
        & {\FMNN} & {\FMNNPlus} & {{\EMDP}} \\
        \hline
        average cost   & 1.154 & 1.003 & 0.890 \\
        \hline
        running time   & 0.072 & 0.111 & 32.155 \\
        \multicolumn{4}{c}{}
        \\
        & {\AGR} & {\ASIM} & {{\FMDP}} \\
        \cline{1-4}
        average cost   & 1.549 & 1.857 & 0.740\\
        \cline{1-4}
        running time   & 0.004 & 0.012 & 3515.48\\
    \end{tabular}
    \caption{Comparing average cost and total running time of different policies for parameter setup with $T = 5$. See the experiment setup in \Cref{apx:numerical:predictions from samples}.}
    \label{tab:numerical:short instance}
\end{table}

We next discuss our numerical finding for the short instance (with $T = 5$). In this setup, we implement {\EMDP} using a sufficiently fine-grained discretization. We report the empirical average cost and total running time in \Cref{tab:numerical:short instance}. Among the six policies evaluated (including {\EMDP}, which uses sampled partial demand profiles to construct an empirical transition matrix, and {\FMDP}, which employs the true transition matrix), our proposed {\FMNNPlus} and {\FMNN} attain the third and fourth best average cost of 1.003 and 1.154, respectively. 
The {\FMDP} attains the lowest cost of 0.740 but is hypothetical as it requires direct access to the true transition matrix. Relative to {\FMDP}, {\FMNNPlus} and {\FMNN} incur 35.5\% and 55.9\% higher costs, but execute 31670 and 48826 times faster, respectively. Compared with {\EMDP}, which achieves the second-best cost of 0.890, our policies incur 12.7\% and 29.6\% higher costs, while running 289 and 446 times faster, respectively. Both {\FMNNPlus} and {\FMNN} also outperform the two native policies, yielding cost reductions of 54.4\% and 85.1\% for {\FMNNPlus}, and 34.2\% and 60.9\% for {\FMNN}. Compared with the long instance with $T = 14$, the cost reduction in the short instance with $T = 5$ becomes smaller. This aligns with the fact that the efficiency loss due to the lack of lookahead becomes less significant as the planning horizon shortens.

\subsection{Experiment II: Predictions from Multiple Machine Learning Models}

\label{apx:numerical:predictions from ML models}

\xhdr{Experimental setup.} 
The second setup mirrors that in \Cref{apx:numerical:predictions from samples}, except for the construction of the predictions. Specifically, all elements of the model, such as the size and availability rate of workforce pools, as well as the data generating process for both partial and final demand, remain unchanged between the two setups. The major difference lies in how the predictions are constructed. In this setting, predictions are generated using three machine learning models: linear regression ({\OLS}), ridge regression ({\Ridge}), and random forest ({\RF}).

For each day $t \in [T]$, given the revealed partial demands $\{\dailydemand_k\}_{k \in [t]}$, each model treats this partial demand profile as a $t$-dimensional feature vector and outputs a point estimate of the total demand. We denote these estimates by $\OLSt$, $\Ridget$, and $\RFt$, respectively. Each machine learning model is trained on 200 sample trajectories generated from the true data-generating process.

Based on the point estimates $\OLSt$, $\Ridget$, and $\RFt$ for day $t$, we consider the following two heuristics for constructing prediction intervals:
\begin{enumerate}
\item \textbf{Unweighted Average:} The platform first computes the unweighted average of the three estimates and then constructs the prediction interval $[\pL_k, \pR_k]$ as
\begin{align*}
\pL_t &\triangleq \frac{1}{3} \left(\OLSt + \Ridget + \RFt \right) - l_t, \;\;\mbox{and}\;\;
\pR_t \triangleq \frac{1}{3} \left(\OLSt + \Ridget + \RFt \right) + r_t,
\end{align*}
where $l_t$ and $r_t$ are constants chosen such that
\begin{align*}
\prob{
\frac{1}{3} \left(\OLSt + \Ridget + \RFt \right) - l_t
\leq \demand \leq
\frac{1}{3} \left(\OLSt + \Ridget + \RFt \right) + r_t
} = 95\%,
\end{align*}
with the probability taken over the randomness in the prior profile $\priors$, the partial demand profile $\dailydemands$, and in the machine learning models.
\item \textbf{Weighted Average:} The platform selects ex-ante weights $\OLSweight$, $\Ridgeweight$, and $\RFweight$ from the set $\{0, 0.2, 0.4, 0.6, 0.8, 1\}$, subject to the constraint $\OLSweight + \Ridgeweight + \RFweight = 1$. It then computes the weighted average of the three estimates and constructs the prediction interval $[\pL_t, \pR_t]$ as
\begin{align*}
    \pL_t &\triangleq \OLSweight \cdot \OLSt + \Ridgeweight \cdot \Ridget + \RFweight \cdot \RFt - l_t, \\
    \pR_t &\triangleq \OLSweight \cdot \OLSt + \Ridgeweight \cdot \Ridget + \RFweight \cdot \RFt + r_t,
\end{align*}
where $l_t$ and $r_t$ are constants satisfying
\begin{align*}
    \prob{
        \OLSweight \cdot \OLSt + \Ridgeweight \cdot \Ridget + \RFweight \cdot \RFt - l_t 
        \leq \demand \leq 
        \OLSweight \cdot \OLSt + \Ridgeweight \cdot \Ridget + \RFweight \cdot \RFt + r_t
    } = 95\%.
\end{align*}
\end{enumerate}
Below, we report the numerical performance of both the unweighted-average and weighted-average constructions. For the weighted-average construction, we first identify the ex ante weight combination that achieves the best empirical performance for each policy,\footnote{\vrevcolor{Note that fixing a policy, one can always conduct off-policy evaluations using Monte-Carlo simulations, and hence find the best set of weights to aggregate the ML predictions for a given policy.}} and then construct the prediction intervals using this selected weight profile.

\xhdr{Policies.} In this numerical experiment, we compare the staffing costs of four different policies: {\FMNN}, {\FMNNPlus}, {\AGR}, and {\ASIM}. The first three policies are defined as in \Cref{apx:numerical:predictions from samples}. The policy {\ASIM} aggregates the three point estimates and then solves a single-shot newsvendor problem, assuming the demand distribution to be Gaussian with mean equal to the aggregated estimate and standard deviation equal to one.\footnote{In the absence of sample access (\Cref{apx:numerical:predictions from samples}), one could alternatively consider an empirical MDP, where the state includes the entire history--i.e., all revealed partial demands and the estimates from the three machine learning models. However, this MDP is significantly higher-dimensional than the one used in \Cref{apx:numerical:predictions from samples}. Due to computational constraints, we do not implement this alternative in the current experiment.}

\xhdr{Results.} We numerically evaluate the performances of all policies using Monte-Carlo simulation. For each parameter setup, we conduct 100 iterations of the simulation and record the empirical performance of each policy. All our results are reported in \Cref{tab:numerical:long instance 3ML,tab:numerical:small instance 3ML} for the long instance ($T = 14$) and the short instance ($T = 5$), respectively.

\begin{table}
    \centering
    \begin{tabular}{c|cccc}
        & {\FMNN} & {\FMNNPlus} & {\AGR} & {\ASIM} \\
        \hline
        unweighted & 1.166 & 0.917 & 3.135 & 2.885 \\
        \hline
        weighted   & 0.983 & 0.548 & 2.638 & 2.760 \\
    \end{tabular}
    \vspace{1mm}
    \caption{Comparing average cost of different policies for parameter setup with $T = 14$. The first row corresponds to the scenario where predictions are generated by the (unweighted) average of estimators from three machine learning models. The second row corresponds to the scenario where predictions are generated by the weighted average of estimators from three machine learning models.
    The empirically optimal ex ante weights used in {\FMNN}, {\FMNNPlus}, {\AGR}, {\ASIM} are (0.4, 0.4, 0.2), (0.2, 0.8, 0), (1, 0, 0), (0, 0.8, 0.2), respectively. Note that here we consider different ex-ante weights, to capture different ways of aggregating our 3 ML algorithms.
    }
    \label{tab:numerical:long instance 3ML}
\end{table}

We first discuss our numerical finding for the long instance (with $T = 14$) in \Cref{tab:numerical:long instance 3ML}. Among the four policies evaluated, our proposed policies, {\FMNN} and {\FMNNPlus}, consistently outperform {\AGR} and {\ASIM} by a significant margin under both the unweighted- and weighted-average prediction constructions. Consistent with the results in \Cref{apx:numerical:predictions from samples}, {\FMNNPlus} achieves the lowest average cost across both settings. Specifically, under the unweighted-average construction, the average cost of {\FMNNPlus} is 3.41 and 3.14 times lower than that of {\AGR} and {\ASIM}, respectively. Under the weighted-average construction, the reductions are even more substantial: {\FMNNPlus} yields a cost that is 4.81 and 5.04 times lower than {\AGR} and {\ASIM}, respectively. Our {\FMNN} policy also achieves strong performance, with costs more than 2.4 times lower than those of {\AGR} and {\ASIM} in both constructions.

\begin{table}
\centering
    \begin{tabular}{c|cccc}
        & {\FMNN} & {\FMNNPlus} & {\AGR} & {\ASIM} \\
        \hline
        unweighted & 1.415 & 1.371 & 1.873 & 1.727 \\
        \hline
        weighted   & 1.336 & 1.291 & 1.691 & 1.685 \\
    \end{tabular}
    \caption{Comparing average cost of different policies for parameter setup with $T = 5$. The first row corresponds to the scenario where predictions are generated by the (unweighted) average of estimators from three machine learning models. The second row corresponds to the scenario where predictions are generated by the weighted average of estimators from three machine learning models.
    The empirically optimal ex ante weights used in {\FMNN}, {\FMNNPlus}, {\AGR}, {\ASIM} are (0.8, 0.2, 0.2), (0.8, 0.2, 0), (0.8, 0.2, 0), (0.2, 0.8, 0), respectively. \vrevcolor{Note that here we consider different ex-ante weights, to capture different ways of aggregating our 3 ML algorithms.}
    See the experiment setup in \Cref{apx:numerical:predictions from ML models}. }
    \label{tab:numerical:small instance 3ML}
\end{table}

We now discuss our numerical finding for the short instance (with $T = 5$) in \Cref{tab:numerical:small instance 3ML}. The qualitative observations from the long instance continue to hold: {\FMNNPlus} consistently achieves the lowest average cost under both prediction constructions. The performance gap between {\FMNNPlus} and the second-best policy, {\FMNN}, is relatively small, i.e., less than 3.5\%. In contrast, {\FMNNPlus} yields substantial cost reductions compared to {\AGR} and {\ASIM}: under the unweighted-average construction, the reductions are 36.6\% and 26.0\%, respectively, while under the weighted-average construction, the reductions are 30.1\% and 30.0\%, respectively.

}

\revcolor{
\section{Extra Cost under Probabilistic Miscoverage Shocks}
\label{apx:prob-miscoverage-shocks}
In some practical scenarios, there could be temporary anomalies or shocks in the underlying forecasting method on a certain day, leading to  miscoverage of the final target demand in the prediction interval of that day. In this section, we discuss how the presence of such probabilistic miscoverage shocks, formally defined below, affects the cost of {\OPTSim} (\Cref{alg:opt}). 

To simplify the presentation, we fix $\undercost = \overcost = 1, \pLzero = 0, \pRzero = 1$, and consider only a single pool. We also assume that $\pbias_t = 0$ for every $t\in[T]$; therefore, in the ideal scenario with no miscoverage of the final target demand,  there will be predictions $[\pLt,\pRt]\}_{t\in[T]}$ received sequentially by the algorithm that are perfectly consistent. Without loss of generality, we can further assume (i) the ideal prediction intervals $\{[\pLt,\pRt]\}_{t\in[T]}$ are nested, i.e., $[\pL_{t+1},\pR_{t+1}]\subseteq[\pLt,\pRt], t\in[0:T-1]$, and (ii) prediction error upper bound $\perrort$ is weakly decreasing over time $t$ and is smaller than the initial range of demand, i.e., $\perror_t \leq 1$. Our analysis can be immediately extended to general settings without these assumptions, as long as cost functions remain bounded and Lipschitz.

\smallskip
\noindent

\xhdr{Prediction sequence with probabilistic miscoverage.}
We formalize the model with probabilistic miscoverage shocks of probability $\pprob\in[0,1]$ as follows:
\begin{itemize}
    \item The adversary first decides the final target demand $\demand$ and a perfectly consistent sequence of prediction intervals $\{[\pLt,\pRt]\}_{t\in[T]}$ on day 0 satisfying the prediction error upper bounds $\{\perror_t\}_{t\in[T]}$.
    \item On each day $t\in[T]$:
    \begin{itemize}
        \item With probability $1 - \pprob$, (consistent) prediction $[\pLt,\pRt]$ is revealed to the decision maker.
        \item With the remaining probability $\pprob$, a \emph{``bad event''} happens: adversary decides a (possibly inconsistent) prediction $[\pLt',\pRt']$ and reveals it to the decision maker.
    \end{itemize}
\end{itemize}
In this model, we allow the bad events from different days to be correlated. We also do not make any assumptions on the predictions $[\pLt',\pRt']$ in the case of a bad event. 
Now we consider the following two scenarios:
\begin{enumerate}
    \item \emph{Detection-before-hiring}: On each day $t$, if the bad event occurs, the decision maker detects it before making her staffing decision on that day.
    \item \emph{No-detection}: the decision maker never detects whether bad events happen.
\end{enumerate}

Let $\optcost$ be the optimal minimax cost when bad events never happen. We analyze the additive difference between the expected cost of {\OPTSim} (\Cref{alg:opt}) (with possibly modifications based on the bad-event feedback) and $\optcost$, which we denote by $\mathcal{E}$. Since {\OPTSim} is minimax optimal when there is no probabilistic miscoverage shocks, $\mathcal{E}$ can be interpreted as the extra additive error in expected cost in the presence of probabilistic miscoverage shocks. In the following, we sketch the analysis and provide intuitions on how to bound $\mathcal{E}$.

\xhdr{Detection-before-hiring scenario.} In this setting, consider running {\OPTSim} with the following modification: During days where bad events occur---which could be detected at the beginning of the day, before the hiring decision---no workers should be hired. On the other days where bad events do not happen, invoke \textsc{Emulator Oracle} (Procedure~\ref{alg:emulator}) assuming that \emph{correct} staffing decisions have been made on all previous days, given the knowledge of prediction intervals of days without bad events. More precisely, for each day $t\in[T]$ without a bad event, we consider the following modified history: we use the prediction interval $[\pL_\tau,\pR_\tau]$ for each day $\tau <t$ without a bad event, and we use the prediction interval  $[\pL_{\tau'},\pR_{\tau'}]$ for each day $\tau<t$ with a bad event, where  $\tau' \in (\tau, t]$  is the first day after $\tau$ that has no bad event. We then run {\OPTSim} on this modified history to identify the correct staffing decisions of the previous days, to be used to make the staffing decision $\alloc_t$ of day $t$ as described. 

With this implementation, we claim that $\mathcal{E} = O(\pprob)$.
Let $\tilde{\xbf} = \{\tilde{\alloc}_t\}_{t\in[T]}$ be a hypothetical (randomized) staffing profile generated by {\OPTSim} under a particularly modified sequence of prediction intervals. Specifically, similar to the way we modify history in our implementation, we modify the consistent prediction sequence $\{[\pLt,\pRt]\}_{t\in[T]}$ (where no bad event occurs) by replacing the prediction interval $[\pL_\tau,\pR_\tau]$ for each day $\tau\in[T]$ with a bad event with $[\pL_{\tau'},\pR_{\tau'}]$, where $\tau'\in (\tau,T]$ is the first day after $\tau$ that no bad event occurs (if no such day exists, we use the zero-length interval $\{\demand\}$ as the interval, where $\demand$ is the target demand). First, note that this modified sequence is perfectly consistent and only reveals more information about demand at each time than $\{[\pLt,\pRt]\}_{t\in[T]}$. Therefore, using a simple inductive argument, we can show that the cost of {\OPTSim} under the modified prediction sequence is weakly smaller than the cost of {\OPTSim} under $\{[\pLt,\pRt]\}_{t\in[T]}$. As a result, the expected worst-case cost of {\OPTSim} under the modified sequence is weakly smaller than $\optcost$. Now, let $\xbf = \{\alloc_t\}_{t\in[T]}$ be the (randomized) staffing profile generated by our algorithm and under the realized prediction sequence (with possible bad events). By construction, $\alloc_t$ is either $\tilde{\alloc}_t$ or 0.
In particular $\alloc_t = 0$ only if a bad event occurs on day $t$. Therefore, the expected staffing level $\texpect{\sum_{t\in[T]}\alloc_t}$ can be upper bounded by $\sum_{t\in[T]} \tilde{\alloc}_t $ and lower bounded by $\sum_{t\in[T]}(1-\pprob)\cdot \tilde{\alloc}_t$. Invoking the (piece-wise) linearity of the staffing cost function and the boundedness of its slope by a constant, we conclude that there is an extra additive error of $O(\pprob)$ in expected cost of our algorithm compared to the hypothetical sequence $\{\tilde{\alloc}_t\}$, and therefore an extra additive error of $\mathcal{E}=O(\pprob)$ in the worst-case expected cost of our algorithm versus $\optcost$ (here, the expectation is over the randomness in miscoverage shocks). 


\xhdr{No-detection scenario.} In this setting, consider running {\OPTSim} directly. We now claim that $\mathcal{E} = \mathcal{O}(T\cdot \pprob)$. By the union bound, the probability that no bad event occurs in all $T$ days is at least $1 - T \cdot \pprob$. Invoking the (piece-wise) linearity of the staffing cost function and the boundedness of its slope by a constant, we conclude that the extra additive error in expected cost is no more than $\mathcal{E} = \mathcal{O}(T\cdot\pprob)$ (here, again the expectation is over the randomness in miscoverage shocks). 
}

\section{Missing Technical Details of Jointly Minimizing Cost of Hiring and Staffing}
\label{apx:combinded objective}
In this section, we consider a variant where the goal of the platform is to minimize the staffing cost plus the hiring cost. Specifically, given staffing profile $\xbf$ and demand~$\demand$, the total cost $\CostTotal[\demand]{\xbf}$ is defined as:
\begin{align*}
    \CostTotal[\demand]{\xbf} \triangleq \underbrace{\Cost[\demand]{\xbf}}_{\text{staffing cost}}
    +
    \underbrace{\sum\nolimits_{i\in[n]}\sum\nolimits_{t\in[T]}\priceit\allocit}_{\text{hiring cost}}
\end{align*}
where $\priceit$ is the per-worker hiring fee for the platform to hire an available worker from pool $i$ in day $t$. In this section, we impose no assumption on $\{\priceit\}_{i\in[n],t\in[T]}$, except requiring them to be non-negative. To simplify the presentation, we impose perfect consistency (i.e., $\pbiass = \zerobf$) on the prediction intervals and assume them to be nested throughout this section. All our results can be extended under $\pbiass$-consistency straightforwardly.

Our approach developed in the main text can be easily applied to this variant. Consider the following linear program~\ref{eq:opt variant cost}:
\begin{align}
\tag{${\textsc{LP-joint-cost}}$}
    \label{eq:opt variant cost}
    &\arraycolsep=1.4pt\def\arraystretch{1.2}
    \begin{array}{llll}
    \min\limits_{\substack{\xbf,\lambdabf,\rgap\geq \zerobf}}
    \qquad
    &
    \displaystyle\left(\undercost\cdot\rgap + \sum\nolimits_{i\in[n]}\sum\nolimits_{t\in[T]}\priceit\allocit\right)
    \vee
    \left(\max\limits_{k\in[T]}\overcost\cdot\lgap_k + \sum\nolimits_{i\in[n]}\sum\nolimits_{t\in[k]}\priceit\allocit\right)\quad
    & 
    \text{s.t.}
    \\
    &
    \displaystyle\sum\nolimits_{t\in[T]}\frac{1}{\wdiscountit}\allocit \leq \supplyi
    &
    i\in[n]
    \\
    &
    \displaystyle\sum\nolimits_{i\in[n]}\sum\nolimits_{t\in[k]}
    \allocit \leq \pRzero - \perrork + \lgap_k
    \quad
    &
    k\in[T]
    \\
    &
    \displaystyle\sum\nolimits_{i\in[n]}\sum\nolimits_{t\in[T]}
    \allocit \geq \pRzero - \rgap
    &
\end{array}
\end{align}

Now we present the minimax optimal algorithm (\Cref{alg:opt variant cost}) with its optimality guarantee (\Cref{thm:opt variant cost}). Following the same discussion in \Cref{sec:base model result}, it can be verified that {\OPTSimTilde} is feasible and has polynomial running time. The proof of \Cref{thm:opt variant cost} follows almost the same argument as the proof for \Cref{thm:opt alg} in the base model and is included for completeness.

\begin{algorithm}
\setcounter{AlgoLine}{0}
    \SetKwInOut{Input}{input}
    \SetKwInOut{Output}{output}
    \Input{initial pool sizes $\{\supplyi\}_{i\in[n]}$, availability rates $\{\wdiscountit\}_{i\in[n], t\in[T]}$, initial demand range $[\pLzero,\pRzero]$, prediction error upper bounds $\{\perrort\}_{t\in[T]}$, per-worker wages $\{\priceit\}_{i\in[n],t\in[T]}$ }
    \Output{staffing profile $\xbf$}
    \vspace{2mm}
    find an optimal solution $(\xbf^*, \lgap^*, \rgap^*)$ of program~\ref{eq:opt variant cost}
    
    \vspace{2mm}
     invoke Procedure~\ref{alg:emulator} with canonical staffing profile $\canxbf \gets \xbf^*$ {\small\color{\commentcolor}\tcc{facing prediction sequence $\predictions$}}
    \caption{\OPTSimTilde}
    \label{alg:opt variant cost}
\end{algorithm}

\begin{theorem}
\label{thm:opt variant cost}
    {\OPTSimTilde} is minimax optimal. Furthermore, its optimal minimax cost $\optcost$ is equal to the objective value of program~\ref{eq:opt variant cost}.
\end{theorem}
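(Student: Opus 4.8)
\textbf{Proof plan for Theorem~\ref{thm:opt variant cost}.}
The plan is to mirror the two-step architecture used in the proof of \Cref{thm:opt alg}: first establish that the optimal value of program~\ref{eq:opt variant cost} is a lower bound on the optimal minimax cost by restricting the adversary to the single-switch prediction sequences $\{\predictions\ked\}_{k\in[T]}$ of \eqref{eq:single switch prediction construction}, and then show that {\OPTSimTilde} achieves a cost guarantee no larger than this value, so the two bounds coincide. The only genuinely new ingredient relative to the base model is that the hiring cost $\sum_{i,t}\priceit\allocit$ now enters the objective, which is why the objective of \ref{eq:opt variant cost} takes the max of two expressions: the understaffing-plus-hiring cost and, for each switching time $k$, the overstaffing-plus-hiring cost incurred when the adversary switches at day $k$ (recall that under $\predictions\ked$ no worker is hired after day $k$, so only $\sum_{i\in[n]}\sum_{t\in[k]}\priceit\allocit$ is paid).

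\textbf{Step 1 (lower bound).}
First I would argue, exactly as in the warm-up Observation~(i) and in \Cref{lem:opt reduced form lower bound optimal minimax cost}, that no online algorithm can distinguish the sequences $\predictions\ked$ for $k\geq t$ before day $t$, so its staffing decisions $\{\allocit\}$ up to day $t$ must be common across all such $k$; I collect these common hires into variables $\allocit$ and let $\canallocit$ denote the canonical profile. For the sequence $\predictions\ked$ the adversary hires only through day $k$, pays hiring cost $\sum_{i\in[n]}\sum_{t\in[k]}\priceit\allocit$, and can force either an overstaffing of $\sum_{i\in[n]}\sum_{t\in[k]}\allocit - (\pRzero-\perrork)$ by picking $\demand=\pL_k\ked$, or an understaffing of $\pRzero-\sum_{i\in[n]}\sum_{t\in[T]}\allocit$ by picking $\demand=\pRzero$ on the last sequence. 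Encoding these as the constraints $\sum_{i}\sum_{t\in[k]}\allocit\leq \pRzero-\perrork+\lgap_k$ and $\sum_{i}\sum_{t\in[T]}\allocit\geq\pRzero-\rgap$, together with \eqref{eq:supply-feasible}, and taking the worst case over the adversary's choice of switching day $k$ and of over- versus understaffing, yields precisely the objective of \ref{eq:opt variant cost}. Hence the LP value lower-bounds the cost guarantee of every algorithm.

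\textbf{Step 2 (upper bound via the emulator).}
For the matching upper bound I would invoke \Cref{lem:emulator} applied to $\canxbf\gets\xbf^*$ under perfect consistency ($\pbiass=\zerobf$, so $\hat\pR_t=\min_{\tau}\pR_\tau$ and $\hat\pL_t=\max_\tau\pL_\tau$). The bounded-overstaffing and bounded-understaffing properties bound the \emph{staffing} part $\Cost[\demand]{\xbf}$ of the emulated profile $\xbf$ by the corresponding LP expressions. The one extra thing to verify is the hiring-cost term: since Procedure~\ref{alg:emulator} enforces $0\le\allocit\le\canallocit=x^*_{it}$ pointwise and only ever \emph{reduces} hires relative to the canonical profile, the realized hiring cost $\sum_{i}\sum_{t}\priceit\allocit$ is at most $\sum_{i}\sum_{t}\priceit x^*_{it}$, and moreover for any adversarial switch at effective time $k$ the cumulative hiring cost up to day $k$ cannot exceed the canonical $\sum_{i}\sum_{t\in[k]}\priceit x^*_{it}$ (using $\priceit\geq0$). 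Combining the staffing bound with this hiring-cost domination shows $\CostTotal[\demand]{\xbf}$ is bounded by the $\max$-of-two-expressions objective of \ref{eq:opt variant cost} for every $\predictions$ and $\demand$, giving a cost guarantee at most the LP value.

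\textbf{Main obstacle.}
I expect the delicate point to be confirming that the hiring cost behaves monotonically under the emulator's pointwise reduction $\allocit\le x^*_{it}$ \emph{in the same direction} that the LP constraints anticipate. Unlike the pure staffing cost, the hiring term is paid on every hired worker regardless of the final demand, so I must check that the emulator never shifts hires to later, more expensive days in a way that could inflate hiring cost beyond the canonical value; fortunately the invariant in \Cref{lem:emulator} guarantees $\allocit\le\canallocit$ \emph{per day} (not merely in aggregate), which with $\priceit\geq0$ settles this. The remaining verifications are routine and parallel the proof of \Cref{thm:opt alg}.
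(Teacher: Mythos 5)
Your proposal follows the same two-step architecture as the paper's proof: a lower bound obtained by restricting the adversary to the single-switch sequences of \eqref{eq:single switch prediction construction} and converting any algorithm's (randomized) decisions into a feasible solution of \ref{eq:opt variant cost} via Jensen's inequality, and an upper bound obtained by running Procedure~\ref{alg:emulator} on the LP optimum and using the per-day domination $\allocit \le x^*_{it}$ for the hiring cost. One phrasing issue in Step 1: you cannot assume that an arbitrary algorithm hires nothing after the switch day $k$; the paper instead lower-bounds the cost under $\predictions\ked$ by keeping only the overstaffing relative to hires through day $k$ and the hiring cost through day $k$ (both truncations are valid because the dropped terms are nonnegative). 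This is minor and easily repaired.

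The substantive issue is in your Step 2 combination. Bounding the staffing part by Lemma~\ref{lem:emulator}'s stated properties and the hiring part separately does not suffice: the bounded-overstaffing property in the lemma is a maximum over $k\in[T]$, so decoupling the two bounds yields, in the overstaffing case, something like $\overcost\cdot\lgap_{k}^* + \sum_{i}\sum_{t\in[T]}\priceit x^*_{it}$, which can strictly exceed \emph{both} branches of the \ref{eq:opt variant cost} objective (the overstaffing branch pairs $\lgap_k$ with hiring cost only through day $k$). What makes the argument work---and what the paper does---is to fix $k$ as the last day on which the emulator actually hires. Then (i) the realized hiring cost equals hiring through day $k$ and is dominated by $\sum_{i}\sum_{t\in[k]}\priceit x^*_{it}$, and (ii) the emulator's invariant at day $k$ (line 2 of Procedure~\ref{alg:emulator}, as exploited inside the proof of Lemma~\ref{lem:emulator}) gives realized overstaffing at most $\sum_{i}\sum_{t\in[k]}x^*_{it}-\pL_T\ked \le \lgap_k^*$ for that \emph{same} $k$; note this per-$k$ form is established in the lemma's proof but is strictly stronger than the max-over-$k$ statement of the lemma itself. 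With both bounds tied to the same index, the overstaffing branch of the LP objective covers the total cost, and the understaffing branch covers the other case since $\sum_{t\in[k]}\priceit x^*_{it}\le\sum_{t\in[T]}\priceit x^*_{it}$. Your stated ``main obstacle'' (hires drifting to later, more expensive days) is already ruled out by per-day domination and is not the delicate point; the index alignment is, and your write-up gestures at it (``effective time $k$'') without pinning it down.
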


\begin{proof}
    We first show program~\ref{eq:opt variant cost} lower bounds the optimal minimax cost $\optcost$.
    In this argument, we construct a feasible solution of program~\ref{eq:opt variant cost}, whose objective value is equal to the cost guarantee of an arbitrary algorithm $\ALG$.
    
    Consider prediction sequence subset $\{\predictions\ked\}_{k\in[T]}$ defined in \eqref{eq:single switch prediction construction} with $\pbiass = \zerobf$. By construction, for every $t \in[T]$, the first $t$ predictions from day 1 to day $t$ are the same for all prediction sequence $\predictions\ked$ with $k\geq t$. Therefore, algorithm $\ALG$'s (possibly randomized) staffing decision in each day $t$ should be the same under all prediction sequence $\predictions\ked$ with $k \geq t$.
    
    Motivated by the prediction sequence construction above, we let random variable $\randomallocit$ be the number of workers hired by the algorithm from pool $i$ in day $t$ under prediction sequence $\predictions\Ted$.
    Due to the feasibility of the algorithm under prediction sequence $\predictions\Ted$, for all sample paths (over the randomness of the algorithm), we have
    \begin{align*}
    \forall i\in[n]:\quad &
            \displaystyle\sum\nolimits_{t\in[T]}\frac{1}{\wdiscountit}\randomallocit \leq \supplyi
    \end{align*}
    Now consider the following solution $(\xbf,\lambdabf,\rgap)$ construction:
    \begin{align*}
        i\in[n],t\in[T]:&
        \qquad
        \allocit \gets \expect{\randomallocit}
        \\
        k\in[T]:
        &\qquad
        \lgap_k \gets
        \plus{
        \expect{\sum\nolimits_{i\in[n]}\sum\nolimits_{t\in[k]}\randomallocit} - \pRzero + \perrork
        }
        \\
        & \qquad \rgap \gets 
        \plus{
        \pRzero - \expect{\sum\nolimits_{i\in[n]}\sum\nolimits_{t\in[T]}\randomallocit}
        }
    \end{align*}
    By construction, all four constraints are satisfied. 
    Below we argue the objective value of the constructed solution is at most the cost guarantee of the algorithm $\ALG$ in two different cases.

    Let $k\primed = \argmax_{k\in[T]} \overcost\cdot\lgap_k + \sum\nolimits_{i\in[n]}\sum\nolimits_{t\in[k]}\priceit\allocit$.

    \xhdr{Case 1 $\left[\undercost\cdot\rgap + \sum\nolimits_{i\in[n]}\sum\nolimits_{t\in[T]}\priceit\allocit \geq 
    \overcost\cdot\lgap_{k\primed} + \sum\nolimits_{i\in[n]}\sum\nolimits_{t\in[k\primed]}\priceit\allocit\right]$:}
    In this case, the objective value of the constructed solution is 
    $\undercost\cdot\rgap + \sum\nolimits_{i\in[n]}\sum\nolimits_{t\in[T]}\priceit\allocit$.
    Consider the execution of the algorithm $\ALG$ under prediction sequence $\predictions\Ted$ and demand $\demand \triangleq \pR_T\Ted = \pRzero$. Note that the total cost can be lower bounded as 
    \begin{align*}
        \expect{\CostTotal[\demand]{\ALG(\predictions\Ted)}}
        &\overset{(a)}{\geq}
        \expect{
        \undercost\cdot
        \plus{
        \demand - 
        \sum\nolimits_{i\in[n]}
        \sum\nolimits_{t\in[T]}
        \randomallocit 
        }
        +\sum\nolimits_{i\in[n]}\sum\nolimits_{t\in[T]}
        \priceit\randomallocit
        }
        \\
        &\overset{(b)}{\geq}
        \undercost
        \cdot
        \plus{ 
        \demand - 
        \expect{\sum\nolimits_{i\in[n]}
        \sum\nolimits_{t\in[T]}
        \randomallocit}
        }
        +\expect{\sum\nolimits_{i\in[n]}\sum\nolimits_{t\in[T]}\priceit\randomallocit}
        \\
        &\overset{(c)}{=}
        \undercost
        \cdot
        \plus{ 
        \pRzero - 
        \expect{\sum\nolimits_{i\in[n]}
        \sum\nolimits_{t\in[T]}
        \randomallocit}
        }
        +\expect{\sum\nolimits_{i\in[n]}\sum\nolimits_{t\in[T]}\priceit\randomallocit}
        \\
        &\overset{(d)}{=}
        \undercost\cdot \rgap
        +\sum\nolimits_{i\in[n]}\sum\nolimits_{t\in[T]}\priceit\allocit
    \end{align*}
    where inequality~(a) holds by considering understaffing cost only;
    inequality~(b) holds due to the convexity of
    $\plus{\cdot}$ and Jensen's inequality;
    equality~(c) holds due to the construction of $\demand$;
    and
    equality~(d) holds due to the construction of $\rgap$ and $\allocit$.

    \xhdr{Case 2 $\left[\undercost\cdot\rgap + \sum\nolimits_{i\in[n]}\sum\nolimits_{t\in[T]}\priceit\allocit \geq 
    \overcost\cdot\lgap_{k\primed} + \sum\nolimits_{i\in[n]}\sum\nolimits_{t\in[k\primed]}\priceit\allocit\right]$:}
    In this case, the objective value of the constructed solution is 
    $\overcost\cdot\lgap_{k\primed} + \sum\nolimits_{i\in[n]}\sum\nolimits_{t\in[k\primed]}\priceit\allocit$.
    Consider the execution of the algorithm $\ALG$ under prediction sequence $\predictions^{(k\primed)}$ and demand $\demand \triangleq \pL_T^{(k\primed)} = \pRzero - \perror_{k\primed}$. Note that the staffing cost can be lower bounded as 
    \begin{align*}
        \expect{\Cost[\demand]{\ALG(\predictions^{(k\primed)})}}
        &\overset{(a)}{\geq}
        \expect{
        \overcost
        \cdot
        \plus{
        \sum\nolimits_{i\in[n]}
        \sum\nolimits_{t\in[k\primed]}
        \randomallocit 
        -
        \demand 
        }+\sum\nolimits_{i\in[n]}\sum\nolimits_{t\in[k\primed]}
        \priceit\randomallocit
        }
        \\
        &\overset{(b)}{\geq}
        \overcost
        \cdot
        \plus{ 
        \expect{\sum\nolimits_{i\in[n]}
        \sum\nolimits_{t\in[k\primed]}
        \randomallocit}
        -
        \demand 
        }+\expect{\sum\nolimits_{i\in[n]}\sum\nolimits_{t\in[k\primed]}\priceit\randomallocit}
        \\
        &\overset{(c)}{=}
        \overcost
        \cdot
        \plus{ 
        \expect{\sum\nolimits_{i\in[n]}
        \sum\nolimits_{t\in[k\primed]}
        \randomallocit}
        -
        \pRzero + \perror_{k\primed}
        }+\expect{\sum\nolimits_{i\in[n]}\sum\nolimits_{t\in[k\primed]}\priceit\randomallocit}
        \\
        &\overset{(d)}{=}
        \overcost\cdot\lgap_{k\primed} 
        +\sum\nolimits_{i\in[n]}\sum\nolimits_{t\in[k\primed]}\priceit\allocit
    \end{align*}
    where inequality~(a) holds by considering understaffing cost only and lower bounding the total number of hired workers as $\sum_{i\in[n]}\sum_{t\in[k\primed]}\randomallocit$;
    inequality~(b) holds due to the convexity of 
    $\plus{\cdot}$ and Jensen's inequality;
    equality~(c) holds due to the construction of $\demand$;
    and
    equality~(d) holds due to the construction of $\lgap_{k\primed}$ and $\allocit$.

    Next we argue that the cost guarantee of {\OPTSimTilde} is upper bounded by \ref{eq:opt variant cost}.
    Let $(\xbf^*, \lambdabf^*, \rgap^*)$ be the optimal solution of program~\ref{eq:opt variant cost} used in \OPTSimTilde.
    Let $k$ be the largest index such that $\allocit > 0$ for some pool $i$. The total hiring cost of {\OPTSimTilde} can be upper bounded by 
    \begin{align*}
        \sum\nolimits_{t\in[T]}\sum\nolimits_{i\in[n]}\priceit\allocit 
        &\overset{(a)}{=}
        \sum\nolimits_{t\in[k]}\sum\nolimits_{i\in[n]}\priceit\allocit 
        \\
        &\overset{(b)}{\leq}
        \sum\nolimits_{t\in[k]}\sum\nolimits_{i\in[n]}\priceit\allocit^* 
    \end{align*}
    where equality~(a) holds due to the definition of index $k$; and equality~(b) holds due to 
    the fact that $x_{it}\leq{x}^*_{it}$ by construction in Procedure~\ref{alg:emulator} (and its existence \Cref{lem:emulator}).

    Moreover, we can upper bound the staffing cost of {\OPTSimTilde} by
    \begin{align*}
        &\displaystyle
        \undercost\cdot\plus{\pR_T - \sum\nolimits_{i\in[n]}\sum\nolimits_{t\in[T]}\allocit}
        \vee
        \overcost\cdot\plus{\sum\nolimits_{i\in[n]}\sum\nolimits_{t\in[T]}\allocit-\pL_T}
        \\
        \overset{(a)}{\leq}{}&
        \undercost\cdot\plus{\pR_T\Ted - \sum\nolimits_{i\in[n]}\sum\nolimits_{t\in[T]}\allocit^*}
        \vee
        \overcost\cdot\plus{\sum\nolimits_{i\in[n]}\sum\nolimits_{t\in[T]}\allocit^*-\pL_T\ked}
        \\
        \overset{(b)}{\leq}{}&
        \undercost\cdot\rgap^*
        \vee
        \overcost\cdot\lgap_k^*
    \end{align*}
    where inequality~(a) holds due to the definition of index $k$ and the bounded overstaffing/understaffing cost properties of Procedure~\ref{alg:emulator} in \Cref{lem:emulator}; and inequality~(b) holds due to the construction of $\pR_T\Ted$, $\pL_T\ked$ in \eqref{eq:single switch prediction construction} and the third and fourth constraints in program~\ref{eq:opt variant cost}.

    Combining the upper bounds above for staffing cost and hiring cost, we show that the cost guarantee of {\OPTSimTilde} is upper bounded by program~\ref{eq:opt variant cost} as desired.
\end{proof}

\section{Missing Technical Details of Workforce Planning with Costly Hiring and Releasing}

\label{sec:cancellation}
In this appendix section, we provide full details of the missing parts in \Cref{sec:extension}, and fill all the gaps in our technical argument in that section.

In the base model, the platform's staffing decision is irrevocable. In this section, we consider an extension in which the platform has a budget constraint for hiring workers. Moreover it is allowed to reverse previous hiring decisions by releasing hired workers after paying a cost, which we refer to as \emph{costly release}. 

\xhdr{The costly releasing environment.} We consider the following generalization of the base model studied in \Cref{sec:base model}. The platform has a total \emph{budget} of $\budget$ for the staffing decision. On each day~$t\in[T]$, by hiring  $\allocit\in\reals_+$ available workers from each pool~$i$, the platform needs to \emph{pay} $\allocit\cdot \priceit$ where $\priceit$ is the per-worker wages for pool $i$ on day $t$. Moreover, the platform can also \emph{release} $\revokeit\in\reals_+$ previously hired workers from each pool $i$ by paying a per-worker releasing fee $\cpriceit\in\reals_+\cup\{\infty\}$. We further assume that if a worker is hired and later released by the platform, she cannot be hired again for this operating day. We say a (joint hiring and releasing) staffing profile $\{\allocit, \revokeit\}_{i\in[n],t\in[T]}$ is \emph{feasible} if
\begin{align*}
\begin{array}{lrl}
\text{(supply-feasibility)}\qquad
    & \forall i\in[n]:
    &\qquad
    \displaystyle\sum\nolimits_{t\in[T]}\frac{1}{\wdiscountit}\allocit \leq \supplyi
    \\
    \text{(budget-feasibility)}
    & 
    &\qquad
    \displaystyle\sum\nolimits_{i\in[n]}\sum\nolimits_{t\in[T]}
    \priceit  \allocit + \cpriceit \revokeit\leq \budget
    \\
    \text{(releasing-feasibility)}\qquad
    & \forall i\in[n],k\in[T]:
    &\qquad
    \displaystyle\sum\nolimits_{t\in[k]}\revokeit \leq 
    \displaystyle\sum\nolimits_{t\in[k]}\allocit 
\end{array}
\end{align*}


We also make the following structural assumption about the per-worker releasing fees.

\begin{assumption}[Piecewise stationary releasing fees, restating \Cref{asp:limited cancellation fee}]
    There exists $\cptotal\in [1:T]$ and $0 = t_0 < t_1 < t_2 < \dots < t_\cptotal = T$ such that for every index $\ell\in[\cptotal]$, the per-worker releasing fees remain identical for each time interval $[t_{\ell - 1} + 1: t_{\ell}]$, i.e., $\forall t,t'\in[t_{\ell - 1} + 1: t_{\ell}]$ and $\forall i, i'\in[n]$, we have
        $\cpriceit\equiv \cprice_{i't'}$.
\end{assumption}
We introduce the auxiliary notation $\cintervalell\triangleq [t_{\ell - 1} + 1: t_{\ell}]$ and $\cintervalellplus \triangleq [t_{\ell - 1}: t_{\ell}]$. We refer to each of $\cintervalell$ as an \emph{epoch}. Moreover, with slight abuse of notation, we use $\cpriceiled$ to denote the per-worker releasing fee for all days $t\in\cintervalell$ in the remainder of this subsection.

To extend our approach to this extension, we face new challenges. First, restricting the adversary to only single-switching prediction sequences in \eqref{eq:single switch prediction construction} is not without loss. 
The adversary can benefit from multiple switches to force the algorithm to hedge more through its releasing decisions. Second, it is not clear how to make releasing decisions and how to emulate them similar to Procedure~\ref{alg:emulator}.

\xhdr{High-level sketch of our approach.} We first introduce a larger subset of $O(T^\cptotal)$ of prediction sequences, formally described in ~\eqref{eq:multi switch prediction construction} (in contrast to $T$ prediction sequences in \eqref{eq:single switch prediction construction}). In short, since the releasing fees remain constant in each epoch $\cintervalell
\subseteq[T]$, we consider an adversary that follows a single-switch strategy in each epoch, such as the one introduced in \eqref{eq:single switch prediction construction} for the base model. (Note that our base model is simply a special case when when $\cptotal = 1$, $\priceit = 0$ for all $i,t$, and $\cprice_{i1} = \infty$ for all $i$.) We then concatenate these prediction sequences for different epochs $\cintervalell$ to obtain the entire prediction sequence. Using this new subset , we introduce a \emph{configuration linear program}~\ref{eq:opt cancellation} that helps us to characterize the optimal minimax cost. When $\cptotal$ is constant (which is generally satisfied in the last-mile delivery industry), program~\ref{eq:opt cancellation} has a polynomial size.



Emulating the solution of this new program~\ref{eq:opt cancellation} for a general prediction sequence has intricacies in the extension model. We overcome them first by identifying certain properties of the minimax optimal algorithm and incorporating them directly into program~\ref{eq:opt cancellation}. Second, unlike the minimax optimal algorithms developed in previous sections that only solve an offline program once and then emulate it over the entire time horizon, the minimax optimal algorithm~\OPTSimCan\ (\Cref{alg:opt cancellation}) resolves an updated version of program~\ref{eq:opt cancellation} at the beginning of each epoch $\cintervalell$. Motivated by this resolving idea, we develop an induction argument to show the minimax optimality of \OPTSimCan. 
Below, we explain all the details of our approach.

\xhdr{Configuration LP for the optimal minimax cost.} We describe the subset of prediction sequences  that inspires linear program~\ref{eq:opt cancellation}. First, we introduce the auxiliary notation -- \emph{configuration $\switchseq\in{\times}_{\ell\in[\cptotal]}\cintervalellplus$} and denote $\switchseqspace\triangleq{\times}_{\ell\in[\cptotal]}\cintervalellplus$ as the space of all configurations. As a sanity check, $|\switchseqspace| = O(T^\cptotal)$. Moreover, for any configuration $\switchseq$, we use $\switchseqell \in \cintervalellplus$ to denote its $\ell$-th element, and $\switchseqonetoell \in {\times}_{\ell'\in[\ell]}\cinterval_{\ell'}^+$ to denote its length-$\ell$ prefix. 

Intuitively speaking, configuration $\switchseq\in\switchseqspace$ encodes a prediction sequence $\predictions(\switchseq)= \{[\pL_t(\switchseq),\pR_t(\switchseq)]\}_{t\in[T]}$ constructed as follows:
\begin{align}
\label{eq:multi switch prediction construction}
\begin{array}{lll}
    \ell\in[\cptotal], t\in[t_{\ell - 1} + 1, \switchseqell]:&
    \qquad
    \pLt(\switchseq) \gets \pR_{t - 1}(\switchseq) - \perrort,\quad &
    \pRt(\switchseq) \gets \pR_{t - 1}(\switchseq);
    \\
    \ell\in[\cptotal], t\in[\switchseqell + 1: t_{\ell}]:&
    \qquad
    \pLt(\switchseq) \gets \pL_{t - 1}(\switchseq),\quad &
    \pRt(\switchseq) \gets \pL_{t - 1}(\switchseq) + \perrort.
\end{array}
\end{align}
where $\pLzero(\switchseq) = \pLzero$ and $\pRzero(\switchseq) = \pRzero$. The idea of using this subset is highly non-trivial and substantially reduces the dimensionality of the adversary's problem.\footnote{We note that the constructed subset of prediction sequences $\{\predictions(\switchseq)\}_{\switchseq\in\switchseqspace}$ has $O(T^\cptotal)$ prediction sequences, while the the original prediction sequence space is infinite and uncountable.} To see why, note that construction~\eqref{eq:multi switch prediction construction} ensures that for any two configurations $\switchseq,\switchseq'\in\switchseqspace$, if $\switchseq$ has the same length-$(\ell-1)$ prefix as $\switchseq'$, i.e., $\switchseq_{1:\ell-1} = \switchseq_{1:\ell-1}'$, then predictions $[\pLt(\switchseq),\pRt(\switchseq)]$ and $[\pLt(\switchseq'),\pRt(\switchseq')]$ are identical for every day $t\in[\switchseq_{\ell}\wedge\switchseq_{\ell}']$. Therefore, if we use $\allocitJ$ to denote the number of workers hired from pool $i$ on day $t$ given the prediction sequence $\predictions(\switchseq)$, the following equality should hold for any online algorithm:
\begin{align}
\label{eq:identical allocation constraint}
    \forall\ell\in[\cptotal],
    \forall\switchseq,\switchseq'\in\switchseqspace,
    \switchseq_{1:\ell-1} = \switchseq'_{1:\ell-1},
    \forall t\in[\switchseq_{\ell}\wedge\switchseq_{\ell}']:
    \qquad
    \allocitJ = \allocit(\switchseq')
\end{align}
Similarly, if we use $\revokeiellJ$ to denote the number of workers released from pool $i$ during days in $\cintervalell$ given prediction sequence $\predictions(\switchseq)$, the following equality should hold for any online algorithm:\footnote{Since  the per-worker releasing fee remains the same for all days in each $\cintervalell$, it is without loss of generality to assume that online algorithms only release workers on days $t_1, t_2, \dots, t_L$. Thus, it suffices to introduce a single variable to denote the releasing during days in $\cintervalell$ for each pool.}
\begin{align}
\label{eq:identical cancellation constraint}
    \forall\ell\in[\cptotal],
    \forall\switchseq,\switchseq'\in\switchseqspace,
    \switchseq_{1:\ell} = \switchseq'_{1:\ell}:
    \qquad
    \revokeiellJ = \revokeiell(\switchseq')
\end{align}
Due to the technical reason which we will explain later, from now on we assume that is $\cumalloci$ workers hired from pool $i$ on day 0.\footnote{Both our base and multi-station model then corresponds to $\cumalloci = 0$.}
The feasibility of the staffing profile can be expressed as 
\begin{align}
\label{eq:all feasibility constraint}
    \begin{split}
    \forall i\in[n],
    \forall\switchseq\in\switchseqspace:&
    \qquad
    \displaystyle\sum\nolimits_{t\in[T]}\frac{1}{\wdiscountit} \allocitJ 
    \leq 
    \supplyi
    \\
    \forall \switchseq\in\switchseqspace:&
    \qquad
    \displaystyle\sum\nolimits_{i\in[n]}
    \left(
    \sum\nolimits_{t\in[T]}
    \priceit\allocitJ
    +
    \sum\nolimits_{\ell\in[\cptotal]}
    \cpriceiled\revokeiellJ
    \right)
    \leq \budget
    \\
    \forall i\in[n],
    \forall\ell\in[\cptotal],
    \forall\switchseq\in\switchseqspace:&
    \qquad
    \displaystyle
    \sum\nolimits_{\ell'\in[\ell]}\revoke_{i\ell'}(\switchseq)
    \leq
    \cumalloci + 
    \sum\nolimits_{t\in[t_\ell]}\allocitJ 
    \end{split}
\end{align}
Suppose we use $\lgapJ$ and $\rgapJ$ to denote the largest possible overstaffing and understaffing given prediction sequence $\predictions(\switchseq)$, we have
\begin{align}
\label{eq:bounded cost constraint}
\begin{split}
    \forall \switchseq\in\switchseqspace:&
    \qquad
    \sum\nolimits_{i\in[n]}
    \left(
    \cumalloci + \sum\nolimits_{t\in[T]}\allocitJ
    -
    \sum\nolimits_{\ell\in[\cptotal]}\revokeiellJ
    \right)
    \leq
    \pL_T(\switchseq) + \lgapJ
    \\
    \forall \switchseq\in\switchseqspace:&
    \qquad
    \sum\nolimits_{i\in[n]}
    \left(
    \cumalloci + \sum\nolimits_{t\in[T]}\allocitJ
    -
    \sum\nolimits_{\ell\in[\cptotal]}\revokeiellJ
    \right)
    \geq
    \pR_T(\switchseq) - \rgapJ
\end{split}
\end{align}

We now present linear program~\ref{eq:opt cancellation} that characterizes the optimal minimax cost:
\begin{align}
\tag{$\textsc{LP-release}$}
    \label{eq:opt cancellation}
    \begin{array}{llll}
    \min\limits_{\substack{\xbf,\ybf,\lambdabf,\thetabf\geq \zerobf}}
    &
    \max\limits_{\switchseq\in\switchseqspace}
    \undercost\cdot\rgapJ
    \vee
    \overcost\cdot\lgapJ
    \quad
    & 
    \text{s.t.}
    \\
    &
    \text{constraints}~\labelcref{eq:identical allocation constraint,eq:identical cancellation constraint,eq:all feasibility constraint,eq:bounded cost constraint} 
\end{array}
\end{align}

As we discussed earlier, the optimal minimax algorithm for the costly releasing environment repeatedly resolves program~\ref{eq:opt cancellation} at the beginning of each day $t_\ell$, $\ell\in[\cptotal]$ by viewing the staffing decisions in the remaining time horizon $[t_\ell: T]$ as a new subproblem. We use $\status$ to denote the \emph{initial state} of this subproblem. Here $\status \triangleq (\ell, \bar\cumallocs,\remainsupplies,\remainbudget,[\curpL,\curpR])$ is a tuple where $\cumallocs=\{\bar\cumalloci\}_{i\in[n]}$ is the total number of workers hired before the end of day $t_{\ell-1}$ from each pool $i$, $\remainsupplies=\{\remainsupplyi\}_{i\in[n]}$ is the number of available workers remaining in each pool $i$ at the end of day $t_{\ell - 1}$, $\remainbudget$ is the remaining budget at the end of day $t_{\ell - 1}$, and $[\curpL,\curpR]$ is the prediction revealed on day $t_{\ell - 1}$. See the formal definitions in \eqref{eq:status update}. We use $\lpcancelsubproblem$ to denote the \emph{subprogram} solved at the beginning of day $t_{\ell}$ given initial state $\status$. Specifically, the primitives in program~\ref{eq:opt cancellation} are assigned as 
\begin{align*}
    \cumallocs \gets \bar\cumallocs, ~~
    \supplys\gets\remainsupplies, ~~
    \budget\gets \remainbudget,~~
    \pLzero\gets \curpL,~~\pRzero\gets \curpR,~~
    \{\wdiscountit\}_{t\in[t_{\ell - 1} + 1:T]} \gets 
    \{\tfrac{\wdiscountit}{\wdiscount_{it_{\ell-1}}}\}_{t\in[t_{\ell - 1} + 1:T]}
\end{align*}
and all index sets (e.g., $[T], [\cptotal]$), configuration space $\switchseqspace$ are adjusted correspondingly.

\xhdr{The minimax optimal algorithm and analysis.} 
Now we present the minimax optimal algorithm --- \OPTSimCan (\Cref{alg:opt cancellation}) with its feasibility (\Cref{lem:alg feasibility cancellation}) and optimality guarantee (\Cref{thm:opt alg cancellation}). 

In \OPTSimCan, there are $\cptotal$ phases, corresponding to subintervals/epochs $\cintervalell$ for each $\ell\in[\cptotal]$. Specifically, at the end of day $t_{\ell - 1}$, given the current staffing profile $\{\allocit,\revokeit\}_{i\in[n],t\in[t_{\ell - 1}]}$ and prediction $[\pL_{t_{\ell-1}},\pR_{t_{\ell-1}}]$, the algorithm updates the initial state $\status$ as follows:
\begin{align}
\begin{split}
\label{eq:status update}
    \forall i\in[n]:&\qquad 
    \cumalloci \gets \sum\nolimits_{t\in[t_{\ell - 1}]}\allocit - \revokeit,\quad
    \remainsupplyi \gets \wdiscount_{it_{\ell - 1}}\left(\supplyi - \sum\nolimits_{t\in[t_{\ell - 1}]}\frac{1}{\wdiscountit}\allocit\right),
    \\
    &\qquad 
    \remainbudget \gets \budget - \sum\nolimits_{i\in[n]}\sum\nolimits_{t\in[t_{\ell - 1}]}\priceit\allocit+\cpriceit\revokeit,\quad
    \curpL \gets \pR_{t_{\ell - 1}} - \perror_{t_{\ell - 1}},\quad
    \curpR \gets \pR_{t_{\ell - 1}}.
\end{split}
\end{align}
The algorithm solves subprogram $\lpcancelsubproblem$ given the updated initial state $\status$ and obtains its optimal solution $\{\allocitJ^*,\revoke_{i\ell'}^*(\switchseq),\lgapJ^*,\rgapJ^*\}_{i\in[n],t\in[t_{\ell- 1} + 1:T],\ell'\in[\ell:L],\switchseq\in\switchseqspace_{\ell:L}}$ where $\switchseqspace_{\ell:L}$ is defined as $\switchseqspace_{\ell:L}\triangleq \times_{\ell'\in[\ell:L]}\cinterval_{\ell'}^+$.
The algorithm then constructs the canonical hiring decision $\{\canallocit\}_{i\in[n],t\in\cintervalell}$ as follows:
\begin{align}
\label{eq:canalloc construction}
    \forall i\in[n],\forall t\in\cintervalell:\qquad
    \canallocit \gets \allocit^*(\switchseq^{(t_{\ell})})
\end{align}
where $\switchseq^{(t_{\ell})}$ is an arbitrary sequence such that $\switchseq^{(t_{\ell})}_1 = t_{\ell}$.\footnote{For any configuration $\switchseq\in\switchseqspace_{\ell:L}$, recall $\switchseq_1$ specifies a day in $\cintervalellplus$. Due to constraint~\eqref{eq:identical allocation constraint}, all configurations $\switchseq$ such that $\switchseq_1 = t_{\ell}$ have the same $\allocitJ$ for $t \in[\cintervalell]$.}
Similarly, the algorithm constructs the canonical releasing decision $\{\canrevokeited\}_{i\in[n],t\in\cintervalellplus}$ as follows:
\begin{align*}
    \forall i\in[n],\forall t\in\cintervalellplus:\qquad
    \canrevokeited \gets \revokeiell^*(\switchseq^{(t)})
\end{align*}
where $\switchseq^{(t)}$ is an arbitrary sequence such that $\switchseq^{(t)}_1 = t$.\footnote{Due to constraint~\eqref{eq:identical cancellation constraint}, all configurations $\switchseq$ such that $\switchseq_1 = t$ have the same $\revokeiellJ$.}
Similar to all minimax optimal algorithms in previous models, \OPTSimCan\ implements Procedure~\ref{alg:emulator} with canonical hiring decision $\{\canallocit\}_{i\in[n],t\in\cintervalell}$ constructed above as its input to determine the actual hiring decisions for every day $t\in\cintervalell$. Additionally, on day $t_{\ell}$, the algorithm identifies the largest index $k\in\cintervalellplus$ satisfying\footnote{Such index $k$ always exists, since this equality holds trivially for $k = t_{\ell - 1}.$}
\begin{align}
\label{eq:critical day cancellation}
    \pR_{k} - \sum\nolimits_{i\in[n]}\sum\nolimits_{t\in[t_{\ell - 1}+1:k]}\allocit =
    \pR_{t_{\ell-1}} - \sum\nolimits_{i\in[n]}\sum\nolimits_{t\in[t_{\ell - 1}+1:k]}\canallocit
\end{align}
The algorithm then uses canonical releasing decision $\{\canrevoke_i\ked\}_{i\in[n]}$ to determine the actual releasing decision $\{\revoke_i\}_{i\in[n]}$, i.e., $\revoke_i$ hired workers are released from pool $i$ on day $t_{\ell}$. Specifically, it computes $\{\revoke_i\}_{i\in[n]}$ as an arbitrary solution such that
\begin{align}
    \label{eq:actual cancellation construction}
    \begin{split}
    \forall i\in[n]:&\qquad 0\leq \revoke_i \leq \canrevoke_{ik}
    \\
    &\qquad 
    \sum\nolimits_{i\in[n]}
\left(
\sum\nolimits_{t\in[t_{\ell-1} +1:k]}\left(
\canallocit
-
\allocit
\right)
-
\left(
\canrevoke_{i}\ked
-
\revoke_i
\right)
\right)
=
\pR_{t_{\ell}}(\switchseq\ked)
-
\pR_{t_{\ell}}
    \end{split}
\end{align}
where $\pR_{t_{\ell}}(\switchseq\ked) = \pR_{t_{\ell - 1}} - \perror_k + \perror_{t_\ell}$ by construction~\eqref{eq:multi switch prediction construction}.
If no feasible solution satisfies condition~\eqref{eq:actual cancellation construction}, the algorithm makes no releasing on day $t_{\ell}$.

Finally, if $\ell < [\cptotal]$, the algorithm moves from phase $\ell$ to phase $\ell + 1$ and repeats.
\begin{algorithm}
\setcounter{AlgoLine}{0}
    \SetKwInOut{Input}{input}
    \SetKwInOut{Output}{output}
    \Input{initial pool sizes $\{\supplyi\}_{i\in[n]}$, availability rates $\{\wdiscountit\}_{i\in[n], t\in[T]}$, initial demand range $[\pLzero,\pRzero]$, prediction error upper bounds $\{\perrort\}_{t\in[T]}$, per-worker wages $\{\priceit\}_{i\in[n],t\in[T]}$, per-worker releasing fees $\{\cpriceit\}_{i\in[n],t\in[T]}$}
    \Output{staffing profile $(\xbf,\ybf)$}
    \vspace{2mm}
    
    \For{each $\ell\in[\cptotal]$}{
    
    \vspace{1mm}

    update initial state $\status$ using \eqref{eq:status update} at the end of day $t_{\ell-1}$

    \vspace{2mm}
    
    find an optimal solution $(\xbf^*, \ybf^*, \lambdabf^*, \thetabf^*)$ of subprogram~$\lpcancelsubproblem$

    \vspace{2mm}

    invoke Procedure~\ref{alg:emulator} with canonical staffing profile $\canxbf$ constructed in \eqref{eq:canalloc construction} 
    {\small\color{\commentcolor}\tcc{facing prediction sequence $\predictions$ for each day $t \in \cintervalell$}}

    \vspace{2mm}

    \If{there exists releasing decision $\{\revoke_i\}_{i\in[n]}$ satisfying condition \eqref{eq:actual cancellation construction}}{
    
    \vspace{1mm}

    release $\revoke_i$ hired workers from each pool $i$ on day $t_{\ell}$
    }
    }
    
    \caption{\OPTSimCan}
    \label{alg:opt cancellation}
\end{algorithm}

\begin{remark}
    \OPTSimCan\ has $\texttt{Poly}(n, T^\cptotal)$ running time.
\end{remark}

\begin{remark}
    \OPTSimCan\ recovers \OPTSim\ when $\cptotal = 1$ and $\cprice_{i1} = \infty$ for all $i$.
\end{remark}

\begin{restatable}{lemma}{optalgcanfeasibility}
\label{lem:alg feasibility cancellation}
    In the costly releasing environment,
    the staffing profile outputted by \OPTSimCan\ is feasible.
\end{restatable}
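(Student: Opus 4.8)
The plan is to establish the three requirements in the definition of a feasible joint hiring-and-releasing profile---supply-feasibility, budget-feasibility, and releasing-feasibility---by induction over the $\cptotal$ epochs $\{\cintervalell\}_{\ell\in[\cptotal]}$. The natural induction hypothesis is that, at the end of epoch $\ell-1$, the state tuple $\status$ produced by the update rule \eqref{eq:status update} faithfully records the residual quantities: $\remainsupplyi$ equals the (appropriately projected) number of workers still available in pool $i$, $\remainbudget$ equals the budget not yet spent, $\cumalloci$ equals the net number of workers currently held, and the partial profile produced so far already satisfies all three feasibility constraints. Under this hypothesis I would first argue that the subprogram $\lpcancelsubproblem$ solved at the start of epoch $\ell$ is feasible, so that the optimal solution used to build the canonical hiring and releasing decisions in \eqref{eq:canalloc construction} indeed exists; this follows because the realized state is exactly the initial condition of a smaller instance of the same problem, and the constraints \eqref{eq:all feasibility constraint} of \ref{eq:opt cancellation} are satisfiable (e.g.\ by the all-zero continuation) whenever $\remainsupplyi,\remainbudget\geq 0$.

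For supply-feasibility, within epoch $\ell$ the hiring decisions are produced by Procedure~\ref{alg:emulator} fed with the canonical profile $\canxbf$, whose restriction to $\cintervalell$ satisfies the supply constraint (first line of \eqref{eq:all feasibility constraint}) with respect to $\remainsupplyi$ and the projected availability rates. \Cref{lem:emulator} then guarantees that the emulated hires are themselves supply-feasible and obey $\allocit\leq\canallocit$ coordinatewise. Summing the per-epoch supply expenditures and telescoping through the updates \eqref{eq:status update} yields $\sum_{t\in[T]}\allocit/\wdiscountit\leq\supplyi$, the global supply constraint. Budget-feasibility is similarly routine: since wages and releasing fees are nonnegative and the emulator and the releasing rule enforce $\allocit\leq\canallocit$ and $0\leq\revoke_i\leq\canrevoke_{ik}$, the actual spending in each epoch is dominated by the canonical spending, which the budget constraint of the subprogram caps by $\remainbudget$; hence the updated budget stays nonnegative and the total expenditure never exceeds $\budget$.

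The crux of the argument---and the step I expect to be the main obstacle---is releasing-feasibility, namely $\sum_{t\in[k]}\revokeit\leq\sum_{t\in[k]}\allocit$ for every pool $i$ and every $k\in[T]$. Because releases occur only on the epoch-boundary days $t_\ell$, it suffices to verify the inequality at $k=t_\ell$; the difficulty is that Procedure~\ref{alg:emulator} may hire strictly fewer workers than prescribed (i.e.\ $\allocit<\canallocit$), so the canonical releasing-feasibility encoded in the third line of \eqref{eq:all feasibility constraint} does not transfer verbatim to the emulated profile. The resolution I would pursue is to use the balancing identity \eqref{eq:actual cancellation construction}, which ties the cumulative hiring shortfall $\sum_{i}\sum_{t\in[t_{\ell-1}+1:k]}(\canallocit-\allocit)$ created by the emulator to the release reduction $\sum_i(\canrevoke_{ik}-\revoke_i)$ and to the prediction gap $\pR_{t_\ell}(\switchseq\ked)-\pR_{t_\ell}$, together with the definition of the critical day $k$ in \eqref{eq:critical day cancellation}. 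Concretely, I would invoke the emulator invariant established inside the proof of \Cref{lem:emulator} (which controls how far the emulated cumulative hires lag the canonical ones) to show that a nonnegative solution $\revoke_i\leq\canrevoke_{ik}$ to \eqref{eq:actual cancellation construction} exists, and that this choice keeps cumulative releases weakly below cumulative hires at day $t_\ell$. The delicate bookkeeping is that the reduction in releases must be coordinated across pools to exactly offset the emulator's shortfall---precisely what the aggregate equation \eqref{eq:actual cancellation construction} enforces---so verifying that the per-pool bounds $0\le\revoke_i\le\canrevoke_{ik}$ are simultaneously compatible with this aggregate constraint, and that net holdings never go negative at an intermediate day, is the technical heart of the proof. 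In the fallback case where no such nonnegative solution exists the algorithm releases nothing, and I would separately confirm this is consistent with releasing-feasibility. Combining the three parts closes the induction and yields the feasibility of the profile returned by \OPTSimCan.
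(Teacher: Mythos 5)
Your proposal follows essentially the same route as the paper's proof: feasibility is verified condition by condition (supply, budget, releasing) and epoch by epoch, using the constraints \eqref{eq:all feasibility constraint} of the subprogram $\lpcancelsubproblem$ together with the state updates \eqref{eq:status update}, the domination property $\allocit \le \canallocit$ of Procedure~\ref{alg:emulator} from \Cref{lem:emulator}, and the releasing rule $0 \le \revoke_i \le \canrevoke_{ik}$ enforced by \eqref{eq:actual cancellation construction} (with no release in the fallback case). If anything, you are more explicit than the paper on the releasing-feasibility step—including the coordination between the aggregate balancing equation and the per-pool bounds—which the paper's proof dispatches in a single sentence.
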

The proof of \Cref{lem:alg feasibility cancellation} (see \Cref{apx:optalgcanfeasibility}) utilizes the properties of Procedure~\ref{alg:emulator} established in \Cref{lem:emulator} and the structural properties about the implementation of releasing decisions in \eqref{eq:actual cancellation construction}.

\optalgcan*


The proof of \Cref{thm:opt alg cancellation} (see \Cref{apx:optalgcan}) follows a similar high-level idea as the proofs of \Cref{thm:opt alg,thm:opt alg multi station}. We first argue that program~\ref{eq:opt cancellation} upper bounds the optimal minimax cost. We then use an induction argument (backward over index $\ell$) showing that the optimal objective of subprogram~$\lpcancelsubproblem$ is at most the cost guarantee of \OPTSimCan\ from phase $\ell$ to phase $\cptotal$ given initial state $\status$.

\section{Other Robust Criteria}

\label{apx:regret}
\newcommand{\benchmark}{\texttt{OPT}}
\newcommand\barbelow[1]{\stackunder[1.2pt]{$#1$}{\rule{1.ex}{.175ex}}}
\newcommand{\perrordelta}{\delta}
\newcommand{\perrordeltaj}{\perrordelta_j}
\newcommand{\perrorLB}{\Delta}
\newcommand{\perrorLBj}{\perrorLB_j}
\newcommand{\pLjT}{\pL_{jT}}
\newcommand{\pRjT}{\pR_{jT}}

\subsection{Regret}

The minimax optimal algorithms developed in this paper also are also regret optimal under the following regularity assumption about the workforce pools.\footnote{To simplify the presentation, we focus on the regret minimization version of the base model. The same argument holds for extensions studied in \Cref{sec:extension} as well.}

\begin{assumption}
\label{asp:sufficient initial supply}
    There exists a feasible staffing profile $\xbf$ such that $\sum_{i\in[n]}\sum_{t\in[T]}\allocit \geq \pRzero$.
\end{assumption}

The classic definition of the regret in the online algorithm design literature is as follows: given an instance $\instance$, the \emph{regret guarantee} of an online algorithm $\ALG$ is defined as
\begin{align*}
    \max_{\predictions,\demand}~\expect{\Cost[\demand]{\ALG(\predictions}} - \Cost[\demand]{\benchmark(\predictions,\demand)}
\end{align*}
where $\ALG(\predictions)$ is the (possibly randomized) staffing profile generated by algorithm $\ALG$ under prediction sequence $\predictions= \{\predictiont\}_{t\in[T]}$, and $\benchmark(\predictions, \demand)$ is the optimal staffing profile generated by the optimal clairvoyant benchmark that knows the entire prediction sequence $\predictions$ and demand $\demand$. 

Under Assumption~\ref{asp:sufficient initial supply}, the optimal clairvoyant benchmark $\benchmark$ can pick a feasible staffing profile~$\xbf^*$ that matches the demand perfectly, i.e., $\sum_{i\in[n]}\sum_{t\in[T]}\allocit = \demand$. Therefore, its cost is $\Cost[\demand]{\benchmark(\predictions,\demand)} = 0$ for all prediction sequences and demand. Consequently, the regret guarantee becomes equivalent to the cost guarantee (\Cref{def:minmaxcost}) studied in the main text. We summarize our discussion into the following proposition.
\begin{proposition}
    Under Assumption~\ref{asp:sufficient initial supply}, an online algorithm is minimax optimal if and only if it is regret optimal.
\end{proposition}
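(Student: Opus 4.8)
The plan is to show that, under Assumption~\ref{asp:sufficient initial supply}, the clairvoyant benchmark $\benchmark$ always attains zero cost, so that the regret guarantee of every online algorithm coincides identically with its cost guarantee; the claimed equivalence then follows immediately because the two optimization objectives agree pointwise on the set of feasible online algorithms.

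First I would fix an arbitrary prediction sequence $\predictions$ and demand $\demand$, and argue that $\Cost[\demand]{\benchmark(\predictions,\demand)}=0$. By Assumption~\ref{asp:sufficient initial supply} there is a feasible profile $\xbf$ with $\sum_{i\in[n]}\sum_{t\in[T]}\allocit \geq \pRzero \geq \demand$, where the last inequality uses that any admissible demand satisfies $\demand\in[\pLzero,\pRzero]$. Since the supply-feasibility constraints $\sum_{t\in[T]}\tfrac{1}{\wdiscountit}\allocit\leq\supplyi$ are positively homogeneous and downward closed, scaling the profile by the factor $\demand/\big(\sum_{i\in[n]}\sum_{t\in[T]}\allocit\big)\in[0,1]$ yields a feasible profile whose total hires equal exactly $\demand$. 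That profile makes both the overstaffing and understaffing terms vanish, so the clairvoyant optimum is zero for this pair $(\predictions,\demand)$; as $\predictions$ and $\demand$ were arbitrary, $\Cost[\demand]{\benchmark(\predictions,\demand)}=0$ for all of them.

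Given this, the regret guarantee of any online algorithm $\ALG$ reduces to $\max_{\predictions,\demand}\expect{\Cost[\demand]{\ALG(\predictions)}} - 0$, which is exactly the cost guarantee of \Cref{def:minmaxcost}. Hence the regret guarantee and the cost guarantee are equal as functions of $\ALG$ over the common domain of feasible online algorithms. I would then conclude by the elementary observation that two objective functions that coincide on a shared domain have identical minimizer sets: $\ALG$ minimizes one if and only if it minimizes the other. This is precisely the statement that minimax optimality is equivalent to regret optimality.

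There is no substantive obstacle here; the only point requiring care is the feasibility-preserving scaling used to show that the benchmark achieves zero cost. This is exactly where Assumption~\ref{asp:sufficient initial supply} enters, and it relies on the fluid (fractional) relaxation together with the fact that the supply constraints are closed under downward scaling --- a routine verification rather than a genuine difficulty. If one wished to avoid the degenerate case $\pRzero=0$, it suffices to note that $\demand=0$ then holds trivially and the empty profile already matches demand exactly.
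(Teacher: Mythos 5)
Your proof is correct and follows essentially the same route as the paper: both arguments show that under Assumption~\ref{asp:sufficient initial supply} the clairvoyant benchmark can exactly match the realized demand (hence $\Cost[\demand]{\benchmark(\predictions,\demand)}=0$ for every $\predictions,\demand$), so the regret guarantee coincides pointwise with the cost guarantee of \Cref{def:minmaxcost} and the two notions of optimality trivially agree. The only difference is that you spell out the downward-scaling step establishing feasibility of a profile that hires exactly $\demand$ workers, a detail the paper leaves implicit.
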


\subsection{Competitive Ratio}
The classic definition of the competitive ratio in the online algorithm design literature is as follows: given an instance $\instance$, the \emph{competitive ratio} of an online algorithm $\ALG$ is defined as
\begin{align*}
    \max_{\predictions,\demand}~\frac{\expect{\Cost[\demand]{\ALG(\predictions}}}{\Cost[\demand]{\benchmark(\predictions,\demand)}}
\end{align*}
where $\ALG(\predictions)$ is the (possibly randomized) staffing profile generated by algorithm $\ALG$ under prediction sequence $\predictions= \{\predictiont\}_{t\in[T]}$, and $\benchmark(\predictions, \demand)$ is the optimal staffing profile generated by the optimal clairvoyant benchmark that knows the entire prediction sequence $\predictions$ and demand $\demand$. 

Following the same argument as the previous subsection, under \Cref{asp:sufficient initial supply}, the benchmark $\Cost[\demand]{\benchmark(\predictions,\demand)} = 0$ for all prediction sequences and demand. Consequently, the problem becomes trivial since all online algorithms have the same competitive ratio.

\revcolor{
\section{Refined Characterization of \texorpdfstring{$\optcost$}{GammaStar} in Section~\ref{sec:simple instance}}
\label{apx:refined-gamma-single-pool}
We have characterized $\optcost$ as the solution of a fixed-point equation in \Cref{sec:simple instance} in our warm-up single-pool scenario in its general form and later showed in \Cref{sec:base model} that this quantity could alternatively be viewed as the optimal objective value for a special case of \ref{eq:opt reduced form} with a single pool. 

A follow-up question to the above characterization is whether we can obtain a more explicit characterization for this quantity for certain primitives of the model. In this section, our aim is to answer this question by providing a refined (and more explicit) characterization of the optimal minimax cost $\optcost$ for a simple and stylized single-pool instance. We conjecture that the explicit characterization of $\optcost$ for more general instances (e.g., with more than a single pool) is challenging and defer it to future research.  

\xhdr{Setup.} There is a single pool of workforce with initial supply $\supply$. We normalize the initial range of unknown demand $\demand$ as $\pLzero = 0$ and $\pRzero = 1$. The availability curve satisfies $\wdiscount_t = \eta^{t}$. The prediction error upper bounds satisfy $\perror_t = 1 - \perror^{T - t}$. Both parameters $\eta$ and $\perror$ are between 0 and 1. 

\xhdr{Characterization.} Solving the fixed point condition in \Cref{prop:opt alg simple instance}, the optimal minimax cost $\optcost$ admits the following characterization: there are two cases depending on the amount of initial supply pool size $\supply$:

\smallskip
\noindent
    (I) \underline{\textsl{Low initial supply:}}
    Suppose initial supply pool size $\supply$ satisfies
    \begin{align*}
        \supply \leq \frac{\undercost + \overcost\perror^{T-1}}{(\undercost+\overcost)\eta}~,
    \end{align*}
    which is the rearrangement of Condition~\eqref{eq:low-supply-fixed-point} in \Cref{sec:simple instance}.
    Then the optimal minimax cost $\optcost$ is 
    \begin{align*}
        \optcost = \undercost - \undercost\eta \supply~.
    \end{align*}

\smallskip
\noindent
     (II) \underline{\textsl{Sufficient initial supply:}} Suppose initial supply pool size $\supply$ satisfies
    \begin{align*}
        \supply \geq \frac{\undercost + \overcost\perror^{T-1}}{(\undercost+\overcost)\eta}~.
    \end{align*}
    Define auxiliary function $t^{\dagger}(x)$ as 
    \begin{align*}
        t\primed(x) \triangleq 
        \min\left\{ 
        \left\lceil
        \frac{\ln\left(
         \left(\eta \supply - \frac{x}{\overcost} - \perror^{T-1}\right)\cdot \perror^{1-T}\cdot (1 - \perror)^{-1}
        \cdot(1 - \perror\eta) + 1
        \right)}{-\ln\perror - \ln \eta}
        +1
        \right\rceil
        ,
        T
        \right\}~.
    \end{align*}
    For additional intuition, $t^{\dagger}(\targetcost)$ denotes the last day of hiring under the Greedy-Staffing algorithm (\Cref{alg:opt simple instance}) with the target overstaffing cost of $\targetcost$, as illustrated in \Cref{fig:geometric-proof}.
    The optimal minimax cost $\optcost$ is the unique solution of the following equation (with variable $\targetcost$):\footnote{The uniqueness of the solution for the equation is shown in \Cref{prop:opt alg simple instance}, and can also be inferred from \Cref{thm:opt alg}.}
    \begin{align*}
        \left(\frac{\overcost}{\undercost} + \frac{1}{\overcost} - 
        \frac{\eta^{t\primed(\targetcost)-1}}{\overcost}
        \right)
        \targetcost 
        =
        \perror^{T - t\primed(\targetcost) + 1}
        -
        \eta^{t\primed(\targetcost) - 1}
        \left(
        s\eta - {\perror^{T - 1}}
        -
        {(1-\perror)\perror^{T - 1}\left(\left({\perror\eta}\right)^{2-t\primed(\targetcost)}-1\right)}{(1-\perror\eta)^{-1}}
        \right)~.
    \end{align*}
    In the following, we present the detailed derivation of the function $t^{\dagger}(x)$ and the equation for $\optcost$.

    Recall that for any given overstaffing cost upper bound $\targetcost$, under the worst-case prediction sequence specified in \Cref{lem:max-supply-consuming-sequence}, the Greedy-Staffing algorithm (\Cref{alg:opt simple instance}) hires $\bar \pL_1 + \targetcost/\overcost = \pRzero - \perror_1 + \targetcost/\overcost$ in day 1, and $\bar \pL_2 + \targetcost/\overcost - (\bar \pL_1 + \targetcost /\overcost) = \bar \pL_2 - \bar \pL_1 = \perror_1 - \perror_2$ in day~2, and so on. Hence, the last day of hiring  $t^{\dagger}$ under the worst-case prediction sequence satisfies
    \begin{align*}
        \frac{1}{\wdiscount_1}\left(\pRzero - \perror_1 + \frac{\targetcost}{\overcost}\right)
        +
        \sum_{t\in[2:t\primed - 1]}
        \frac{1}{\wdiscount_t} \left(\perror_{t-1} - \perror_{t}\right)
        \leq 
        \supply 
        < 
        \frac{1}{\wdiscount_1}\left(\pRzero - \perror_1 + \frac{\targetcost}{\overcost}\right)
        +
        \sum_{t\in[2:t\primed]}
        \frac{1}{\wdiscount_t} \left(\perror_{t-1} - \perror_{t}\right)~,
    \end{align*}
    where the left-hand side is the projected cumulative hiring from day 1 to day $t\primed - 1$, and the right-hand side is the projected cumulative hiring from day 1 to day $t\primed$ (if supply was not exhausted on day $t\primed$).
    Since $\perror_t = 1 - \perror^{T - t}$, $\wdiscount_t = \eta^t$, and $\pRzero = 1$, the above condition can be rewritten as 
    \begin{align*}
        \frac{1}{\eta}\left(\perror^{T-1} + \frac{\targetcost}{\overcost}\right)
        +
        \sum_{t\in[2:t\primed - 1]}
        \frac{1}{\eta^{t}} \left(\perror^{T - t + 1} - \perror^{T - t}\right)
        \leq {}
        \supply 
        < 
        \frac{1}{\eta}\left(\perror^{T-1} + \frac{\targetcost}{\overcost}\right)
        +
        \sum_{t\in[2:t\primed]}
        \frac{1}{\eta^{t}} \left(\perror^{T - t + 1} - \perror^{T - t}\right)~.
        \end{align*}
    Multiplying all sides by $\eta$ and then subtracting all sides by $\left(\perror^{T-1} + \frac{\targetcost}{\overcost}\right)$, it becomes 
    \begin{align*}
        \sum_{t\in[2:t\primed - 1]}
        \frac{1}{\eta^{t-1}} \left(\perror^{T - t} - \perror^{T - t + 1}\right)
        \leq {}
        & \eta \supply - \left(\perror^{T-1} + \frac{\targetcost}{\overcost}\right)
        < 
        \sum_{t\in[2:t\primed]}
        \frac{1}{\eta^{t-1}} \left(\perror^{T - t} - \perror^{T - t + 1}\right)~.
    \end{align*}
    Using the formula of the sum of the geometric progression, it becomes
    \begin{align*}
        (1 - \perror)\frac{1}{\eta}\perror^{T - 2}
        \frac{(\frac{1}{\perror\eta })^{t\primed - 2} - 1}{\frac{1}{\perror\eta } - 1}
        \leq {}
        & \eta \supply - \left(\perror^{T-1} + \frac{\targetcost}{\overcost}\right)
        < 
        (1 - \perror)\frac{1}{\eta}\perror^{T - 2}
        \frac{(\frac{1}{\perror\eta })^{t\primed - 1}-1}{\frac{1}{\perror\eta } - 1}~.
    \end{align*}
    Rearranging the term, we obtain
    \begin{align*}
    \left(\frac{1}{\perror\eta }\right)^{t\primed - 2}
    \leq 
        \left(\eta \supply - \perror^{T-1} - \frac{\targetcost}{\overcost}\right)\perror^{1-T}(1-\perror)^{-1}(1 - \perror\eta)
        +1
        <
        \left(\frac{1}{\perror\eta }\right)^{t\primed - 1}~,
    \end{align*}
    and thus
    \begin{align*}
        t\primed - 2\leq \frac{1}{-\ln\perror - \ln \eta}
        \ln \left(
        \left(\eta \supply - \perror^{T-1} - \frac{\targetcost}{\overcost}\right)\perror^{1-T}(1-\perror)^{-1}(1 - \perror\eta)
        +1
        \right)
        <
        t\primed - 1~,
    \end{align*}
    which is consistent with our definition of $t\primed(\targetcost)$ function.\footnote{There is also a boundary case where the supply is not exhausted after day T. In this case, we set $t\primed = T$.}

    Given the above closed-form for  $t\primed(\targetcost)$, we next verify the aforementioned equation for $\optcost$. As we argued in \Cref{sec:simple instance} (e.g., see \Cref{fig:geometric-proof} and proof of \Cref{prop:opt alg simple instance}) using the fixed-point approach, under the worst case prediction sequence, the following equation holds for $\targetcost=\optcost$:
    \begin{align*}
        \overcost \cdot \targetcost = \undercost \cdot \left(\pRzero - 
        \left( \underbrace{\bar L_{t\primed - 1} + \frac{\targetcost}{\overcost}}_{\text{hiring in first $t\primed - 1$ days}} + \underbrace{\wdiscount_{t\primed}\left(\supply - 
        \left(\frac{1}{\wdiscount_1}\left(\pRzero - \perror_1 + \frac{\targetcost}{\overcost}\right)
        +
        \sum_{t\in[2:t\primed - 1]}
        \frac{1}{\wdiscount_t} \left(\perror_{t-1} - \perror_{t}\right)
        \right)\right)}_{\text{hiring in day $t\primed$}} \right) \right)~,
    \end{align*}
    where the left-hand and right-hand sides are the worst-case overstaffing cost and worst-case understaffing cost, respectively. Dividing both sides by $\undercost$ and invoking $\pRzero = 1$, $\bar \pL_{t\primed - 1} = 1 - \perror^{T - t\primed + 1}$, and $\wdiscount_t = \eta^{t}$, we obtain 
    \begin{align*}
        \frac{\overcost}{\undercost}\cdot \targetcost
        =
        \perror^{T - t^{\dagger} + 1}
        -
        \frac{\targetcost}{\overcost}
        -
        \eta^{t^{\dagger}}
        \left(
        \supply - \frac{1}{\eta}\left(\perror^{T-1} + \frac{\targetcost}{\overcost}\right)
        -
        (1 - \perror)\frac{1}{\eta^2}\perror^{T - 2}
        \frac{(\frac{1}{\perror\eta })^{t\primed - 2} - 1}{\frac{1}{\perror\eta } - 1}
        \right)~,
    \end{align*}
    which is equivalent to 
    \begin{align*}
        \left(\frac{\overcost}{\undercost} + \frac{1}{\overcost} - 
        \frac{\eta^{t\primed-1}}{\overcost}
        \right)
        \targetcost
        =
        \perror^{T - t^{\dagger} + 1}
        -
        \eta^{t\primed - 1}
        \left(
        \supply\eta - {\perror^{T - 1}}
        -
        {(1-\perror)\perror^{T - 1}\left(\left({\perror\eta}\right)^{2-t\primed}-1\right)}{(1-\perror\eta)^{-1}}
        \right)~.
    \end{align*}
    The above calculations complete the verification of the aforementioned equation for $\optcost$ as desired.}

\section{Missing Proofs}
\label{sec:apx-missing}

\subsection{Proof of Lemma~\ref{lem:max-supply-consuming-sequence}}
\label{apx:maxsupplysequence}
\maxsupplysequence*

\begin{proof}
    Let $\predictions\primed$ be prediction sequence that maximizes the understaffing cost of \Cref{alg:opt simple instance} against worst-case demand. Suppose $\predictions\primed$ is not equivalent to $\overline{\predictions}$ in the lemma statement. We consider the following two-step argument.

    \xhdr{Step 1:}
    Let $k\in[T]$ be the smallest index such that $\pR_k\primed < \pRzero$. If no such $k$ exists, move to step 2. Otherwise, we construct a new prediction sequence $\predictions\doubleprimed$ as follows:
    \begin{align*}
        t\in[k - 1]:&\qquad\qquad
        \pL_t\doubleprimed = \pL_t\primed,\quad
        \pR_t\doubleprimed = \pR_t\primed,\quad
        \\
        t\in[k:T]:&\qquad\qquad
        \pL_t\doubleprimed = \pL_t\primed + (\pRzero - \pR_k\primed),\quad
        \pR_t\doubleprimed = \pR_t\primed + (\pRzero - \pR_k\primed),\quad
    \end{align*}
    We claim that the worst understaffing cost under $\predictions\doubleprimed$ is weakly higher than $\predictions\primed$. To see this, let $t\primed$ be the day that the supply feasibility binds in the algorithm. If $t\primed \leq k$, by construction the staffing decision under $\predictions\doubleprimed$ is the same as $\predictions\primed$, and consequently $\predictions\doubleprimed$ leads to a weakly higher understaffing cost since $\pR_T\doubleprimed > \pR_T\primed$.
    If $t\primed > k$, by construction the staffing decision under $\predictions\doubleprimed$ is the same as $\predictions\primed$ before day $k$, and after day $k$ until the day $t\doubleprimed$ when supply feasibility binds under prediction sequence $\predictions\doubleprimed$. We have $k\leq t\doubleprimed\leq t \primed$ by definition. Moreover, the change in the total hiring from prediction sequence $\predictions\primed$ to $\predictions\doubleprimed$ is 
    \begin{align*}
        \underbrace{(\pRzero - \pR_k\primed)}_{\text{increase of hires in day $k$}} - \underbrace{(\pRzero - \pR_k\primed)\cdot \frac{\wdiscount_{t\doubleprimed}}{\wdiscount_k}}_{\text{decreases of hires in day $t\doubleprimed$}}
    \end{align*}
    which is weakly less than the increase of $\pRzero - \pR_k\primed$ in the worst-case demand, which ensures our claim.

    Repeating the above construction, we obtain a new $\predictions\primed$ with $\pR_t\primed = \pRzero$ for all $t\in[T]$ that induces weakly higher worst-case understaffing cost. If $\predictions\primed$ is now equal to $\overline{\predictions}$, the lemma is shown. Otherwise, move to step 2.

    \xhdr{Step 2:} 
    Let $k\in[T]$ be the smallest index such that $\pR_k\primed - \pL_k\primed < \perror_k$. We construct a new prediction sequence $\predictions\doubleprimed$ as follows:
    \begin{align*}
        t\in[k - 1]:&\qquad\qquad
        \pL_t\doubleprimed = \pL_t\primed,\quad
        \pR_t\doubleprimed = \pR_t\primed,\quad
        \\
        &\qquad\qquad
        \pL_k\doubleprimed = \pR_k\primed - \perror_k,\quad
        \pR_k\doubleprimed = \pR_k\primed
        \\
        t\in[k - 1:T]:&\qquad\qquad
        \pL_t\doubleprimed = \pL_t\primed,\quad
        \pR_t\doubleprimed = \pR_t\primed,\quad
    \end{align*}
    We claim that the worst understaffing cost under $\predictions\doubleprimed$ is weakly higher than $\predictions\primed$. To see this, let $t\primed$ be the day that the supply feasibility binds in the algorithm. If $t\primed \leq k$, by construction the staffing decision under $\predictions\doubleprimed$ is the same as $\predictions\primed$, and consequently $\predictions\doubleprimed$ has the same understaffing cost.
    If $t\primed > k$, by construction the staffing decision under $\predictions\doubleprimed$ is the same as $\predictions\primed$ before day $k$, and after day $k$ until the day $t\doubleprimed$ when supply feasibility binds under prediction sequence $\predictions\doubleprimed$. We have $t\doubleprimed\geq t \primed$ by definition. Moreover, the change in the total hiring from prediction sequence $\predictions\primed$ to $\predictions\doubleprimed$ is 
    \begin{align*}
        \underbrace{-(\pR_k\primed-\pR_k\doubleprimed)}_{\text{decreases of hires in day $k$}} + \underbrace{(\pR_k\primed-\pR_k\doubleprimed)\cdot \frac{\wdiscount_{t\doubleprimed}}{\wdiscount_k}}_{\text{increases of hires in day $t\doubleprimed$}}
    \end{align*}
    which is non positive, and thus our claim holds.

    Repeating the above construction, we obtain  $\overline{\predictions}$ whose worst-case understaffing cost is weakly higher. Therefore, the lemma is shown.
\end{proof}

\subsection{Proof of Proposition~\ref{prop:opt alg simple instance}}
\label{apx:optalgsimpleinstance}
\optalgsimpleinstance*
\begin{proof}
Following the argument in Observation (i) and (ii), \Cref{alg:opt simple instance} with $\targetcost = \optcost$ is minimax optimal. It remains to show the second part of the proposition statement, i.e., 
$\optcost=\undercost\cdot (\pRzero - \wdiscount_1\cdot\supply)$ if inequality~\eqref{eq:low-supply-fixed-point} holds, and otherwise it is the fixed point of the function $\underline{\Gamma}$.

By \Cref{lem:max-supply-consuming-sequence}, it suffices to analyze the execution of \Cref{alg:opt simple instance} under prediction sequence $\overline{\predictions}$.
Fix an arbitrary cost $\targetcost$ as the input of \Cref{alg:opt simple instance}. Let $t^{\dagger}$ be the day when the supply feasibility binds (or day $T$). Now, the adversary either picks $d=L_{t^{\dagger}}$, where in that case the algorithm pays an overstaffing cost of no more than $\targetcost$, or picks $d=R_{t^{\dagger}}=R_0$, where in that case the algorithm pays an understaffing cost that we denote by $\underline{\Gamma}$. We claim that $\underline{\Gamma}(\cdot)$ is a (weakly) decreasing function of $\Gamma$. To see this, suppose we increase $\Gamma$ by $\partial\Gamma$. The change of total hires is 
\begin{align*}
    \underbrace{\partial\targetcost}_{\text{increase of hires in day $1$}} - \underbrace{\partial\targetcost\cdot \frac{\wdiscount_{t\primed}}{\wdiscount_1}}_{\text{decreases of hires in day $t\primed$}}
\end{align*}
which is non positive.

Given the monotonicity of the function $\underline{\Gamma}(\cdot)$, one of these two cases can occur depending on the amount of initial supply pool size $s$:

\begin{enumerate}[label=(\Roman*)]
    \item \emph{Sufficient initial supply:} $\underline{\Gamma}(\cdot)$ has a \emph{fixed point} $\Gamma$ in $[0,\overcost\cdot(\rho_1\cdot s-\overline{L}_1)]$, that is, $\Gamma\in [0,\overcost\cdot(\rho_1\cdot s-\overline{L}_1)]$ such that $\underline{\Gamma}(\Gamma)=\Gamma$. In this case, we claim $\optcost=\targetcost$. Suppose by contradiction $\optcost\neq\targetcost$. If $\optcost<\targetcost$, then as $\underline{\Gamma}(\cdot)$ is weakly decreasing we have $\underline{\Gamma}(\optcost)\geq \underline{\Gamma}(\Gamma)=\Gamma>\optcost$, which is a contradiction, as the maximum understaffing cost of the minimax optimal algorithm (i.e.,  \Cref{alg:opt simple instance} with $\optcost$ as input) cannot exceed $\optcost$. If $\optcost>\targetcost$, then $\underline{\Gamma}(\Gamma)=\Gamma<\optcost$, so both understaffing and overstaffing costs of \Cref{alg:opt simple instance} with $\targetcost$ are strictly smaller than $\optcost$, again a contradiction to the minimax optimality of \Cref{alg:opt simple instance} with $\optcost$ as input.
      \item \emph{Low initial supply:} $\underline{\Gamma}(\cdot)$ has no fixed point in $[0,\overcost\cdot(\rho_1\cdot s-\overline{L}_1)]$, that is, $\underline{\Gamma}(\Gamma')>\Gamma'$ for all $\Gamma'\in [0,\overcost\cdot(\rho_1\cdot s-\overline{L}_1)]$. As $\underline{\Gamma}(\cdot)$ is weakly decreasing, this occurs if and only if (see \Cref{fig:geometric-proof}): 
     \begin{align*}
    \underline{\Gamma}\left(\overcost\cdot(\rho_1\cdot s-\overline{L}_1)\right)=\underbrace{\undercost\cdot (\pRzero - \wdiscount_1\cdot\supply)}_{\substack{\textrm{max understaffing cost}\\\textrm{when $x_1=\wdiscount_1\cdot\supply$}}} 
    >
    \underbrace{\overcost\cdot (\wdiscount_1 \cdot \supply - \overline{L}_1)}_{\substack{\textrm{max overstaffing cost} \\\textrm{when $x_1=\wdiscount_1\cdot\supply$}}}
\end{align*}
    In this case, we claim $\optcost= \underline{\Gamma}\left(\overcost\cdot(\rho_1\cdot s-\overline{L}_1)\right)=\undercost\cdot (\pRzero - \wdiscount_1\cdot\supply)$. To see this, first note that $\optcost>\overcost\cdot(\rho_1\cdot s-\overline{L}_1)$, because otherwise $\underline{\Gamma}(\optcost)>\optcost$, which is a contradiction, as the maximum understaffing cost of the minimax optimal algorithm (i.e.,  \Cref{alg:opt simple instance} with $\optcost$ as input) cannot exceed $\optcost$. Now, if we run \Cref{alg:opt simple instance} with such a $\optcost$ as input, it hires $x_1=\rho_1\cdot s$ on day $1$ and runs out of supply later. Due to inequality~\eqref{eq:low-supply-fixed-point}, the maximum understaffing cost is more than the maximum understaffing cost, and hence $\optcost=\undercost\cdot(\pRzero - \wdiscount_1\cdot\supply)$. 
\end{enumerate}
Combining two cases, the second half of the proposition statement is shown.
\end{proof}

\subsection{Proof of Lemma~\ref{lem:opt reduced form lower bound optimal minimax cost}}
\label{apx:reduced-form-lower}
\lemmalower*
\begin{proof}
    To show this lemma, we construct a feasible solution for \ref{eq:opt reduced form}, whose objective value is equal to the cost guarantee of algorithm $\ALG$.
    
    Consider the subset of prediction sequences $\{\predictions\ked\}_{k\in[T]}$ defined in equation~\eqref{eq:single switch prediction construction} (\Cref{sec:base model result}). Recall that by construction, for every $t \in[T]$, the first $t$ predictions from day 1 to day $t$ are the same for all prediction sequences $\predictions\ked$ with $k\geq t$. Therefore, the staffing decision of the online algorithm $\ALG$ (possibly randomized) on each day $t$ should be the same for all prediction sequences $\predictions\ked$ with $k \geq t$.
    
    Given this observation, let the random variable $\randomallocit$ be the number of workers hired by the algorithm from pool $i$ in day $t$ under prediction sequence $\predictions\Ted$.
    Due to the feasibility of the algorithm under prediction sequence $\predictions\Ted$, for all sample paths (over the randomness of the algorithm), we have
    \begin{align*}
    \forall i\in[n]:\quad & 
            \displaystyle\sum\nolimits_{t\in[T]}\frac{1}{\wdiscountit}\randomallocit \leq \supplyi
    \end{align*}
    We introduce two auxiliary notations $\lgap$ and $\rgap$ defined as
    \begin{align*} 
        &
        \qquad
        \lgap \gets
        \max_{k\in[T]}~
        \plus{
        \expect{\sum\nolimits_{i\in[n]}\sum\nolimits_{t\in[k]}\randomallocit} - \left(\max_{\tau\in[0:k]}\pRzero  - \perror_{\tau} - 2 \pbias_\tau\right)
        }
        \\
        & \qquad \rgap \gets 
        \plus{
        \pRzero - \expect{\sum\nolimits_{i\in[n]}\sum\nolimits_{t\in[T]}\randomallocit}
        }
    \end{align*}
    Consider the following candidate (not necessarily optimal) solution $(\xbf,\targetcost)$ for \ref{eq:opt reduced form}:
    \begin{align*}
        i\in[n],t\in[T]:&
        \qquad
        \allocit \gets \expect{\randomallocit}
        \\
        &
        \qquad
        \targetcost \gets \max\{\overcost\cdot \lgap, \undercost\cdot \rgap\}
    \end{align*}
    By construction, all three constraints are satisfied. 
    Below we argue the objective value of the constructed solution is at most the cost guarantee of the algorithm $\ALG$ in two different cases.

    \xhdr{Case 1 $\left[\undercost\cdot\rgap \geq 
    \overcost\cdot\lgap\right]$:}
    In this case, the objective value of the constructed solution is 
    $\undercost\cdot\rgap$.
    Consider the execution of the algorithm $\ALG$ under prediction sequence $\predictions\Ted$ and demand $\demand \triangleq \pRzero$. Note that the $\pbiass$-consistency is satisfied given constructed demand $\demand$ and prediction sequence $\predictions\Ted$. Moreover, the staffing cost can be lower bounded as 
    \begin{align*}
        \expect{\Cost[\demand]{\ALG(\predictions\Ted)}}
        &\overset{(a)}{\geq}
        \expect{
        \undercost\cdot
        \plus{
        \demand - 
        \sum\nolimits_{i\in[n]}
        \sum\nolimits_{t\in[T]}
        \randomallocit 
        }
        }
        \\
        &\overset{(b)}{\geq}
        \undercost
        \cdot
        \plus{ 
        \demand - 
        \expect{\sum\nolimits_{i\in[n]}
        \sum\nolimits_{t\in[T]}
        \randomallocit}
        }
        \\
        &\overset{(c)}{=}
        \undercost
        \cdot
        \plus{ 
        \pRzero - 
        \expect{\sum\nolimits_{i\in[n]}
        \sum\nolimits_{t\in[T]}
        \randomallocit}
        }
        \\
        &\overset{(d)}{=}
        \undercost\cdot \rgap
    \end{align*}
    where inequality~(a) holds by considering understaffing cost only;
    inequality~(b) holds due to the convexity of
    $\plus{\cdot}$ and Jensen's inequality;
    equality~(c) holds due to the choice of $\demand$;
    and
    equality~(d) holds due to the construction of $\rgap$.

    \xhdr{Case 2 $\left[\undercost\cdot\rgap <
    \overcost\cdot\lgap\right]$:}
    In this case, the objective value of the constructed solution is 
    $\overcost\cdot\lgap$.
    Let $k\doubleprimed$ be the index such that $\lgap = 
        \plus{
        \expect{\sum\nolimits_{i\in[n]}\sum\nolimits_{t\in[k]}\randomallocit} - \left(\max_{\tau\in[0:k]}\pRzero  - \perror_{\tau} - 2 \pbias_\tau\right)
        }$.
    Consider the execution of the algorithm $\ALG$ under prediction sequence $\predictions^{(k\doubleprimed)}$ and demand $\demand \triangleq \pL_T^{(k\doubleprimed)} = \max_{\tau\in[0:k\doubleprimed]}\pRzero  - \perror_{\tau} - 2 \pbias_\tau$. Note that the $\pbiass$-consistency is satisfied given constructed demand $\demand$ and prediction sequence $\predictions^{(k\doubleprimed)}$. Moreover, the staffing cost can be lower bounded as 
    \begin{align*}
        \expect{\Cost[\demand]{\ALG(\predictions^{(k\doubleprimed)})}}
        &\overset{(a)}{\geq}
        \expect{
        \overcost
        \cdot
        \plus{
        \sum\nolimits_{i\in[n]}
        \sum\nolimits_{t\in[k\doubleprimed]}
        \randomallocit 
        -
        \demand 
        }
        }
        \\
        &\overset{(b)}{\geq}
        \overcost
        \cdot
        \plus{ 
        \expect{\sum\nolimits_{i\in[n]}
        \sum\nolimits_{t\in[k\doubleprimed]}
        \randomallocit}
        -
        \demand 
        }
        \\
        &\overset{(c)}{=}
        \overcost
        \cdot
        \plus{ 
        \expect{\sum\nolimits_{i\in[n]}
        \sum\nolimits_{t\in[k\doubleprimed]}
        \randomallocit}
        -
        \pRzero + \perror_{k\doubleprimed}
        }
        \\
        &\overset{(d)}{=}
        \overcost\cdot\lgap 
    \end{align*}
    where inequality~(a) holds by considering overstaffing cost only and lower bounding the total number of hired workers as $\sum_{i\in[n]}\sum_{t\in[k\doubleprimed]}\randomallocit$;
    inequality~(b) holds due to the convexity of $\plus{\cdot}$ and Jensen's inequality;
    equality~(c) holds due to the choice of $\demand$;
    and
    equality~(d) holds due to the construction of $\lgap$.
    This completes the proof of \Cref{lem:opt reduced form lower bound optimal minimax cost}.
\end{proof}

\subsection{Proof of Theorem~\ref{thm:opt alg resolving}}
\label{apx:opt alg resolving}
\revcolor{
\thmoptalgresolving*
\label{apx:proof-resolving}
\begin{proof}
    We prove the theorem by an induction argument. We claim that for any day $t\in[T]$ and any current state $\left(\mathbf{\bar{\supply}},\mathbf{\bar{\cumalloc}},\bar{\boldsymbol{\wdiscount}},\bar{\pR}\right)$ of that day (see their definitions in {\OPTReS}), the cost guarantee of {\OPTReS} (conditioned on the current state) is equal to the optimal objective value of the following program solved by the algorithm:
       \begin{align}
       \label{eq:lp resolve}
        \begin{array}{llll}
    \min\limits_{\substack{\xbf,\targetcost\geq \zerobf}}\quad\quad
    &
    \targetcost
    & 
    \text{s.t.}
    \\
    &
    \displaystyle\sum\nolimits_{\tau\in [t:T]}\frac{1}{\bar{\wdiscount}_{i\tau}}\alloc_{i\tau} \leq \bar{\supply}_i
    &
    i\in[n]
    \\
    &
    \displaystyle\sum\nolimits_{i\in[n]}\left(\bar{\cumalloc}_i+\sum\nolimits_{\tau\in[t:k]}
    \alloc_{i\tau}\right) \leq \max_{\tau\in[t:k]}\left({\bar{\pR} - \perror_{\tau} - 2\pbias_{\tau}}\right) + \frac{\targetcost}{\overcost}
    \quad\quad
    &
    k\in[t:T]
    \\
    &
    \displaystyle\sum\nolimits_{i\in[n]}\left(\bar{\cumalloc}_i+\sum\nolimits_{\tau\in[t:T]}
    \alloc_{i\tau}\right) \geq \bar{\pR} - \frac{\targetcost}{\undercost}
    &
\end{array}
\end{align}
    We note that this claim implies the theorem: at day $0$, above program reduces to \ref{eq:opt reduced form}, whose optimal objective value is, by \Cref{thm:opt alg}, the optimal minimax cost $\optcost$.
    Below we prove our claim using an induction argument over $t\in[0:T]$.

    \xhdr{Base Case ($t = T$):} In this case, note that demand $\demand$ can take any value such that 
    \begin{align*}
        \bar{\pR} - \perror_T - 2\pbias_T
        \leq \demand \leq \bar{\pR}
    \end{align*}
    due to the construction of the current prediction upper bound $\bar{\pR}$ together with \Cref{asp:prediction sequence}. Hence, given any feasible staffing profile $\{\alloc_{iT}\}_{i\in[n]}$, its staffing cost is 
    \begin{align*}
        &\undercost\cdot\plus{\demand  -\sum\nolimits_{i\in[n]}\cumalloc_i + \alloc_{iT}}
        \vee
        \overcost\cdot \plus{\sum\nolimits_{i\in[n]}\cumalloc_i + \alloc_{iT} - \demand}
        \\
        ={}&
        \undercost\cdot \plus{
        \bar{\pR} -  \left(\sum\nolimits_{i\in[n]}\cumalloc_i + \alloc_{iT}\right)
        }
        \vee
        \overcost\cdot \plus{
        \left(\sum\nolimits_{i\in[n]}\cumalloc_i + \alloc_{iT}\right)
        -
        \bar{\pR} - \perror_T - 2\pbias_T
        }
    \end{align*}
    which is equal to the minimum objective value of program~\eqref{eq:lp resolve} when partial solution $\{\alloc_{iT}\}_{i\in[n]}$ is fixed. Therefore, the claim holds as desired.

    \xhdr{Inductive step ($t\in[0:T-1]$):}
    Assume the claim holds for day $t+1$ and all states for day $t+1$. 
    Fix an arbitrary state $\left(\mathbf{\bar{\supply}},\mathbf{\bar{\cumalloc}},\bar{\boldsymbol{\wdiscount}},\bar{\pR}\right)$ for day $t$, and consider program~\eqref{eq:lp resolve} under this current state. 
    
    Consider a hypothetical execution of {\OPTSim} based on program~\eqref{eq:lp resolve} from day $t$ to day $T$. 
    By construction, {\OPTReS} chooses the same staffing profile as {\OPTSim} on day $t$ (although their staffing profiles may differ in future days). 
    Therefore, both algorithms transition to the same updated state on day $t+1$ for every possible realization of predictions revealed on day $t+1$. 
    By the inductive hypothesis, under this updated state on day $t+1$, the cost guarantee of {\OPTReS} is optimal and hence no larger than that of {\OPTSim}. 
    Consequently, the cost guarantee of {\OPTReS} on day $t$ is no larger than that of {\OPTSim} on day $t$. 
    Moreover, by \Cref{thm:opt alg}, running {\OPTSim} from day $t$ is minimax optimal (conditioned on the current state), and its optimal cost guarantee equals the optimal objective value of program~\eqref{eq:lp resolve} under the current state. 
    Therefore, the cost guarantee of {\OPTReS} on day $t$ is also equal to the optimal objective value of program~\eqref{eq:lp resolve} under the current state, completing the inductive step.

    Therefore, by the base case and the inductive step, our claim follows by induction and the proof of \Cref{thm:opt alg resolving} is completed.
\end{proof}
}

\subsection{Proof of Theorem~\ref{thm:opt alg multi station}}
\label{apx:optalginfty}
\optalgmultistation*

In this subsection, we prove \Cref{thm:opt alg multi station} for the multi-station environment. We present two similar but non-identical analysis for the egalitarian and utilitarian staffing cost functions, respectively. For the egalitarian staffing cost, we consider a slightly more general model by allowing station dependent, weakly convex staffing cost functions.

\subsubsection{Egalitarian Staffing Cost}
Here we show \Cref{thm:opt alg multi station} for the egalitarian staffing cost in a more general model. Specifically, we allow station-dependent overstaffing cost function $\overcostj:\reals_+\rightarrow\reals_+$ and understaffing cost function $\undercostj:\reals_+\rightarrow\reals_+$ that are weakly increasing, weakly convex with $\overcostj(0) = \undercostj(0) = 0$. In this generalized model, we modify the constraints of program~\ref{eq:opt reduced form multi station}. Specifically, we generalize its second and third constraints as 
\begin{align*}
    \overcostj\left(\plus{\displaystyle\sum\nolimits_{i\in[n]}\sum\nolimits_{t\in[k]}
    \allocijt - \left(\pRjzero - \perrorjk\right)}\right) \leq  \targetcost_j
    \quad\quad
    &j\in[m], k\in[T]
    \\
    \undercostj\left(\plus{\pRjzero - \displaystyle\sum\nolimits_{i\in[n]}\sum\nolimits_{t\in[T]}
    \allocijt} \right)
    \leq
     \targetcost_j
    \quad\quad
    &j\in[m]
\end{align*}
Since both under/overstaffing cost functions $\undercostj(\cdot)$, $\overcostj(\cdot)$ are convex, the modified version of program~\ref{eq:opt reduced form multi station} is a convex program.

We first show that program~\ref{eq:opt reduced form multi station} is a lower bound of the optimal minimax cost $\optcost$ in the multi-station environment with egalitarian staffing cost.

\begin{lemma}
\label{lem:opt reduced form lower bound optimal minimax cost L infty}
    In the multi-station environment with egalitarian staffing cost, for every (possibly randomized) online algorithm $\ALG$, its cost guarantee is at least the optimal objective value of program~\ref{eq:opt reduced form multi station}.
\end{lemma}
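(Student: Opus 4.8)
The plan is to replay the argument of \Cref{lem:opt reduced form lower bound optimal minimax cost}, now restricting the adversary to \emph{per-station} single-switch prediction sequences and exploiting that the egalitarian objective only needs one station to be expensive. For each station $j$, consider the single-switch sequences $\{\predictions_j^{(k)}\}_{k\in[T]}$ built as in \eqref{eq:single switch prediction construction} (with $\pbiass=\zerobf$), and let $\randomallocijt$ denote the random number of workers routed from pool $i$ to station $j$ on day $t$ when \emph{every} station is fed its all-high sequence, i.e. under the switch vector $(T,\dots,T)$. I would first record the invariance observation underlying the single-station proof: the history seen up to day $t$ under any switch vector $\mathbf{k}$ with $k_{j'}\ge t$ for all $j'$ coincides with the all-high history, so the algorithm's day-$t$ decision is measurable with respect to that common history. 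In particular, if station $j$ switches at $k_j=k$ while all other stations stay high, then for every day $t\le k$ the observed predictions of \emph{all} stations equal the all-high ones, and hence the cumulative hires to station $j$ through day $k$ equal $\sum_{i\in[n]}\sum_{t\in[k]}\randomallocijt$.

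Next I would build a candidate solution of \ref{eq:opt reduced form multi station} from the all-high run: set $\allocijt\gets\expect{\randomallocijt}$, and for each station $j$ let $\targetcost_j$ be the smallest value making the two station-$j$ constraints hold, namely the maximum over the overstaffing terms $\overcostj\big(\plus{\expect{\sum_{i\in[n]}\sum_{t\in[k]}\randomallocijt}-(\pRjzero-\perrorjk)}\big)$ across $k\in[T]$ and the understaffing term $\undercostj\big(\plus{\pRjzero-\expect{\sum_{i\in[n]}\sum_{t\in[T]}\randomallocijt}}\big)$. Supply feasibility of the all-high run holds pathwise, so taking expectations and using linearity yields the first (pooled) constraint, while the remaining constraints hold by the very definition of $\targetcost_j$; thus this point is \emph{feasible}, and its objective $\max_j\targetcost_j$ upper-bounds the LP optimum. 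It therefore suffices to show that the cost guarantee of $\ALG$ is at least $\max_j\targetcost_j$.

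To close this, pick the station $j^\star$ attaining $\max_j\targetcost_j$ and split into two cases exactly as in the single-station proof. If $\targetcost_{j^\star}$ is set by its understaffing term, the adversary plays the all-high vector together with demand $\demand_{j^\star}=\pR_{j^\star 0}$ (other demands arbitrary); the egalitarian cost $\CostInf[\dbf]{\xbf}$ dominates station $j^\star$'s cost, and applying Jensen's inequality to the convex map $z\mapsto\undercost_{j^\star}(\plus{\pR_{j^\star 0}-z})$ recovers $\undercost_{j^\star}\big(\plus{\pR_{j^\star 0}-\expect{\sum_{i\in[n]}\sum_{t\in[T]}\randomalloc_{ij^\star t}}}\big)=\targetcost_{j^\star}$. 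If instead $\targetcost_{j^\star}$ is set by the overstaffing term at some index $k$, the adversary lets station $j^\star$ switch at $k$ (all others high) and plays $\demand_{j^\star}=\pR_{j^\star 0}-\perror_{j^\star k}$; the invariance observation identifies the relevant cumulative hires with $\sum_{i\in[n]}\sum_{t\in[k]}\randomalloc_{ij^\star t}$, and Jensen on the convex map $z\mapsto\overcost_{j^\star}(\plus{z-(\pR_{j^\star 0}-\perror_{j^\star k})})$ again reproduces $\targetcost_{j^\star}$. Chaining cost guarantee $\ge\max_j\targetcost_j\ge$ LP-optimum finishes the proof. (Here convexity of the station cost functions, together with convexity of $\plus{\cdot}$, is exactly what makes the composed maps convex so that Jensen applies.)

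I expect the only genuinely new difficulty over \Cref{lem:opt reduced form lower bound optimal minimax cost} to be the \emph{coupling through the shared pools}: one must argue that a single candidate solution simultaneously meets the pooled supply constraint and certifies every per-station value $\targetcost_j$, which is why I anchor the whole construction to one common run (the all-high vector) and only vary the adversary's play station-by-station when extracting the cost lower bound. The egalitarian $\max$ is convenient here because the adversary may concentrate on one station at a time; by contrast, the utilitarian (sum) objective is where this decoupling breaks down, and the separate comparison of worst-case over- versus under-staffing (\Cref{lem:L infty overcost vs undercost}) becomes necessary.
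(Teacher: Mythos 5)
Your proof is correct and follows essentially the same route as the paper's: build a feasible point of \ref{eq:opt reduced form multi station} from the expected hires of the algorithm on the all-high run, then lower-bound the cost guarantee via Jensen's inequality (using convexity of $\plus{\cdot}$ composed with the weakly convex station costs) and a case split on whether the under- or over-staffing term binds. The only difference is the adversary used in the overstaffing case: the paper has \emph{all} stations switch simultaneously at the same day (sequence $\predictions^{(k)}$ with the all-low demand profile), whereas you switch only the critical station $j^\star$ and keep the others high --- both are valid here because the egalitarian max is lower-bounded by station $j^\star$'s cost alone, and your closing observation that this one-station-at-a-time decoupling is exactly what fails for the utilitarian sum (necessitating \Cref{lem:L infty overcost vs undercost}) matches the paper's treatment.
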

\begin{proof}
    In this argument, we construct a feasible solution of program~\ref{eq:opt reduced form multi station}, whose objective value is equal to the cost guarantee of algorithm $\ALG$.
    
    Consider a prediction sequence subset $\{\predictions\ked\}_{k\in[T]}$ parameterized by $k\in[T]$. Specifically, for each $k\in[T]$, prediction sequence $\predictions\ked = \{[\pLjt\ked,\pRjt\ked]\}_{j\in[m],t\in[T]}$ is constructed as follows:
    \begin{align*}
    \begin{array}{lll}
        t\in[k],j\in[m]:\qquad&
        \pLjt\ked \gets \pR_{j,t-1}\ked - \perrorjt,\quad 
        &\pRjt\ked \gets \pR_{j,t-1}\ked;
        \\
        t\in[k+1:T],j\in[m]:
        \qquad
        &
        \pLjt\ked \gets \pL_{j,t-1}\ked,\quad 
        &\pRjt\ked \gets \pL_{j,t-1}\ked + \perrorjt.
    \end{array}
    \end{align*}
    where $\pLjzero\ked = \pLjzero$ and $\pRjzero\ked = \pRjzero$. In short, this prediction sequence subset is constructed such that prediction $\{[\pLjt\ked,\pRjt\ked]\}_{j\in[m]}$ on every day $t$ are the same for all prediction sequence $\predictions\ked$ with $k\geq t$. Thus, no online algorithm can distinguish them. Namely, the staffing decision in each day $t$ should be the same under all prediction sequences $\predictions\ked$ with $k \geq t$. 
    
    Motivated by the prediction sequence construction above, we let random variable $\randomallocijt$ be the number of workers hired by the algorithm from pool $i$ to station $j$ in day $t$ under prediction sequence $\predictions\Ted$.
    Due to the feasibility of the algorithm under prediction sequence $\predictions\Ted$, for all sample paths (over the randomness of the algorithm), we have
    \begin{align*}
    \forall i\in[n]:\quad &
            \displaystyle\sum\nolimits_{t\in[T]}\frac{1}{\wdiscountit}\cdot\sum\nolimits_{j\in[m]}\randomallocijt \leq \supplyi
    \end{align*}
    Now consider the following solution $(\xbf,\targetcosts)$ construction with auxiliary variables $\{\lgapj,\rgapj\}_{j\in[m]}$:
    \begin{align*}
        i\in[n],j\in[m],t\in[T]:&
        \qquad
        \allocijt \gets \expect{\randomallocijt}
        \\
        j\in[m]:&
        \qquad
        \lgapj \gets
        \max_{k\in[T]}~
        \plus{
        \expect{\sum\nolimits_{i\in[n]}\sum\nolimits_{t\in[k]}\randomallocijt} - \pRjzero + \perrorjk
        }
        \\
        j\in[m]:
        & \qquad \rgapj \gets 
        \plus{
        \pRjzero - \expect{\sum\nolimits_{i\in[n]}\sum\nolimits_{t\in[T]}\randomallocijt}
        }
        \\
        j\in[m]:
        & \qquad
        \targetcost_j \gets \max\{\overcostj(\lgapj),\undercostj(\rgapj)\}
    \end{align*}
    By construction, all three constraints are satisfied. 
    Below we argue the objective value of the constructed solution is at most the cost guarantee of algorithm $\ALG$.

    Let $j\primed = \argmax_{j\in[m]} \undercostj(\rgapj)$ and 
    $j\doubleprimed = \argmax_{j\in[m]} \overcostj(\lgapj)$.
    Now we consider two cases.

    \paragraph{Case 1- $\undercost_{j\primed}(\rgap_{j\primed}) \geq 
    \overcost_{j\doubleprimed}(\lgap_{j\doubleprimed})$.}
    In this case, the objective value of the constructed solution is 
    $\undercost_{j\primed}(\rgap_{j\primed})$.
    Consider the execution of algorithm $\ALG$ under prediction sequence $\predictions\Ted$ and the demand profile $\dbf\primed \triangleq \{\pR_{jT}\Ted\}_{j\in[m]} = \{\pRjzero\}_{j\in[m]}$. Note that the staffing cost can be lower bounded as 
    \begin{align*}
        \expect{\CostInf[\dbf\primed]{\ALG(\predictions\Ted)}}
        &\overset{(a)}{\geq}
        \expect{
        \max_{j\in[m]} 
        \undercostj 
        \left(
        \plus{
        \demandj\primed - 
        \sum\nolimits_{i\in[n]}
        \sum\nolimits_{t\in[T]}
        \randomallocijt 
        }
        \right)
        }
        \\
        &\overset{(b)}{\geq}
        \max_{j\in[m]} 
        \undercostj
        \left(
        \plus{ 
        \demandj\primed - 
        \expect{\sum\nolimits_{i\in[n]}
        \sum\nolimits_{t\in[T]}
        \randomallocijt}
        }
        \right)
        \\
        &\overset{(c)}{=}
        \max_{j\in[m]} 
        \undercostj
        \left(
        \plus{ 
        \pRjzero - 
        \expect{\sum\nolimits_{i\in[n]}
        \sum\nolimits_{t\in[T]}
        \randomallocijt}
        }
        \right)
        \\
        &\overset{(d)}{=}
        \undercost_{j\primed}(\rgap_{j\primed}) 
    \end{align*}
    where inequality~(a) holds by considering understaffing cost only;
    inequality~(b) holds due to the convexity of $\max\{\cdot\}$, $\undercostj(\cdot)$,
    $\plus{\cdot}$ and Jensen's inequality;
    equality~(c) holds due to the construction of $\demandj\primed$;
    and
    equality~(d) holds due to the construction of $\rgap_{j\primed}$.

    \paragraph{Case 2- $\undercost_{j\primed}(\rgap_{j\primed}) < 
    \overcost_{j\doubleprimed}(\lgap_{j\doubleprimed})$.}
    In this case, the objective value of the constructed solution is 
    $\overcost_{j\doubleprimed}(\lgap_{j\doubleprimed})$.
    Let $k\doubleprimed$ be the index such that $\lgap_{j\doubleprimed} = \plus{\expect{\sum_{i\in[n],t\in[k]}\randomallocijt - \pRjzero + \perrorjk}}$.
    Consider the execution of algorithm $\ALG$ under prediction sequence $\predictions^{(k\doubleprimed)}$ and the demand profile $\dbf^{(k\doubleprimed)} \triangleq \{\pL_{jT}^{(k\doubleprimed)}\}_{j\in[m]} = \{\pRjzero - \perror_{jk\doubleprimed}\}_{j\in[m]}$. Note that the staffing cost can be lower bounded as 
    \begin{align*}
        \expect{\CostInf[\dbf^{(k\doubleprimed)}]{\ALG(\predictions^{(k\doubleprimed)})}}
        &\overset{(a)}{\geq}
        \expect{
        \max_{j\in[m]} 
        \overcostj 
        \left(
        \plus{
        \sum\nolimits_{i\in[n]}
        \sum\nolimits_{t\in[k\doubleprimed]}
        \randomallocijt 
        -
        \demandj^{(k\doubleprimed)} 
        }
        \right)
        }
        \\
        &\overset{(b)}{\geq}
        \max_{j\in[m]} 
        \overcostj
        \left(
        \plus{ 
        \expect{\sum\nolimits_{i\in[n]}
        \sum\nolimits_{t\in[k\doubleprimed]}
        \randomallocijt}
        -
        \demandj^{(k\doubleprimed)} 
        }
        \right)
        \\
        &\overset{(c)}{=}
        \max_{j\in[m]} 
        \overcostj
        \left(
        \plus{ 
        \expect{\sum\nolimits_{i\in[n]}
        \sum\nolimits_{t\in[k\doubleprimed]}
        \randomallocijt}
        -
        \pRjzero + \perror_{jk\doubleprimed}
        }
        \right)
        \\
        &\overset{(d)}{=}
        \overcost_{j\doubleprimed}(\lgap_{j\doubleprimed}) 
    \end{align*}
    where inequality~(a) holds by considering understaffing cost only and lower bounding the number of hired worker in each station $j$ as $\sum_{i\in[n]}\sum_{t\in[k\doubleprimed]}\randomallocijt$;
    inequality~(b) holds due to the convexity of $\max\{\cdot\}$, $\overcostj(\cdot)$,
    $\plus{\cdot}$ and Jensen's inequality;
    equality~(c) holds due to the construction of $\demandj^{(k\doubleprimed)}$;
    and
    equality~(d) holds due to the construction of $\lgap_{j\doubleprimed}$.
\end{proof}
Next we argue that the cost guarantee of {\OPTSimInfty} is upper bounded by program~\ref{eq:opt reduced form multi station}.
\begin{lemma}
\label{lem:opt alg minimax cost L infty}
    In the multi-station environment with egalitarian staffing cost, the cost guarantee of {\OPTSimInfty} is at most the optimal objective value of program~\ref{eq:opt reduced form multi station}.
\end{lemma}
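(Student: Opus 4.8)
The plan is to mirror the proof of \Cref{lem:opt alg minimax cost} from the base model, but treat each station in isolation. Let $(\xbf^*,\targetcosts^*)$ be the optimal solution of program~\ref{eq:opt reduced form multi station} used by {\OPTSimInfty}, and let $\xbf=\{\allocijt\}$ be the profile it returns, obtained by invoking Procedure~\ref{alg:emulator} separately for each station $j$ with the canonical input $\canxbf_j=\{\allocijt^*\}_{i\in[n],t\in[T]}$. Since the egalitarian cost is $\CostInf[\dbf]{\xbf}=\max_{j\in[m]}\Cost[\demandj]{\xbf_j}$, I would argue that it suffices to show that for every prediction sequence and demand profile $\dbf$ we have $\Cost[\demandj]{\xbf_j}\le\targetcost_j^*$ for each $j$; taking the maximum over $j$ then yields $\CostInf[\dbf]{\xbf}\le\max_{j\in[m]}\targetcost_j^*=\Psi(\targetcosts^*)$, the objective value of the program for the egalitarian case.

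To establish the per-station bound, fix a station $j$. Its emulated profile $\xbf_j$ is exactly the output of Procedure~\ref{alg:emulator} applied to station $j$'s prediction sequence with canonical input $\canxbf_j$, so \Cref{lem:emulator} applies verbatim. Under the perfect-consistency assumption of \Cref{sec:extension} (so $\pbias=0$ and, WLOG, $\perror_{j\cdot}$ weakly decreasing with $\pL_{jT}\ked=\pRjzero-\perrorjk$), the lemma gives the overstaffing bound $\sum_{i}\sum_{t}\allocijt-\max_{\tau}\pL_{j\tau}\le\max_{k\in[T]}\bigl(\sum_{i}\sum_{t\in[k]}\allocijt^*-(\pRjzero-\perrorjk)\bigr)$ and the understaffing bound $\min_{\tau}\pR_{j\tau}-\sum_{i}\sum_{t}\allocijt\le\pRjzero-\sum_{i}\sum_{t}\allocijt^*$. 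By consistency the true demand satisfies $\demandj\in[\max_{\tau}\pL_{j\tau},\,\min_{\tau}\pR_{j\tau}]$, so the realized overstaffing $\plus{\sum_{i}\sum_{t}\allocijt-\demandj}$ is dominated by the right-hand side of the first bound, and the realized understaffing $\plus{\demandj-\sum_{i}\sum_{t}\allocijt}$ by that of the second. Applying the weakly increasing, weakly convex functions $\overcostj$ and $\undercostj$ (which commute with the $\max_k$), and invoking the second constraint of program~\ref{eq:opt reduced form multi station} at the maximizing $k$ and the third constraint respectively, bounds both the overstaffing cost and the understaffing cost of station $j$ by $\targetcost_j^*$, hence $\Cost[\demandj]{\xbf_j}=\max\{\cdot,\cdot\}\le\targetcost_j^*$.

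The step I expect to require the most care — the one genuinely new relative to the base model — is verifying that running Procedure~\ref{alg:emulator} independently per station still produces a \emph{jointly} supply-feasible profile, since the stations draw from shared pools and the supply constraint of program~\ref{eq:opt reduced form multi station} couples all of them. I would settle this using the pointwise domination baked into Procedure~\ref{alg:emulator}: its construction forces $0\le\allocijt\le\allocijt^*$ for every $i,j,t$, so $\sum_{t}\tfrac{1}{\wdiscountit}\sum_{j}\allocijt\le\sum_{t}\tfrac{1}{\wdiscountit}\sum_{j}\allocijt^*\le\supplyi$, where the final inequality is precisely the supply-feasibility constraint met by the optimal LP solution; feasibility of each individual emulation call is itself guaranteed by \Cref{lem:emulator}. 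Combining this joint feasibility with the per-station cost bound and the $\max$ structure of the egalitarian objective completes the argument. (The utilitarian analogue, stated separately, follows the same template with the $\max$ over $j$ replaced by a sum, but additionally leans on \Cref{lem:L infty overcost vs undercost}, so I would handle it in its own lemma rather than fold it in here.)
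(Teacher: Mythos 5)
Your proof is correct and follows essentially the same route as the paper's: a per-station reduction using consistency of the predictions, the bounded over-/understaffing properties of \Cref{lem:emulator} invoked station by station with canonical input $\{\allocijt^*\}_{i\in[n],t\in[T]}$, and the second and third constraints of program~\ref{eq:opt reduced form multi station} to bound each station's cost by $\targetcost_j^*$ before taking the maximum over $j$. Your explicit verification of joint supply feasibility across the shared pools (via the pointwise domination $0\le\allocijt\le\allocijt^*$) is handled by the paper in a remark outside this lemma's proof rather than inside it, but it coincides with the paper's reasoning, so this is only a presentational difference.
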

\begin{proof}
    Let $(\xbf^*, \targetcosts^*)$ be the optimal solution of program~\ref{eq:opt reduced form multi station} used in {\OPTSimInfty}. Moreover, 
    It suffices to show that for every prediction sequence $\predictions$ and demand profile $\dbf$,
    for every station $j$, the total number of hired workers  $\sum_{i\in[n]}\sum_{t\in[T]}\allocijt$ for this station satisfies that
    \begin{align*}
    \undercostj\left(\plus{\demandj - \sum\nolimits_{i\in[n]}\sum\nolimits_{t\in[T]}\allocijt}\right) \leq \targetcost_j^*
    \;\;
    \mbox{and}
    \;\;
    \overcostj\left(\plus{\sum\nolimits_{i\in[n]}\sum\nolimits_{t\in[T]}\allocijt - \demandj}\right) \leq \targetcost_j^*
    \end{align*}
    which are implied by 
    \begin{align*}
    \undercostj\left(\plus{\pRjT - \sum\nolimits_{i\in[n]}\sum\nolimits_{t\in[T]}\allocijt}\right) \leq \targetcost_j^*
    \;\;
    \mbox{and}
    \;\;
    \overcostj\left(\plus{\sum\nolimits_{i\in[n]}\sum\nolimits_{t\in[T]}\allocijt - \pLjT}\right) \leq \targetcost_j^*
    \end{align*}
    due to the consistency condition (Assumption~\ref{asp:prediction sequence} with $\pbiass = \zerobf$) of prediction sequence $\predictions$.
    Invoking the bounded over/understaffing cost properties in \Cref{lem:emulator} and second, third constraints about $\targetcosts^*$ in program~\ref{eq:opt reduced form multi station} proves the two inequalities above as desired.
\end{proof}

Combining \Cref{lem:opt reduced form lower bound optimal minimax cost L infty} and \Cref{lem:opt alg minimax cost L infty}, we prove \Cref{thm:opt alg multi station} for the egalitarian staffing cost as desired.

\label{apx:optalgone}

\subsubsection{Utilitarian Staffing Cost}
Now we prove \Cref{thm:opt alg multi station} for the utilitarian staffing cost.
We first show that program~\ref{eq:opt reduced form multi station} is a lower bound of the optimal minimax cost $\optcost$ in the multi-station environment with utilitarian staffing cost.

\begin{lemma}
\label{lem:opt reduced form lower bound optimal minimax cost L one}
    In the multi-station environment with utilitarian-staffing cost, for every (possibly randomized) online algorithm $\ALG$, its cost guarantee is at least the optimal objective value of program~\ref{eq:opt reduced form multi station}.
\end{lemma}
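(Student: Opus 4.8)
The plan is to mirror the lower-bound argument for the egalitarian case (\Cref{lem:opt reduced form lower bound optimal minimax cost L infty}), but to adapt it to the additive structure of the utilitarian objective $\Psi(\targetcosts)=\sum_{j\in[m]}\targetcost_j$. As before, I would fix an arbitrary (possibly randomized) online algorithm $\ALG$ and construct a feasible solution $(\xbf,\targetcosts)$ of program~\ref{eq:opt reduced form multi station} whose objective value lower bounds the cost guarantee of $\ALG$. For each station $j$ I would use the same single-switch prediction sequences $\{\predictions\ked\}_{k\in[T]}$ as in the egalitarian proof. The key observation, again, is that the first $t$ predictions agree across all $\predictions\ked$ with $k\geq t$, so the staffing decision on day $t$ is identical across those sequences; hence I can set $\allocijt\gets\expect{\randomallocijt}$ where $\randomallocijt$ is the number of workers hired from pool $i$ for station $j$ on day $t$ under $\predictions\Ted$. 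Feasibility of the supply constraint follows from taking expectations of the per-sample-path supply feasibility. I then define $\lgapj\gets\max_{k\in[T]}\plus{\expect{\sum_{i}\sum_{t\in[k]}\randomallocijt}-\pRjzero+\perrorjk}$ and $\rgapj\gets\plus{\pRjzero-\expect{\sum_i\sum_{t\in[T]}\randomallocijt}}$, and set $\targetcost_j\gets\max\{\overcost\cdot\lgapj,\undercost\cdot\rgapj\}$.

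The crucial difference from the egalitarian case is that in the utilitarian objective the per-station bounds $\targetcost_j$ are summed rather than maxed, so I cannot simply choose a single worst-case station and a single worst-case demand realization. Instead, I would need to exhibit, for each station $j$ \emph{simultaneously}, a prediction sequence and demand such that the expected staffing cost at that station is at least $\targetcost_j$, and combine them into one global adversarial choice. This is where I expect the main obstacle to lie. The natural idea is to let the adversary choose, independently for each station $j$, either the ``understaffing-triggering'' sequence $\predictions\Ted$ with $\demandj=\pRjzero$ (when $\undercost\cdot\rgapj\geq\overcost\cdot\lgapj$) or the ``overstaffing-triggering'' sequence $\predictions^{(k_j)}$ with $\demandj=\pRjzero-\perror_{jk_j}$ (otherwise), where $k_j$ attains the max in $\lgapj$. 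Because staffing decisions for different stations are coupled only through the shared supply constraint—and the profile $\xbf=\{\expect{\randomallocijt}\}$ is already supply feasible—the adversary can legitimately pick station-specific switching times, and the total cost decomposes additively as $\sum_j \CostInf[\demandj]{\xbf_j}$. The technical subtlety is that the random variable $\randomallocijt$ under the \emph{joint} adversarial sequence need not equal the $\randomallocijt$ defined under $\predictions\Ted$ for station $j$; I would need to argue that the lower bound for station $j$ depends only on the prediction history \emph{for station $j$}, which by the multi-station model (predictions across stations are revealed jointly but staffing decisions are made against the full history) requires a careful conditioning argument. This is the step referenced in the text preceding the statement, namely \Cref{lem:L infty overcost vs undercost}, which ensures the overstaffing cost is weakly dominated by the understaffing cost in the worst case, thereby letting us control cross-station interactions.

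Concretely, the per-station lower bound computations would reuse the two-case analysis verbatim from \Cref{lem:opt reduced form lower bound optimal minimax cost L infty}: in Case~1 ($\undercost\cdot\rgapj\geq\overcost\cdot\lgapj$), Jensen's inequality applied to $\plus{\cdot}$ under the sequence $\predictions\Ted$ with $\demandj=\pRjzero$ gives $\expect{\Cost[\demandj]{\ALG(\predictions\Ted)}_j}\geq\undercost\cdot\rgapj=\targetcost_j$; in Case~2, the analogous computation under $\predictions^{(k_j)}$ yields $\overcost\cdot\lgapj=\targetcost_j$. Summing over $j$ and invoking additivity of $\CostOne$, I would obtain $\max_{\predictions,\dbf}\expect{\CostOne[\dbf]{\ALG(\predictions)}}\geq\sum_{j\in[m]}\targetcost_j=\Psi(\targetcosts)$. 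Since $(\xbf,\targetcosts)$ is feasible for program~\ref{eq:opt reduced form multi station} with objective $\sum_j\targetcost_j$, this is at least the optimal LP value, completing the lower bound.

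The main obstacle, as flagged, is verifying that the adversary's station-specific choices can be realized \emph{simultaneously} while preserving each station's individual lower bound. I would resolve this by noting that each station's prediction interval trajectory is a separate coordinate of $\predictiont$, so the adversary can independently set the switching time $k_j$ per station; the only shared resource is supply, which is already respected by the canonical profile. The additivity of the utilitarian cost then lets the worst-case total cost be the sum of the worst-case per-station costs, and the lemma \Cref{lem:L infty overcost vs undercost} guarantees the per-station case analysis is internally consistent (in particular that choosing the understaffing-side demand never inadvertently reduces another station's bound). With these pieces, the proof parallels the egalitarian argument but replaces ``$\max_j$'' by ``$\sum_j$'' throughout the objective accounting.
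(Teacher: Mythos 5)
Your proposal has a genuine gap at precisely the step you flag, and the patch you propose does not close it. The decomposition ``worst-case utilitarian cost $\geq$ sum of per-station worst-case costs with independently chosen switching times $k_j$'' fails because the algorithm observes the \emph{joint} prediction history: its hires for station $j$ after day $\min_j k_j$ are not the random variables $\randomallocijt$ you defined under the reference sequence $\predictions\Ted$. Concretely, with two stations, let the adversary switch station $1$ at $k_1$ while station $2$ never switches. Histories agree with the reference only through day $k_1$; from day $k_1+1$ onward the algorithm knows station $1$'s demand is low and can redirect the shared supply toward station $2$, driving station $2$'s understaffing far below $\rgap_2$. So station $2$'s cost under the \emph{joint} sequence is not bounded below by $\undercost\cdot\rgap_2$, and the inequality you need, $\max_{\predictions,\dbf}\expect{\CostOne[\dbf]{\ALG(\predictions)}}\geq\sum_{j}\targetcost_j$, does not follow: you would need a single joint adversarial instance under which \emph{every} station simultaneously incurs its individually worst cost, but cross-station adaptivity of the algorithm precludes exactly that. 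Your appeal to \Cref{lem:L infty overcost vs undercost} does not repair this, because that lemma is a statement about optimal solutions of an auxiliary linear program (it permits modifying such a solution so that $\overcost\cdot\lgapjk\leq\undercost\cdot\rgapj$ for all $j,k$); it says nothing about an adversary mixing per-station switching times against an adaptive algorithm.

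The paper's proof instead restricts the adversary to a \emph{common} switching time across all stations. This yields the auxiliary program \ref{eq:opt reduced form L one modified}, whose objective is $\bigl(\sum_{j}\undercost\cdot\rgapj\bigr)\vee\bigl(\max_{k}\sum_{j}\overcost\cdot\lgapjk\bigr)$; note that the $\max_k$ sits \emph{outside} the sum, encoding that all stations switch at the same $k$, so histories across the restricted family agree up to the switch and your Jensen-type argument is valid there (this is \Cref{lem:opt reduced form lower bound optimal minimax cost L one modified}). The genuinely hard step is then showing that this common-switch value is at least the value of \ref{eq:opt reduced form multi station}, whose objective $\sum_j\max\{\undercost\cdot\rgapj,\max_k\overcost\cdot\lgapjk\}$ corresponds to per-station choices and is pointwise larger. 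That is exactly what \Cref{lem:L infty overcost vs undercost} delivers, via a four-stage surgery on an optimal auxiliary solution establishing $\overcost\cdot\lgapjk^*\leq\undercost\cdot\rgapj^*$ for all $j,k$, which makes that solution feasible for \ref{eq:opt reduced form multi station} with the same objective value. Without this LP-level reduction, your per-station case analysis cannot be assembled into a single valid adversarial instance, so the proposal as written does not prove the lemma.
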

\Cref{lem:opt reduced form lower bound optimal minimax cost L one} is proved by the following \Cref{lem:opt reduced form lower bound optimal minimax cost L one modified,lem:L infty overcost vs undercost} that analyze the modified program~\ref{eq:opt reduced form L one modified} with variables $\{\allocijt,\lgapjk,\rgapj\}_{i\in[n],j,k\in[m],t\in[T]}$ defined as follows:
\begin{align}
\tag{$\textsc{LP-multi-util}\primed$}
    \label{eq:opt reduced form L one modified}
    &\arraycolsep=1.4pt\def\arraystretch{2.2}
    \begin{array}{llll}
    \min\limits_{\substack{\xbf,\lambdabf,\thetabf\geq \zerobf}}
    \qquad
    &
    \displaystyle
    \sum\nolimits_{j\in[m]}
    \undercost\cdot \rgapj
    \vee
    \left(
    \max_{k\in[T]}
    \sum\nolimits_{j\in[m]}
    \overcost\cdot \lgapjk
    \right)
    & 
    \text{s.t.}
    \\
    &
    \displaystyle\sum\nolimits_{t\in[T]}\frac{1}{\wdiscountit}\cdot\sum\nolimits_{j\in[m]}\allocijt \leq \supplyi
    &
    i\in[n]
    \\
    &
    \displaystyle\sum\nolimits_{i\in[n]}\sum\nolimits_{t\in[k]}
    \allocijt \leq \pRjzero - \perrorjk + \lgapjk
    \quad
    &
    j\in[m],k\in[T]
    \\
    &
    \displaystyle\sum\nolimits_{i\in[n]}\sum\nolimits_{t\in[T]}
    \allocijt \geq \pRjzero - \rgapj
    &
    j\in[m]
\end{array}
\end{align}

\begin{lemma}
\label{lem:opt reduced form lower bound optimal minimax cost L one modified}
    In the multi-station environment with utilitarian-staffing cost, for every (possibly randomized) online algorithm $\ALG$, its cost guarantee is at least the optimal objective value of program~\ref{eq:opt reduced form L one modified}.
\end{lemma}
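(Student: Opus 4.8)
The plan is to mirror the lower-bound argument used for the base model (\Cref{lem:opt reduced form lower bound optimal minimax cost}) and for the egalitarian case (\Cref{lem:opt reduced form lower bound optimal minimax cost L infty}), but now tracking the \emph{per-station} overstaffing and understaffing quantities separately through the auxiliary variables $\{\lgapjk,\rgapj\}$ that appear in \ref{eq:opt reduced form L one modified}. Fix an arbitrary (possibly randomized) online algorithm $\ALG$. As in those earlier proofs, I would use the single-switch prediction sequences $\{\predictions\ked\}_{k\in[T]}$ constructed so that, for each day $t$, the predictions on days $1,\dots,t$ agree across all $\predictions\ked$ with $k\ge t$; this forces $\ALG$'s (possibly randomized) hiring decision on day $t$ to be identical across all such sequences. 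I would let $\randomallocijt$ denote the number of workers $\ALG$ hires from pool $i$ to station $j$ on day $t$ when facing $\predictions\Ted$, which is supply-feasible on every sample path.

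From these random variables I would build a feasible solution of \ref{eq:opt reduced form L one modified} by setting $\allocijt \gets \expect{\randomallocijt}$, then defining $\lgapjk \gets \plus{\expect{\sum_{i\in[n]}\sum_{t\in[k]}\randomallocijt} - \pRjzero + \perrorjk}$ and $\rgapj \gets \plus{\pRjzero - \expect{\sum_{i\in[n]}\sum_{t\in[T]}\randomallocijt}}$. By construction all three families of constraints hold (supply feasibility via the sample-path bound plus linearity of expectation, and the overstaffing/understaffing constraints directly from the definitions of $\lgapjk,\rgapj$). The core of the argument is then to show that the \emph{objective value} of this constructed solution is at most the cost guarantee of $\ALG$. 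Here the objective is a maximum of two terms: the utilitarian understaffing aggregate $\sum_j \undercost\cdot\rgapj$ and the worst-over-$k$ utilitarian overstaffing aggregate $\max_k \sum_j \overcost\cdot\lgapjk$. I would split into two cases accordingly.

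In the understaffing-dominant case, I would evaluate $\ALG$ against $\predictions\Ted$ with demand profile $\dbf = \{\pRjzero\}_{j\in[m]}$, where the utilitarian cost is $\sum_j \undercost\cdot\plus{\pRjzero - \sum_{i,t}\randomallocijt}$; applying linearity of the sum over stations, convexity of $\plus{\cdot}$, and Jensen's inequality station-by-station yields a lower bound of exactly $\sum_j \undercost\cdot\rgapj$. In the overstaffing-dominant case, I would let $k\doubleprimed$ attain $\max_k \sum_j \overcost\cdot\lgapjk$ and evaluate $\ALG$ against $\predictions^{(k\doubleprimed)}$ with $\dbf = \{\pRjzero - \perror_{jk\doubleprimed}\}_{j\in[m]}$, lower bounding each station's total hires by the partial sum $\sum_{i}\sum_{t\in[k\doubleprimed]}\randomallocijt$ (valid since hiring is irrevocable) and again invoking Jensen to recover $\sum_j \overcost\cdot\lgapjk\doubleprimed$. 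The crucial subtlety, and what I expect to be the main obstacle, is that in the utilitarian objective the worst-case overstaffing switching time $k\doubleprimed$ is \emph{common across all stations}: unlike the egalitarian case where each station's cost can be attacked independently by its own single-switch sequence, here the adversary must commit to one sequence index $k$ whose overstaffing burden is summed over stations. This is precisely why the modified program \ref{eq:opt reduced form L one modified} uses $\max_k \sum_j$ rather than $\sum_j \max_k$, and reconciling these two programs is the content of \Cref{lem:L infty overcost vs undercost}.

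The remaining step is to connect \ref{eq:opt reduced form L one modified} back to the program \ref{eq:opt reduced form multi station} whose objective is $\Psi(\targetcosts) = \sum_j \targetcost_j$. I would invoke \Cref{lem:L infty overcost vs undercost}, which (as stated in the main text) establishes that the minimax-optimal algorithm against single-switch sequences always incurs weakly smaller worst-case overstaffing than understaffing; this monotone relationship lets me argue that at optimality the $\max$ in the objective of \ref{eq:opt reduced form L one modified} is governed by the understaffing branch, so that an optimal solution of \ref{eq:opt reduced form L one modified} can be transformed into a feasible solution of \ref{eq:opt reduced form multi station} with $\targetcost_j = \max\{\undercost\cdot\rgapj,\,\overcost\cdot\max_k\lgapjk\}$ of no greater objective, and vice versa. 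Combining \Cref{lem:opt reduced form lower bound optimal minimax cost L one modified} with this equivalence gives \Cref{lem:opt reduced form lower bound optimal minimax cost L one}, and then pairing it with the matching upper bound for {\OPTSimInfty} (proved analogously to \Cref{lem:opt alg minimax cost L infty}, using the bounded-cost guarantees of Procedure~\ref{alg:emulator} in \Cref{lem:emulator} applied independently per station) completes the proof of \Cref{thm:opt alg multi station} in the utilitarian case.
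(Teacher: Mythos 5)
Your proposal is correct and follows essentially the same route as the paper's proof: the common single-switch sequences $\{\predictions\ked\}_{k\in[T]}$, the construction of a feasible solution of \ref{eq:opt reduced form L one modified} via $\allocijt \gets \expect{\randomallocijt}$ with $\lgapjk,\rgapj$ defined as positive parts, and the two-case analysis (understaffing-dominant against $\predictions\Ted$ with $\dbf=\{\pRjzero\}_j$, overstaffing-dominant against $\predictions^{(k\doubleprimed)}$ with $\dbf=\{\pRjzero-\perror_{jk\doubleprimed}\}_j$) using irrevocability and Jensen's inequality. Your closing remarks about bridging to \ref{eq:opt reduced form multi station} through \Cref{lem:L infty overcost vs undercost} concern the subsequent lemma rather than this statement, but they also accurately reflect the paper's structure.
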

\begin{proof}
    The argument is similar to the ones for \Cref{lem:opt reduced form lower bound optimal minimax cost,lem:opt reduced form lower bound optimal minimax cost L infty}. Specifically, we construct a feasible solution of program~\ref{eq:opt reduced form L one modified}, whose objective value is equal to the cost guarantee of algorithm $\ALG$.
    
    Consider a prediction sequence subset $\{\predictions\ked\}_{k\in[T]}$ parameterized by $k\in[T]$. Specifically, for each $k\in[T]$, prediction sequence $\predictions\ked = \{[\pLjt\ked,\pRjt\ked]\}_{j\in[m],t\in[T]}$ is constructed as follows:
    \begin{align*}
    \begin{array}{lll}
        t\in[k],j\in[m]:\qquad&
        \pLjt\ked \gets \pR_{j,t-1}\ked - \perrorjt,\quad 
        &\pRjt\ked \gets \pR_{j,t-1}\ked;
        \\
        t\in[k+1:T],j\in[m]:
        \qquad
        &
        \pLjt\ked \gets \pL_{j,t-1}\ked,\quad 
        &\pRjt\ked \gets \pL_{j,t-1}\ked + \perrorjt.
    \end{array}
    \end{align*}
    where $\pLjzero\ked = \pLjzero$ and $\pRjzero\ked = \pRjzero$. In short, this prediction sequence subset is constructed such that prediction $\{[\pLjt\ked,\pRjt\ked]\}_{j\in[m]}$ on every day $t$ are the same for all prediction sequence $\predictions\ked$ with $k\geq t$. Thus, no online algorithm can distinguish them. Namely, the staffing decision in each day $t$ should be the same under all prediction sequences $\predictions\ked$ with $k \geq t$. 
    
    Motivated by the prediction sequence construction above, we let random variable $\randomallocijt$ be the number of workers hired by the algorithm from pool $i$ to station $j$ in day $t$ under prediction sequence $\predictions\Ted$.
    Due to the feasibility of the algorithm under prediction sequence $\predictions\Ted$, for all sample paths (over the randomness of the algorithm), we have
    \begin{align*}
    \forall i\in[n]:\quad &
            \displaystyle\sum\nolimits_{t\in[T]}\frac{1}{\wdiscountit}\cdot\sum\nolimits_{j\in[m]}\randomallocijt \leq \supplyi
    \end{align*}
    Now consider the following solution $(\xbf,\lambdabf,\thetabf)$ construction:
    \begin{align*}
        i\in[n],j\in[m],t\in[T]:&
        \qquad
        \allocijt \gets \expect{\randomallocijt}
        \\
        j\in[m],k\in[T]:&
        \qquad
        \lgapjk \gets
        \plus{
        \expect{\sum_{i\in[n]}\sum\nolimits_{t\in[k]}\randomallocijt} - \pRjzero + \perrorjk
        }
        \\
        j\in[m]:
        & \qquad \rgapj \gets 
        \plus{
        \pRjzero - \expect{\sum\nolimits_{i\in[n]}\sum\nolimits_{t\in[T]}\randomallocijt}
        }
    \end{align*}
    By construction, all four constraints are satisfied. 
    Below we argue the objective value of the constructed solution is at most the cost guarantee of algorithm $\ALG$.

    Let $k\primed = \argmax_{k\in[T]} \sum_{j\in[m]} \overcost\cdot \lgapjk$.
    Now we consider two cases.

    \paragraph{Case 1- $\sum_{j\in[m]}\undercost\cdot \rgapj \geq 
    \sum_{j\in[m]}\overcost\cdot\lgap_{jk\doubleprimed}$.}
    In this case, the objective value of the constructed solution is 
    $\sum_{j\in[m]}\undercost\cdot \rgapj$.
    Consider the execution of algorithm $\ALG$ under prediction sequence $\predictions\Ted$ and the demand profile $\dbf\primed \triangleq \{\pR_{jT}\Ted\}_{j\in[m]} = \{\pRjzero\}_{j\in[m]}$. Note that the staffing cost can be lower bounded as 
    \begin{align*}
        \expect{\CostOne[\dbf\primed]{\ALG(\predictions\Ted)}}
        &\overset{(a)}{\geq}
        \expect{
        \sum\nolimits_{j\in[m]} 
        \undercost
        \cdot
        \plus{
        \demandj\primed - 
        \sum\nolimits_{i\in[n]}
        \sum\nolimits_{t\in[T]}
        \randomallocijt 
        }
        }
        \\
        &\overset{(b)}{\geq}
        \sum\nolimits_{j\in[m]} 
        \undercost
        \cdot
        \plus{ 
        \demandj\primed - 
        \expect{\sum\nolimits_{i\in[n]}
        \sum\nolimits_{t\in[T]}
        \randomallocijt}
        }
        \\
        &\overset{(c)}{=}
        \sum\nolimits_{j\in[m]} 
        \undercost
        \cdot
        \plus{ 
        \pRjzero - 
        \expect{\sum\nolimits_{i\in[n]}
        \sum\nolimits_{t\in[T]}
        \randomallocijt}
        }
        \\
        &\overset{(d)}{=}
        \sum\nolimits_{j\in[m]}\undercost\cdot \rgapj
    \end{align*}
    where inequality~(a) holds by considering understaffing cost only;
    inequality~(b) holds due to the convexity of 
    $\plus{\cdot}$ and Jensen's inequality;
    equality~(c) holds due to the construction of $\demandj\primed$;
    and
    equality~(d) holds due to the construction of $\rgapj$.

    \paragraph{Case 2- $\sum_{j\in[m]}\undercost\cdot \rgapj < 
    \sum_{j\in[m]}\overcost\cdot\lgap_{jk\doubleprimed}$.}
    In this case, the objective value of the constructed solution is 
    $\sum_{j\in[m]}\overcost\cdot\lgap_{jk\doubleprimed}$.
    Consider the execution of algorithm $\ALG$ under prediction sequence $\predictions^{(k\doubleprimed)}$ and the demand profile $\dbf^{(k\doubleprimed)} \triangleq \{\pL_{jT}^{(k\doubleprimed)}\}_{j\in[m]} = \{\pRjzero - \perror_{jk\doubleprimed}\}_{j\in[m]}$. Note that the staffing cost can be lower bounded as 
    \begin{align*}
        \expect{\CostOne[\dbf^{(k\doubleprimed)}]{\ALG(\predictions^{(k\doubleprimed)})}}
        &\overset{(a)}{\geq}
        \expect{
        \sum\nolimits_{j\in[m]} 
        \overcost 
        \cdot
        \plus{
        \sum\nolimits_{i\in[n]}
        \sum\nolimits_{t\in[k\doubleprimed]}
        \randomallocijt 
        -
        \demandj^{(k\doubleprimed)} 
        }
        }
        \\
        &\overset{(b)}{\geq}
        \sum\nolimits_{j\in[m]} 
        \overcost 
        \cdot
        \plus{ 
        \expect{\sum\nolimits_{i\in[n]}
        \sum\nolimits_{t\in[k\doubleprimed]}
        \randomallocijt}
        -
        \demandj^{(k\doubleprimed)} 
        }
        \\
        &\overset{(c)}{=}
        \sum\nolimits_{j\in[m]} 
        \overcost 
        \cdot
        \plus{ 
        \expect{\sum\nolimits_{i\in[n]}
        \sum\nolimits_{t\in[k\doubleprimed]}
        \randomallocijt}
        -
        \pRjzero + \perror_{jk\doubleprimed}
        }
        \\
        &\overset{(d)}{=}
        \sum\nolimits_{j\in[m]}\overcost\cdot\lgap_{jk\doubleprimed}
    \end{align*}
    where inequality~(a) holds by considering understaffing cost only and lower bounding the number of hired worker in each station $j$ as $\sum_{i\in[n]}\sum_{t\in[k\doubleprimed]}\randomallocijt$;
    inequality~(b) holds due to the convexity of
    $\plus{\cdot}$ and Jensen's inequality;
    equality~(c) holds due to the construction of $\demandj^{(k\doubleprimed)}$;
    and
    equality~(d) holds due to the construction of $\lgap_{j\doubleprimed}$.
\end{proof}

Next we identify four properties of the optimal solution of program~\ref{eq:opt reduced form L one modified} in \Cref{lem:L infty overcost vs undercost}. Clearly, properties (i) and (iii) are implied by properties (ii) and (iv). Nonetheless, we list all of them, since in our argument, we start with an arbitrary optimal solution, and then introduce a series of modifications to convert the original optimal solution to another optimal solution satisfying properties (i) through (iv) step by step.
\begin{lemma}
    \label{lem:L infty overcost vs undercost}
    There exists an optimal solution $(\xbf^*, \lambdabf^*, \thetabf^*)$ of program~\ref{eq:opt reduced form L one modified} satisfying the following four properties:
    \begin{enumerate}
        \item[(i)] $\sum_{j\in[m]}\lgapjk^* \geq \sum_{j\in[m]}\lgap_{j,k+1}^*$ for all $k\in[T - 1]$;
        \item[(ii)] $\lgapjk^* \geq \lgap_{j,k + 1}^*$ for all $k\in[T - 1]$
        \item[(iii)] $\sum_{j\in[m]}\overcost\cdot \lgapjk^* \leq \sum_{j\in[m]}\undercost\cdot\rgapj^*$ for all $k\in[T]$;
        \item[(iv)] $\overcost\cdot \lgapjk^* \leq \undercost\cdot \rgapj^*$ for all $j\in[m],k\in[T]$.
    \end{enumerate}
\end{lemma}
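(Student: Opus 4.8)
The plan is to start from an arbitrary optimal solution $(\xbf^*,\lambdabf^*,\thetabf^*)$ of \ref{eq:opt reduced form L one modified} and massage it, through a finite sequence of objective-preserving local exchanges, into one that satisfies (ii) and (iv); since $\sum_j\lgapjk^* \ge \sum_j\lgap_{j,k+1}^*$ and $\sum_j\overcost\lgapjk^* \le \sum_j\undercost\rgapj^*$ follow by summing the per-station inequalities (ii) and (iv) over $j$, establishing (ii) and (iv) automatically yields (i) and (iii). First I would normalize: tighten every slack to its defining value, i.e.\ set $\lgapjk^* = \plus{A_{jk}^* - \pRjzero + \perrorjk}$ and $\rgapj^* = \plus{\pRjzero - A_{jT}^*}$, where $A_{jk}^* \triangleq \sum_{i}\sum_{t\le k}\allocijt^*$ denotes the cumulative hiring for station $j$ through day $k$; raising any slack above this value only weakly increases the objective, so this is without loss. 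I would also assume $A_{jT}^*\le\pRjzero$ for every $j$ (shaving any overshoot off the last-day hires frees supply and never raises the objective). Writing $U \triangleq \undercost\sum_j\rgapj^*$ and $O_k \triangleq \overcost\sum_j\lgapjk^*$, the objective becomes $\max\{U,\max_k O_k\}$.

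Second, I would establish the per-station monotonicity (ii) by a ``push-earlier'' exchange. If $\lgap_{j,k+1}^* > \lgapjk^*$ for some $j,k$, then $A_{j,k+1}^* > A_{jk}^*$ (using $\perror_{j,k+1}\le\perrorjk$, which holds since the intervals are nested), so some pool $i$ hires a positive amount for $j$ on day $k+1$; I move an infinitesimal mass of these hires from day $k+1$ to day $k$. Because $\wdiscountit$ is weakly decreasing in $t$, we have $1/\wdiscount_{ik}\le 1/\wdiscount_{i,k+1}$, so supply usage weakly drops and \ref{eq:supply-feasible} is preserved. This move leaves every cumulative total unchanged except $A_{jk}^*$, which rises, so it raises only $\lgapjk^*$ toward $\lgap_{j,k+1}^*$ and leaves all $\rgapj^*$ and all other slacks fixed. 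The delicate point is that raising $\lgapjk^*$ raises $O_k$; I would cap each move so that $O_k$ never exceeds the current $\max_{k'}O_{k'}$, and argue that if the cap binds before monotonicity is reached then coordinate $(j,k)$ already satisfies (ii). A convenient consequence, which I would exploit next, is that once each $\lgapjk^*$ is non-increasing in $k$ the sum $O_k$ is non-increasing in $k$, whence $\max_k O_k = O_1$ and the objective collapses to $\max\{U,\,\overcost\sum_j\lgap_{j1}^*\}$.

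Third, I would establish the per-station balance (iv), $\overcost\lgapjk^* \le \undercost\rgapj^*$; by monotonicity it suffices to treat $k=1$. Suppose $\overcost\lgap_{j_0,1}^* > \undercost\rgap_{j_0}^*$ for some $j_0$. When the binding term is overstaffing ($O_1\ge U$), I shave an infinitesimal mass off station $j_0$'s day-$1$ hires: this lowers $A_{j_0,k}^*$ uniformly in $k$ (preserving (ii)), decreasing $O_1$ at rate $\overcost$ and increasing $U$ at rate $\undercost$, and I stop at the balance $\overcost\lgap_{j_0,1}^*=\undercost\rgap_{j_0}^*$; since $O_1\ge U$ throughout this shrinking, the objective does not increase.

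The main obstacle is the remaining case of the third step, where the global binding term is understaffing ($U\ge O_1$) yet some station is locally over-hired ($\overcost\lgap_{j_0,1}^*>\undercost\rgap_{j_0}^*$). Here simply reducing $j_0$'s hiring raises $U$ and hence the objective, so the rebalancing must be carried out by \emph{reallocation}: the supply freed from $j_0$ is redirected to a station $j'$ with $\rgap_{j'}^*>0$, hiring it on late days so that only $A_{j'T}^*$ (and thus $\rgap_{j'}^*$) moves while its earlier cumulatives---and therefore its overstaffing slacks---stay put. The crux is to show this can be arranged without increasing $\max\{U,O_1\}$, which requires carefully accounting for the supply-efficiency loss incurred when moving hires to later days (where $1/\wdiscountit$ is larger), and for the fact that $U$ being the global maximum leaves exactly the slack in the overstaffing terms needed to absorb the change. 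I expect this reallocation bookkeeping---together with verifying that each of the four modifications preserves the properties secured by the earlier ones---to be the technically heaviest part of the argument.
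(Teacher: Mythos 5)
Your reduction of the target to properties (ii) and (iv) is sound as far as it goes (summing over $j$ does give (i) and (iii)), and your initial normalization matches the paper's. However, both of your exchange arguments have genuine gaps, and in both cases what the paper actually uses is a \emph{supply-neutral swap between two stations} rather than a unilateral move of one station's hires across days; this is also why the paper proves (i) and (iii) first, as scaffolding, instead of deducing them afterwards. Concretely, your ``push-earlier'' move for (ii) raises $\sum_{s\in[m]}\lgap_{sk}$ with no compensating decrease anywhere, so whenever day $k$ already attains $\max_{k'}\sum_{s\in[m]}\overcost\cdot\lgap_{sk'}$ the move strictly raises the objective and your cap binds at $\epsilon=0$. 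Your escape claim---that a binding cap implies $(j,k)$ already satisfies (ii)---is unsupported and false in general: nothing rules out a configuration where another station $j'$ makes column $k$ the unique maximizer (e.g., $\lgap_{j'k}$ large, $\lgap_{j',k+1}=0$) while $\lgap_{jk}=0<\lgap_{j,k+1}$, and then your procedure is stuck with (ii) still violated. The paper's step instead invokes property (i) to guarantee a compensating station $j'$ with $\lgap_{j'k}>\lgap_{j',k+1}$, and performs a four-entry swap ($\alloc_{i'j\tau}$ up, $\alloc_{i'j'\tau}$ down, $\alloc_{ij',k+1}$ up, $\alloc_{ij,k+1}$ down) that keeps every pool-day total $\sum_{s\in[m]}\alloc_{ist}$ fixed; feasibility is then automatic and every column sum $\sum_{s\in[m]}\lgap_{s\ell}$ weakly decreases, so no cap is needed.

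The second gap is the one you flag yourself, and it is fatal to the proposal as written: in the case where understaffing is binding ($\sum_{j}\undercost\cdot\rgapj\geq\sum_{j}\overcost\cdot\lgap_{j1}$) but some station $j_0$ is locally over-hired, redirecting the supply freed from $j_0$'s day-1 hires into \emph{late-day} hires for $j'$ cannot work, precisely because of the supply-efficiency loss you mention: one unit of day-1 hiring frees $1/\wdiscount_{i1}$ units of pool $i$, which funds only $\wdiscount_{iT}/\wdiscount_{i1}\leq 1$ units of day-$T$ hiring. Hence $\rgap_{j_0}$ rises by $\epsilon$ while $\rgap_{j'}$ falls by at most $\epsilon\cdot\wdiscount_{iT}/\wdiscount_{i1}$, so $\sum_{j}\undercost\cdot\rgapj$---the binding term in this case---\emph{strictly increases}, and the objective goes up. The paper avoids this by swapping on the \emph{same day}: $\alloc_{ij1}\gets\alloc_{ij1}-\epsilon$ and $\alloc_{ij'1}\gets\alloc_{ij'1}+\epsilon$, which preserves $\sum_{j}\rgapj$ exactly; the price is that all of $j'$'s overstaffing slacks $\lgap_{j'k}$ increase, and that difficulty is resolved combinatorially, not by supply bookkeeping: $j_0$ is chosen among violators with the largest support $|\{k:\lgap_{j_0k}>0\}|$, and property (iii) then yields a contradiction if any column sum were pushed above $\sum_{j}\undercost\cdot\rgapj$. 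Without this same-day swap and largest-support argument your crux case remains open (as you acknowledge), and even your ``easy'' case has a boundary flaw: when $\sum_{j}\overcost\cdot\lgap_{j1}=\sum_{j}\undercost\cdot\rgapj$, shaving $j_0$'s day-1 hires immediately pushes the understaffing term past the overstaffing term and raises the objective, so that case too collapses into the unresolved one.
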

\begin{proof}
    It is easy to verify that there exists an optimal solution $(\xbf^*, \lambdabf^*, \thetabf^*)$ of program~\ref{eq:opt reduced form L one modified} such that $\lgapjk^* = \plus{\sum_{i\in[n]}\sum_{t\in[k]} \allocijt^* - \pRjzero + \perrorjk}$
    and $\rgapj^* = \plus{\pRjzero - \sum_{i\in[n]}\sum_{t\in[T]} \allocijt^*}$
    for all $j\in[m]$, $k\in[T]$. Going forward, we assume these two equalities always hold.

    \xhdr{Step i- obtaining property~(i):}    
    In this step we introduce a modification procedure that converts the original optimal solution $(\xbf^*, \lambdabf^*, \thetabf^*)$ into another optimal solution $(\xbf\primed, \lambdabf\primed, \thetabf\primed)$ that satisfies property~(i) in the lemma statement.

    Suppose there exists $k \in[T - 1]$ such that 
    \begin{align*}
        \sum\nolimits_{j\in[m]}\lgapjk^* < \sum\nolimits_{j\in[m]}\lgap_{j,k+1}^*
    \end{align*}
    By definition, there must exist $0\leq \lgapjk^* < \lgap_{j,k+1}^*$ and thus $\alloc_{ij,k+1}^* > 0$ for some pool $i\in[n]$ and station $j\in[m]$. Now we modify variables $\xbf^*$ into variables $\xbf\doubleprimed$ where we set $\alloc_{ijk}\doubleprimed \gets \alloc_{ijk}^* + \epsilon$, $\alloc_{ij,k+1}\doubleprimed \gets \alloc_{ij,k+1}^* - \epsilon$ for sufficiently small $\epsilon > 0$ while holding all other variables in $\xbf^*$ fixed.
    Finally, we set $\lgapjk\doubleprimed = \plus{\sum_{i\in[n]}\sum_{t\in[k]} \allocijt\doubleprimed - \pRjzero + \perrorjk}$
        and $\rgapj\doubleprimed = \plus{\pRjzero - \sum_{i\in[n]}\sum_{t\in[T]} \allocijt\doubleprimed}$ for all $j\in[m]$, $k\in[T]$.
    By construction, the modified solution is still feasible and achieves the same objective value. Moreover, term $\sum\nolimits_{j\in[m]}\lgapjk\doubleprimed > \sum\nolimits_{j\in[m]}\lgapjk^*$, i.e., strictly increases;
    while term $\sum\nolimits_{j\in[m]}\lgap_{j\ell} = \sum\nolimits_{j\in[m]}\lgap_{j\ell}^*$, i.e., remains the same, for all $\ell\not=k$. 
    
    Repeating the procedure above, we obtain an optimal solution $(\xbf\primed, \lambdabf\primed, \thetabf\primed)$ that satisfies property~(i) in the lemma statement as desired.

    \xhdr{Step ii- obtaining property~(ii):}  In this step we introduce a modification procedure that converts the optimal solution $(\xbf\primed, \lambdabf\primed, \thetabf\primed)$ in step (i) into another optimal solution $(\xbf\doubleprimed, \lambdabf\doubleprimed, \thetabf\doubleprimed)$ that satisfies properties~(i) and (ii) in the lemma statement.

    Let $k\in[T - 1]$ be the largest index such that there exists station $j\in[m]$ satisfying
    \begin{align*}
        \lgapjk\primed < \lgap_{j,k+1}\primed
    \end{align*}
    Due to property~(i),
    there must exist another station $j'\not=j$ such that
    \begin{align*}
        \lgap_{j'k}\primed > \lgap_{j',k+1}\primed
    \end{align*}
    Note that $\lgapjk\primed < \lgap_{j,k+1}\primed$ further implies $\alloc_{ij,k+1}\primed > 0$ for some pool $i\in[n]$.
    Let $\tau$ be the largest index such that $\tau \leq k$ and $\alloc_{i'j'\tau}\primed > 0$ for some pool $i'\in[n]$. Since $\lgap_{j'k}\primed > \lgap_{j',k+1}\primed \geq 0$, index $\tau$ must exist. By the definition of index $\tau$, we also have $\lgap_{j'\ell}\primed \geq \lgap_{j'k}\primed > 0$ for every index $\ell\in[\tau:k]$, since variables $\{\alloc_{ij'\ell}\primed\}$ are all zero and $\perror_{j'\ell}$ is weakly decreasing for $\ell\in[\tau:k]$.
    Now we modify variables $\xbf\primed$ into variables $\xbf\doubleprimed$ where we set $\alloc_{i'j\tau}\doubleprimed\gets \alloc_{i'j\tau}\primed + \epsilon$, $\alloc_{ij,k+1}\doubleprimed \gets \alloc_{ij,k+1}\primed - \epsilon$, $\alloc_{i'j'\tau}\doubleprimed \gets \alloc_{i'j'\tau}\primed - \epsilon$, $\alloc_{ij',k+1}\doubleprimed \gets \alloc_{ij',k+1}\primed + \epsilon$ for sufficiently small $\epsilon > 0$ while holding all other variables in $\xbf\primed$ fixed. Finally, we set $\lgapjk\doubleprimed = \plus{\sum_{i\in[n]}\sum_{t\in[k]} \allocijt\doubleprimed - \pRjzero + \perrorjk}$
        and $\rgapj\doubleprimed = \plus{\pRjzero - \sum_{i\in[n]}\sum_{t\in[T]} \allocijt\doubleprimed}$ for all $j\in[m]$, $k\in[T]$.
    By construction, the modified solution is still feasible,
    achieves a weakly smaller objective value, and still satisfies property~(i).
    Moreover, term $\lgap_{j'\ell}\doubleprimed < \lgap_{j'\ell}\primed$, i.e., strictly decreases;
    term $\lgap_{j\ell}\doubleprimed \geq \lgap_{j\ell}\primed$, i.e., weakly increases for every $\ell\in[\tau:k]$; 
    while all other variables in $\lambdabf\doubleprimed$ remains the same.

    Repeating the procedure above, we obtain an optimal solution $(\xbf\doubleprimed, \lambdabf\doubleprimed, \thetabf\doubleprimed)$ that satisfies properties~(i) and (ii) in the lemma statement.

    \xhdr{Step iii- obtaining property~(iii):}  In this step we introduce a modification procedure that converts the optimal solution $(\xbf\doubleprimed, \lambdabf\doubleprimed, \thetabf\doubleprimed)$ in step (ii) into another optimal solution $(\xbf\csuitted, \lambdabf\csuitted, \thetabf\csuitted)$ that satisfies properties~(i) (ii) and (iii) in the lemma statement.

    Due to property~(i), it suffices to compare $\sum_{j\in[m]}\overcost\cdot \lgap_{j1}\doubleprimed$ 
    and $\sum_{j\in[m]}\undercost\cdot\rgapj\doubleprimed$. Suppose 
    \begin{align*}
        \sum\nolimits_{j\in[m]}\overcost\cdot \lgap_{j1}\doubleprimed > \sum\nolimits_{j\in[m]}\undercost\cdot\rgapj\doubleprimed
    \end{align*}
    Then, there must exist station $j\in[m]$ such that 
    \begin{align*}
        \overcost\cdot \lgap_{j1}\doubleprimed > \undercost\cdot \rgapj\doubleprimed \geq 0
    \end{align*}
    which further implies $\alloc_{ij1}\doubleprimed > 0$ for some pool $i\in[n]$. 
    Now we modify variables $\xbf\doubleprimed$ into variables $\xbf\csuitted$ where we set $\alloc_{ij1}\csuitted\gets \alloc_{ij1}\doubleprimed - \epsilon$ for sufficiently small $\epsilon > 0$ while holding all other variables in $\xbf\doubleprimed$ fixed. Finally, we set $\lgapjk\csuitted = \plus{\sum_{i\in[n]}\sum_{t\in[k]} \allocijt\csuitted - \pRjzero + \perrorjk}$
    and $\rgapj\csuitted = \plus{\pRjzero - \sum_{i\in[n]}\sum_{t\in[T]} \allocijt\csuitted}$ for all $j\in[m]$, $k\in[T]$. 
    By construction, the modified solution is still feasible, achieves a weakly smaller objective value, and still satisfies properties (i) and (ii). Moreover, term $\lgap_{jk}\csuitted \leq \lgap_{jk}\doubleprimed$, i.e., weakly decreases for every $k\in[T]$;
    term $\rgap_{j}\csuitted > \rgap_{j}\doubleprimed$, i.e., strictly increases; 
    while all other variables in $\lambdabf,\thetabf$ remains the same.
    
    Repeating the procedure above, we obtain an optimal solution $(\xbf\csuitted, \lambdabf\csuitted, \thetabf\csuitted)$ that satisfies properties~(i) (ii) and (iii) in the lemma statement.

    \xhdr{Step iv- obtaining property~(iv):} In this step we introduce a modification procedure that converts the optimal solution $(\xbf\csuitted, \lambdabf\csuitted, \thetabf\csuitted)$ in step (iii) into another optimal solution $(\xbf\spsuitted, \lambdabf\spsuitted, \thetabf\spsuitted)$ that satisfies all four properties in the lemma statement.

    Due to property~(ii), it suffices to compare $\overcost\cdot \lgap_{j1}\csuitted$ 
    and $\undercost\cdot\rgapj\csuitted$ for all $j\in[m]$. 
    Suppose there exists station $j\in[m]$ such that 
    \begin{align*}
        \overcost\cdot \lgap_{j1}\csuitted >\undercost\cdot\rgapj\csuitted\geq 0
    \end{align*}
    which further implies $\alloc_{ij1}\csuitted > 0$ for some pool $i\in[n]$.
    If there are multiple such index $j$, we select the one with the largest $|\{k\in[T]:\lgap_{jk}\csuitted > 0\}|$. 
    Due to property~(iii), there must exists another station $j'\not=j$ such that
    \begin{align*}
        \overcost\cdot \lgap_{j'1}\csuitted < \undercost\cdot\rgap_{j'}\csuitted
    \end{align*}
    Now we modify variables $\xbf\csuitted$ into variables $\xbf\spsuitted$ where we set $\alloc_{ij1}\spsuitted\gets \alloc_{ij1}\spsuitted - \epsilon$, $\alloc_{ij'1}\spsuitted\gets \alloc_{ij'1}\spsuitted + \epsilon$ for sufficiently small $\epsilon > 0$ while holding all other variables in $\xbf\csuitted$ fixed. Finally, we set $\lgapjk\spsuitted = \plus{\sum_{i\in[n]}\sum_{t\in[k]} \allocijt\spsuitted - \pRjzero + \perrorjk}$
    and $\rgapj\spsuitted = \plus{\pRjzero - \sum_{i\in[n]}\sum_{t\in[T]} \allocijt\spsuitted}$ for all $j\in[m]$, $k\in[T]$. 
    By construction, the modified solution is still feasible and achieves a weakly smaller objective value.
    
    To see why the objective value weakly decreases, first note that $\sum_{j\in[m]}\rgapj\spsuitted$ remains the same as $\sum_{j\in[m]}\rgapj\csuitted$. Let $k$ and $k'$ be the largest index such that $\lgapjk\csuitted > 0$ and $\lgap_{j'k'}\csuitted > 0$, respectively. Due to property~(ii), we know $\lgap_{j\ell}\csuitted > 0$ and $\lgap_{j'\ell'} \csuitted >0$ for every $\ell\in[k],\ell'\in[k']$. Note that by our modification procedure, $\lgap_{j\ell}\spsuitted$ decreases by $\overcost\cdot \epsilon$ for every $\ell\in[k]$ and $\lgap_{j'\ell'}\spsuitted$ increases by $\overcost\cdot\epsilon$ for every $\ell'\in[k']$. Clearly, for every $\ell\not\in[k:k']$, $\sum_{s\in[m]}\overcost\cdot\lgap_{s\ell}\spsuitted \leq \sum_{s\in[m]}\overcost\cdot\lgap_{s\ell}\csuitted$ as desired. For every $\ell\in[k + 1:k']$, we claim that $\sum_{s\in[m]}\overcost\cdot\lgap_{s\ell}\csuitted < \sum_{s\in[m]}\undercost\cdot\rgap_{s}\csuitted$ and thus $\sum_{s\in[m]}\overcost\cdot\lgap_{s\ell}\spsuitted \leq \sum_{s\in[m]}\undercost\cdot\rgap_{s}\spsuitted$ as desired. This claim can be shown by contradiction, suppose $\sum_{s\in[m]}\overcost\cdot\lgap_{s\ell}\csuitted \geq \sum_{s\in[m]}\undercost\cdot\rgap_{s}\csuitted$ for some $\ell\in[k + 1:k']$. The definition of $k$ implies $\lgap_{j\ell}\csuitted = 0$ and thus there exists $j''$ such that $\overcost\cdot\lgap_{j''\ell}\csuitted > \undercost\cdot\rgap_{j''}\csuitted$, which is a contradiction since station $j$ is selected among all stations such that $\overcost\cdot\lgap_{j1}\csuitted > \undercost\cdot\rgapj\csuitted$ with the largest $|\{k\in[T]:\lgap_{jk}\csuitted > 0\}|$. Using this argument, we claim that the modified solution still satisfies properties~(i) (ii) and (iii).

    Finally, since after this modification procedure, $\overcost\cdot \lgap_{j1}\spsuitted$ strictly decreases, $\undercost\cdot\rgapj\spsuitted$ strictly increases, while $\overcost\cdot \lgap_{j'1}\spsuitted$ is still weakly smaller than $\undercost\cdot\rgap_{j'}\spsuitted$, we conclude that 
    by repeating the procedure, an optimal solution $(\xbf\spsuitted, \lambdabf\spsuitted, \thetabf\spsuitted)$ that satisfies all four in the lemma statement is obtained as desired.
\end{proof}

Now we are ready to prove \Cref{lem:opt reduced form lower bound optimal minimax cost L one}.
\begin{proof}[\textsl{Proof of \Cref{lem:opt reduced form lower bound optimal minimax cost L one}.}]
    Let $(\xbf^*, \lambdabf^*, \thetabf^*)$ be an optimal solution of program~\ref{eq:opt reduced form L one modified} satisfying property~(iv) in \Cref{lem:L infty overcost vs undercost}. Clearly, it is also a feasible solution in program~\ref{eq:opt reduced form multi station} and its objective value in two programs is the same. Therefore, the optimal objective value of program~\ref{eq:opt reduced form multi station} is at most the optimal objective value of program~\ref{eq:opt reduced form L one modified}. Finally, invoking \Cref{lem:opt reduced form lower bound optimal minimax cost L one modified} finishes the proof.
\end{proof}

Next we argue that the cost guarantee of \OPTSimOne\ is upper bounded by program~\ref{eq:opt reduced form multi station}.

\begin{lemma}
\label{lem:opt alg minimax cost L one}
    In the multi-station environment with utilitarian-staffing cost, the cost guarantee of \OPTSimOne\ is at most the optimal objective value of program~\ref{eq:opt reduced form multi station}.
\end{lemma}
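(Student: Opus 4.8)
The plan is to establish this upper bound by reusing, almost verbatim, the argument developed for the egalitarian objective in \Cref{lem:opt alg minimax cost L infty}, exploiting the crucial structural fact (highlighted in the remark following \Cref{thm:opt alg multi station}) that {\OPTSimOne} runs the emulator oracle Procedure~\ref{alg:emulator} \emph{independently} for each station $j$, using the canonical profile $\{\allocijt^*\}_{i,t}$ extracted from the optimal LP solution. The key realization is that the per-station cost guarantee produced by the emulator is completely decoupled across stations and does not depend on how the objective $\Psi$ aggregates the station costs; only the final aggregation step (sum versus max) changes. Thus the heavy lifting is already supplied by \Cref{lem:emulator}, and the utilitarian bound will follow by \emph{summing} the per-station guarantees rather than maximizing them.

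Concretely, I would first let $(\xbf^*, \targetcosts^*)$ denote the optimal solution of program~\ref{eq:opt reduced form multi station} used by {\OPTSimOne}, and fix an arbitrary prediction sequence $\predictions$ and demand profile $\dbf$. For each station $j$, I would apply the bounded overstaffing and understaffing properties of \Cref{lem:emulator} to the station-$j$ invocation of Procedure~\ref{alg:emulator}, and combine them with the second and third (station-indexed) constraints of program~\ref{eq:opt reduced form multi station} evaluated at $\targetcost_j^*$, exactly as in the egalitarian proof. This yields the two pointwise inequalities
\[
\overcost \cdot \plus{\sum\nolimits_{i\in[n]}\sum\nolimits_{t\in[T]}\allocijt - \pLjT} \leq \targetcost_j^* \quad\text{and}\quad \undercost \cdot \plus{\pRjT - \sum\nolimits_{i\in[n]}\sum\nolimits_{t\in[T]}\allocijt} \leq \targetcost_j^*.
\]
Since the prediction intervals are consistent (here $\pbiass = \zerobf$, so $\pLjT \leq \demandj \leq \pRjT$), these immediately give $\Cost[\demandj]{\xbf_j} \leq \targetcost_j^*$ for every station $j$ under this arbitrary $\predictions$ and $\dbf$.

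The concluding step is then a one-line aggregation: summing the per-station bound over $j \in [m]$ gives $\CostOne[\dbf]{\xbf} = \sum_{j\in[m]} \Cost[\demandj]{\xbf_j} \leq \sum_{j\in[m]} \targetcost_j^* = \Psi(\targetcosts^*)$, which is precisely the optimal objective value of program~\ref{eq:opt reduced form multi station} under the utilitarian $\Psi$, and taking the maximum over $\predictions, \dbf$ bounds the cost guarantee as claimed. I do not anticipate a genuine obstacle in this direction—unlike the matching lower bound (\Cref{lem:opt reduced form lower bound optimal minimax cost L one}), whose difficulty forced the delicate solution-surgery of \Cref{lem:L infty overcost vs undercost} to couple the stations' switch times, the upper bound decouples completely across stations. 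The only point meriting care is confirming that the per-station guarantee of \Cref{lem:emulator} is valid against \emph{arbitrary} (not merely single-switch) prediction subsequences for each station; this is exactly what that lemma asserts, so once it is in hand the summation is routine.
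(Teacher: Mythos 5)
Your proposal is correct and follows essentially the same route as the paper's own proof: fix an arbitrary prediction sequence and demand profile, apply the bounded over-/under-staffing properties of \Cref{lem:emulator} station by station together with the station-indexed constraints of program~\ref{eq:opt reduced form multi station} and consistency of the predictions, and then aggregate the per-station bounds $\Cost[\demandj]{\xbf_j}\leq\targetcost_j^*$ according to the objective. If anything, your write-up is slightly cleaner, since you phrase everything in terms of the variables $(\xbf^*,\targetcosts^*)$ of program~\ref{eq:opt reduced form multi station} actually used by the algorithm (and make the final summation over stations explicit), whereas the paper's proof borrows the $(\lambdabf^*,\thetabf^*)$ notation of the auxiliary program from its lower-bound argument.
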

\begin{proof}
    Let $(\xbf^*, \lambdabf^*, \thetabf^*)$ be the optimal solution of program~\ref{eq:opt reduced form multi station} used in \OPTSimOne.
    It suffices to show that for every prediction sequence $\predictions$ and demand profile $\dbf$,
    for every station $j$, the total number of hired workers  $\sum_{i\in[n]}\sum_{t\in[T]}\allocijt$ for this station satisfies that
    \begin{align*}
    \plus{\demandj - \sum\nolimits_{i\in[n]}\sum\nolimits_{t\in[T]}\allocijt} \leq \rgapj^*
    \;\;
    \mbox{and}
    \;\;
    \plus{\sum\nolimits_{i\in[n]}\sum\nolimits_{t\in[T]}\allocijt - \demandj} \leq \rgapj^*
    \end{align*}
    which are implied by 
    \begin{align*}
    {\pR_{jT} - \sum_{i\in[n]}\sum_{t\in[T]}\allocijt} \leq \rgapj^*
    \;\;
    \mbox{and}
    \;\;
    {\sum_{i\in[n]}\sum_{t\in[T]}\allocijt - \pL_{jT}} \leq \rgapj^*
    \end{align*}
    due to the consistency (Assumption~\ref{asp:prediction sequence}) of prediction sequence $\predictions$.
    Invoking the bounded over/understaffing cost properties in \Cref{lem:emulator} and second, third, and fourth constraints about $\lambdabf^*,\thetabf^*$ in program~\ref{eq:opt reduced form multi station} proves the two inequality above as desired.
\end{proof}

Combining \Cref{lem:opt reduced form lower bound optimal minimax cost L one} and \Cref{lem:opt alg minimax cost L one}, we prove \Cref{thm:opt alg multi station} for the utilitarian staffing cost as desired.

\subsection{Proof of Lemma~\ref{lem:alg feasibility cancellation}}
\label{apx:optalgcanfeasibility}
\optalgcanfeasibility*

\begin{proof}
    We first verify the supply feasibility of the staffing profile. For each $\ell\in[\cptotal]$ and each pool $i\in[n]$,
    canonical hiring decision $\{\canallocit\}_{i\in[n],t\in\cintervalell}$ constructed in \eqref{eq:canalloc construction} satisfies 
    \begin{align*}
        \displaystyle\sum\nolimits_{t\in[t_{\ell - 1} + 1:t_{\ell}]}\frac{\wdiscount_{it_{\ell-1}}}{\wdiscountit} \canallocit
        \leq 
        \remainsupplyi
    \end{align*}
    since it is constructed from a feasible solution of subprogram $\lpcancelsubproblem$ (with constraint~\eqref{eq:all feasibility constraint}) where the remaining supply $\remainsupplyi$ is constructed in \eqref{eq:status update} as 
    \begin{align*}
        \remainsupplyi = \wdiscount_{it_{\ell - 1}}\left(\supplyi - \sum\nolimits_{t\in[t_{\ell - 1}]}\frac{1}{\wdiscountit}\allocit\right)
    \end{align*}
    Combining with the fact that $x_{it}\leq\tilde{x}_{it}$ by construction in Procedure~\ref{alg:emulator} (and its existence \Cref{lem:emulator}), we obtain the supply feasibility of the staffing profile as desired.

    Next we verify the budget feasibility of the staffing profile. For each $\ell\in[\cptotal]$, recall the algorithm identifies the largest index $k\in\cintervalellplus$ satisfying
    \begin{align*}
        \pR_{k} - \sum\nolimits_{i\in[n]}\sum\nolimits_{t\in[t_{\ell - 1}+1:k]}\allocit =
        \pR_{t_{\ell-1}} - \sum\nolimits_{i\in[n]}\sum\nolimits_{t\in[t_{\ell - 1}+1:k]}\canallocit
    \end{align*}
    The construction of Procedure~\ref{alg:emulator} ensures $\allocit = 0$ for all $i\in[n]$ and $t\in[k + 1:t_\ell]$. Moreover, due to the construction \eqref{eq:canalloc construction} of canonical hiring decision $\{\canallocit\}_{i\in[n],t\in\cintervalell}$, the construction of canonical discharging decision $\{\canrevoke_{ik}\}_{i\in[n]}$, and constraints~\labelcref{eq:identical allocation constraint,eq:identical cancellation constraint} in subprogram $\lpcancelsubproblem$, we have 
    \begin{align*}
        \sum\nolimits_{i\in[n]}\left(\sum\nolimits_{t\in[t_{\ell-1} + 1:t_\ell]}\priceit\canallocit + \cpriceiled\canrevoke_{ik}\right)
        \leq \remainbudget
    \end{align*}
    where the remaining budget $\remainbudget$ is constructed in \eqref{eq:status update} as
    \begin{align*}
        \remainbudget = \budget - \sum\nolimits_{i\in[n]}\sum\nolimits_{t\in[t_{\ell - 1}]}\priceit\allocit+\cpriceit\revokeit
    \end{align*}
    Combining with the fact that $x_{it}\leq\tilde{x}_{it}$ by construction in Procedure~\ref{alg:emulator} (and its existence \Cref{lem:emulator}) and the discharging implementation in \eqref{eq:actual cancellation construction}, we obtain the budget feasibility of the staffing profile as desired.
    
    Finally, the discharging feasibility holds due to the discharging implementation in \eqref{eq:actual cancellation construction} and the construction of canonical discharging decision $\{\canrevoke_{ik}\}_{i\in[n]}$ argued above.
\end{proof}

\subsection{Proof of Theorem~\ref{thm:opt alg cancellation} in a More General Model}
\label{apx:optalgcan}
\optalgcan*

In this subsection, we prove \Cref{thm:opt alg cancellation} in a more general model. Specifically, we allow overstaffing cost function $\overcostj:\reals_+\rightarrow\reals_+$ and understaffing cost function $\undercostj:\reals_+\rightarrow\reals_+$ that are weakly increasing, weakly convex with $\overcostj(0) = \undercostj(0) = 0$. In this generalized model, we modify the objective of program~\ref{eq:opt cancellation} from 
$
\max_{\switchseq\in\switchseqspace}
    \undercost\cdot \rgapJ\vee
    \overcost\cdot \lgapJ$
to 
$
\max_{\switchseq\in\switchseqspace}
    \undercost(\rgapJ)\vee
    \overcost(\lgapJ)$.
Since both under/overstaffing cost functions $\undercost$, $\overcost$ are convex, the modified objective function is convex and thus the modified version of program~\ref{eq:opt cancellation} is a convex program.

We first show that program~\ref{eq:opt cancellation} is a lower bound of the optimal minimax cost $\optcost$ in the costly discharging environment.

\begin{lemma}
\label{lem:opt reduced form lower bound optimal minimax cost cancellation}
    In the costly discharging environment, for every (possibly randomized) online algorithm $\ALG$, its cost guarantee is at least the optimal objective value of program~\ref{eq:opt cancellation}.
\end{lemma}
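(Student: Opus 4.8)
The plan is to mirror the lower-bound arguments already developed for the base model (\Cref{lem:opt reduced form lower bound optimal minimax cost}) and for the multi-station model (\Cref{lem:opt reduced form lower bound optimal minimax cost L infty,lem:opt reduced form lower bound optimal minimax cost L one}). First I would restrict the adversary to the finite family of multi-switch prediction sequences $\{\predictions(\switchseq)\}_{\switchseq\in\switchseqspace}$ built in \eqref{eq:multi switch prediction construction}, and from the behavior of an arbitrary (possibly randomized) online algorithm $\ALG$ against this restricted adversary I would manufacture a \emph{feasible} solution of program~\ref{eq:opt cancellation} whose objective value is at most the cost guarantee of $\ALG$. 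Because \ref{eq:opt cancellation} is a minimization, producing a single feasible solution with objective value $\le$ cost guarantee immediately certifies that its optimal value lower-bounds the cost guarantee, which is exactly the claim.

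For the construction, for each configuration $\switchseq$ let $\randomallocit(\switchseq)$ and $\randomrevoke_{i\ell}(\switchseq)$ be the random numbers of workers hired from pool $i$ on day $t$ and released from pool $i$ during epoch $\cintervalell$ when $\ALG$ faces $\predictions(\switchseq)$; since releasing fees are constant within each epoch, it is without loss to take releases only on the epoch boundaries $t_1,\dots,t_\cptotal$, so one release variable per epoch suffices. I would first record the \emph{non-anticipativity} structure: \eqref{eq:multi switch prediction construction} ensures that any two sequences $\predictions(\switchseq),\predictions(\switchseq')$ with $\switchseq_{1:\ell-1}=\switchseq'_{1:\ell-1}$ reveal identical intervals through day $\switchseq_\ell\wedge\switchseq'_\ell$, so $\ALG$ cannot distinguish them there and its expected decisions must agree — this is precisely constraints \eqref{eq:identical allocation constraint} and \eqref{eq:identical cancellation constraint}. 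I would then set $\allocitJ\gets\expect{\randomallocit(\switchseq)}$ and $\revokeiellJ\gets\expect{\randomrevoke_{i\ell}(\switchseq)}$. Since supply-, budget-, and releasing-feasibility (constraint~\eqref{eq:all feasibility constraint}) hold on every sample path and are linear in the decisions, they survive passing to expectations. Finally I would let $\lgapJ$ and $\rgapJ$ be the positive parts of, respectively, the overshoot of the expected final staffing level above $\pL_T(\switchseq)$ and its undershoot below $\pR_T(\switchseq)$, which satisfies \eqref{eq:bounded cost constraint} by definition.

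To bound the objective, for each fixed $\switchseq$ the adversary may place the demand at either endpoint of the consistent terminal interval $[\pL_T(\switchseq),\pR_T(\switchseq)]$: taking $\demand=\pR_T(\switchseq)$ forces understaffing and, by convexity of $\undercost(\cdot)$ and $\plus{\cdot}$ together with Jensen's inequality, yields expected cost at least $\undercost(\rgapJ)$; taking $\demand=\pL_T(\switchseq)$ symmetrically yields at least $\overcost(\lgapJ)$. Each of these is at most the cost guarantee (a maximum over all prediction sequences and demands), so $\undercost(\rgapJ)\vee\overcost(\lgapJ)$ is bounded by the cost guarantee for every $\switchseq$, and hence so is $\max_{\switchseq\in\switchseqspace}\undercost(\rgapJ)\vee\overcost(\lgapJ)$, the objective at the constructed solution. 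The hard part will be the careful bookkeeping of non-anticipativity \emph{across epochs}: unlike the single-switch base model, a configuration switches once per epoch, so I must verify that the shared-prefix structure propagates correctly through both the hiring variables and the epoch-level releasing variables, and that the coupled releasing-feasibility and budget constraints remain simultaneously consistent in expectation. Once that structure is pinned down, the Jensen step and the endpoint-demand selection are routine adaptations of the base-model argument.
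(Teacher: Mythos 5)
Your proposal is correct and follows essentially the same route as the paper's proof: restrict the adversary to the multi-switch family $\{\predictions(\switchseq)\}_{\switchseq\in\switchseqspace}$, define the feasible solution of \ref{eq:opt cancellation} by taking expectations of the algorithm's hiring and (epoch-level) releasing decisions, verify the non-anticipativity constraints \eqref{eq:identical allocation constraint}--\eqref{eq:identical cancellation constraint} from the shared-prefix structure and the feasibility constraints \eqref{eq:all feasibility constraint} by linearity of expectation, and bound the objective via endpoint demands plus Jensen's inequality. The cross-epoch bookkeeping you flag as the hard part is exactly what the paper dispatches by construction of \eqref{eq:multi switch prediction construction}, and your more explicit treatment of the endpoint-demand/convexity step is a faithful expansion of the paper's terse closing sentence.
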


\begin{proof}
    The argument is similar to the ones for \Cref{lem:opt reduced form lower bound optimal minimax cost,lem:opt reduced form lower bound optimal minimax cost L infty,lem:opt reduced form lower bound optimal minimax cost L one modified}. Specifically, we construct a feasible solution of program~\ref{eq:opt cancellation}, whose objective value is equal to the cost guarantee of algorithm $\ALG$.
    
    Consider prediction sequence subset $\{\predictions(\switchseq)\}_{\switchseq\in\switchseqspace}$ defined in \eqref{eq:multi switch prediction construction}. Let random variable $\randomallocit(\switchseq)$ ($\randomrevoke_{it}(\switchseq)$) be the number of workers hired (discharged) by the algorithm from pool $i$ in day $t$ under prediction sequence $\predictions(\switchseq)$.
    Due to the feasibility of the algorithm under prediction sequence $\predictions(\switchseq)$, for all sample paths (over the randomness of the algorithm), we have
    \begin{align*}
    \forall i\in[n],\forall \switchseq\in\switchseqspace:\quad &
            \displaystyle\sum\nolimits_{t\in[T]}\frac{1}{\wdiscountit}\randomallocit(\switchseq) \leq \supplyi
    \\
    \forall \switchseq\in\switchseqspace:\quad 
    & \displaystyle\sum\nolimits_{i\in[n]}
    \sum\nolimits_{t\in[T]}
    \priceit \randomallocit(\switchseq)
    +
    \cpriceit\randomrevokeit(\switchseq)
    \leq \budget
    \\
    \forall i\in[n],\forall k\in[T],\forall\switchseq\in\switchseqspace:\quad
    &
    \sum\nolimits_{t\in[k]}\randomrevokeit(\switchseq) \leq 
    \sum\nolimits_{t\in[k]}\randomallocit(\switchseq)
    \end{align*}
    Now consider the following solution $(\xbf,\ybf,\lambdabf,\thetabf)$ construction:
    \begin{align*}
        i\in[n],t\in[T],\switchseq\in\switchseqspace:&
        \qquad
        \allocitJ \gets \expect{\randomallocit(\switchseq)}
        \\
        i\in[n],\ell\in[\cptotal],\switchseq\in\switchseqspace:&
        \qquad
        \revokeiellJ \gets \sum\nolimits_{t\in\cintervalell}\expect{\randomrevokeit(\switchseq)}
        \\
        \switchseq\in\switchseqspace:
        &
        \qquad
        \lgapJ \gets
        \plus{
        \expect{\sum\nolimits_{i\in[n]}\sum\nolimits_{t\in[T]}\randomallocit(\switchseq) - \randomrevokeit(\switchseq)} -\pL_T(\switchseq)
        }
        \\
        \switchseq\in\switchseqspace:
        &
        \qquad
        \rgapJ \gets
        \plus{
        \pR_T(\switchseq) - 
        \expect{\sum\nolimits_{i\in[n]}\sum\nolimits_{t\in[T]}\randomallocit(\switchseq) - \randomrevokeit(\switchseq)} 
        }
    \end{align*}
    Note constraints~\labelcref{eq:identical allocation constraint,eq:identical cancellation constraint} are satisfied due to the construction of prediction sequence subset and the construction of the solution. Constraints~\labelcref{eq:all feasibility constraint,eq:bounded cost constraint} are satisfied due to Assumption~\ref{asp:limited cancellation fee}, the solution construction and the linearity of expectation. Therefore, the constructed solution is feasible in program~\ref{eq:opt cancellation}. Moreover, the objective value of the constructed solution at least the cost guarantee of the algorithm $\ALG$ due to the solution construction, the convexity of overstaffing/understaffing cost functions and Jensen's inequality.
\end{proof}

Next we argue that the cost guarantee of \OPTSimCan\ is upper bounded by program~\ref{eq:opt cancellation}.
\begin{lemma}
\label{lem:opt alg minimax cost cancellation}
    In the multi-station environment with Rawlsian-staffing cost, the cost guarantee of \OPTSimCan\ is at most the optimal objective value of program~\ref{eq:opt cancellation}.
\end{lemma}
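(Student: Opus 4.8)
The plan is to prove the matching upper bound by a backward induction over the epoch index $\ell$, in direct analogy with the resolving argument behind \Cref{thm:opt alg resolving}. Because the staffing cost is realized only on day $T+1$, for every state $\status$ reachable at the beginning of epoch $\ell$ I let $V(\ell,\status)$ denote the worst-case (over predictions and demand consistent with $\status$) final cost of \OPTSimCan\ started in state $\status$, and I write $\mathrm{OPT}(\ell,\status)$ for the optimal objective value of the subprogram $\lpcancelsubproblem$. The inductive claim is $V(\ell,\status)\le\mathrm{OPT}(\ell,\status)$ for all $\ell$ and $\status$; instantiating it at $\ell=1$ and the initial state, where the subprogram is exactly \ref{eq:opt cancellation}, gives the lemma.

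The crux — and the step I expect to be hardest — is an \emph{emulator bound} that plays the role \Cref{lem:opt alg minimax cost} played in the base model. Fixing a canonical solution of $\lpcancelsubproblem$, let $\mathcal E$ be the hypothetical non-resolving policy that uses this single solution to drive Procedure~\ref{alg:emulator} within every remaining epoch, together with the releasing rule \eqref{eq:actual cancellation construction} on each boundary day. I would show $V_{\mathcal E}(\ell,\status)\le\mathrm{OPT}(\ell,\status)$ against an arbitrary prediction sequence. The difficulty relative to the base model is that the quantity controlling the cost is the \emph{net} staffing $\sum_{i,t}\allocit-\sum_{i,\ell'}\revoke_{i\ell'}$ rather than the cumulative hires. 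I would control it in two steps: first, apply the hiring invariant of \Cref{lem:emulator} epoch-by-epoch to bound the cumulative hires emitted by Procedure~\ref{alg:emulator}; second, use \eqref{eq:actual cancellation construction} to show that the releases absorb precisely the residual gap $\pR_{t_\ell}(\switchseq\ked)-\pR_{t_\ell}$ between the right endpoint of the configuration selected by the observed within-epoch switch and the realized right endpoint, so that the net staffing tracks that of the matching configuration $\switchseq$. The over- and under-staffing are then bounded by $\overcost(\lgapJ)$ and $\undercost(\rgapJ)$ through constraints \eqref{eq:bounded cost constraint}, each at most the objective value. Feasibility of a releasing decision satisfying \eqref{eq:actual cancellation construction} is already guaranteed by \Cref{lem:alg feasibility cancellation}, so the genuinely new work is this net-staffing bookkeeping, together with checking that the non-anticipativity constraints \eqref{eq:identical allocation constraint} and \eqref{eq:identical cancellation constraint} let the per-configuration bounds survive even for prediction sequences that are not themselves of the form \eqref{eq:multi switch prediction construction}.

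With the emulator bound established, the induction itself is short. In the base case $\ell=\cptotal$, \OPTSimCan\ coincides with $\mathcal E$ (it solves the subprogram once and emulates through the single remaining epoch), so $V(\cptotal,\status)=V_{\mathcal E}(\cptotal,\status)\le\mathrm{OPT}(\cptotal,\status)$ by the emulator bound. For the inductive step at $\ell<\cptotal$, I would note that within epoch $\ell$ both \OPTSimCan\ and $\mathcal E$ execute the identical canonical hiring \eqref{eq:canalloc construction} and releasing \eqref{eq:actual cancellation construction}, and hence reach the same next-epoch state $\status'$ for every realization of the epoch-$\ell$ predictions. The inductive hypothesis gives $V(\ell+1,\status')\le\mathrm{OPT}(\ell+1,\status')$, while \Cref{lem:opt reduced form lower bound optimal minimax cost cancellation}, applied to the subproblem rooted at $\status'$, shows that the continuation of $\mathcal E$ from $\status'$ — a valid online policy for that subproblem — has worst-case cost at least $\mathrm{OPT}(\ell+1,\status')$. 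Chaining these yields $V(\ell+1,\status')\le V_{\mathcal E}(\ell+1,\status')$ for each $\status'$; since the two policies agree on epoch $\ell$, taking the worst case over epoch-$\ell$ predictions gives $V(\ell,\status)\le V_{\mathcal E}(\ell,\status)\le\mathrm{OPT}(\ell,\status)$, closing the induction. The only remaining care is to confirm that the resolved and emulated policies genuinely induce the same transition $\status\mapsto\status'$, which follows from the state-update rule \eqref{eq:status update} together with the canonical-decision constructions, exactly as in the \OPTReS\ analysis.
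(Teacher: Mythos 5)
Your induction skeleton---resolving at each epoch boundary, chaining the induction hypothesis with \Cref{lem:opt reduced form lower bound optimal minimax cost cancellation} applied to the subproblem rooted at the reached state, and concluding $V(\ell,\status)\le\mathrm{OPT}(\ell,\status)$---is sound \emph{conditional on} your ``emulator bound,'' and it mirrors the paper's high-level strategy. The genuine gap is in that bound itself, i.e., in the step you identify as the crux. Your argument for it is that the releasing rule \eqref{eq:actual cancellation construction} makes the net staffing ``track that of the matching configuration $\switchseq^{(k)}$.'' But \eqref{eq:actual cancellation construction} is a constrained system (it requires $0\le\revoke_i\le\canrevoke_{ik}$), and \OPTSimCan\ is defined to make \emph{no} release whenever this system admits no feasible solution. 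In that event the realized state matches no configuration: the inequalities the paper derives (\eqref{eq:alloc is at most alloc primed doubleprimed}, \eqref{eq:rgap is at most rgap doubleprimed}, \eqref{eq:rgap is at least rgap primed}) only \emph{sandwich} the realized gap between those of the two configurations $\switchseq^{(k)}$ and $\switchseq^{(t_\ell)}$. This is harmless in your base case $\ell=\cptotal$ (the terminal cost is monotone in the realized gap, so a sandwich suffices), but in intermediate epochs a fixed, non-resolving continuation $\mathcal E$ launched from a sandwiched state has no per-configuration constraint in \eqref{eq:bounded cost constraint} to appeal to, and the bound $V_{\mathcal E}(\ell,\status)\le\mathrm{OPT}(\ell,\status)$ does not follow---indeed it is doubtful it is even true, since $\mathcal E$ cannot adapt its future decisions to whichever bracketing trajectory the adversary later makes relevant.

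The paper supplies exactly the ingredient your plan is missing, and it is not an emulator bound: it is \Cref{lem:vtg comparison}, a comparison of \emph{optimal value functions} across states. Property (i) (domination) handles the case where the release is executed, since the realized state then matches $\switchseq^{(k)}$'s gap with weakly more supply and budget; property (ii), the sandwich property $\optcost(t,\status)\le\optcost(t,\status\primed)\vee\optcost(t,\status\doubleprimed)$, handles the no-release case. Property (ii) is proven by its own day-by-day induction in which the platform at the sandwiched state \emph{adaptively mimics} one of the two bracketing policies depending on how future predictions unfold---an adaptivity that a fixed $\mathcal E$ lacks, but which is effectively available to \OPTSimCan\ precisely because it resolves: by the epoch-$(\ell+1)$ induction hypothesis its continuation is minimax optimal from whatever state it reaches, so only optimal values, not a particular policy's costs, need to be compared. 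To repair your proof, replace the comparison against $\mathcal E$ with the value-function comparison at the two bracketing statuses (this is the paper's route, which then finishes by noting that the subprogram values at $[\ell+1,\status\primed]$ and $[\ell+1,\status\doubleprimed]$ are at most that of $\lpcancelsubproblem$), or else supply a direct invariant for $\mathcal E$ that survives infeasible releases; the paper's remark that resolving is what ``enables an induction-based proof'' in this extension suggests the latter is not available.
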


Before proving \Cref{lem:opt alg minimax cost cancellation}, we introduce the following auxiliary concept. With slight abuse of notation, we use $\optcost(t, \status)$ to denote the optimal minimax cost given status $\status$ at the end of day $t - 1$. Here status $\status = (t, \bar\cumallocs,\remainsupplies,\remainbudget,[\curpL,\curpR])$ follows the same definition as in the main text except that we expand its first coordinates from $[\cptotal]$ (aka., $\{t_1,\dots,t_\cptotal\}$) to the whole time horizon $[T]$. We establish two properties about $\optcost(t,\status)$ between different status $\status$.

\begin{lemma}
\label{lem:vtg comparison}
    In the costly discharging environment, the following properties holds:
    \begin{enumerate}
        \item[(i)] for any $t\in[T]$ and two statuses $\status = (t, \bar\cumallocs,\remainsupplies, \remainbudget, [\curpL,\curpR])$ and $\status\primed = (t, \bar\cumallocs\primed,\remainsupplies\primed, \remainbudget\primed, [\curpL\primed,\curpR\primed])$, if $\remainsupplyi\geq \remainsupplyi\primed$ for all $i\in[n]$, $\remainbudget \geq \remainbudget\primed$, $\sum_{i\in[n]}\cumallociBar - \curpL \leq \sum_{i\in[n]}\cumallociBar\primed - \curpL\primed$, and $\curpR - \sum_{i\in[n]}\cumallociBar \leq \curpR\primed -\sum_{i\in[n]}\cumallociBar\primed$, then the optimal minimax cost given status $\status$ is weakly smaller than the optimal minimax cost given status $\status\primed$, i.e., $\optcost(t,\status) \leq \optcost(t,\status\primed)$;
        
        \item[(ii)] for any $t\in[T]$ and three statuses $\status = (t, \bar\cumallocs,\remainsupplies, \remainbudget, [\curpL,\curpR])$, $\status\primed = (t, \bar\cumallocs\primed,\remainsupplies\primed, \remainbudget\primed, [\curpL\primed,\curpR\primed])$, and $\status\doubleprimed = (t, \bar\cumallocs\doubleprimed,\remainsupplies\doubleprimed, \remainbudget\doubleprimed, [\curpL\doubleprimed,\curpR\doubleprimed])$, if $\remainsupplyi\geq \remainsupplyi\primed\vee\remainsupplyi\doubleprimed$ for all $i\in[n]$, $\remainbudget \geq \remainbudget\primed\vee\remainbudget\doubleprimed$, $\curpR-\curpL = \curpR\primed - \curpL\primed = \curpR\doubleprimed - \curpL\doubleprimed$, and 
        \begin{align*}
            \curpR\primed - \sum\nolimits_{i\in[n]}\cumallociBar\primed \leq 
            \curpR - \sum\nolimits_{i\in[n]}\cumallociBar \leq 
            \curpR\doubleprimed - \sum\nolimits_{i\in[n]}\cumallociBar\doubleprimed 
        \end{align*}
        then
        \begin{align*}
            \optcost(t, \status) \leq \optcost(t, \status\primed)\vee \optcost(t, \status\doubleprimed)
        \end{align*}
    \end{enumerate}
\end{lemma}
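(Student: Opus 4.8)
The plan is to treat $\optcost(t,\status)$ purely as the value of the two-player game started from state $\status$ (the decision-maker minimizes over feasible online hiring/releasing policies for days $[t:T]$, the adversary maximizes over nested, consistent prediction sequences and demands), and to establish both parts by explicit coupling/simulation arguments, \emph{without} invoking the eventual identity between $\optcost$ and the value of \ref{eq:opt cancellation} (this lemma feeds the induction that proves that identity, so using it would be circular). The central device is to pass to \emph{residual coordinates}: writing $Z \triangleq \sum_{i\in[n]}\cumallociBar$ for the current aggregate net staffing and $u \triangleq \demand - Z$ for the residual demand, the terminal cost depends only on $u$ and the aggregate future net hires $W$, namely $\undercost\,(u-W)^+ + \overcost\,(W-u)^+$. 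In these coordinates the hypotheses of part~(i), $Z - \curpL \le Z' - \curpL'$ and $\curpR - Z \le \curpR' - Z'$, say exactly that the residual uncertainty interval $[\curpL - Z,\curpR - Z]$ of $\status$ is contained in that of $\status'$; since future predictions are nested inside these (perfect consistency is assumed throughout this section), \emph{every} admissible adversary sequence for $\status$ maps, under the constant translation $\demand \mapsto \demand + (Z'-Z)$, to an admissible adversary sequence for $\status'$.

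For part~(i) I would fix a policy $B'$ attaining the value $\optcost(t,\status')$ and build a policy $B$ for $\status$ that, upon seeing $[\pL_\tau,\pR_\tau]$, feeds the translated interval $[\pL_\tau + (Z'-Z),\,\pR_\tau + (Z'-Z)]$ to $B'$ and copies $B'$'s per-pool hiring and releasing. Under the demand coupling $\demand' = \demand + (Z'-Z)$ one checks $u' = u$ and $W' = W$, so the two runs incur identical costs; maximizing over adversaries then yields $\optcost(t,\status) \le \optcost(t,\status')$. Feasibility of the verbatim copy is where $\remainsupplyi \ge \remainsupplyi'$ and $\remainbudget \ge \remainbudget'$ enter, since copying hires and spend keeps supply- and budget-feasibility when $\status$ has at least as much of each resource.

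For part~(ii) I would first use the resource-monotonicity already isolated in part~(i) to reduce to the case where $\status'$ and $\status''$ carry the \emph{same} (dominating) resources as $\status$; it then suffices to show that the value is a convex (hence quasiconvex) function $g$ of the single position parameter $q \triangleq \curpR - Z$ when the width $\curpR-\curpL$ and the resources are held fixed. Indeed the middle state has $q$ between $q'$ and $q''$, so $q = \lambda q' + (1-\lambda)q''$ and $g(q) \le \lambda g(q') + (1-\lambda)g(q'') \le \max\{g(q'),g(q'')\}$. Convexity in $q$ I would get by interpolation: with resources fixed, the reachable set of net-staffing trajectories is a fixed polytope independent of $q$, and the cost is convex, so running the $\lambda$-interpolation of the two flanking optimal policies (each translated into the $\status$-frame as in part~(i)) against any adversary gives a cost at most $\lambda$ times the first worst case plus $(1-\lambda)$ times the second.

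The hard part, and the step I would spend the most care on, is reconciling the \emph{per-pool} feasibility constraints with hypotheses that control only the \emph{aggregate} cumulative staffing $\sum_{i\in[n]}\cumallociBar$. Copying hires is harmless, but the releasing-feasibility constraints $\sum_{\tau\le k}\revoke_{i\tau} \le \cumalloci + \sum_{\tau\le k}\alloc_{i\tau}$ are stated pool-by-pool and reference the per-pool pre-hire counts $\cumalloci$, which the hypotheses do not control individually; likewise the interpolation in part~(ii) mixes per-pool decisions. The resolution I would pursue exploits Assumption~\ref{asp:limited cancellation fee}: within each epoch $\cintervalell$ the releasing fee $\cpriceiled$ is common to all pools, so the identity of the pool a released worker is drawn from affects neither the cost nor the aggregate net staffing. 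This lets me reallocate the simulated releases across pools by an exchange/flow argument so as to respect each pool's releasing-feasibility while matching the target aggregate release in every epoch, using the aggregate control on $\sum_{i\in[n]}\cumallociBar$ together with the per-pool supply domination for the copied hires. I expect this reallocation step — and the verification that it preserves the cost identity used in parts~(i) and~(ii) — to be the only genuinely delicate piece; the coupling-and-convexity skeleton above is routine once it is in place.
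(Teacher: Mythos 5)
Your part (i) has a concrete gap. Write $Z=\sum_{i\in[n]}\cumallociBar$ and $Z\primed=\sum_{i\in[n]}\cumallociBar\primed$. The hypotheses of (i) only control the two gaps $Z-\curpL\leq Z\primed-\curpL\primed$ and $\curpR-Z\leq\curpR\primed-Z\primed$; they do \emph{not} imply $Z\geq Z\primed$. For instance $[\curpL,\curpR]=[5,6]$ with $Z=5.5$ and $[\curpL\primed,\curpR\primed]=[0,10]$ with $Z\primed=7$ satisfies both gap conditions (with equal supplies and budgets). Now if the optimal policy for $\status\primed$ begins by releasing all $7$ previously hired workers---release-feasible for $\status\primed$---your verbatim copy must release $7$ while $\status$ holds only $5.5$ workers in total. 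Assumption~\ref{asp:limited cancellation fee} lets you reshuffle releases \emph{across pools}, but it cannot manufacture aggregate release capacity, so both the feasibility of the copy and your claimed cost identity fail in this case. The repair is to mimic the \emph{net-staffing trajectory} of the $\status\primed$-policy (translated into the $\status$ frame) clipped at zero: release only down to level $0$, and when the target trajectory re-crosses zero, rejoin it by hiring; one checks the required per-day hires and releases are pointwise at most those of the copied policy (so supply, budget, and release feasibility hold), and the resulting cost is weakly smaller rather than equal---e.g., if the target ends at a negative level, the clipped copy ends at $0$ and its understaffing cost $\undercost\cdot\demand$ is at most the copied policy's understaffing cost. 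This mimic-with-clipping is also what the paper's one-sentence proof of (i) tacitly requires, so you should spell it out.

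Once (i) is repaired, your part (ii) is correct but takes a genuinely different route from the paper's. The paper proves (ii) by backward induction over $t$: after normalizing the three intervals to coincide, on each day the $\status$-platform either does nothing while its position stays sandwiched between the positions of the two optimal flanking algorithms, or, once the sandwich breaks, matches the total hires of the violated flanking state and invokes (i). You instead (a) apply only the easy, equal-position case of (i) to grant $\status\primed$ and $\status\doubleprimed$ the same resources as $\status$ (so the aggregate-release issue above never arises here), and (b) prove the value is convex in the position $q=\curpR-Z$ by running the $\lambda$-interpolation of the two flanking optimal policies against any common adversary: all feasibility constraints (supply, budget, prefix release constraints) are jointly linear in the state and the decisions, so the interpolated decision profile is feasible for the interpolated state after Assumption~\ref{asp:limited cancellation fee} is used to split aggregate releases across pools, and convexity of the terminal cost then gives the bound. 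This is arguably cleaner and yields a stronger conclusion (convexity, not merely the quasiconvexity the lemma needs) without any induction. One inaccuracy in your write-up: the feasible set of net-staffing trajectories is \emph{not} independent of $q$ (release capacity depends on $Z$); what your argument actually uses, and what is true, is that convex combinations of feasible profiles are feasible for the convex-combination state.
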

\begin{proof}
    When the condition in property (i) is satisfied, due to Assumption~\ref{asp:limited cancellation fee} that the per-worker discharging fees are identical across pools, the platform with status $\status$ can mimic any online algorithm for status $\status\primed$ and obtain a weakly smaller staffing cost, which implies $\optcost(t, \status) \leq \optcost(t,\status\primed)$ as desired.

    We prove property (ii) with an induction argument over $t\in[T + 1]$.

    \xhdr{Base case ($t = T + 1$):} In this case, 
    \begin{align*}
        \optcost(T + 1,\status) &= \undercost\left(\plus{\curpR - \sum\nolimits_{i\in[n]}\cumallociBar}\right)\vee\overcost\left(\plus{\sum\nolimits_{i\in[n]}\cumallociBar - \curpL}\right) 
        \\
        &\leq  
        \undercost\left(\plus{\curpR\doubleprimed - \sum\nolimits_{i\in[n]}\cumallociBar\doubleprimed}\right)
        \vee \overcost\left(\plus{\sum\nolimits_{i\in[n]}\cumallociBar\primed - \curpL\primed}\right) 
        \\
        &\leq 
        \optcost(T + 1,\status\doubleprimed)\vee \optcost(T + 1,\status\primed)
    \end{align*}
    where the first inequality holds due to the monotonicity of the under/overstaffing cost functions and the property condition, i.e., $\curpR-\curpL = \curpR\primed - \curpL\primed = \curpR\doubleprimed - \curpL\doubleprimed$ and $\curpR\primed - \sum\nolimits_{i\in[n]}\cumallociBar\primed \leq 
    \curpR - \sum\nolimits_{i\in[n]}\cumallociBar \leq 
    \curpR\doubleprimed - \sum\nolimits_{i\in[n]}\cumallociBar\doubleprimed$.

    \xhdr{Inductive step for $t \in [T]$:} Suppose property~(ii) holds for all $\tau \in [t + 1: T + 1]$. Due to the property condition that $\curpR-\curpL = \curpR\primed - \curpL\primed = \curpR\doubleprimed - \curpL\doubleprimed$, we assume $\curpR = \curpR\primed = \curpR\doubleprimed$ and $\curpL = \curpL\primed = \curpL\doubleprimed$ without loss of generality. Then condition
    \begin{align*}
    \curpR\primed - \sum\nolimits_{i\in[n]}\cumallociBar\primed \leq 
    \curpR - \sum\nolimits_{i\in[n]}\cumallociBar \leq 
    \curpR\doubleprimed - \sum\nolimits_{i\in[n]}\cumallociBar\doubleprimed    
    \end{align*}
    becomes
    \begin{align*}
    \sum\nolimits_{i\in[n]}\cumallociBar\doubleprimed \leq 
    \sum\nolimits_{i\in[n]}\cumallociBar \leq 
    \sum\nolimits_{i\in[n]}\cumallociBar\primed
    \end{align*}
    Now consider an arbitrary prediction $[\pLt,\pRt]\subseteq[\curpL,\curpR]$ revealed in day~$t$. Let $\{\cumallociHat\primed\}_{i\in[n]}$ and $\{\cumallociHat\doubleprimed\}_{i\in[n]}$ be the total number of hired workers at the end of day $t$ in the minimax optimal algorithm with status $\status\primed$, $\status\doubleprimed$, respectively. We analyze three cases and claim that in all cases, the platform with status $\status$ can achieve a smaller cost guarantee than the cost guarantee of the minimax optimal algorithm with status $\status\primed$ or status $\status\doubleprimed$.
    \begin{itemize}
        \item \textbf{Case-1:} Suppose $\sum\nolimits_{i\in[n]}\cumallociHat\doubleprimed\leq \sum\nolimits_{i\in[n]}\cumallociBar \leq 
        \sum\nolimits_{i\in[n]}\cumallociHat\primed$. In this case, the platform with status $\status$ can make no hiring decision nor discharging decision in day $t$, which enables us to invoke the induction hypotheses for day $t + 1$ and thus ensure our claim as desired.
        \item \textbf{Case-2:} Suppose $\sum\nolimits_{i\in[n]}\cumallociHat\doubleprimed\geq \sum\nolimits_{i\in[n]}\cumallociBar$. In this case, the platform with status $\status$ can emulate the hiring decisions of the minimax optimal algorithm with status $\status\doubleprimed$ such that $\sum\nolimits_{i\in[n]}\cumallociHat\doubleprimed=\sum\nolimits_{i\in[n]}\cumallociHat$ where $\cumallociHat$ is the total number of hired workers at the end of day $t$ after this emulation from status $\status$. Since $\sum\nolimits_{i\in[n]}\cumallociBar\doubleprimed \leq    \sum\nolimits_{i\in[n]}\cumallociBar$, the platform can achieve this emulation with more remaining supplies and remaining budgets. Thus, the claim holds by invoking property~(i).
        \item \textbf{Case-3:} Suppose $\sum\nolimits_{i\in[n]}\cumallociHat\primed\leq \sum\nolimits_{i\in[n]}\cumallociBar$. In this case, the platform with status $\status$ can emulate the hiring decisions of the minimax optimal algorithm with status $\status\primed$ such that $\sum\nolimits_{i\in[n]}\cumallociHat\primed=\sum\nolimits_{i\in[n]}\cumallociHat$ where $\cumallociHat$ is the total number of hired workers at the end of day $t$ after this emulation from status $\status$. Since $\sum\nolimits_{i\in[n]}\cumallociBar\primed \geq    \sum\nolimits_{i\in[n]}\cumallociBar$, the platform can achieve this emulation with more remaining supplies and remaining budgets. Thus, the claim holds by invoking property~(i).
    \end{itemize}
    Since for all predictions revealed in day $t$, the platform with status $\status$ can achieve a smaller cost guarantee than either the cost guarantee of the minimax optimal algorithm with status $\status\primed$ or $\status\doubleprimed$, the inductive step is completed and the property~(ii) is shown by induction as desired.
\end{proof}

Now we are ready to prove \Cref{lem:opt alg minimax cost cancellation}.
\begin{proof}[\textsl{Proof of \Cref{lem:opt alg minimax cost cancellation}}]
    In this analysis, we claim that given any status $\status \triangleq (\ell, \bar\cumallocs,\remainsupplies,\remainbudget,[\curpL,\curpR])$ at the beginning of phase $\ell$ (aka., at the end of day $t_{\ell - 1}$), the cost guarantee of \OPTSimCan\ is at most the optimal objective value of subprogram~$\lpcancelsubproblem$. We prove this claim using an induction argument over $\ell\in[\cptotal + 1]$.

    \xhdr{Base Case ($\ell = \cptotal + 1$):} In this case, the optimal objective of subprogram~$\lpcancelsubproblem$ becomes 
    \begin{align*}
        \undercost(\rgap^*) \vee \overcost(\lgap^*) &= 
        \undercost\left(\plus{\curpR - \sum\nolimits_{i\in[n]}\cumallociBar}\right)
        \vee
        \overcost\left(\plus{\sum\nolimits_{i\in[n]}\cumallociBar-\curpL}\right)
        \\
        &=
        \undercost\left(\plus{\pR_T - \sum\nolimits_{i\in[n]}\sum\nolimits_{t\in[T]}(\allocit - \revokeit)}\right)
        \vee
        \overcost\left(\plus{\sum\nolimits_{i\in[n]}\sum\nolimits_{t\in[T]}(\allocit - \revokeit)-\pL_T}\right)
    \end{align*}
    which is equal to the worst-case staffing cost of the platform. Here the second equality holds due to the construction of status $\status$ in \eqref{eq:status update}.

    \xhdr{Inductive step ($\ell\in[\cptotal]$):}  Suppose the claim holds for all $\ell' \in [\ell + 1: \cptotal + 1]$. 
    Let $\{\allocit,\revokeit\}_{i\in[n],t\in[t_{\ell}]}$ be the staffing decision made by \OPTSimCan, and $\status'$ be the status at the end of day $t_\ell$. Now we consider two hypothetical hiring and discharging decisions $\{\allocit\primed,\revokei\primed\}_{i\in[n],t\in\cintervalell}$ and $\{\allocit\doubleprimed,\revokei\doubleprimed\}_{i\in[n],t\in\cintervalell}$ constructed from the optimal solution $(\xbf^*,\ybf^*,\lambdabf^*,\thetabf^*)$ of subprogram~$\lpcancelsubproblem$.  Define
    \begin{align*}
        \forall i\in[n],\forall t\in\cintervalell:&\qquad
        \allocit\primed \triangleq \allocit^*(\switchseq^{(k)}),
        \quad
        \allocit\doubleprimed \triangleq \allocit^*(\switchseq^{(t_\ell)}),
        \\
        \forall i\in[n]:&\qquad
        \revokei\primed \triangleq \revokeiell^*(\switchseq^{(k)}),
        \quad
        \revokei\doubleprimed \triangleq \revokeiell^*(\switchseq^{(t_\ell)}),
    \end{align*}
    where  $k$ is the largest index in $\cintervalellplus$ satisfying condition~\eqref{eq:critical day cancellation}, and $\switchseq^{(t)}$ is an arbitrary sequence such that $\switchseq^{(t)}_1 = t$.\footnote{Due to constraints~\labelcref{eq:identical allocation constraint,eq:identical cancellation constraint}, all configurations $\switchseq$ such that $\switchseq_1 = t$ have the same $\allocitJ,\revokeiellJ$.} 
    Let $\status\primed$ (resp.\ $\status\doubleprimed$) be the status at the end of day $t_{\ell}$ if the platform implements staffing profile $\{\allocit\primed,\revokei\primed\}_{i\in[n],t\in\cintervalell}$ (resp. $\{\allocit\doubleprimed,\revokei\doubleprimed\}_{i\in[n],t\in\cintervalell}$) under prediction sequence $\predictions(\switchseq^{(k)})$ (resp.\ $\predictions(\switchseq^{(t_\ell)})$). Next we compare three statuses $\status',\status\primed,\status\doubleprimed$ and apply properties~(i) and (ii) in \Cref{lem:vtg comparison}.
    
    Note that $\{\allocit\doubleprimed,\revokei\doubleprimed\}_{i\in[n],t\in\cintervalell}$ is used as the canonical hiring decision for Procedure~\ref{alg:emulator} in \OPTSimCan. 
    Due to the feasibility of Procedure~\ref{alg:emulator} in \Cref{lem:emulator}, we have    
        $\allocit\leq \allocit\doubleprimed$
    for all $i\in[n],t\in\cintervalell$. Moreover, due to constraint~\ref{eq:identical allocation constraint} in subprogram~$\lpcancelsubproblem$, we have $\allocit\primed = \allocit\doubleprimed$ and thus $\allocit\leq \allocit\primed$ for all $i\in[n],t\in[t_{\ell - 1} + 1:k]$. Due to index $k$'s definition (i.e., largest index such that condition~\eqref{eq:critical day cancellation} holds), the construction of Procedure~\ref{alg:emulator} ensures $\allocit = 0$ for all $i\in[n],t\in[k + 1: t_\ell]$. To sum up, for all $i\in[n],t\in\cintervalell$, we have
    \begin{align}
    \label{eq:alloc is at most alloc primed doubleprimed}
        \allocit\leq \allocit\primed\;\;\mbox{and}\;\;\allocit\leq \allocit\doubleprimed
    \end{align}
    Note the construction of Procedure~\ref{alg:emulator} and the definition of index $k$ guarantees
    \begin{align}
    \label{eq:rgap is at most rgap doubleprimed}
    \begin{split}
        \pR_{t_\ell}(\switchseq^{(t_\ell)}) - \sum\nolimits_{i\in[n]}\sum\nolimits_{t\in\cintervalell}\allocit\doubleprimed
        &\geq
        \pR_{t_\ell} - 
        \sum\nolimits_{i\in[n]}\sum\nolimits_{t\in\cintervalell}\allocit
        \\
        &\geq
        \pR_{t_\ell}(\switchseq\ked) - 
        \sum\nolimits_{i\in[n]}\sum\nolimits_{t\in[t_{\ell - 1} + 1:k]}\allocit\primed
    \end{split}
    \end{align}
    where the second inequality further implies 
    \begin{align}
    \label{eq:rgap is at least rgap primed}
        \pR_{t_\ell} - 
        \sum\nolimits_{i\in[n]}\sum\nolimits_{t\in\cintervalell}\allocit
        \geq
        \pR_{t_\ell}(\switchseq\ked) - 
        \sum\nolimits_{i\in[n]}\sum\nolimits_{t\in\cintervalell}\allocit\primed
    \end{align}
    due to the non-negativity of $\allocit\primed$.
        Combining inequalities~\labelcref{eq:alloc is at most alloc primed doubleprimed,eq:rgap is at most rgap doubleprimed,eq:rgap is at least rgap primed} and condition~\eqref{eq:actual cancellation construction} in \OPTSimCan, we know that 
    \begin{itemize}
        \item if \OPTSimCan\ makes no discharging decision in day $t_\ell$ (i.e., condition~\eqref{eq:actual cancellation construction} is not satisfied), property~(ii)'s condition in \Cref{lem:vtg comparison} is satisfied for $\status',\status\primed,\status\doubleprimed$, and thus the cost guarantee of the algorithm is at most $\optcost(\ell + 1, \status\primed)\vee\optcost(\ell+ 1,\status\doubleprimed)$;
        \item if \OPTSimCan\ makes discharging decision in day $t_\ell$ (i.e., condition~\eqref{eq:actual cancellation construction} is satisfied), due to Assumption~\ref{asp:limited cancellation fee}, property~(i)'s condition in \Cref{lem:vtg comparison} is satisfied for $\status',\status\primed$, and thus the cost guarantee of the algorithm is at most $\optcost(\ell + 1, \status\primed)$.
    \end{itemize}
    Invoking the induction hypothesis for $\ell' = \ell + 1$, we know the cost guarantee of the algorithm (under the revealed prediction sequence $\predictions$) is at most the optimal objective value of subprogram~$\text{\ref{eq:opt cancellation}}[\ell+1,\status\primed]$ or the optimal objective value of subprogram~$\text{\ref{eq:opt cancellation}}[\ell+1,\status\doubleprimed]$. Therefore, it suffices to show the optimal objective value of subprogram~$\lpcancelsubproblem$ is weakly higher than subprogram~$\text{\ref{eq:opt cancellation}}[\ell+1,\status\primed]$ and subprogram~$\text{\ref{eq:opt cancellation}}[\ell+1,\status\doubleprimed]$. To see this, note that any optimal solution in subprogram~$\text{\ref{eq:opt cancellation}}[\ell+1,\status\primed]$ can be converted straightforwardly into a feasible solution of subprogram~$\text{\ref{eq:opt cancellation}}[\ell+1,\status\primed]$ (subprogram~$\text{\ref{eq:opt cancellation}}[\ell+1,\status\doubleprimed]$) with weakly smaller objective value.
\end{proof}

Finally, combining \Cref{lem:opt reduced form lower bound optimal minimax cost cancellation} and \Cref{lem:opt alg minimax cost cancellation}, we prove \Cref{thm:opt alg cancellation} as desired.


\end{document}